\newtheorem*{rep@theorem}{\rep@title}
\newcommand{\newreptheorem}[2]{%
\newenvironment{rep#1}[1]{%
 \def\rep@title{#2 \ref{##1}}%
 \begin{rep@theorem}}%
 {\end{rep@theorem}}}
\newtheorem{lemma}{Lemma}[subsection]
\newtheorem{prop}[lemma]{Proposition}
\theoremstyle{definition}
\newcommand \xoverline[2][0.75]{
    \sbox{\myboxA}{$\m@th#2$}
    \setbox\myboxB\null
    \ht\myboxB=\ht\myboxA
    \dp\myboxB=\dp\myboxA
    \wd\myboxB=#1\wd\myboxA
    \sbox\myboxB{$\m@th\overline{\copy\myboxB}$}
    \setlength\mylenA{\the\wd\myboxA}
    \addtolength\mylenA{-\the\wd\myboxB}
    \ifdim\wd\myboxB<\wd\myboxA
       \rlap{\hskip 0.5\mylenA\usebox\myboxB}{\usebox\myboxA}%
    \else
        \hskip -0.5\mylenA\rlap{\usebox\myboxA}{\hskip 0.5\mylenA\usebox\myboxB}%
    \fi}
\newcommand{\ba}{\begin{aligned}}
\newcommand{\ea}{\end{aligned}}
\def\be{\begin{equation}}
\def\ee{\end{equation}}
\def\bsp{\begin{split}}
\def\esp{\end{split}}
\def\bea{\begin{eqnarray}}
\def\eea{\end{eqnarray}}
\def\mc{\mathcal}
\def\mb{\mathbb}
\def \bp{\begin{pmatrix}}
\def\ep{\end{pmatrix}}
\def\R{\mathbb{R}}
\def\N{\mathcal{N}}
\def\P{\mathbb{P}}
\def\C{\mathbb{C}}
\def\Z{\mathbb{Z}}
\def\Q{\mathbb{Q}}
\def\dim{\mathrm{dim}}
\def\br{\breve}
\tikzset{
  big arrow/.style={
    decoration={markings,mark=at position 1 with {\arrow[scale=2,#1]{>}}},
    postaction={decorate},
    shorten >=0.4pt},
  big arrow/.default=black}
\tikzstyle{none}=[inner sep=0pt] 
\tikzstyle{NodeCross}=[draw, shape=circle, cross out, inner sep=0pt, minimum size=6pt,line width=0.25mm]
\tikzstyle{Circle}=[draw, shape=circle, black,  fill=black, inner sep=0pt, minimum size=6pt]
\tikzstyle{Star}=[draw, shape=star, fill=black, star points=8, inner sep=0pt, minimum size=8pt]
\tikzstyle{DashedLine}=[-, densely dashed, line width=0.25mm]
\tikzstyle{DottedLine}=[-, dotted, line width=0.25mm]
\tikzstyle{ThickLine}=[-, line width=0.25mm]
\tikzstyle{ArrowLineRight}=[-, -{Stealth[scale=1.75]}, line width=0.1mm, scale=5]
\tikzstyle{RedLine}=[-, draw={rgb,255: red,191; green,0; blue,0}, fill=none, line width=0.25mm]
\tikzstyle{DottedRed}=[-, dotted, draw={rgb,255: red,191; green,0; blue,0}, fill=none, line width=0.25mm]
\tikzstyle{DashedLineThin}=[-, densely dashed, line width=0.125mm, fill=none, draw=black]
\tikzstyle{ArrowLineRed}=[-, -{Stealth[scale=1.75]}, draw={rgb,255: red,191; green,0; blue,0}, line width=0.1mm, scale=5]
\tikzstyle{brane}=[draw]
\tikzset{D7/.style={circle, draw=black, inner sep=0pt, fill=white, minimum size=3mm}}
\tikzset{hasse/.style={circle, fill,inner sep=2pt}}
\tikzset{flavor/.style={regular polygon,fill=white,regular polygon sides=4,inner sep=2.5pt, draw}}
\tikzset{gauge/.style={circle, draw,inner sep=2.5pt}}
\tikzset{gaugeb/.style={circle, draw,fill=black,inner sep=2.5pt}}
\tikzset{gauger/.style={circle, draw,fill=cyan,inner sep=2.5pt}}
\tikzset{gaugeg/.style={circle, draw,fill=red,inner sep=2.5pt}}
\tikzset{SUd/.style={circle, draw=black, inner sep=0pt, fill=yellow, minimum size=2mm}}
\tikzset{bd/.style={circle, draw=black, inner sep=0pt, fill=black, minimum size=2mm}}
\tikzset{wd/.style={circle, draw=black, inner sep=0pt, fill=white, minimum size=2mm}}
\tikzset{Dynkin/.style={circle, draw=black, inner sep=0pt, fill=white, minimum size=2mm}}
\tikzstyle{ligne}=[draw, thick] 
\tikzset{doublearrow/.style={ draw=black!75, color=black!75, thick, double distance=3pt, }} 
\def\widebreve#1{\mathop{\vbox{\m@th\ialign{##\crcr\noalign{\kern\p@}%
  \brevefill\crcr\noalign{\kern0.1\p@\nointerlineskip}%
  $\hfil\displaystyle{#1}\hfil$\crcr}}}\limits}
\def\brevefill{$\m@th \setbox\z@\hbox{}%
 \hfill\scalebox{0.7}{\rotatebox[origin=c]{90}{(}} \kern4pt $}
\newcommand\xleftrightarrow[2][]{%
  \ext@arrow 9999{\longleftrightarrowfill@}{#1}{#2}}
\newcommand\longleftrightarrowfill@{%
  \arrowfill@\leftarrow\relbar\rightarrow}
\newcommand{\PROVE}[1]{{\textcolor{red}{PROVE}}}
\newcommand{\mE}{\mathcal{E}}
\newcommand{\ii}{\mathtt{i}}
\newcommand{\vol}{\mathrm{vol}}
\newcommand{\FTfive}{\mathcal{T}^{(\mathrm{5d})}_{X_6}}
	\theoremstyle{definition}
	\newtheorem{defin}[lemma]{Definition}	
\title{$(-1)$-form symmetries from M-theory and SymTFTs}
\preprint{USTC-ICTS/PCFT-24-53 \hspace*{0.1in} }
\author[\clubsuit]{Marwan Najjar}
\emailAdd{marwan.najjar@pku.edu.cn}
\author[\diamondsuit]{Leonardo Santilli}
\emailAdd{santilli@tsinghua.edu.cn}
\author[\spadesuit,\clubsuit,\heartsuit]{Yi-Nan Wang}
\emailAdd{ynwang@pku.edu.cn}
\affiliation[\clubsuit]{Center for High Energy Physics, Peking University, \protect\\
Beijing 100871, China}
\affiliation[\diamondsuit]{Yau Mathematical Sciences Center, Tsinghua University, \protect\\ Beijing 100084, China}
\affiliation[\spadesuit]{School of Physics, Peking University, \protect\\
Beijing 100871, China}
\affiliation[\heartsuit]{Peng Huanwu Center for Fundamental Theory, \protect\\
Hefei, Anhui 230026, China}
\abstract{
We explore $(-1)$-form symmetries within the framework of geometric engineering in M-theory. By constructing the Symmetry Topological Field Theory (SymTFT) for selected 5d $\N=1$, 4d $\mc{N}=2$ and 4d $\N=1$ theories, we formalize the geometric origin of these symmetries and compute the mixed anomaly polynomials involving $(-1)$-form and higher-form symmetries. Our findings consistently reveal both discrete and continuous $(-1)$-form symmetries, aligning with established field theory results, while also uncovering new $(-1)$-form symmetry factors and structural insights. In particular, we study the SymTFT of 4d $\mc{N}=1$ theories from M-theory on a class of spaces with $G_2$ holonomy, and obtain properties such as modified instanton sums and 4-group structures observed in other 4d gauge theories. Additionally, we systematically construct symmetry operators for continuous abelian symmetries, refining existing proposals, and providing an M-theory origin for them. }
\begin{document}

\maketitle


\section{Introduction and summary}

Global symmetries play a critical role in analyzing physical systems, particularly in quantum field theory (QFT). Alongside their 't Hooft anomalies, they provide properties of a theory that are intrinsic and invariant under renormalization group (RG) flow \cite{tHooft:1979rat,Callan:1984sa}. Symmetries organize the spectrum into representations, and impose constraints on correlation functions, which are especially useful in strongly coupled theories. Importantly, as a theory evolves along the RG flow, all operators allowed by the global symmetries may emerge through quantum effects. As a result, a comprehensive classification of a model's global symmetries is instrumental in managing the RG flow, allowing the definition of order parameters that describe the infrared (IR) or long-distance behaviour of specific ultraviolet (UV) theories.\par
However, in a wealth of models, conventional symmetry concepts fall short of fully capturing their IR characteristics. A notable example arises in Yang--Mills theories, where confining and deconfining phases cannot be satisfactorily explained by traditional patterns of symmetry preservation or breaking.\par

A crucial advancement in addressing this limitation was introduced in \cite{Aharony:2013hda, Gaiotto:2014kfa}, where the concept of global symmetries was expanded to include topological operators. Within this extended framework, the conventional notion of global symmetry corresponds to topological operators of codimension-one in spacetime, referred to as 0-form symmetries. Importantly, this reformulation enables the inclusion of topological operators supported on codimension other than one. This concept, termed higher-form symmetry, generalizes global symmetries to $p$-form symmetries, where the associated topological operators are supported on codimension-$(p+1)$, i.e., on $(d-p-1)$-dimensional submanifolds. 

With this expanded notion, confining and deconfining phases of non-abelian gauge theories can be characterized by electric and magnetic 1-form symmetries, where the relevant order parameters are one-dimensional defect operators, commonly known as Wilson and ’t Hooft lines \cite{Wilson:1974sk,tHooft:1977nqb}.

The discussion above is naturally embedded within the broader context of superstring theory and M-theory, particularly through approaches such as the AdS/CFT correspondence and the geometric engineering program. In this work, we focus on the latter approach, which we describe below.

\paragraph{M-theory, geometric engineering, and SymTFT.}

M-theory establishes a fundamental framework for exploring the vast landscape of supersymmetric quantum field theories (SQFTs). A highly effective approach to construct SQFTs is the geometric engineering program \cite{Katz:1996fh}, by placing M-theory on a product manifold $M_{d} \times X_{11-d}$. This program establishes a dictionary between the topology and geometry of compactification spaces and effective lower-dimensional QFTs.\par

By requiring that the internal space has $SU(n)$ or $G_2$ holonomy ensures that the resulting QFT in $d$-dimensions is supersymmetric. This relationship can be schematically represented as,
\begin{equation}
    X_{11-d} \ \xrightarrow{ \,\,\, \text{geometric engineering}\,\,\,} \ \mathcal{T}^{\,(d)}_{X_{11-d}} \, \in \, \text{SQFT}_d\,,
\end{equation}
where $X_{11-d}$ admits a singular limit, in which it is described as the metric cone over the so-called link $L_{10-d}$ of the singularity, 
\begin{equation}\label{eq:X=ConeL}
    X_{11-d}= \text{Cone} (L_{10-d})\,.
\end{equation}\par
To gain a broader view of the extensive work in the field of geometric engineering, we refer to the following partial lists of works discussing M-theory on $G_2$-manifolds \cite{Atiyah:2000zz,Acharya:2000gb,Acharya:2001dz, Atiyah:2001qf,Acharya:2001hq,Beasley:2002db,Berglund:2002hw,Acharya:2004qe,Anderson:2006mv,Halverson:2014tya,Halverson:2015vta, Braun:2018fdp,Kennon:2018eqg,Braun:2018vhk,Acharya:2020vmg,DelZotto:2021ydd,Braun:2023fqa} and Calabi--Yau threefolds \cite{Intriligator:1997pq,Esole:2014bka,Esole:2014hya,DelZotto:2017pti,Xie:2017pfl,Esole:2017hlw,Closset:2018bjz,Jefferson:2018irk,Apruzzi:2019opn,Apruzzi:2019enx,Saxena:2020ltf,Apruzzi:2019kgb,Collinucci:2020jqd,Closset:2020scj,Eckhard_2020,Acharya:2021jsp,Tian:2021cif,Closset:2021lwy,DeMarco:2022dgh,Mu:2023uws,DeMarco:2023irn,Alexeev:2024bko}.

The generalized global symmetries and the associated anomalies of a given QFT $\mathcal{T}^{\,(d)}$ are captured by a bulk $(d+1)$-dimensional topological theory $\widehat{\mathcal{T}}^{(d+1)}$, supported on $Y_{d+1}\cong [0, \infty)\times M_{d}$,\footnote{In the literature $Y_{d+1}$ is usually taken as $[0, 1]\times M_{d}$, here we set the length of the interval to be infinite to achieve a better match with the geometry. This does not affect the topological data.} commonly referred to as the Symmetry Topological Field Theory (SymTFT). The physical theory is located at the boundary $\{0\}\times M_{d}$.\par

In geometric engineering, the SymTFT $\widehat{\mathcal{T}}_{L_{10-d}}^{\,(d+1)}$ is naturally derived from the link $L_{10-d}$, with the radial direction inherited from \eqref{eq:X=ConeL},
\begin{equation}
    L_{10-d} \ \xrightarrow{ \,\,\, \text{ geometric engineering }\,\,\, }\  \widehat{\mathcal{T}}^{\,(d+1)}_{L_{10-d}} \, \in \, \text{SymTFT}_{d+1}\,.
\end{equation}\par
Equivalently, we may describe $X_{11-d}$ as an $L_{10-d}$-fibration,
\begin{equation}\label{eq:LfibreX}
	L_{10-d} \ \hookrightarrow \ X_{11-d} \ \twoheadrightarrow \ [0,\infty) \,.
\end{equation}
Integrating over the fibre, one is left with the physical theory $\mathcal{T}^{(d)}_{X_{11-d}}$ localized at the tip of the cone $M_d \times \{0 \}$, coupled to a Symmetry Theory living on $Y_{d+1}$. The boundary conditions at infinity for the M-theory fields induce topological boundary conditions for the field content of the SymTFT, and different choices of boundary conditions at infinity select distinct global forms of $\mathcal{T}^{(d)}_{X_{11-d}}$ \cite{Witten:1998wy,GarciaEtxebarria:2019caf}. The setup is illustrated in Figure \ref{fig:sandwich}.\par

The top-down approach to deriving the SymTFT from geometric engineering has been successfully deployed in recent years \cite{Apruzzi:2021nmk,Hubner:2022kxr,DelZotto:2022joo,Apruzzi:2022rei,vanBeest:2022fss,Heckman:2022xgu,Apruzzi:2023uma,Baume:2023kkf,DelZotto:2024tae,GarciaEtxebarria:2024fuk,Franco:2024mxa,Cvetic:2024dzu,Tian:2024dgl,Cvetic:2024mtt,Gagliano:2024off}.\footnote{Other types of geometric compactifications have been used to deduce the SymTFT in \cite{Gukov:2020btk,Bashmakov:2022jtl,Antinucci:2022cdi,Chen:2023qnv,Bashmakov:2023kwo,Cui:2024cav,Chen:2024fno}.} Our endeavour continues this line of research, with the goal of systematically classifying and studying $(-1)$-form symmetries and their anomalies, as explained hereafter.

\begin{figure}[H]
\centering
\includegraphics[width=0.75\textwidth]{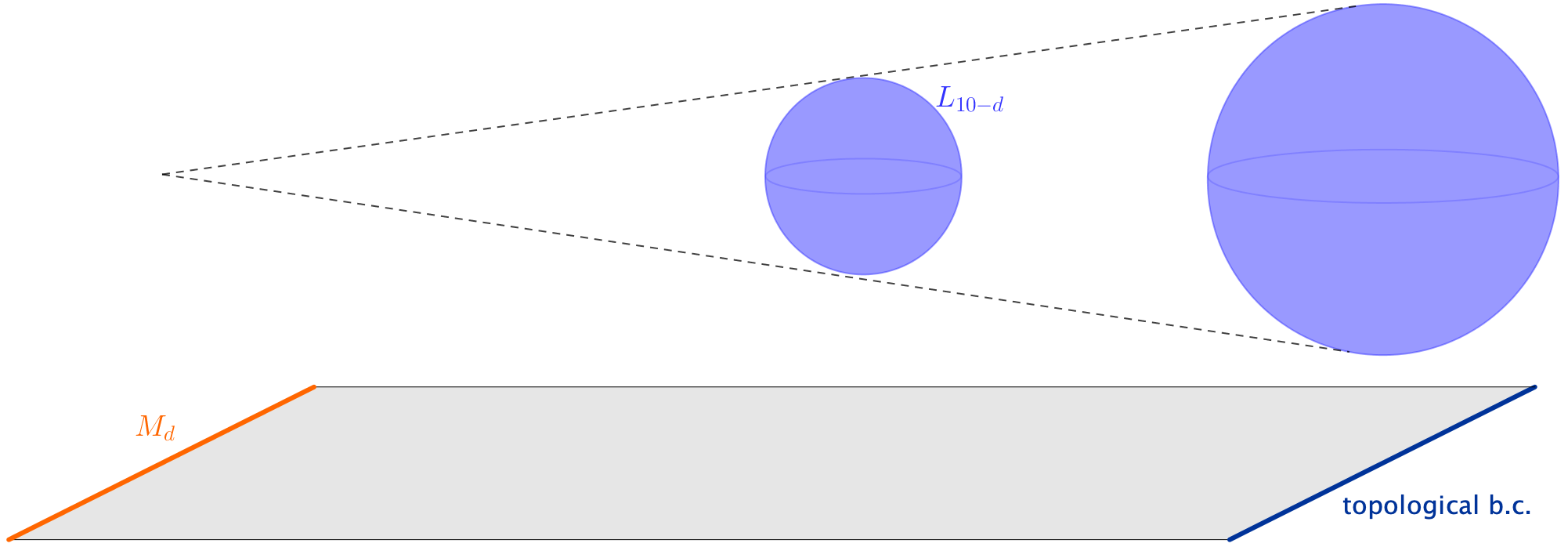}
\caption{SymTFT from geometric engineering.}
\label{fig:sandwich}
\end{figure}\par

\paragraph{\texorpdfstring{$(-1)$}{(-1)}-form symmetries.}

The idea that certain transformations in the space of couplings are better described in the language of symmetries was pioneered in \cite{Cordova:2019jnf}. In a shift of perspective, the topological terms in the action are thought of as the insertion of spacetime-filling topological operators in the path integral. In this way, the corresponding couplings are reinterpreted as scalar background gauge fields, for the symmetry generated by these topological operators. This will be henceforth referred to as `$(-1)$-form symmetry'.\par
While not symmetries in the canonical sense, one gets a lot of mileage out of treating $(-1)$-form symmetries as such. Their anomalies and dynamical applications have been explored in \cite{Cordova:2019jnf,Cordova:2019uob,Brennan:2020ehu,Damia:2022seq,Aloni:2024jpb,Garcia-Valdecasas:2024cqn,Brennan:2024tlw}, while implications for holography and the swampland program have been considered in \cite{McNamara:2020uza,Heckman:2024oot}.\par
The formalism to treat $(-1)$-form symmetries on equal footing as higher-form symmetries was put on firm grounds in \cite{Yu:2020twi,Santilli:2024dyz}. One lesson from \cite{Santilli:2024dyz} is that one should distinguish between two classes of $(-1)$-form symmetries.
\begin{itemize}
    \item \textit{Chern--Weil} symmetries are continuous abelian $(-1)$-form symmetries, whose background fields serve as couplings for characteristic classes in the theory. The prototypical example is the Chern--Weil symmetry in 4d $SU(N)$ Yang--Mills theories: the $\theta_{\text{YM}}$-angle accompanying the second Chern class $\frac{1}{8 \pi^2} F \wedge F$ of the gauge bundle is reinterpreted as a gauge field for this symmetry, and the invariance under $\theta_{\text{YM}} \mapsto \theta_{\text{YM}} + 2 \pi \Z$ is reinterpreted as invariance under large background gauge transformations.
    \item \textit{Finite} $(-1)$-form symmetries are electric/magnetic dual to finite $(d-1)$-form symmetries. Consider a parent theory with topological local operators, generating a $(d-1)$-form symmetry acting on domain walls. The various ways of gauging the latter are classified by discrete $\theta$-angles \cite{Vafa:1986wx}. The dual symmetry emerging in the orbifold theory is a finite $(-1)$-form symmetry, and the discrete $\theta$-angle is identified with the discrete background gauge field for this $(-1)$-form symmetry. Background gauge transformations permute the different orbifolds. This latter effect is what makes $(-1)$-form symmetries pertinent in the context of decomposition \cite{Hellerman:2006zs,Sharpe:2014tca,Sharpe:2019ddn,Robbins:2020msp,Sharpe:2021srf,Pantev:2022kpl,Sharpe:2022ene}.
\end{itemize}\par
It is well understood that, to fully specify a theory, one needs to choose a polarization for the mutually non-local defects \cite{Aharony:2013hda}, for each pair of $p$-form/$(d-p-2)$-form symmetries. The M-theory uplift of this statement was formulated in \cite{DelZotto:2015isa,Albertini:2020mdx}. However, the choice of polarization for $p=-1$ has been mostly overlooked in the literature.\footnote{During the finishing of the draft, the authors noticed the appearance of \cite{Yu:2024jtk}, nonetheless the overlap is minimal. In particular, that work does not apply to M-theory.}

\subsection{Summary of results}

\paragraph{BF terms from differential cohomology.}
We uncover a prescription to obtain the BF terms from M-theory, for both discrete and continuous symmetries. Our result systematically builds on \cite{Apruzzi:2023uma} for finite symmetries, and extends the construction to continuous symmetries.\par
The derivation consists of refining the kinetic term $G_4 \wedge G_7$ in the M-theory action to a differential character. The precise way to do so is spelled out in Subsection \ref{sec:MtheoryupliftH12}. Then, from the reduction of this character along the various homology classes in the link, we find all the BF terms:
\begin{itemize}
    \item Reducing both pieces along $n$-torsional cycles in the link, we obtain the standard BF terms of the form $B_{p+1} \smile \delta A_{d-p-1}$, where $B_{p+1}$ is the background gauge field for a finite abelian $p$-form symmetry $\Z_n^{\scriptscriptstyle [p]}$, and $A_{d-p-1}$ is the background gauge field for the magnetic dual symmetry. 
    \item Reducing both pieces along free cycles, we obtain  BF-like terms of the form $F_{p+2} \wedge h_{d-p-1}$, where $F_{p+2}$ is the curvature of a $U(1)^{\scriptscriptstyle [p]}$ $p$-form symmetry and $h_{d-p-1}$ is a quantized but not closed field. Additionally, from the Chern--Simons term in M-theory, we derive a correction which combines with the BF-like term to give $F_{p+2} \wedge \widetilde{h}_{d-p-1}$, where now $ \widetilde{h}_{d-p-1}$ is closed and with periods in $2 \pi \Z$.
\end{itemize}
In other words, despite the different-looking form of the BF couplings for discrete and continuous symmetries, we are able to deduce both of them at once, by proposing a refinement of the kinetic action of M-theory to differential cohomology, and combining with the geometric engineering dictionary.\par
The BF-like term we derive is consistent with the existing proposals for $U(1)$ symmetries \cite{Brennan:2024fgj,Antinucci:2024zjp,Apruzzi:2024htg}. In particular, our derivation provides a top-down perspective on the BF-like terms of \cite{Apruzzi:2024htg}. Along the way, in Subsection \ref{sec:braneSymTFT} we deal with subtle issues to extend the prescriptions to M-theory compactifications.

\paragraph{Geometric engineering and $(-1)$-form symmetry.}

M-theory and superstring theory branes provide natural mechanisms for constructing charged defects and topological symmetry operators acting on them. Symmetry topological operators are obtained through one of the following scenarios: 
\begin{itemize}
    \item Finite symmetries are generated by BPS branes that wrap torsional cycles in $L_{10-d}$ \cite{Heckman:2022muc};
    \item Continuous symmetries are generated by non-BPS fluxbranes that wrap free cycles in $L_{10-d}$ \cite{Cvetic:2023plv,Bergman:2024aly}.
\end{itemize}
These constructions are reviewed in detail in Subsection \ref{sec:braneSymTFT}, where we further clarify and build upon the existing literature to deal with continuous abelian symmetries in M-theory. One crucial development in Subsection \ref{sec:braneSymTFT} (further elaborated upon in Appendix \ref{app:MthwithM5}) is our discussion on the role of $P_7$-fluxbranes. These are certain non-BPS branes with seven-dimensional worldvolume capturing the holonomy of $P_7 $, which can be defined to be the projection of $G_7 + \frac{1}{2} C_3 \wedge G_4$ on the seven-dimensional manifold. Accounting for such a combination of $G_7$ and $G_4$ is essential to ensure that the resulting defect is topological.\par
These considerations apply to $(-1)$-form symmetries as well. In M-theory, 
\begin{itemize}
    \item Finite $(-1)$-form symmetries are generated by M5-branes that fill the spacetime $M_d$ and wrap torsional $(6-d)$-cycles in $L_{10-d}$;
    \item Continuous $(-1)$-form symmetries are generated by $P_7$-fluxbranes that fill the spacetime $M_d$ and wrap free $(7-d)$-cycles in $L_{10-d}$.
\end{itemize}
This leads us to conjecture that, for QFTs that can be realized via geometric engineering in M-theory, finite $(-1)$-form symmetries only exist in $d \le 5$, while continuous ones exist in $d \le 7$.\par
Using Poincar\'e duality, we are led to predict the presence of $(-1)$-form symmetries in the SymTFT based on the cohomology of the link:
\begin{itemize}
    \item Finite $(-1)$-form symmetries if $\mathrm{Tor} H^4 (L_{10-d}, \Z)$ supports non-trivial classes; the discrete 0-form gauge field is obtained by expanding the differential cohomology refinement of $G_4$ along such torsion 4-cocycles;
    \item Continuous $(-1)$-form symmetries if $H^3 (L_{10-d},\Z)_{\text{free}}$ supports non-trivial classes; the 1-form curvature is obtained by expanding the differential cohomology refinement of $G_4$ along such free 3-cocycles.
\end{itemize}\par
The derivation is spelled out in full generality in Section \ref{sec:Mgeneral}. Our study of finite $(-1)$-form symmetries from torsion cohomology yields a vast generalization and systematization of the early work \cite{Sharpe:2000qt}.\par
For 5d theories, these conditions are satisfied, for instance, by the link $Y^{N,0}$ of Calabi--Yau threefolds obtained as orbifolds of the conifold. A more detailed discussion of this example can be found in Section \ref{sec:5d}. For 4d theories, one suitable example is the nearly-K\"ahler manifold $\mathbb{S}^{3}\times \mathbb{S}^{3}$, which serves as the link of the spin bundle over the 3-sphere \cite{10.1215/S0012-7094-89-05839-0}. Another relevant case is the direct product space $\mathbb{S}^{1}\times Y^{N,0}$. Both of these examples are studied in detail in Section \ref{sec:4d-SymTFT}.

\paragraph{$(-1)$-form symmetry of 5d SCFTs from threefold singularities.} 
M-theory on a canonical threefold singularity engineers a five-dimensional SCFT, whose global symmetries have been studied using combinations of geometric and field theoretic methods \cite{Albertini:2020mdx,Morrison:2020ool,Bhardwaj:2020phs,Bhardwaj:2020ruf,Bhardwaj:2020avz,Apruzzi:2021vcu,Genolini:2022mpi}. In Subsection \ref{sec:SymTFT5DCY3}, building on \cite{Apruzzi:2021nmk}, we systematically address the SymTFT of these theories. We complement the existing literature in two ways:
\begin{itemize}
    \item The SymTFT action we derive contains the BF terms for both discrete and continuous symmetries. 
    \item We pinpoint the role of $(-1)$-form symmetries. 
\end{itemize}
Objects charged under the electric 1-form symmetry are engineered from M2-branes wrapping non-compact primitive curves in the threefold. We observe that it is always possible to wrap an M5-brane on the same non-compact curve, giving rise to a domain wall charged under a 4-form symmetry.\par
We argue that, gauging a subgroup of the 4-form symmetry, the theory acquires new discrete parameters, acted on by a finite $(-1)$-from symmetry. The geometric origin of these defects and symmetries is provided in Subsections \ref{sec:5dDefectM2M5}-\ref{sec:5dminus1geometry}. \par
Furthermore, we exemplify the computations in two explicit instances: orbifolds of the conifold in Subsection \ref{sec:SymTFT5dSUN0}, and orbifolds of $\C^3$ in Subsection \ref{sec:C3orbi}.

\paragraph{$G_2$-manifolds, $(-1)$-form symmetry and 4-groups.} 

Modelling M-theory on non-compact seven-dimensional manifolds with $G_2$-holonomy results in effective four-dimensional field theory with 4 real supercharges. Interactions by means of super-Yang--Mills (SYM) theories are engineered through codimension-4 singularities \cite{Acharya:2000gb,Acharya:2004qe}. The resulting SYM theories feature ADE-type gauge algebras that align with the ADE classification of the surface singularities.\par

In Subsection \ref{sec:MonG2SymTFT}, we investigate the SymTFT that is associated with specific 4d $\N=1$ $\mathfrak{su}(N)$ SYM theories constructed in \cite{Acharya:2020vmg}. The gauge theory arises through a quotient of the asymptotically locally conical B7-space introduced in \cite{Brandhuber:2001yi}. A detailed review of this construction is provided in Subsection \ref{sec:GE4dN=1}.\par
Specifically, the $\mathfrak{su}(N)$ SYM theories occupy a semi-classical branch of a broad moduli space, with all theories on this moduli branch sharing the same SymTFT. Consequently, the physical interpretation of the SymTFT symmetries and the non-trivially acting symmetries vary with the branch of the moduli space.\par

In this work, we  focus on the branch characterized by the seven-dimensional orbifold 
\begin{equation}
    X_7 = (\R^{4}/\Z_{N}\times \mathbb{S}^{3})/\Z_{p}\, .
\end{equation}
Through a detailed SymTFT analysis, in Subsection \ref{sec:MonG2SymTFT} we highlight the following key features: 
\begin{itemize} 
    \item We identify the subgroup of the electric 1-form symmetry read off from geometry that acts effectively on the QFT.\par
    The maximal 1-form symmetry from geometric engineering is $\Z_{pN}$. However, the $\mathfrak{su}(N)$ gauge algebra arises from the action of $\Z_{N}$, with $\Z_{p}$ acting freely. Since the electric 1-form symmetry depends on the actual subgroup producing the codimension-4 singularity, the electric 1-form symmetry in an electric polarization is indeed $\Z_{N}$, as anticipated by gauge theoretic arguments \cite{Aharony:2013hda}. Similarly, the argument applies to the dual magnetic 1-form symmetry. 
    \item We observe the existence of a universal, continuous $2$-form symmetry. 
    \item We uncover the existence of two continuous $(-1)$-form symmetries. In M-theory, these symmetries are due to the link space being an orbifold of $\mathbb{S}^{3}\times \mathbb{S}^{3}$. On the branch of interest, one of these symmetries is the expected Chern--Weil shifting the $\theta_{\text{YM}}$-angle. We thus discover a new Chern--Weil symmetry, stemming from the higher-dimensional origin of the 4d theory.
    \item Furthermore, we show the existence of discrete $(-1)$-form symmetries. Our model exhibits a discrete $\Z_{p}$ $(-1)$-form symmetry, alongside the magnetic dual $3$-form symmetry. They provide additional discrete $\theta$-angles, whose physical implication is further discussed in Subsection \ref{sec:application}.
\end{itemize}

Simultaneous gauging of multiple symmetries may lead to higher-group structures; see e.g. \cite{Cordova:2018cvg, Benini:2018reh,Hidaka:2020izy,Hidaka:2021kkf,Bhardwaj:2022scy,Copetti:2023mcq,Kang:2023uvm,Liu:2024znj}. Our analysis reveals that this model exhibits a 4-group structure, which emerges when the electric $\Z_{N}$ 1-form symmetry, the discrete $\Z_{p}$ 3-form symmetry, and a discrete subgroup of the continuous 2-form symmetry are gauged concurrently. Notably, we identify the same 4-group structure found in \cite{Tanizaki:2019rbk}.\par
Additionally, we show that gauging only a discrete subgroup of the 2-form symmetry imposes a constraint on the instanton number of the gauge theory. This reproduces the modified instanton sum results presented in \cite{Tanizaki:2019rbk}, building on earlier discussions in \cite{Seiberg:2010qd}.\par
In this way, we establish a geometric engineering picture for the mechanism of \cite{Tanizaki:2019rbk}.

\section{SymTFT from M-theory}
\label{sec:Mgeneral}

In our conventions, the integrand of M-theory is normalized to be $\exp{ 2 \pi \ii  \left( S^{\text{M}}_{\text{CS}} + S^{\text{M}}_{\text{kin}}\right)}$.

\subsection{Differential cohomology and M-theory}

\subsubsection{Basics of differential cohomology}
\label{sec:diffcoho}
It has long been understood that the proper language to formalize the M-theory action is (generalized) differential cohomology \cite{Hopkins:2002rd}. In this subsection we briefly provide the necessary tools in the theory of differential characters of Cheeger--Simons \cite{Cheeger:1985}. We collect further details in Appendix \ref{app:diffchar}, and, for a more comprehensive exposition, we refer to \cite{Bar:2014}, and to the reviews aimed at physicists in \cite[Sec.2]{Freed:2006yc} and \cite{Szabo:2012hc}.\par
A useful slogan is that $\br{H}^{p+2} (M)$ classifies isomorphism classes of $p$-gerbes with $(p+1)$-connection on $M$, for $p \in \Z_{\geq -2}$.

\paragraph{Differential characters.}
A differential character of degree $p$ on $M$ is a certain homomorphism $Z_{p-1} (M) \longrightarrow U(1) $ from the abelian group of $(p-1)$-chain to $U(1)$. Differential characters of degree $p$ form a group, called the Cheeger--Simons group $\br{H}^{p} (M)$. The product operation
\begin{equation}
    \star \ : \ \br{H}^p (M) \otimes \br{H}^q (M) \longrightarrow \br{H}^{p+q} (M) .
\end{equation}
endows $\br{H}^{p} (M)$ with the structure of a graded ring. More details are in Appendix \ref{app:diffchar}.\par
The differential characters come equipped with two associated maps:
\begin{align}
    \mathscr{F} \ : \ & \br{H}^{p} (M) \longrightarrow \Omega^{p}_{\Z} (M) , \label{diffcharFmap} \\
    c \ : \ & \br{H}^{p} (M) \longrightarrow H^{p} (M,\Z) , \label{diffcharCmap}
\end{align}
called, respectively, field strength map and characteristic class map.\par
We adopt the notation of \cite{Apruzzi:2021nmk} and write $\br{F}_p$ for the differential character with field strength 
\begin{equation}\label{eq:Fmapdiffcoho}
    \mathscr{F} (\br{F}_p) = \frac{1}{2\pi} F_p \in \Omega_{\mathrm{closed}}^p (M). 
\end{equation}
Besides, in our conventions, $\mathscr{F} (\br{F}_p), c (\br{F}_p)$ have integral periods, and the discrete gauge fields have $\Z$-periodicity.\par

\paragraph{Integration.}
There exists an integration map
\begin{equation}\label{eq:diffcohoint}
    \int^{\br{H}}_{M} \ : \ \br{H}^{\dim(M)+1} (M) \longrightarrow \R/\Z .
\end{equation}
The holonomy of $\br{F}_p \in \br{H}^{p} (M)$ along a $(p-1)$-chain $\Sigma_{p-1}$ is (the exponential of)
\begin{equation}
    2 \pi \ii \int^{\br{H}}_{\Sigma_{p-1}}\br{F}_p  \in 2 \pi \ii \R/\Z .
\end{equation}

\subsubsection{M-theory action in differential cohomology}
\label{sec:MtheoryupliftH12}
We ought to consider the uplift to differential cohomology of the M-theory action. 

\paragraph{Topological term in M-theory action.}
We define the topological Chern--Simons integrand of M-theory to be the holonomy of the differential character 
\begin{equation}
    - \frac{1}{6} \br{G}_4 \star \br{G}_4 \star \br{G}_4 -  \br{G}_4 \star \br{\mathsf{X}}_8 \ \in \br{H}^{12} (M_{11}) 
\end{equation}
along the fundamental cycle $[M_{11}]$. Explicitly\footnote{Throughout the paper the normalization of the action is chosen so that the integrand of the partition function is $\sim e^{2\pi \ii S}$.}:
\begin{equation}\label{eq:MtheorySpre}
    2 \pi \ii S^{\text{M}}_{\text{CS}} :=  2 \pi \ii \int_{M_{11}}^{\br{H}} \left[- \frac{1}{6} \br{G}_4 \star \br{G}_4 \star \br{G}_4 -  \br{G}_4 \star \br{\mathsf{X}}_8 \right] .
\end{equation}
Here $G_4$ is the curvature of the M-theory 3-form gauge field $C_3$ and 
\begin{equation}
    \mathsf{X}_8 = \frac{1}{192} \left[ \mathsf{p}_1 (T M_{11} )^{\wedge 2} -4 \mathsf{p}_2 (T M_{11} )\right]
\end{equation}
with $\mathsf{p}_i  (T M_{11} )$ the $i^{\text{th}}$ Pontryagin class of the tangent bundle to $M_{11}$.\par
Throughout we will mainly focus on 
\begin{equation}\label{eq:MtheoryS}
    S^{\text{M}}_{\text{CS}} =  - \frac{1}{6} \int_{M_{11}}^{\br{H}} \br{G}_4 \star \br{G}_4 \star \br{G}_4 .
\end{equation}

\paragraph{Kinetic term in M-theory action.}
In the democratic formalism (see Appendix \ref{app:MthwithM5} for a review), the bosonic part of the kinetic term for the 3-form gauge field in M-theory is written as 
\begin{equation}\label{eq:SkinM0}
   S^{\text{M}}_{\text{kin}} = \frac{1}{(2\pi)^2} \int_{M_{11}} G_4 \wedge G_7 ,
\end{equation}
where the globally defined $G_7$ is subject to the modified Bianchi identity 
\begin{equation}\label{modifiedBIG7}
	\dd G_7 = - \frac{1}{2 (2\pi)} G_4 \wedge G_4 .
\end{equation}
The factors of $2\pi$ in \eqref{eq:SkinM0} and \eqref{modifiedBIG7} are often reabsorbed in a field redefinition, e.g. in \cite[App.A2]{Apruzzi:2023uma}. Throughout, we stick to the standard conventions that all field strengths have periods in $2 \pi \Z$.\par

To refine \eqref{eq:SkinM0} in differential cohomology, we exploit that $\dd G_7 \ne 0$ defines a non-trivial, closed 8-form (in particular, it is exact) to introduce a differential character $\br{\left( \dd G_7 \right)} \in \br{H}^{8} (M_{11})$.\par
We define the uplift to differential cohomology of the M-theory kinetic term as the holonomy of $\br{G}_4 \star \br{\left( \dd G_7 \right)} \in \br{H}^{12} (M_{11})$: 
\begin{equation}\label{eq:MtheoryKin}
    S^{\text{M}}_{\text{kin}} = \int_{M_{11}}^{\br{H}}\br{G}_4 \star \br{\dd G}_7  ,
\end{equation}\par
Promoting $G_7$ to a differential character in $\br{H}^{8} (M_{11})$ is necessary to derive the BF terms for discrete symmetries, which descend from torsional cycles in the geometry, and in addition, it allows us to deal with both continuous and discrete symmetries at once.\par

\paragraph{SymTFT from geometric engineering.}
As explained in the introduction, we take $M_{11} = X_{11-d} \times M_d$, with $X_{11-d}$ a conical singularity whose link is $L_{10-d}$.\par 
Inspired by \cite{Apruzzi:2021nmk,vanBeest:2022fss,GarciaEtxebarria:2024fuk}, we consider the differential cohomology refinement of the (bosonic part of the) M-theory action, and reduce it along $L_{10-d}$. The reduction is achieved via fibre-wise integration, cf. \eqref{eq:LfibreX}.\par
In this way, we derive the complete action of the SymTFT:
\begin{itemize}
    \item Reducing \eqref{eq:MtheoryS} we derive the anomaly terms in the SymTFT action;
    \item Reducing \eqref{eq:MtheoryKin} we derive the BF terms in the SymTFT action.
\end{itemize}\par
Our main focus is on $X_6$ a canonical threefold singularity, and $X_7$ a $G_2$-manifold.

\subsection{Branes, charged defects, and symmetry operators}
\label{sec:braneSymTFT}

In this section, we discuss the brane construction of the spacetime defects and symmetry operators for both discrete and continuous symmetries, following \cite{Heckman:2022muc,Cvetic:2023plv}.\par

\paragraph{Charged defects.}
The defects of interest here are non-dynamical extended objects, and therefore, the brane configurations that generate them must wrap non-compact cycles in the resolved or deformed cone space $\widetilde{X}_{11-d}$.\par
The set of spacetime supersymmetric $m$-dimensional defects, denoted by $\mathbb{D}^{m}$, consists of BPS branes wrapping cycles of the link space $L_{10-d}$ and extending along the radial direction of $ \widetilde{X}_{11-d}$. Specifically \cite{DelZotto:2015isa,Albertini:2020mdx}, 
\begin{equation}\label{def:defect1}
    \mathbb{D}^{m} := \bigcup_{p=2,5} \{\text{M$p$-branes on } H_{p-m}(L_{10-d},\Z) \times [0, \infty) \} .
\end{equation}\par

\paragraph{Symmetry topological operators.}
The elements of \eqref{def:defect1} are natural objects to be charged under higher-form symmetries, both discrete and continuous. These symmetries are generated by $(d-m-1)$-dimensional topological operators \cite{Gaiotto:2014kfa}.\par
In geometric engineering, we construct these topological operators through (not necessarily BPS) branes wrapping cycles in $L_{10-d}$, and extending transversely to the radial direction of the cone \cite{Heckman:2022muc,Cvetic:2023plv}. Specifically, the set of all such $(m'+1)$-dimensional operators is 
\begin{equation}\label{def:symmetryoperator1}
    \mathbb{U}^{m'+1} =\bigcup_{p}  \{\text{$p$-branes wrapping $H_{p-m'}(L_{10-d},\Z)$ and transverse to $[0,\infty)$}\}\,.
\end{equation}
A given symmetry operator $\mathcal{U}(\Sigma_{m'+1})\in\mathbb{U}^{m'+1}$ is determined by
\begin{itemize}
    \item Brane type: whether the brane is BPS or not. In M-theory, BPS branes are the usual M2 and M5-branes, while the non-BPS branes are fluxbranes in the sense of \cite{Cvetic:2023plv}.
    \item Cycle type: whether the brane wraps a torsional cycle $\gamma_{p-m'} \in \mathrm{Tor}H_{p-m'}(L_{10-d},\Z)$, or a free cycle $\gamma_{p-m'} \in H_{p-m'}(L_{10-d},\Z)_{\text{free}}$.
\end{itemize}\par
For an $m$-dimensional defect in \eqref{def:defect1} to carry charge under $\mathcal{U}(\Sigma_{m'+1})$, their linking pairing must be non-trivial both in $L_{10-d}$ and in spacetime. For the spacetime part, it is non-trivial only if $m'+m=d-2$ \cite{Gaiotto:2014kfa}.\par
We now proceed to discuss the conditions in the link space, and construct the symmetry operators using the framework of differential cohomology. For further discussion on the concept of linking pairing and how it relates to our approach, we refer to Appendix \ref{sec:DelZottoLink}.

\subsubsection{Discrete symmetries from M-branes}

\begin{table}[th]
\centering{\small 
\begin{tabular}{|l|c|c|}
\hline 
& M2 & M5 \\
\hline
Tor$H_{k}(L_{10-d},\Z)\times [0,\infty)$  & \cellcolor{yellow!20}Electric $(2-k)$d defect  & \cellcolor{orange!20}Magnetic $(5-k)$d defect \\
\hline
Tor$H_{9-d-k}(L_{10-d},\Z)$ & \cellcolor{orange!20}$(5-k)$-form sym. generator & \cellcolor{yellow!20}$(2-k)$-form sym. generator \\
\hline
\end{tabular}}
\caption{Branes wrapping torsional cycles give rise to finite symmetries.}
\label{tab:M2M5defect}
\end{table}

We summarize the discussion of defects and symmetry operators for finite symmetries in Table \ref{tab:M2M5defect}.

\paragraph{Branes wrapping torsional cycles.} 

Finite symmetries are generated by the sub-collection of \eqref{def:symmetryoperator1} with M$p$-branes wrapping $\gamma_k \in \mathrm{Tor}H_{k}(L_{10-d},\Z)$.\par
The differential cohomology integral \eqref{eq:diffcohoint} induces a perfect pairing
\begin{equation}\label{eq:pairingTorTor}
        \mathrm{Tor}H^{k}(L_{10-d},\Z) \times \mathrm{Tor}H^{11-k-d}(L_{10-d},\Z)  \ \longrightarrow \ \Q/\Z\,,
\end{equation}
which, together with Poincar\'e duality, determines the linking pairing $\ell_{L}(\gamma_{k},\gamma_{k'})$ in $L_{10-d}$, as explained in Appendix \ref{sec:DelZottoLink}. An immediate necessary conditions for $\ell_{L}(\gamma_{k},\gamma_{k'})\neq 0$ is $k' = 9-d-k$.\par

Combining the requirements on codimension in spacetime and in the link, we have that the $p$-brane that realizes the $(p-k)$-dimensional defect and the $p'$-brane which realizes the $(p'+1-k')$-dimensional symmetry operator, must satisfy
\begin{equation}\label{p+p'+4=d+D}
    (p+1) + (p'+1) = \left[ (p-k) + (p'+1-k') \right]+ \left[k + k' \right] +1 = [d-1] +[9-d] +1 = 9\,.
\end{equation}
The total dimension of the worldvolume of the two branes is of codimension 2, thus they are electric/magnetic dual. Exchanging the role of M2-branes and M5-branes, thus exchanging electric and magnetic gauge fields in 11d, corresponds to switch the roles of charged defects and symmetry operators, as shown in Table \ref{tab:M2M5defect}.\par

\paragraph{Finite symmetry generators from M-theory.}
To define the worldvolume effective action of the symmetry operator, we promote the field strength sourced by the M2 or M5-brane to a differential character, and compute its holonomy. Explicitly:
\begin{equation}\label{eq:symmetryoperator-discrete(1)}
\begin{aligned}
    \mathcal{U}^{\text{M2 on }\gamma_{k}} (\Sigma_{3-k}) &=  \exp{ 2\pi \ii \int^{\br{H}}_{\Sigma_{3-k} \times \gamma_{k}} \br{G}_4 } =  \exp{ 2\pi \ii \int^{\br{H}}_{\Sigma_{3-k} \times L_{10-d}} \br{\mathsf{PD} } (\gamma_{k}) \star \br{G}_4 } , \\
    \mathcal{U}^{\text{M5 on }\gamma_{k}} (\Sigma_{6-k}) &= \exp{ 2\pi \ii \int^{\br{H}}_{\hat{\Sigma}_{7-k} \times \gamma_{k}} \br{\dd G}_7 } =  \exp{ 2\pi \ii \int^{\br{H}}_{\hat{\Sigma}_{7-k} \times L_{10-d}} \br{\mathsf{PD} } (\gamma_{k}) \star \br{\dd G}_7 } ,
\end{aligned}
\end{equation}
where $\mathsf{PD}(\cdot)$ is the Poincar\'e duality isomorphism. In the second line, $\hat{\Sigma}_{7-k}$ is any chain with $\partial \hat{\Sigma}_{7-k} = \Sigma_{6-k}$ and, by the exactness of $\dd G_7$, the result only depends on $\Sigma_{6-k}$ and not on its extension $\hat{\Sigma}_{7-k}$.\par
Expanding the differential character in the basis of $\br{H}^{\bullet} (L_{10-d})$, $\br{\mathsf{PD} } (\gamma_{k}) \star$ acts as a projector. We exemplify the relation in the case $\mathrm{Tor} H_{k} (L_{10-d},\Z) \cong \Z_n$, and refer to Appendix \ref{app:IntTorCycles} for more details and for the general case.\par
Writing schematically $\br{G}_4= \br{B}_{3-k} \star \br{t}_{k+1} + \cdots $ and $\br{\dd G}_7= \br{\delta A}_{6-k} \star \br{t}_{k+1} + \cdots $, we get 
\begin{equation}\label{eq:symmetryoperator-discrete(2)}
\begin{aligned}
    \mathcal{U}^{\text{M2 on }\gamma_{k}} (\Sigma_{3-k}) &=  \exp{ 2\pi \ii \left( \int^{\br{H}}_{L_{10-d}} \br{\mathsf{PD} } (\gamma_{k}) \star \br{t}_{k+1} \right) \int_{\Sigma_{3-k}} B_{3-k} } , \\
    \mathcal{U}^{\text{M5 on }\gamma_{k}} (\Sigma_{6-k}) &= \exp{ 2\pi \ii \left( \int^{\br{H}}_{L_{10-d}} \br{\mathsf{PD} } (\gamma_{k}) \star \br{t}_{k+1} \right) \int_{\Sigma_{6-k}} A_{6-k} } .
\end{aligned}
\end{equation}
Note that $B_{3-k}$ and $A_{6-k}$ are discrete $\Z_n$ gauge field with $\Z$-periodicity. Formally, the second line of \eqref{eq:symmetryoperator-discrete(2)} looks analogous to expanding $G_7$ along the torsional cycle $\gamma_k$; however, $G_7$ does not admit a refinement in differential cohomology, necessary to integrate over torsional classes, thus we had to pass through the exact 8-form $\dd G_7$ to arrive at \eqref{eq:symmetryoperator-discrete(2)}.\par
The integrals appearing in \eqref{eq:symmetryoperator-discrete(2)} are discussed extensively in Appendix \ref{app:IntTorCycles}. The outcome of the pairing \eqref{eq:pairingTorTor} between $\mathsf{PD} (\gamma_{k})$ and $t_{k+1}$ is (cf. Appendix \ref{app:IntTorCycles})
\begin{equation}
    \int^{\br{H}}_{L_{10-d}} \br{\mathsf{PD} } (\gamma_{k}) \star \br{t}_{k+1} = \begin{cases} - \frac{1}{n} & \text{ $\mathsf{PD} (\gamma_{k})$ and $t_{k +1}$ are Pontryagin dual}, \\ 0 & \text{ otherwise}. \end{cases}
\end{equation}\par
In summary, we get the topological operators that generate the $\Z_n$ symmetries:
\begin{equation}
    \mathcal{U} (\Sigma_{3-k}) = \exp{ - \frac{2 \pi \ii}{n} \int_{\Sigma_{3-k}} B_{3-k}} , \qquad \mathcal{U} (\Sigma_{6-k}) = \exp{ - \frac{2 \pi \ii}{n} \int_{\Sigma_{6-k}} A_{6-k}} ,
\end{equation}
where $B_{3-k}, A_{6-k}$ are, respectively, electric and magnetic $\Z_n$ gauge fields.

\subsubsection{Continuous abelian symmetries from fluxbranes}
\label{sec:BraneSymTFTFree}

\paragraph{Branes wrapping free cycles.}

We now aim at constructing the topological operators associated with $U(1)$ symmetries. To achieve this goal, we wrap $p$-branes on free cycles in the link space $L_{10-d}$. In this case, the integral \eqref{eq:diffcohoint} in 11d induces a perfect pairing,
\begin{equation}\label{eq:pairing-free-free}
    H^{k}(L_{10-d},\Z)_{\mathrm{free}} \times H^{10-k-d}(L_{10-d},\Z)_{\mathrm{free}}  \ \longrightarrow \ \Z\,,
\end{equation} 
leaving behind a differential character in $\br{H}^{d+1} (M_d)$. Consequently, this leads to the relation
\begin{equation}\label{p+p'+3=d+D-1}
    (p+1) + (p'+1) = \left[ (p-k) + (p'+1-k') \right]+ \left[k + k' \right] +1 = [d-1] +[10-d] +1 = 10\, ,
\end{equation}
as opposed to \eqref{p+p'+4=d+D}. This dimension counting implies that the branes behind the defect and the symmetry operator do not form an electric/magnetic dual pair.\par
The implications of \eqref{p+p'+3=d+D-1} are summarized as:
\begin{equation*}
    \begin{tabular}{r|l}
      charged defect   & worldvolume of symmetry operator \\
      \hline
        M2 & 7d \\
        M5 & 4d
    \end{tabular}
\end{equation*}

\paragraph{Fluxbranes.}
Following \cite{Cvetic:2023plv}, we discuss the generation of continuous symmetries using fluxbranes. Importantly, we point out an issue in using the naive attempt to use $G_7$-fluxbranes, and show how to correctly identify the fluxbrane that produces a topological operator.\par
Let us focus on magnetic defects first, realized by M5-branes, which source $G_7$. From \eqref{p+p'+3=d+D-1} and the Hodge duality relation $G_4\propto \ast G_7$, \cite{Gutperle:2001mb} argues that the brane we are after is interpreted as a $G_4$-fluxbrane. For our purposes, we only need to consider their topological aspects. See \cite{Cvetic:2023plv} for more details on the fluxbrane realization of continuous abelian symmetries.\par
The symmetry topological operator derived from a $G_4$-fluxbrane wrapping $\gamma_k \in H_k (L_{10-d},\Z)_{\mathrm{free}}$ is: 
\begin{equation}\label{symmetry-operator-fluxbrane-G4-0}
    \mathcal{U}^{G_4\text{-flux along }\gamma_{k}} (\Sigma_{4-k}) = \exp{ \ii \varphi \int_{\Sigma_{4-k} \times \gamma_{k}} \frac{G_4}{2\pi} } =  \exp{ \ii \frac{\varphi}{2\pi} \int_{\Sigma_{4-k} \times L_{10-d}} \mathsf{PD} (\gamma_{k}) \wedge G_4 } .
\end{equation}
Here, the quantization of flux implies the identification $\varphi \sim\varphi +2\pi$, in agreement with the expectation that the symmetry is $U(1)$.\par
Taking $v_{k} \in H^{k}(L_{10-d},\Z)_{\text{free}}$ and expanding $G_4= F_{4-k} \wedge v_{k} + \cdots $, we obtain 
\begin{equation}\label{symmetry-operator-fluxbrane-G4-1}
\begin{aligned}
    \mathcal{U}^{G_4\text{-flux along }\gamma_{k}} (\Sigma_{4-k})  &= \exp{ \ii \frac{\varphi}{2\pi} \left( \int_{L_{10-d}} \mathsf{PD}  (\gamma_{k}) \wedge v_{k}\right) \int_{\Sigma_{4-k}} F_{4-k} } \\
    &= \exp{ \ii \frac{\varphi}{2\pi}  \int_{\Sigma_{4-k}} F_{4-k} } .
\end{aligned}
\end{equation}
In the second line we have assumed that the integral over $L_{10-d}$ is non-vanishing, and normalized it to 1 without loss of generality.\par 

One may attempt to propose a symmetry topological operator for the electric symmetry as a $G_7$-fluxbrane, linking with the M2-brane providing the charged defect. Tentatively, 
\begin{equation}
    \mathcal{U}^{G_7\text{-flux along }\gamma_{k}} (\Sigma_{7-k}) \stackrel{\text{naive}}{\sim} \exp{ \ii \frac{\varphi}{2\pi} \int_{\Sigma_{7-k} \times \gamma_{k}} G_7 }  .
\end{equation}
However, $G_7$ is not closed and, working with this naive attempt, the would-be operator is not topological.\par
Therefore, we propose an enhanced formulation of this symmetry operator, by more carefully tracking the effective action on the fluxbrane. In this way we address the issues.

\paragraph{Hopf--Wess--Zumino action.}

In M-theory, the relation \eqref{p+p'+3=d+D-1} can be interpreted as the intersection pairing between the M$p$-branes and the corresponding Page charges \cite{Page:1983mke}. For the case where the defects are given by the M5-brane, its Page charge is given by the flux of $G_{4}$, which satisfies, $\dd G_{4}=0$. Thus, the $G_4$-fluxbrane is the brane that supports the Page charge, in agreement with \eqref{symmetry-operator-fluxbrane-G4-1}.\par
In the following, we construct the closed 7-form $P_{7}$ whose fluxbranes generate $U(1)$ symmetries. To do that, we start from the topological action on M5-branes. Additional discussion on the effective M-theory action in presence of M5-branes is deferred to Appendix \ref{app:MthwithM5}.\par
The worldvolume theory on an M5-brane admits a self-dual 3-form, $H_{3}$, which satisfies \cite{Howe:1997vn}
\begin{equation}\label{eq:dH3isG4}
    \dd H_{3} \,=\, \iota^{\ast}_{\text{M5}}G_{4}\,.
\end{equation}
Here, $\iota_{\text{M5}} :\Sigma_6^{\text{M5}} \hookrightarrow M_d \times X_{11-d} $ is the embedding of the M5-brane worldvolume $\Sigma_6^{\text{M5}}$ into the 11d space, whereby $\iota^{\ast}_{\text{M5}}G_{4}$ is the pullback of the M-theory $G_{4}$ to the M5-brane worldvolume.\par

To write the full topological action on the fluxbrane, we ought to consider the coupling between $H_{3}$ and $G_{4}$. This is implemented by the Hopf--Wess--Zumino (HWZ) action on a seven-dimensional worldvolume.\par 
To properly define this action, we embed the M5-brane worldvolume into $\Sigma_7$, which supports the $G_7$-flux. $\Sigma_7$ is obtained by extending the M5-brane worldvolume inside $L_{10-d}$, filling the transverse direction between the M5 and the defect M2-brane. In other words, $\Sigma_7$ is a 7-manifolds that is transverse to the radial direction $[0,\infty)$ of the SymTFT, links with the worldvolume of the M2-brane in spacetime $M_d$, and intersects transversely with the worldvolume of the M2-brane in $L_{10-d}$.\par
For comparison, in Figure \ref{fig:M2M5P7link} we show an M5-brane linking with an M2-brane inside $L_{10-d}$, and a fluxbrane intersecting the M2-brane.\par

\begin{figure}[th]
    \centering
    \begin{tikzpicture}
        \node[violet,ellipse,draw,very thick,align=center] at (-4,0) { \hspace{56pt} \\  \hspace{56pt} };
        \node[violet,ellipse,draw,very thick,fill,fill opacity=0.2,align=center] at (4,0) { \hspace{56pt} \\  \hspace{56pt} };
        \draw[red,very thick] (-4,2) -- (-4,0);
        \draw[red,very thick] (-4,-0.75) -- (-4,-2);
        \draw[red,very thick] (4,2) -- (4,0);
        \draw[red,very thick] (4,-0.75) -- (4,-2);

        \node[red,anchor=west] at (-4,1.5) {\small defect M2};
        \node[red,anchor=east] at (4,1.5) {\small defect M2};
        \node[violet,anchor=north west,align=left] at (-3.5,-0.5) {\small topological\\ \small  M5-brane};
        \node[violet,anchor=north east,align=right] at (3.5,-0.5) {\small topological\\ \small $P_7$-fluxbrane};
    \end{tikzpicture}
    \caption{Left: an M5-brane linking with an M2-brane in the link $L_{10-d}$. This configuration is relevant for finite symmetries. Right: a fluxbrane intersecting the same M2-brane in the link $L_{10-d}$. This configuration is relevant for $U(1)$ symmetries.}
    \label{fig:M2M5P7link}
\end{figure}
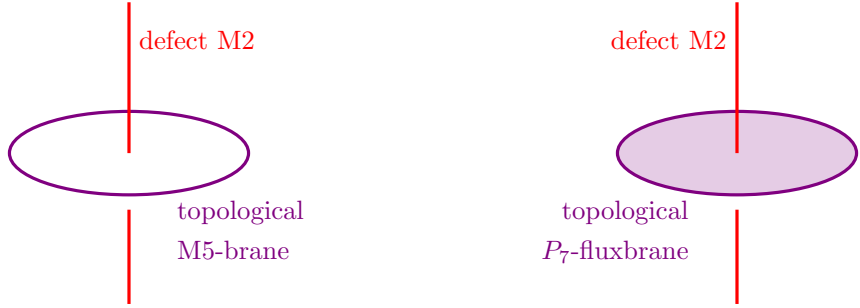

If we denote $\iota_7 : \Sigma_6^{\text{M5}} \hookrightarrow \Sigma_7$ the embedding of the M5-brane worldvolume into $\Sigma_7$, we can define a map
\begin{equation}
    \phi^{\ast} := \iota_{7,\ast} \circ \iota^{\ast}_{\text{M5}} \ : \ \Omega^{\bullet} (M_d \times X_{11-d}) \longrightarrow \Omega^{\bullet} (\Sigma_7) .
\end{equation}
This allows to formulate the HWZ action as \cite{Bandos:1997ui,Intriligator:2000eq},
\begin{equation}\label{eq:HWZ-action}
 S_{\text{HWZ}}\,=\, \frac{1}{2\pi} \int_{\Sigma_{7}}  \phi^{\ast}G_{7} +\frac{1}{4\pi} \iota_{7,\ast} H_{3}\wedge\phi^{\ast}G_{4}\,.  
\end{equation}
We can identify this action with the M2-brane Page charge \cite{Page:1983mke}.\footnote{This identification has been considered previously in different contexts \cite{Pilch:2015vha,Anabalon:2022fti,Fiorenza:2019ain}.} The integrand,
\begin{equation}\label{Page-P7-charge}
    P_{7}\, := \phi^{\ast}G_{7} +\frac{1}{4\pi} \iota_{7,\ast} H_{3}\wedge\phi^{\ast}G_{4}\, ,
\end{equation}
is both closed and quantized, i.e. 
\begin{itemize}
    \item $\dd P_7 =0$ by virtue of \eqref{modifiedBIG7} and \eqref{eq:dH3isG4};
    \item the periods of $P_7$ are in $2 \pi \Z$. 
\end{itemize}

\paragraph{More on Page charges.}
We have derived the Page charge $P_7$ on the worldvolume of a fluxbrane. Alternatively, one may introduce a differential form $\mathsf{P}_7$ directly on the eleven-dimensional space, which is locally defined to be (see \cite[Sec.5]{Moore:2004jv} for a more exhaustive treatment)
\begin{equation}\label{eq:P7in11d}
    \mathsf{P}_7 = G_7 - \frac{1}{12\pi} C_3 \wedge G_4 .
\end{equation}
This allows to write 
\begin{equation}
    S_{\text{kin}}^{\text{M}} + S_{\text{CS}}^{\text{M}} = \frac{1}{(2\pi)^2} \int_{M_{11}} G_4 \wedge \mathsf{P}_7 ,
\end{equation}
and the equations of motion guarantee the Bianchi identities $\dd G_4 = 0 = \dd \mathsf{P}_7$.\par 
In practice, we may work with $\mathsf{P}_7$ throughout, and the procedure works equally well for the analysis of continuous symmetries. This Page charge has the advantage of being defined directly in the 11d action; however, the latter form of the action only admits a trivial refinement to differential cohomology, and is not suited for finite symmetries.\par 
We instead follow the fluxbrane approach and work with $P_7$, given in \eqref{Page-P7-charge} in terms of globally defined quantities supported on an extension of the fluxbrane worldvolume.
It is also worth mentioning that our approach allows for fields supported on the brane, as discussed around \eqref{eq:def-of-H3}.

\paragraph{Symmetry topological operator form $P_7$-fluxbrane.}

We propose that the symmetry topological operator for a $U(1)$ symmetry is a fluxbrane for the class $P_7$ in \eqref{eq:HWZ-action}.\par 
Thus, building on the structure of the HWZ action, we introduce the symmetry topological operator
\begin{equation}\label{eq:Page-symmetry-operator}
   \mathcal{U}^{P_7\text{-flux along }\gamma_{k}} (\Sigma_{7-k}) = \exp{ \ii \frac{\varphi}{2\pi} \int_{\Sigma_{7-k} \times \gamma_{k}} P_7 } =  \exp{ \ii \frac{\varphi}{2\pi} \int_{\Sigma_{7-k} \times L_{10-d}} \mathsf{PD} (\gamma_{k}) \wedge P_7 } .
\end{equation}
Since $P_{7}$ is closed and quantized, this ensures that the operator is topological; furthermore, $\varphi$ is a $2\pi$-periodic parameter. Therefore, \eqref{eq:Page-symmetry-operator} generates a $U(1)$ symmetry.\par
Taking $v_{k} \in H^{k}(L_{10-d},\Z)_{\text{free}}$ and expanding $P_7= \widetilde{h}_{7-k}\wedge v_{k} + \cdots $, we arrive at
\begin{equation}\label{symmetry-operator-fluxbrane2}
\begin{aligned}
    \mathcal{U}^{P_7\text{-flux along }\gamma_{k}} (\Sigma_{7-k}) &= \exp{ \ii\frac{\varphi}{2\pi} \left( \int_{L_{10-d}} \mathsf{PD} (\gamma_{k}) \wedge v_{k}\right) \int_{\Sigma_{7-k}} \widetilde{h}_{7-k} } \\
    &= \exp{ \ii \frac{\varphi}{2\pi}  \int_{\Sigma_{7-k}} \widetilde{h}_{7-k} } ,
\end{aligned}
\end{equation}
where we have assumed that the integral over $L_{10-d}$ is non-vanishing, and normalized it to 1 without loss of generality.\par
\eqref{symmetry-operator-fluxbrane2} gives an operator defined in the SymTFT, occupying a worldvolume $\Sigma_{7-k}$ transverse to the radial direction. The field strength $\widetilde{h}_{7-k}$ is closed and quantized. These properties are inherited from the Page charge $P_{7}$ as defined in \eqref{Page-P7-charge}.\par
Clearly, the expansion of $P_7$ cannot be taken independently from that of $G_7$ and $G_4$, by definition. Thus, writing $G_7 = h_{7-k} \wedge v_k + \cdots$, we have 
\begin{equation}\label{eq:fromP7tpG7}
    \widetilde{h}_{7-k} = \phi^{\ast} (h_{7-k} + g_{7-k}) ,
\end{equation}
where the second piece $g_{7-k}$ is the projection of $(\iota_{\text{M5},\ast }H_3 )\wedge G_4$ along the directions transverse to $v_k$, which would become a wedge product of two differential forms after dimensional reduction as well.

\paragraph{Exchanging M2 and M5.}

The main proposal of this subsection, summarized in Table \ref{tab:M2P7M5P4}, is that a defect engineered by an M$p$-brane is acted upon by a topological operator defined by the fluxbrane for the corresponding Page charge.\par
\begin{table}[th]
    \centering
    \begin{tabular}{|c|c|}
    \hline
    Defect & Symmetry operator \\
    \hline
    M2 & $P_7$-fluxbrane \\
    M5 & $G_4$-fluxbrane \\
    \hline
    \end{tabular}
    \caption{Defects charged under continuous symmetries are realized by BPS M$p$-branes. The symmetry topological operators are realized by the fluxbranes for the corresponding Page charge.}
    \label{tab:M2P7M5P4}
\end{table}\par

In an electric frame we have: M2-branes provide the charged defects; M5-branes source the Page charges of the defect M2-branes; and the symmetry operator is the $P_7$-fluxbrane.\par
Now, consider an electromagnetic transformation in M-theory, which corresponds to exchanging M2 and M5-branes. This exchanges the roles of $p$ and $p'$ in \eqref{p+p'+3=d+D-1}. We get a dual frame with: M5-branes providing the charged defects; M2-brane sourcing the Page charges of the defect M5-branes; and the symmetry operator being the $P_4$-fluxbrane, with $P_4=G_4$.\par
As a result, this duality transformation in 11d relates $m$-form and $(d-m-3)$-form $U(1)$ symmetries. This should not be confused with the electric/magnetic duality in spacetime, schematically,
\begin{equation}
\begin{aligned}
    U(1)^{\scriptscriptstyle [m]} \quad & \xrightarrow{ \ \text{ M2 $\leftrightarrow$ M5 } \ } \quad & U(1)^{\scriptscriptstyle [d-m-3]} \\
    U(1)^{\scriptscriptstyle [m]} \quad & \xrightarrow{ \ \text{flat gauging} \ } \quad & \Z^{\scriptscriptstyle [d-m-2]} 
\end{aligned}
\end{equation}
where $\Z^{\scriptscriptstyle [d-m-2]} $ in the second line is the character group of $U(1)^{\scriptscriptstyle [m]}$.\par

\subsubsection{Symmetry Theory for continuous abelian symmetries}

From a field theory viewpoint, there are proposals for a SymTFT action of continuous symmetries \cite{Brennan:2024fgj,Antinucci:2024zjp,Apruzzi:2024htg,Bonetti:2024cjk}. Nonetheless, its M-theory origin has not yet been worked out. We now proceed to fill this gap (see however the very recent \cite{Gagliano:2024off} for related discussion in Type II string theory).\par

Let us mention from the onset that our derivation is in line with the `Symmetry Theory' of \cite{Apruzzi:2024htg}. One important difference is that \cite{Apruzzi:2024htg} considers Maxwell-type theories. M-theory does not belong to this class, because of the Chern--Simons term. As a consequence, the relation between the SymTFT action and the topological operator is less direct, albeit the action we derive agrees with the prescription of \cite{Apruzzi:2024htg}. We come back to this point at the end of the current subsection.\par
Our proposal is as follows.
\begin{itemize}
    \item We begin with $-G_4 \wedge \ast G_4$ and take a topological limit, in which we treat $G_4$ and $G_7= -\ast G_4/(2\pi)$ as the electric and magnetic field strengths. Upon direct dimensional reduction, this matches the BF terms of \cite{Apruzzi:2024htg}.
    \item We refine the action to differential cohomology, as prescribed in Subsection \ref{sec:MtheoryupliftH12}.\par
    We remark that the advantage of using differential characters is to unify the treatment of finite and continuous symmetries.\footnote{These comment applies to Type II superstring theory as well \cite{Gagliano:2024off}, with the due changes.} Had we only been interested in $U(1)$ symmetries, we would have obtained the identical BF couplings using $G_4 \wedge G_7$. This is a by-product of our refinement in Subsection \ref{sec:MtheoryupliftH12} and the basic properties of differential characters (see Appendix \ref{app:diffchar}).
    \item Once we insert an M5-brane or the associated fluxbrane, we cannot neglect the presence of $H_3$, supported on the worldvolume of the brane. Thus, when constructing defects charged under $U(1)$ symmetries, or the associated symmetry operators, we utilize $P_7$ rather than $G_7$.\par
    This latter step is the main novelty and difference with respect to the Maxwell-type theories discussed in \cite{Apruzzi:2024htg}, and also with respect to other proposals in \cite{Brennan:2024fgj,Antinucci:2024zjp}.
\end{itemize}
We thus obtain the following description of the Symmetry Theory of $U(1)$ symmetries:
\begin{itemize}
    \item The BF-terms are computed reducing \eqref{eq:MtheoryKin} along free cycles in the link;
    \item The defect operator realized by an M$p$-brane is the holonomy of the differential character $\br{P}_{p+2}$;
    \item The symmetry operator acting on an M$p$-brane is given by the flux of the corresponding page charge, i.e. $P_7$-fluxbrane for an M2-brane defect and $P_4$-fluxbrane for an M5-brane defect (with $P_4=G_4$).
\end{itemize}
We summarize our findings in Table \ref{tab:fluxbraneMbrane}.\par

\begin{table}[th]
    \centering
    \begin{tabular}{|c|c|c|c|}
    \hline
    Defect brane & Defect action & Symmetry brane & Symmetry operator action \\
    \hline
    M2 & $\int^{\br{H}}_{\Sigma_3^{\text{M2}}} \br{G}_4 $ & $P_7$-fluxbrane & $\frac{\varphi}{2\pi} \int_{\Sigma_7} P_7 $ \\
    M5 & $\int^{\br{H}}_{\Sigma_6^{\text{M5}}} \br{P}_7 $ & $G_4$-fluxbrane & $\frac{\varphi}{2\pi} \int_{\Sigma_4} G_4 $ \\
    \hline
    \end{tabular}
    \caption{Defects and symmetry operators for $U(1)$ symmetries from M-theory.}
    \label{tab:fluxbraneMbrane}
\end{table}\par

\paragraph{Bulk operators.}
Let us fix $v_k \in H^k (L_{10-d},\Z)_{\text{free}}$. We expand $G_4= F_{4-k} \wedge v_k + \cdots $ and $P_7 = \widetilde{h}_{7-k} \wedge v_k + \cdots $, omitting terms that yield a vanishing contribution. We also use the relation between the expansion of $P_7$ and that of $G_7$, as in \eqref{eq:fromP7tpG7}.\par 
Wrapping an M$p$-brane on $\mathsf{PD}(v_k)$, we thus obtain the defect operators
\begin{equation}
\begin{aligned}
    \text{M2:} \quad & \exp{2\pi\ii \int^{\br{H}}_{\Sigma_{3-k} \times \mathsf{PD}(v_k)} \br{G}_4 } =  \exp{2\pi\ii \int^{\br{H}}_{\Sigma_{3-k}} \br{F}_{4-k} } , \\
    \text{M5:} \quad & \exp{ 2\pi\ii\int^{\br{H}}_{\Sigma_{6-k} \times \mathsf{PD}(v_k)} \br{P}_7 } =  \exp{2\pi\ii \int^{\br{H}}_{\Sigma_{6-k}} \phi^{\ast} \left( \br{h}_{7-k} + \br{g}_{7-k} \right) } .
\end{aligned}
\end{equation}
The left-hand side is the refinement in differential cohomology of the non-globally defined holonomies of gauge fields (and recall the $2\pi$-normalization of \eqref{diffcharFmap}, between differential characters and field strengths).
On the other hand, the symmetry operators are given by 
\begin{equation}
\begin{aligned}
    \text{acting on M2:} \quad & \exp{\ii \int_{\Sigma_{7-k}}\frac{\varphi}{2\pi} ~\phi^{\ast} \left( h_{7-k} + g_{7-k} \right) } , \\
    \text{acting on M5:} \quad & \exp{\ii \int_{\Sigma_{4-k}} \frac{\varphi}{2\pi} ~ F_{4-k} } .
\end{aligned}
\end{equation}
These expressions are consistent with each other.

\paragraph{Completion of BF terms and comparison with topological operators.}
Note that, from the above discussion and the relation between BF terms and topological operators in known examples, one may expect a BF term of the form 
\begin{equation}
    \frac{1}{2\pi} \int_{Y_{d+1}} F_{4-k} \wedge \left( h_{7-k'} + g_{7-k'} \right) 
\end{equation}
in the SymTFT, with $k+k'=10-d$. However, reducing \eqref{eq:MtheoryKin}, we obtain the BF-like term 
\begin{equation}
    2\pi \int^{\br{H}}_{M_d \times X_{11-d}} \br{G}_4 \star \br{\dd G}_7 = \frac{1}{2\pi} \int_{Y_{d+1}} F_{4-k} \wedge h_{d-3+k} .
\end{equation}
This is because the BF couplings are part of the action, and are computed in absence of any brane insertion. Once we insert the fluxbrane, we cannot neglect $H_3$ sourced on its worldvolume. This gives the correction $\phi^{\ast} g_{d-3+k}$ in the action of the symmetry operator, not seen from the BF terms alone.\par
Nonetheless, reducing \eqref{eq:MtheoryS} along free cycles, we always obtain a term of the form
\begin{equation}\label{eq:reduce-G43-to-BF-like}
    \frac{2\pi}{2}\int^{\br{H}}_{Y_{d+1}} \br{F}_{4-k} \star \br{G}^{\prime}_{d-2+k} = \frac{1}{2\pi} \int_{Y_{d+1}} F_{4-k} \wedge \widetilde{g}_{d-3+k} ,
\end{equation}
where $G^{\prime}_{d-2+k}$ is the reduction of $G_4 \wedge G_4$ along the $(10-d-k)$ directions transverse to $v_k$. On the right-hand side, we have introduced $\widetilde{g}_{d-3+k} $ subject to $\dd \widetilde{g}_{d-3+k} = \frac{1}{2} G^{\prime}_{d-2+k}$, where the factor $1/2$ will typically cancel out in the examples due to the symmetry of the term $G_4 \wedge G_4$. The pullback $\phi^{\ast} \widetilde{g}_{d-3+k}$ matches precisely with $\phi^{\ast} g_{d-3+k}$.\par
A simple computation shows that the generic term takes the form 
\begin{equation}
    \br{G}^{\prime}_{d-2+k} = \sum_{k_1,k_2=0}^{4} \delta_{k_1+k_2, 10-d-k} c_{k_1,k_2} \br{F}^{\prime, 1}_{4-k_1} \star \br{F}^{\prime, 2}_{4-k_2} ,
\end{equation}
with coefficient $c_{k_1,k_2}=\pm 1$ if any two among $\br{F}_{4-k},\br{F}^{\prime, 1}_{4-k_1} $ and $\br{F}^{\prime, 2}_{4-k_2}$ are equal and $c_{k_1,k_2}=\pm 2$ otherwise, and sign depending on the values of $k,k_1,k_2$ and $d$. In all examples, we observe that at most two of the three symmetries whose backgrounds are $\br{F}_{4-k},\br{F}^{\prime, 1}_{4-k_1} $ and $\br{F}^{\prime, 2}_{4-k_2}$ act non-trivially on the physical theory. Whenever this is the case, the coefficient of \eqref{eq:reduce-G43-to-BF-like} is just enough to combine it with the BF terms of the non-trivially acting symmetries. While we lack a more comprehensive explanation, this procedure is suited to match the combined BF terms and the symmetry topological operators in all cases considered in the present work.\par
To sum up, if we only look at the BF-like term derived from $G_4 \wedge G_7$ in the action \eqref{eq:MtheoryKin}, we would get $F_{4-k} \wedge h_{d-3+k}$. However, if we include a specific contribution from \eqref{eq:MtheoryS}, we get the expected BF term 
\begin{equation}\label{eq:BF-likewithG4}
    \frac{1}{2\pi} \int_{Y_{d+1}} F_{4-k} \wedge \left( h_{d-3+k} + \widetilde{g}_{d-3+k} \right) ,
\end{equation}
with the pullback $\phi^{\ast} \left( h_{d-3+k} + \widetilde{g}_{d-3+k} \right) = \widetilde{h}_{d-3+k}$ to the worldvolume of the $P_7$-fluxbrane giving the correct topological operator. Observe that the combination in \eqref{eq:BF-likewithG4} corresponds to the expansion of \eqref{eq:P7in11d}.\par 
To be slightly more precise, we actually get a BF-like term 
\begin{equation}
    2\pi \int_{Y_{d+1}}^{\br{H}} \br{F}_{4-k} \wedge \br{\widetilde{H}}_{d-2+k} ,
\end{equation}
with $\dd \left( h_{d-3+k} + \widetilde{g}_{d-3+k} \right) = \widetilde{H}_{d-2+k}$, which reduces to \eqref{eq:BF-likewithG4} using that it is the curvature of the sum of two globally defined contributions.

\section{Geometric engineering of \texorpdfstring{$(-1)$}{(-1)}-form symmetry in 5d SCFTs}
\label{sec:5d}
Generalized symmetries of 5d SCFTs have been extensively studied in the literature \cite{Albertini:2020mdx,Morrison:2020ool,Bhardwaj:2020phs,Apruzzi:2021vcu,Apruzzi:2021nmk}. In this section, we mostly retrieve known results in the formalism of differential cohomology. The novelties consist in a unified derivation of all BF couplings, as well as a concentrated focus on $(-1)$-form symmetries, their geometric classification and the choices of polarization, that eluded prior scrutiny.

\subsection{SymTFT of 5d SCFTs from M-theory}
\label{sec:SymTFT5DCY3}

We start with a review of the derivation of generalized symmetries and SymTFTs in the 5d SCFT $\FTfive$ from M-theory on a canonical threefold singularity $X_6$~\cite{Apruzzi:2021nmk}.\par
We start with a sketch of the computation of \cite{Apruzzi:2021nmk}, then go on and discuss two new features:
\begin{itemize}
	\item The BF couplings, including both discrete and continuous symmetries;
	\item The $(-1)$-form symmetries.
\end{itemize}
We explain how defects and topological operators are engineered by  M2- and M5-branes in Subsection \ref{sec:5dDefectM2M5}.

\subsubsection{SymTFT from M-theory}
\label{sec:SymTFT5dM}

\paragraph{Geometric data.} The eleven-dimensional space is $M_{11} = X_6 \times M_5$. The link $L_5$ has cohomology 
\begin{equation}
\label{eq:L5coho}
	H^{\bullet} (L_5,\Z)= \  \Z \ \oplus \ 0 \ \oplus \ \begin{matrix} \Z^{b^2} \\ \oplus \\ \mathrm{Tor} H^2(L_5,\Z) \end{matrix} \ \oplus \ \begin{matrix} \Z^{b^2} \\ \oplus \\ \mathrm{Tor} H^3(L_5,\Z) \end{matrix} \ \oplus \ \mathrm{Tor} H^2(L_5,\Z) \ \oplus \ \Z  
\end{equation}
where $b^2$ is the second Betti number. Here and in the following, we make repeated use of the pair of isomorphisms (related by Poincar\'e duality)
\begin{equation}
\begin{aligned}
    \mathrm{Tor} H^p(L_d,\Z) &\cong\mathrm{Tor} H^{d-p+1}(L_d,\Z) ,\\
    \mathrm{Tor} H_p(L_d,\Z) &\cong\mathrm{Tor} H_{d-p-1}(L_d,\Z) ,
\end{aligned}
\end{equation}
to reduce to the independent torsion (co)homology groups. These relations stem from the universal coefficient theorem, combined with Poincar\'e duality, and the fact that the torsion groups we deal with are finite abelian groups, which can be identified with their Pontryagin dual, via $\mathrm{Hom} (\Z_n, U(1)) \cong \Z_n$.\par
From the above, the link $L_5$ has homology groups
\begin{equation}
	H_{\bullet} (L_5,\Z)= \  \Z \ \oplus \ \mathrm{Tor} H^2(L_5,\Z) \ \oplus \ \begin{matrix} \Z^{b^2} \\ \oplus \\ \mathrm{Tor} H^3(L_5,\Z) \end{matrix} \ \oplus \ \begin{matrix} \Z^{b^2} \\ \oplus \\ \mathrm{Tor} H^2(L_5,\Z) \end{matrix} \ \oplus \  0 \ \oplus \ \Z  .
\end{equation}\par
Furthermore, we write without loss of generality
\begin{equation}
\begin{aligned}
    \mathrm{Tor} H^{2} (L_5, \Z) &= \Z_{m_1} \oplus \cdots \oplus \Z_{m_{\mu}} \\
    \mathrm{Tor} H^{3} (L_5, \Z) &= \Z_{n_1} \oplus \cdots \oplus \Z_{n_{\widetilde{\mu}}} . 
\end{aligned}
\end{equation}
We denote the generators of the free and torsion parts of $H^{k} (L_5, \Z)$ by
\begin{equation}
    \begin{aligned}
        &\begin{tabular}{|c|c|c|c|c|c|c|}
        \hline
           $k$ & 0 & 1 & 2 & 3 & 4 & 5 \\
           \hline
            $H^{k} (L_5, \Z)_{\text{free}}$ & 1 &  & $v_2^{i}$ & $v_3^{i}$ &  & $\vol_5$ \\
            \hline
            $\mathrm{Tor} H^{k} (L_5, \Z)$ & & & $t_2^{\alpha}$ & $t_3^{\beta}$ & $t_4^{\alpha}$  & \\
            \hline
        \end{tabular}\\
        & i=1, \dots, b^2 , \ \alpha=1, \dots, \mu , \  \beta=1, \dots, \widetilde{\mu} .
    \end{aligned}
\end{equation}
The generators of $ H^{\bullet} (L_5, \Z)_{\mathrm{free}}$ satisfy 
\begin{equation}\label{eq:v2v3inL5}
	\int_{L_5} v_2^{i} \wedge v_3^{j} = \delta^{ij} .
\end{equation}\par
In this basis, we expand $\br{G}_4 \in \br{H}^{4} (M_{11})$ and $\br{\dd G}_7 \in \br{H}^{8} (M_{11})$ as
\begin{equation}
    \begin{aligned}
        \br{G}_4 &= \br{F}_4 \star \br{1} + \sum_{\alpha=1}^{\mu} \br{B}_2^{\alpha} \star \br{t}_2^{\alpha} + \sum_{i=1}^{b^2}\br{F}_2^{i} \star \br{v}_2^{i} + \sum_{\beta=1}^{\widetilde{\mu}} \br{B}_1^{\beta} \star \br{t}_3^{\beta} + \sum_{i=1}^{b^2}\br{F}_1^{i} \star \br{v}_3^{i} + \sum_{\alpha=1}^{\mu} \br{B}_0^{\alpha} \star \br{t}_4^{\alpha}, \\
        \br{\dd G}_7 &= \br{f}_8 \star \br{1} + \sum_{\alpha=1}^{\mu} \br{\mathcal{B}}_6^{\alpha} \star \br{t}_2^{\alpha} + \sum_{i=1}^{b^2}\br{f}_6^{i} \star \br{v}_2^{i} + \sum_{\beta=1}^{\widetilde{\mu}} \br{\mathcal{B}}_5^{\beta} \star \br{t}_3^{\beta} + \sum_{i=1}^{b^2}\br{f}_5^{i} \star \br{v}_3^{i} + \sum_{\alpha=1}^{\mu} \br{\mathcal{B}}_4^{\alpha} \star \br{t}_4^{\alpha} + \br{f}_3 \star \br{\vol}_5 . 
    \end{aligned}
\label{eq:expansionG5D}
\end{equation}
From the exactness of $\dd G_7$ we have 
\begin{equation}
\label{eq:G8trivial5d}
	\mathscr{F} (\br{f}_{p+1}^{\bullet}) = \frac{1}{2\pi} \dd h_{p}^{\bullet}, \qquad c(\br{f}_{p+1}^{\bullet}) =0, \qquad \mathscr{F} (\br{\mathcal{B}}_{p+1}^{\bullet})= \delta A_{p}^{\bullet}, 
\end{equation}
where in the latter expression, $A_{p}^{\bullet}$ are identified with cochains and $\delta$ is the coboundary operation.\par
To lighten the subsequent expressions, we define the matrices $\mathrm{CS}^{(5)}, \widetilde{\mathrm{CS}}^{(5)}$ and $\kappa^{(5)}$ with entries 
\begin{equation}
\begin{aligned}
     (\mathrm{CS}^{(5)})_{\alpha \alpha^{\prime} \alpha^{\prime\prime}} & = \frac{1}{6}\int_{L_5}^{\br{H}} \br{t}_2^{\alpha} \star \br{t}_2^{\alpha^{\prime}} \star \br{t}_2^{\alpha^{\prime\prime}} , \\
     (\widetilde{\mathrm{CS}}^{(5)})_{\beta \beta^{\prime}} & = \int_{L_5}^{\br{H}} \br{t}_3^{\beta} \star \br{t}_3^{\beta^{\prime}} , \\
     (\kappa^{(5)})_{\alpha \alpha^{\prime}} & = \int_{L_5}^{\br{H}} \br{t}_4^{\alpha} \star \br{t}_2^{\alpha^{\prime}} .
\end{aligned}
\label{eq:L5CSmatrices}
\end{equation}
$\widetilde{\mathrm{CS}}^{(5)}$ is an anti-symmetric matrix, whereas $\mathrm{CS}^{(5)}$ and $\kappa^{(5)}$ are symmetric. We also define 
\begin{equation}
\begin{aligned}
    (\eta^{(5)})_{\alpha \alpha^{\prime} i} & = \frac{1}{2}\int_{L_5}^{\br{H}} \br{t}_2^{\alpha} \star \br{t}_2^{\alpha^{\prime}} \star \br{v}_2^{i} \\
     (\widetilde{\eta}^{(5)})_{\beta i} & = \int_{L_5}^{\br{H}} \br{t}_3^{\beta} \star \br{v}_3^{i} , \\
     (\widetilde{\kappa}^{(5)})_{\alpha i} & = \int_{L_5}^{\br{H}} \br{t}_4^{\alpha} \star \br{v}_2^{i} .
\end{aligned}
\end{equation}
The integrals are studied in Appendix \ref{sec:torintegralsL5}. Moreover, we introduce the shorthand notation
\begin{equation}
\begin{aligned}
     \mathrm{CS}^{(5)} \left(  x \smile y \smile w \right)& := \sum_{\alpha, \alpha^{\prime}, \alpha^{\prime\prime}=1}^{\mu} (\mathrm{CS}^{(5)})_{\alpha \alpha^{\prime} \alpha^{\prime\prime}} x^{\alpha} \smile y^{\alpha^{\prime}} \smile w^{\alpha^{\prime\prime}} , \\
     \widetilde{\mathrm{CS}}^{(5)} \left(  x \smile y \right)& := \sum_{ 1 \le \beta <\beta^{\prime}\le\widetilde{\mu}}(\widetilde{\mathrm{CS}}^{(5)})_{\beta \beta^{\prime}} x^{\beta} \smile y^{\beta^{\prime}} , \\
     \kappa^{(5)} \left(  x \smile y \right)& := \sum_{\alpha, \alpha^{\prime}=1}^{\mu} (\kappa^{(5)})_{\alpha \alpha^{\prime}}  x^{\alpha}\smile y^{\alpha^{\prime}} ,
\end{aligned}
\end{equation}
and likewise for $\widetilde{\kappa}^{(5)} (x \smile v)$ and so on. Note the range of summation in right-hand side of the second expression, by anti-symmetry.

\paragraph{Mixed anomalies from M-theory.}
Fibre-wise integrating the M-theory topological action \eqref{eq:MtheoryS} over $L_5$, we get 
\begin{equation}
\begin{aligned}
    S_{\text{twist}}^{L_5} &= \int_{Y_6}^{\br{H}} \sum_{i=1}^{b^2} \br{F}_4 \star \br{F}_2^{i} \star \br{F}_1^{i} \\
    &-\int_{Y_6} \left\{  \mathscr{F}(\br{F}_4) \smile \left[ \kappa^{(5)} \left(  B_0\smile B_2 \right) + \widetilde{\kappa}^{(5)} \left( \mathscr{F} (\br{F}_2) \smile B_0 \right) \right] \right. \\
    & + \mathrm{CS}^{(5)} \left(  B_2 \smile B_2 \smile B_2\right) + \eta^{(5)} \left( \mathscr{F} (\br{F}_2) \smile B_2 \smile B_2 \right)   \\
    & \left. + \mathscr{F} (\br{F}_4) \smile \left[  - \widetilde{\mathrm{CS}}^{(5)} \left(  B_1\smile B_1 \right) + \widetilde{\eta}^{(5)}  \left( \mathscr{F} (\br{F}_1) \smile B_1 \right) \right] \right\} .
\end{aligned}
\label{eq:SymTFT5dN=1full}
\end{equation}
When integrating over $\vol_5$ and $v_2^{i} \wedge v_3^{j}$, the sign convention \eqref{eq:fibreproddiffcoho} of the differential cohomology integral must be used. For instance, the sign in front of the first term on the right-hand side turns out to be $-(-1)^{\dim (L_5)} =+1$.\par
The term $\br{G}_4 \star \br{\mathsf{X}}_8$ was analyzed in \cite[Sec.4]{Apruzzi:2021nmk} (leveraging a theorem of Wall \cite{Wall:1966rcd}) and we refer to that discussion. The outcome is a shift \cite[Eq.(4.20)]{Apruzzi:2021nmk}
\begin{equation}\label{eq:G4G8shiftB2cubic}
    \mathrm{CS}^{(5)}_{\alpha\alpha^{\prime}\alpha^{\prime\prime}} \ \mapsto \ \mathrm{CS}^{(5)}_{\alpha\alpha^{\prime}\alpha^{\prime\prime}} - \frac{1}{24} \delta_{\alpha\alpha^{\prime}}\delta_{\alpha\alpha^{\prime\prime}} \int^{\br{H}}_{L_5} \br{t}_2^{\alpha} \star \br{\mathsf{p}}_1 (TL_5)
\end{equation}
in the coefficient of the cubic $B_2^{\alpha} \smile B_2^{\alpha^{\prime}} \smile B_2^{\alpha^{\prime\prime}}$ term in \eqref{eq:SymTFT5dN=1full}.

\paragraph{BF terms from M-theory.}
We now plug the expansions \eqref{eq:expansionG5D} into the M-theory kinetic action \eqref{eq:MtheoryKin} and fibre-wise integrate over $L_5$. Repeatedly using \eqref{eq:G8trivial5d} and the properties of topologically trivial differential characters from Appendix \ref{app:diffcohoprods}, we obtain:
\begin{equation}
\begin{aligned}
    S_{\text{BF}}^{L_5}= \int_{Y_6} & \left\{ \frac{F_4}{2\pi} \wedge\frac{h_2}{2\pi} + \sum_{i=1}^{b^2} \left[ \frac{F_2^{i}}{2\pi} \wedge\frac{h^{i}_4}{2\pi} + \frac{F_1^{i}}{2\pi} \wedge \frac{h^{i}_5}{2\pi}\right] \right. \\
    +& \left. \kappa^{(5)} \left(  B_0\smile \delta A_5 \right) + \widetilde{\mathrm{CS}}^{(5)} \left(  B_1\smile \delta A_4 \right)  + \kappa^{(5)} \left(  B_2\smile \delta A_3 \right) \right\} \\
    +\int_{Y_6} & \left\{ \widetilde{\kappa}^{(5)} \left( \frac{F_2}{2\pi} \smile \delta A_3 \right) + \widetilde{\eta}^{(5)} \left( \frac{F_1}{2\pi} \smile \delta A_4 \right) \right. \\
    +  & \left. \widetilde{\kappa}^{(5)} \left( B_0 \smile \frac{ \dd h_5}{2\pi} \right) + \widetilde{\eta}^{(5)} \left(  B_1\smile \frac{ \dd h_4}{2\pi} \right) \right\}\,.
\end{aligned}
\label{eq:BF5dN=1full}
\end{equation}
To get the first line, the property \eqref{eq:v2v3inL5} has also been used, and moreover we have used that $\br{\dd G}_7$ is topologically trivial, to write 
\begin{equation}
	\int_{Y_6}^{\br{H}} \br{F}_p^{\bullet} \star \br{f}_{7-p}^{\bullet} = \int_{Y_6} \mathscr{F}(\br{F}_p^{\bullet}) \wedge\frac{h_{6-p}^{\bullet}}{2\pi} =  \int_{Y_6} \frac{F_p^{\bullet}}{2\pi} \wedge\frac{h_{6-p}^{\bullet}}{2\pi} .
\end{equation}\par
The first line contains BF couplings for electric $U(1)$ symmetries and their magnetic dual; the second line contains BF couplings for finite electric/magnetic dual pairs; the last two lines mix continuous and discrete symmetries and contain higher-derivative terms. Integrating by parts, these latter terms vanish in the bulk of $Y_6$, and only leave a boundary contribution, that is not relevant for the SymTFT thus we drop it.\par
We are left with the BF couplings 
\begin{equation}
\begin{aligned}
  S_{\text{BF}}^{L_5}=\int_{Y_6} & \left\{ \frac{F_4}{2\pi} \wedge\frac{h_2}{2\pi} \ + \sum_{i=1}^{b^2} \left[ \frac{F_2^{i}}{2\pi} \wedge\frac{h^{i}_4}{2\pi} \ + \frac{F_1^{i}}{2\pi} \wedge \frac{h^{i}_5}{2\pi}\right] \right. \\
    & \left. \sum_{\alpha=1}^{\mu}  \left[ B_0^{\alpha}\smile \delta A_5^{\alpha} \ + B_2^{\alpha}\smile \delta A_3^{\alpha} \right] \ - \sum_{ 1 \le \beta <\beta^{\prime}\le\widetilde{\mu}} \frac{\ell_{\beta, \beta^{\prime}}}{\mathrm{gcd}(n_{\beta}, n_{\beta^{\prime}})}  B_1^{\beta}\smile \delta A_4^{\beta^{\prime}}   \right\} .
\end{aligned}
\end{equation}
We have used the explicit coefficients computed in Appendix \ref{sec:torintegralsL5}, and $\ell_{\beta ,\beta^{\prime}} \in \Z$.\par
As elucidated at the end of Subsection \ref{sec:BraneSymTFTFree} (see around \eqref{eq:BF-likewithG4}), the BF-like terms in the first line combine nicely with the contribution from the first line of \eqref{eq:SymTFT5dN=1full}.

\subsubsection{Defects and symmetry operators from M-theory}
\label{sec:5dDefectM2M5}
We summarize the brane realization of defects and symmetry topological operators in 5d SCFTs in Table \ref{tab:M2M5-5d}. Here we focus on finite symmetries, and recall that
\begin{equation}
\begin{aligned}
	\mathrm{Tor} H_{\bullet} (L_5, \Z) &= 0 \oplus \mathrm{Tor} H_{1}(L_{5},\Z) \oplus \mathrm{Tor} H_{2}(L_{5},\Z)\oplus \mathrm{Tor} H_{3}(L_{5},\Z) \oplus 0 \oplus 0 \\
	&\cong  0 \oplus \mathrm{Tor} H^{2}(L_{5},\Z) \oplus \mathrm{Tor} H^{3}(L_{5},\Z)\oplus \mathrm{Tor} H^{2}(L_{5},\Z) \oplus 0 \oplus 0 .
\end{aligned}
\end{equation}
The discussion in the current subsection is similar to \cite{Albertini:2020mdx}, and extends it to include 4-form and $(-1)$-form symmetries. The electric/magnetic dual pairs as realized in M-theory are shown in Figure \ref{fig:EMSym5d}; to fully specify a theory, we ought to choose a polarization for each one of the three pairs.

\begin{table}[ht]
\centering
\begin{tabular}{|l|c|c|c|c|}
\hline 
& M2 & & M5 & \\
\hline
Tor$H_{1}(L_{5},\Z)\times [0,\infty)$ \hspace{4pt} & Wilson line & \begin{color}{Red}$\diamondsuit$\end{color} & Domain wall & ${\spadesuit}$  \\
Tor$H_{2}(L_{5},\Z)\times [0,\infty)$ \hspace{4pt} & Local operator & \begin{color}{blue}$\circ$\end{color} & 3d defect & \begin{color}{blue}$\triangle$\end{color} \\
Tor$H_{3}(L_{5},\Z)\times [0,\infty)$ \hspace{4pt} & --- & $\clubsuit$ & Magnetic string & \begin{color}{Red}$\heartsuit$\end{color}\\
\hline
Tor$H_{1}(L_{5},\Z)$ \hspace{4pt} & 2-form sym. generator & \begin{color}{Red}$\heartsuit$\end{color} &  $(-1)$-form sym. generator & $\clubsuit$  \\
Tor$H_{2}(L_{5},\Z)$ \hspace{4pt} & 3-form sym. generator & \begin{color}{blue}$\triangle$\end{color} &  0-form sym. generator & \begin{color}{blue}$\circ$\end{color} \\
Tor$H_{3}(L_{5},\Z)$ \hspace{4pt} & 4-form sym. generator & $\spadesuit$ & 1-form sym. generator & \begin{color}{Red}$\diamondsuit$\end{color} \\
\hline
\end{tabular}
\caption{Branes wrapping torsional cycles in $L_5$ give rise to finite $p$-form symmetries. We mark with equal symbol the charged defect and the corresponding symmetry generators.}
\label{tab:M2M5-5d}
\end{table}

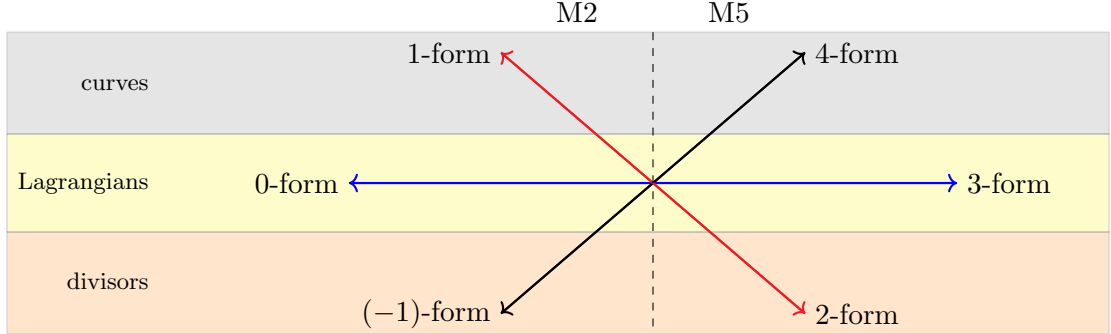
\begin{figure}
\centering
\begin{tikzpicture}
    \draw[fill=gray,opacity=0.2] (-8.5,2) -- (6,2) -- (6,0.65) -- (-8.5,0.65) -- (-8.5,2);
    \draw[fill=yellow,opacity=0.2] (-8.5,-0.65) -- (6,-0.65) -- (6,0.65) -- (-8.5,0.65) -- (-8.5,-0.65);
    \draw[fill=orange,opacity=0.2] (-8.5,-2) -- (6,-2) -- (6,-0.65) -- (-8.5,-0.65) -- (-8.5,-2);
	\node[anchor=east] (1f) at (-2,1.73) {1-form};
	\node[anchor=west] (2f) at (2,-1.73) {2-form};
	\node[anchor=east] (0f) at (-4,0) {0-form};
	\node[anchor=west] (3f) at (4,0) {3-form};
	\node[anchor=east] (mf) at (-2,-1.73) {$(-1)$-form};
	\node[anchor=west] (4f) at (2,1.73) {4-form};
	
	\draw[->,blue,thick] (-4,0) -- (4,0);
	\draw[->,blue,thick] (4,0) -- (-4,0);
	\draw[->,black,thick] (2,1.73) -- (-2,-1.73);
	\draw[->,black,thick] (-2,-1.73) -- (2,1.73);
	\draw[->,Red,thick] (-2,1.73) -- (2,-1.73);
	\draw[->,Red,thick] (2,-1.73) -- (-2,1.73);

    \draw[dashed,thin] (0,2) -- (0,-2);
    \node[anchor=south] at (-1,2) {M2}; 
    \node[anchor=south] at (1,2) {M5};

    \node[anchor=east,align=right] at (-6.5,1.3) {\footnotesize curves};
    \node[anchor=east,align=right] at (-6.5,0) {\footnotesize Lagrangians};
    \node[anchor=east,align=right] at (-6.5,-1.3) {\footnotesize divisors};
 
\end{tikzpicture}
\caption{Electric/magnetic dual symmetries in 5d SCFTs. The charged defects are realized by M2/M5-branes wrapping relative cycles in the Calabi--Yau.}
\label{fig:EMSym5d}
\end{figure}

\paragraph{Discrete 1-/2-form symmetries.}

The 1-form and 2-form symmetries of 5d SCFTs have been intensively studied \cite{Albertini:2020mdx,Morrison:2020ool,Bhardwaj:2020phs,Apruzzi:2021vcu}.
\begin{itemize}
	\item Electric 1-form symmetry.
	\begin{itemize}
	\item[---] The charged defects are Wilson lines, which are given by M2-branes wrapping non-compact (relative) torsional 2-cycles in $X_6$ which correspond to torsional 1-cycles in the link $L_5$. If we consider M-theory on the resolved $\widetilde{X}_6$, which is a gauge theory deformation of the SCFT, these line operators would correspond to the gauge Wilson lines.
	\item[---] The symmetry generators originate from M5-branes wrapping a dual torsional 3-cycle in the link.
	\end{itemize}
	\item Magnetic 2-form symmetry.
	\begin{itemize}
	\item[---] The charged defects are magnetic strings, which are given by M5-branes wrapping non-compact (relative) torsional 4-cycles in $X_6$ which correspond to torsional 3-cycles in the link $L_5$. 
	\item[---] The symmetry generators originate from M2-branes wrapping a dual torsional 1-cycle in the link.
	\end{itemize}
\end{itemize}
The finite 1-form symmetry of $\FTfive$ is understood, in a gauge theory deformation, as the symmetry that acts non-trivially on Wilson loops modulo screening. Here the screening comes from the worldline of electrically charged particles, originating from M2-branes wrapping a compact 2-cycle in $\widetilde{X}_6$ and extending in the time direction in spacetime.

\paragraph{Discrete $(-1)$-/4-form symmetries.}

\begin{itemize}
	\item 4-form symmetry.
	\begin{itemize}
	\item[---] The charged defects are (3+1)d domain walls, which are given by M5-branes wrapping non-compact (relative) torsional 2-cycles in $X_6$ which correspond to torsional 1-cycles in the link $L_5$. 
	\item[---] The symmetry generators originate from M2-branes wrapping a dual torsional 3-cycle in the link.
	\end{itemize}
	\item $(-1)$-form symmetry.
	\begin{itemize}
	\item[---] There are no charged defects under lower form symmetries. The would-be defect, filling the empty slot in Table \ref{tab:M2M5-5d}, morally originates from M2-branes wrapping relative holomorphic surfaces, and extending in a `$(-1)$d subspace' of the spacetime. 
	\item[---] The topological operators originate from M5-branes wrapping a torsional 1-cycle in the link, and filling the spacetime.
	\end{itemize}
\end{itemize}

We emphasize that discrete 4-form symmetries are classified in M-theory from the same geometric objects that classify Wilson lines. The domain walls are screened by dynamical $(3+1)$d objects from M5-branes wrapping compact curves, paralleling the 1-form symmetry story but trading M2-branes for M5-branes.\par
Starting with an electric polarization and gauging such a 4-form symmetry, leads to a theory with less domain walls, and a finite $(-1)$-form symmetry. The choice of family of domain walls that survive the gauging is selected by the discrete torsion, and the $(-1)$-form symmetry acts by shifting this choice. See \cite{Sharpe:2019ddn,Santilli:2024dyz} for related discussion in QFT.

\paragraph{Discrete 0-/3-form symmetries.} 

0-form and 3-form symmetries are magnetic dual to each other in 5d, and they were discussed in \cite{Closset:2020scj}. 
\begin{itemize}
	\item Electric 0-form symmetry.
	\begin{itemize}
	\item[---] The charged defects are local operators, and are realized from M2-branes wrapping torsional 2-cycles in the link and extending in the radial direction, intersecting the 5d spacetime at a point.
	\item[---] The symmetry generators originate from M5-branes wrapping a torsional 2-cycle in the link, and filling a codimension-1 sub-manifold in spacetime.
	\end{itemize}
	\item Magnetic 3-form symmetry.
	\begin{itemize}
	\item[---] The charged 3d defects originate from M5-branes on non-compact Lagrangian 3-cycles in any resolution $\widetilde{X}_6$. Equivalently, the M5-branes wrap a torsional 2-cycle in the link, extend in the radial direction and intersect the spacetime in codimension-2.
	\item[---] The topological operators originate from M2-branes wrapping a torsional 2-cycle in the link, and extending along a 1d curve in spacetime.
	\end{itemize}
\end{itemize}

\subsubsection{\texorpdfstring{$(-1)$}{(-1)}-form symmetry, polarization, and geometry}
\label{sec:5dminus1geometry}

\paragraph{Discrete $(-1)$-form symmetry and polarization.}
Considering again the SymTFT action \eqref{eq:SymTFT5dN=1full}, we focus on the twist terms 
\begin{equation}
	-\int_{Y_6} \left\{  \mathscr{F}(F_4) \smile \left[ \kappa^{(5)} \left(  B_0\smile B_2 \right) + \widetilde{\kappa}^{(5)} \left( \mathscr{F} (F_2) \smile B_0 \right) \right] + \kappa^{(5)} \left(  B_0\smile \delta A_5 \right)  \right\} ,
\end{equation}
which provide the SymTFT for discrete $(-1)$-form symmetries. The full symmetry is $\Z_{m_1}^{\scriptscriptstyle [-1]} \times \cdots \times \Z_{m_{\mu}}^{\scriptscriptstyle [-1]} $, and the discrete gauge fields are $B_0^{\alpha}$. In particular, the first summand couples them to the background fields $B_2^{\alpha^{\prime}}$ for the electric 1-form symmetry.\par
From \eqref{eq:BF5dN=1full}, the BF term $B_0\smile \delta A_5$ provides the electric/magnetic coupling with the discrete gauge fields $A_5^{\alpha^{\prime}}$ for the 4-form symmetries acting on domain walls, of which the $(-1)$-form symmetries is the Pontryagin dual. We observe that
\begin{itemize}
	\item The isomorphism $\mathrm{Tor}H^2 (L_5, \Z) \cong \mathrm{Tor}H^4 (L_5, \Z)$ implies that, upon expanding $\br{G}_4$, it sources the equal discrete 1-form and $(-1)$-form symmetries;
	\item There is an exact parallel between the electric/magnetic BF couplings 
        \begin{equation}
            \kappa^{(5)} \left(  B_0\smile \delta A_5 \right) \quad \text{ and } \quad \kappa^{(5)} \left(  B_2\smile \delta A_3 \right) .
        \end{equation}
\end{itemize}\par

\paragraph{Continuous $(-1)$-form symmetry.}
In the SymTFT action \eqref{eq:SymTFT5dN=1full}-\eqref{eq:BF5dN=1full}, we also note the presence of $F_1^{i}$, providing the curvatures for $U(1)^{b^2}$ $(-1)$-form symmetries. On the spacetime $M_5$, these terms source the 0-form background fields, coupled to the 5-form classes $h_5^{i} \in H^{5} (M_5, \Z)$.\par
The number of continuous $(-1)$-form symmetries is precisely equal to the flavor rank of the 5d SCFT.

\paragraph{$(-1)$-form symmetry and geometry.}

In this subsection we combine the above observations with the brane analysis in Subsection \ref{sec:5dDefectM2M5}. The fact that 1-form and $(-1)$-form symmetries are classified by the same geometric data is explained as follows.\par

\begin{table}[th]
    \centering
    \begin{tabular}{|l|c|c|}
        \hline 
Allowed? & M2 & M5 \\
\hline
Tor$H_{1}(L_{5},\Z)\times [0,\infty)$ \hspace{4pt} & $\checkmark $ & $\checkmark $  \\
Tor$H_{3}(L_{5},\Z)\times [0,\infty)$ \hspace{4pt} & & $\times$ \\
\hline
Tor$H_{1}(L_{5},\Z)$ \hspace{4pt} & $\times$ & $\times$ \\
Tor$H_{3}(L_{5},\Z)$ \hspace{4pt} & $\checkmark $ & $\checkmark $ \\
\hline
    \end{tabular}
    \caption{Setup that allows branes on the maximal collection of primitive curves in the threefold singularity. This setup realizes both the maximal possible 1-form and 4-form symmetries.}
    \label{tab:primitivecurves}
\end{table}

Imagine we start with a global form of the 5d SCFT realizing the maximal possible 1-form symmetry as a global symmetry. This allows all possible ways of wrapping M2-branes on $\mathrm{Tor} H_1(L_5,\Z) \times [0,\infty)$, which correspond to relative primitive curves in $X_6$. Wrapping M5-branes on the same maximal set of primitive curves in $X_6$ engineers domain walls in the 5d theory. The choice is tabulated in Table \ref{tab:primitivecurves}. Then, we expect that all the $(-1)$-form symmetries act trivially, while the maximal 4-form symmetry is realized.\par
Next, we gauge the 4-form symmetry. The gauging operation requires to choose a discrete theta-angle\footnote{This is not to be confused with the discrete theta-angle in for instance 5d $\mathfrak{su}(2)_\theta$ theories with $\theta=0,\pi$. The latter corresponds to different geometries, whereas here we are discussing different polarizations of 5d theories from M-theory on the same topology.} which, in our setup, is identified with a discrete scalar gauge field $B_0 $.
In the modern language, to insert a discrete theta-angle means to stack the 5d SCFT with an SPT phase in the gauging process \cite{Kapustin:2014gua,Bhardwaj:2022kot}. In the present context, the SPT phase is provided by a topological M5-brane filling the spacetime, and wrapping the same torsional 1-cycles in the link that classify the 1-form symmetry.

\subsubsection{Continuous abelian symmetries from fluxbranes}
\label{sec:5dN1fluxbrane}
The SymTFT derived from M-theory on the canonical threefold singularity $X_6$ uncovered the presence of the following continuous symmetries: 
\begin{itemize}
    \item A total of $b^2$ $U(1)^{\scriptscriptstyle [-1,i]}$ $(-1)$-form symmetries, with curvatures $F_1^{i}$;
    \item A total of $b^2$ $U(1)^{\scriptscriptstyle [0,i]}$ $0$-form symmetries, with curvatures $F_2^{i}$; these match with the maximal torus of the flavor symmetry;
    \item A universal $U(1)^{\scriptscriptstyle [2]}$ 2-form symmetry. 
\end{itemize}
We now proceed to describe the realization of the charged defects and symmetry operators from M-branes.\par

\paragraph{Defects charged under continuous symmetries.}
The defects charged under $U(1)$ symmetries are constructed from M-branes wrapping homology classes in $H_{\bullet} (L_5, \Z)_{\text{free}}$ and extending in the radial direction $[0,\infty)$. The electric defects are listed in Table \ref{tab:M2U(1)3fold}.

\begin{table}[ht]
\centering
\begin{tabular}{|l|c|c|}
\hline 
& M2 & charged under \\
\hline
$H_{0}(L_{5},\Z)_{\text{free}}\times [0,\infty) $ \hspace{2pt} & Surface defect & $U(1)^{\scriptscriptstyle [2]}$  \\
$H_{2}(L_{5},\Z)_{\text{free}}\times [0,\infty)$ \hspace{2pt} & Local operator & $U(1)^{\scriptscriptstyle [0, i]}$ \\
$H_{3}(L_{5},\Z)_{\text{free}}\times [0,\infty)$ \hspace{2pt} & --- & $U(1)^{\scriptscriptstyle [-1, i]} $ \\
\hline
\end{tabular}
\caption{Electric defects charged under continuous symmetries in 5d $\N=1$ SCFTs, and their M-theory realization.}
\label{tab:M2U(1)3fold}
\end{table}\par
For completeness, we note that we can further construct magnetic defects using M5-branes wrapping $H_{\bullet}(L_{5},\Z)_{\text{free}}\times [0,\infty)$. We list these defects in Table \ref{tab:M5U(1)3fold} but do not investigate them further.\footnote{3d defects from M5-branes wrapping free cycles have been studied in selected 5d SCFTs in \cite{Banerjee:2018syt,Banerjee:2019apt,Banerjee:2020moh,Santilli:2023fuh}.}\par

\begin{table}[ht]
\centering
\begin{tabular}{|l|c|}
\hline 
& M5 \\
\hline
$H_{2}(L_{5},\Z)_{\text{free}}\times [0,\infty) $ \hspace{2pt} & 3d defect \\
$H_{3}(L_{5},\Z)_{\text{free}}\times [0,\infty)$ \hspace{2pt} & Surface defect  \\
$H_{5}(L_{5},\Z)_{\text{free}}\times [0,\infty)$ \hspace{2pt} & Local operator \\
\hline
\end{tabular}
\caption{Magnetic defects charged under continuous symmetries in 5d $\N=1$ SCFTs, and their M-theory realization.}
\label{tab:M5U(1)3fold}
\end{table}\par

\paragraph{Symmetry operators for continuous symmetries.}
We now derive, from M-theory reduction, the topological operators that generate the symmetries listed in Table \ref{tab:M2U(1)3fold}. We find perfect agreement with the expectations from the QFT analysis.\par
\begin{itemize}
    \item Continuous electric $(-1)$-form symmetry.\par
        The $U(1)^{\scriptscriptstyle [-1,i]}$ symmetries are generated by spacetime-filling topological operators. These are realized from $P_7$-fluxbranes wrapping free 2-cycles, which admit a basis $\left\{ \mathsf{PD} (v_3^{i}) \right\}_{i=1, \dots, b^2}$. The operator representing $e^{\ii \varphi} \in U(1)$ is computed as:
        \begin{equation}
        \begin{aligned}
            \exp \left\{  \ii \varphi S^{P_7\text{-flux along }\mathsf{PD} (v_3^{i})} \right\} &= \exp \left\{ \ii \frac{\varphi}{2\pi} \int_{\mathsf{PD} (v_3^{i})\times M_5} P_7 \right\} \\
            &= \exp \left\{ \ii \frac{\varphi}{2\pi} \int_{L_5 \times M_5} v_3^{i} \wedge \left[ \sum_{j=1}^{b^2} \widetilde{h}_5^{j} \wedge v_2^{j} + \cdots \right] \right\}  \\
            &= \exp \left\{ \ii \frac{\varphi}{2\pi} \int_{M_5} \phi^{\ast} \left( h_5^{i} + g_5^{i} \right) \right\} .
        \end{aligned}
        \end{equation}
        In the second line, the ellipses omit the terms that eventually vanish, and in the third line we have used \eqref{eq:v2v3inL5}. The first piece $h_5^{i}$ comes from $G_7$, and the correction $g_{5}^{i}$ can be written in terms of the fields $\mathsf{h}_{\bullet}$ in the expansion of $H_3$ as 
        \begin{equation}
            g_5^{i} = \frac{1}{4\pi} \left(F_4 \wedge \mathsf{h}_1^{i} + \mathsf{h}_3 \wedge F_2^{i} \right) ,
        \end{equation}
        subject to $\dd \mathsf{h}_1^{i} = F_2^{i}$ and $\dd \mathsf{h}_3 = F_4$. This correction term is consistent with the pullback of $\int^{\br{H}}_{Y_6} \br{F}_4 \star \br{F}_2^{i}$ from \eqref{eq:BF-likewithG4} to the worldvolume of the fluxbrane.
    \item Continuous electric 0-form symmetry.\par
        The $U(1)^{\scriptscriptstyle [0,i]}$ symmetries are generated by $P_7$-fluxbranes wrapping free 3-cycles, which admit a basis $\left\{ \mathsf{PD} (v_2^{i}) \right\}_{i=1, \dots, b^2}$, and extending along a codimension-1 submanifold $\Sigma_4$ in spacetime. The symmetry operator is:
        \begin{equation}
        \begin{aligned}
            \exp \left\{  \ii \varphi S^{P_7\text{-flux along }\mathsf{PD} (v_2^{i})} (\Sigma_4)\right\} &= \exp \left\{ \ii \frac{\varphi}{2\pi}\int_{\mathsf{PD} (v_2^{i})\times \Sigma_4} P_7 \right\} \\
            &= \exp \left\{ \ii \frac{\varphi}{2\pi} \int_{L_5 \times \Sigma_4} v_2^{i} \wedge \left[ \sum_{j=1}^{b^2} \widetilde{h}_4^{j} \wedge v_3^{j} + \cdots \right] \right\}  \\
            &= \exp \left\{ \ii \frac{\varphi}{2\pi} \int_{\Sigma_4} \phi^{\ast} \left( h_4^{i} + g_4^{i} \right) \right\} .
        \end{aligned}
        \end{equation}
        Again we omit terms that eventually vanish, and use \eqref{eq:v2v3inL5}. The correction term in this case is written as 
        \begin{equation}
            g_4^{i} = \frac{1}{4\pi} \left( F_4 \wedge \mathsf{h}_0^{i} + \mathsf{h}_3 \wedge F_1^{i} \right) .
        \end{equation}
    \item Continuous electric 2-form symmetry.\par
        The symmetry topological operator of the universal $U(1)^{\scriptscriptstyle [2]}$ electric 2-form symmetry is engineered by a $P_7$-fluxbrane wrapping the whole $L_{5}$, and filling $\Sigma_2 \subset M_5$. The symmetry operator is:
        \begin{equation}
        \begin{aligned}
            \exp \left\{  \ii \varphi S^{P_7\text{-flux along }L_5} \right\} &=\exp \left\{ \ii \frac{\varphi}{2\pi} \int_{L_5 \times \Sigma_2} P_7 \right\} \\
            &= \exp \left\{ \ii \frac{\varphi}{2\pi} \int_{\Sigma_2 } \phi^{\ast} \left( h_2 + g_2 \right)\right\} .
        \end{aligned}
        \end{equation}
        This time neither $H_3$ nor $G_4$ admit an expansion in $\vol_5$, but their product yields a correction term
        \begin{equation}
        \begin{aligned}
            g_2 & = \frac{1}{4\pi} \int_{L_5} H_3 \wedge G_4 \\
            & = \frac{1}{4\pi} \sum_{i,j=1}^{b^2} \int_{L_5} \left( \mathsf{h}_1^{i} \wedge v_2^{i} + \mathsf{h}_0^{i} \wedge v_3^{i} \right) \wedge \left( F_2^{j} \wedge v_2^{j} + F_1^{j} \wedge v_3^{j} \right)  \\
            &=  \frac{1}{4\pi} \sum_{i=1}^{b^2}\left( \mathsf{h}_1^{i} \wedge F_1^{i} + \mathsf{h}_0^{i} \wedge F_2^{i} \right) .
        \end{aligned}
        \end{equation}
\end{itemize}

\subsection{SymTFT of 5d \texorpdfstring{$\mathfrak{su}(N)$}{SU(N)} SYM from M-theory}
\label{sec:SymTFT5dSUN0}
We study the 5d $\mathcal{N}=1$ SCFT that flows to $\mathfrak{su} (N)$ Yang--Mills theory on its Coulomb branch, geometrically engineered from M-theory on a toric Calabi--Yau threefold.

\subsubsection{Geometry and torsion cohomology}

\paragraph{Toric geometric data.} 
\begin{figure}
\centering
\begin{tikzpicture}
	\draw (-1,0) -- (0,0) -- (1,4) -- (0,4) -- (-1,0);
	\node at (-1,0) {$\bullet$};
	\node at (0,0) {$\bullet$};
	\node at (1,4) {$\bullet$};
	\node at (0,4) {$\bullet$};
	\node at (0,2) {$\scriptstyle \vdots$};
	\node at (0,1) {$\bullet$};
	\node at (0,3) {$\bullet$};
	\node[anchor=south] at (0,4) {$D_4$};
	\node[anchor=south] at (1,4) {$D_3$};
	\node[anchor=east] at (-1,0) {$D_1$};
	\node[anchor=west] at (0,0) {$D_2$};
	\node at (0,0) {$\bullet$};
	\node at (1,4) {$\bullet$};
	\node at (0,4) {$\bullet$};
	
	\node[anchor=south west] at (4,2) {$\mE \equiv D_1 $};
	\node[anchor=north west] at (4,2) {$\mE^{\prime} \equiv -N (D_1 +D_4) $};
\end{tikzpicture}
\caption{Toric diagram for the 5d $\mathcal{N}=1$ SCFT with $\mathfrak{su}(N)_0$ gauge theory description.}
\label{fig:YN0toric}
\end{figure}
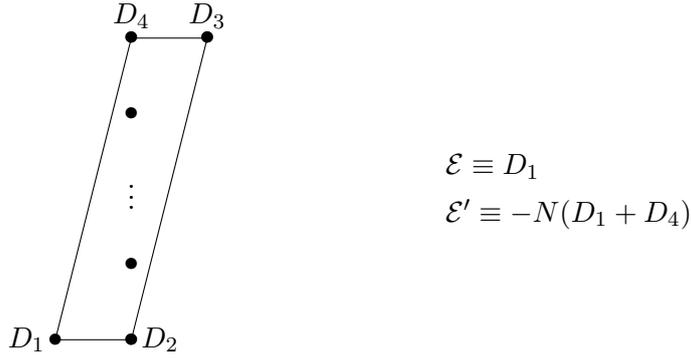
Let $\widetilde{X}_6$ be a toric crepant resolution of the threefold singularity. It contains $N-1$ compact toric divisors $\{ S_j \}_{j=1, \dots, N-1}$, and $4$ non-compact toric 4-cycles $\{ D_1, \dots, D_4 \}$.\par 
As is well known, not all the toric divisors are independent, and they are subject to three relations. We impose the three relations on the non-compact divisors, and take a non-compact $\mE$ together with the $(N-1)$ compact ones to be the independent ones. We choose the free 4-cycle $\mE$ to be the generator of the Cartan subalgebra of the flavor symmetry. Our conventions for this toric geometry are illustrated in Figure \ref{fig:YN0toric}. The generators are chosen to match the identification with the physical operators in \cite{Bhardwaj:2020phs}.\par

\paragraph{Geometric data of the link.}
The link of the singularity is $Y^{N,0}$, whose (co)homology is 
\begin{equation}
    H_{5-\bullet} (Y^{N,0},\Z) = H^{\bullet} (Y^{N,0},\Z) = \ \Z  \ \oplus \ 0 \ \oplus \ \begin{matrix} \Z \\ \oplus \\ \Z_N\end{matrix} \ \oplus \ \Z  \  \oplus \ \Z_N \ \oplus \ \Z .
\end{equation}
The generators of $\br{H}^{\bullet} (Y^{N,0})$ are denoted as $\br{1}, \br{v}_2,\br{t}_2, \br{v}_3, \br{t}_4, \br{\vol}_5$.\par
The 3-chain $N\mathsf{PD} (t_2)$ in $Y^{N,0}$ pulls back to a compact divisor $\mE^{\prime}$ in $\widetilde{X}_6$, which in our conventions is 
\begin{equation}
    \mE^{\prime} \equiv -N (D_1 +D_4)  \cong \sum_{j=1}^{N-1} j S_j ,
\end{equation}
with triple intersection number $(\mE^{\prime})^3 \in \Z$. Furthermore we fix the generator $\br{t}_4$ by duality, as explained in Appendix \ref{sec:torintegralsL5},
\begin{equation}
	\int^{\br{H}}_{L_5} \br{t}_4 \star \br{t}_2 = - \frac{1}{N} .
\end{equation}
Thus, its Poincar\'e dual 1-cycle satisfies the linking pairing 
\begin{equation}
\label{eq:normt4inYN0}
	\ell_{Y^{N,0}} \left( N \mathsf{PD} (t_2), N \mathsf{PD} (t_4)\right) = -N ,
\end{equation}
which agrees with \cite[App.B1]{Albertini:2020mdx}.\par
From differential cohomology we also have
\begin{equation}
\begin{aligned}
	\frac{1}{N^3} \ell_{Y^{N,0}} \left( N \mathsf{PD} (t_2),  N \mathsf{PD} (t_2), N \mathsf{PD} (t_2)\right) &= \int_{Y^{N,0}}^{\br{H}} \br{t}_2\star\br{t}_2 \star \br{t}_2 = - \frac{1}{N} \langle c (t_2) \smile c(t_2) , \Sigma^{\prime}_4 \rangle \\
	\frac{1}{N^2} \ell_{Y^{N,0}} \left( N \mathsf{PD} (t_2),  N \mathsf{PD} (t_2), \mathsf{PD} (v_2)\right) &= \int_{Y^{N,0}}^{\br{H}} \br{t}_2\star\br{t}_2 \star \br{v}_2 = - \frac{1}{N} \langle c (t_2) \smile c(t_2) , \Sigma_4 \rangle
\end{aligned}
\end{equation}
for $\partial \Sigma_4^{\prime}=  N \mathsf{PD} (t_2)$, $\partial \Sigma_4= \mathsf{PD} (v_2)$. We thus get the predictions
\begin{equation}
	\frac{1}{N^2}(\mE^{\prime})^3 \in \Z , \qquad \frac{1}{N} (\mE^{\prime})^2 \cdot \mE \in \Z .
\end{equation}
In fact, it turns out that \cite[Eq.(4.29)]{Apruzzi:2021nmk}
\begin{equation}
\label{eq:CSYN0inZ}
\begin{aligned}
	6 \mathrm{CS}^{Y^{N,0}} &:= \int^{\br{H}}_{Y^{N,0}} \br{t}_2 \star \br{t}_2 \star \br{t}_2 = \frac{(\mE^{\prime})^3}{N^3} =  N-1 \\
	2\eta^{Y^{N,0}} &:= \int_{Y^{N,0}}^{\br{H}} \br{t}_2 \star \br{t}_2 \star \br{v}_2 = \frac{\mE^{\prime} \cdot \mE^{\prime} \cdot \mE}{N^2}  =- \frac{N-1}{N} .
\end{aligned}
\end{equation}
Moreover, since there is only one generator in $\mathrm{Tor} H^4 (Y^{N,0}, \Z)$, we also have $\br{t}_2 \star\br{t}_2\in \Z \br{t}_4$; the coefficient is fixed comparing \eqref{eq:normt4inYN0} with \eqref{eq:CSYN0inZ}. From self-consistency we get  
\begin{equation}
	\widetilde{\kappa}^{Y^{N,0}} :=\int_{Y^{N,0}}^{\br{H}} \br{t}_4 \star \br{v}_2 = \frac{1}{N} .
\end{equation}

\paragraph{Example: Local \texorpdfstring{$\mathbb{P}^1 \times \mathbb{P}^1$}{P1xP1}.}
We cross-check our calculations in $\mathfrak{su}(2)$ Yang--Mills theory, engineered by the local del Pezzo surface $\mathbb{P}^1 \times \mathbb{P}^1$. In Appendix \ref{sec:torintegralsL5}, we compare the results with the triple intersection numbers in \cite[Eq.(B.18)]{Closset:2018bjz}, finding perfect agreement.

\subsubsection{SymTFT of 5d \texorpdfstring{$\mathfrak{su}(N)_0$}{SU(N)} SYM}
\label{sec:SymTFT5dSUNnoCS}
We expand 
\begin{equation}
\begin{aligned}
	\br{G}_4 &= \br{F}_4 \star \br{1} + \br{B}_2^{\scriptscriptstyle [1,\mathrm{e}]} \star \br{t}_2 + \br{F}_2^{\scriptscriptstyle [0,\mathrm{in}]}  \star \br{v}_2 + \br{F}_1^{\scriptscriptstyle [-1]} \star \br{v}_3 + B_0^{\scriptscriptstyle [-1]} \star \br{t}_4 , \\
	\br{\dd G}_7 &= \br{f}_8 \star \br{1} + \br{\mathcal{B}}_6 \star \br{t}_2 + \br{f}_6 \star \br{v}_2 + \br{f}_5 \star \br{v}_3 + \br{\mathcal{B}}_4\star \br{t}_4 + \br{f}_3 \star \br{\vol}_5 .
\end{aligned}
\end{equation}

\paragraph{Mixed anomalies from M-theory.}
Plugging the form of $\br{G}_4$ above in the topological M-theory action \eqref{eq:MtheoryS}, and integrating along the link $Y^{N,0}$, we get 
\begin{equation}
\begin{aligned}
	S_{\text{twist}}^{Y^{N,0}} = \int_{Y_6} &\left\{ - \frac{(\mE^{\prime})^3}{6N^3} B_2^{\scriptscriptstyle [1,\mathrm{e}]} \smile B_2^{\scriptscriptstyle [1,\mathrm{e}]} \smile B_2^{\scriptscriptstyle [1,\mathrm{e}]} + \frac{1}{N} \mathscr{F} (\br{F}_4) \smile B_2^{\scriptscriptstyle [1,\mathrm{e}]} \smile B_0^{\scriptscriptstyle [-1]} \right. \\
	& - \frac{((\mE^{\prime})^2 \cdot \mE) }{2N^2} \mathscr{F} (\br{F}_2^{\scriptscriptstyle [0,\mathrm{in}]} ) \smile B_2^{\scriptscriptstyle [1,\mathrm{e}]} \smile B_2^{\scriptscriptstyle [1,\mathrm{e}]}  \\
	& \left. - \widetilde{\kappa}^{Y^{N,0}}  \mathscr{F} (\br{F}_4) \smile \mathscr{F} (\br{F}_2^{\scriptscriptstyle [0,\mathrm{in}]} ) \smile B_0^{\scriptscriptstyle [-1]} \right\} \ + \int_{Y_6}^{\br{H}} \br{F}_4 \star \br{F}_2^{\scriptscriptstyle [0,\mathrm{in}]}  \star \br{F}_1^{\scriptscriptstyle [-1]}  .
\end{aligned}
\end{equation}
In this example, the additional piece \eqref{eq:G4G8shiftB2cubic} from $\int^{\br{H}}_{L_5} \br{G}_4 \star \br{\mathsf{X}}_8$ vanishes \cite[Eq.(4.30)]{Apruzzi:2021nmk}. From here, we detect the following backgrounds. 
\begin{itemize}
\item[---] Discrete gauge fields: $B_2^{\scriptscriptstyle [1,\mathrm{e}]}$ for the $\Z_N^{\scriptscriptstyle [1,\mathrm{e}]}$ electric 1-form symmetry; $B_0^{\scriptscriptstyle [-1]}$ for a $\Z_N^{\scriptscriptstyle [-1]}$ $(-1)$-form symmetry.
\item[---] Continuous gauge fields: the curvature $F_2^{\scriptscriptstyle [0,\mathrm{in}]}$ of the $U(1)^{\scriptscriptstyle [0,\mathrm{in}]}$ flavor symmetry acting on the instantons; additionally, the curvatures $F_4$ and $F_1^{\scriptscriptstyle [-1]}$ of continuous 2-form and $(-1)$-form symmetries, respectively.
\end{itemize}
Let us analyze the potential 't Hooft anomalies.
\begin{itemize}
\item The first term yields a potential 't Hooft anomaly for the electric 1-form symmetry $\Z_N^{\scriptscriptstyle [1,\mathrm{e}]}$, whose background gauge field is $B_2^{\scriptscriptstyle [1,\mathrm{e}]}$. However, due to \eqref{eq:CSYN0inZ}, the coefficient is such that the 1-form symmetry is anomaly-free.
\item The second term indicates a mixed 't Hooft anomaly between the 1-form symmetry and the finite $(-1)$-form symmetry.
\item The third term couples the background gauge field for $\Z_N^{\scriptscriptstyle [1,\mathrm{e}]}$ with the field strength $F_2^{\scriptscriptstyle [0,\mathrm{in}]}$ of the continuous $U(1)^{\scriptscriptstyle [0,\mathrm{in}]}$ instanton symmetry. From \eqref{eq:CSYN0inZ}, the anomalous coupling has coefficient $-(\mE^{\prime})^2 \cdot \mE /N^2 =  (N-1)/N$. This mixed 't Hooft anomaly was first analyzed from field theoretic methods in \cite{BenettiGenolini:2020doj}.
\item The fourth term couples the instanton field strength with the discrete gauge field for the $(-1)$-form symmetry.
\end{itemize}
The field strength $F_1^{\scriptscriptstyle [-1]}$ only appears in this part of the SymTFT through the last term.

\paragraph{BF terms from M-theory.}
Inserting the expansions above into \eqref{eq:MtheoryKin} we find:
\begin{equation}
\begin{aligned}
	S_{\text{BF}}^{Y^{N,0}} = &\int_{Y_6}^{\br{H}} \left[ \br{F}_2^{\scriptscriptstyle [0,\mathrm{in}]} \star \br{f}_5 + \br{F}_1^{\scriptscriptstyle [-1]} \star \br{f}_6 + \br{F}_4 \star \br{f}_3 \right] \\
	-&  \frac{1}{N} \int_{Y_6} \left[ B_2^{\scriptscriptstyle [1,\mathrm{e}]} \smile a_4  + B_0^{\scriptscriptstyle [-1]} \smile a_6 \right] \\
	+& \widetilde{\kappa}^{Y^{N,0}} \int_{Y_6}\left[ \mathscr{F} (\br{F}_2^{\scriptscriptstyle [0,\mathrm{in}]}) \smile a_4 + \mathscr{F} (\br{f}_6) \smile B_0^{\scriptscriptstyle [-1]} \right]
\end{aligned}
\end{equation}
up to integer shifts. Using that $\br{\dd G}_7$ is topologically trivial by construction, this reduces to 
\begin{equation}
\begin{aligned}
	S_{\text{BF}}^{Y^{N,0}} = &\int_{Y_6}\left[ \frac{F_2^{\scriptscriptstyle [0,\mathrm{in}]} }{2\pi} \wedge \frac{h_4}{2 \pi} + \frac{F_1^{\scriptscriptstyle [-1]}}{2\pi} \wedge \frac{h_5}{2\pi} + \frac{F_4}{2\pi}\wedge \frac{h_2}{2\pi} \right] \\
	-& \frac{1}{N} \int_{Y_6} \left[ B_2^{\scriptscriptstyle [1,\mathrm{e}]} \smile \delta A_3^{\scriptscriptstyle [2,\mathrm{m}]} + B_0^{\scriptscriptstyle [-1]} \smile \delta A_5^{\scriptscriptstyle [4]} \right] \\
	+& \widetilde{\kappa}^{Y^{N,0}} \int_{Y_6} \left[ \delta B_0^{\scriptscriptstyle [-1]} \smile \frac{h_5}{2\pi} + \frac{F_2^{\scriptscriptstyle [0,\mathrm{in}]} }{2\pi}  \smile \delta A_3^{\scriptscriptstyle [2,\mathrm{m}]} \right] .
\end{aligned}
\label{eq:SBFYN0}
\end{equation}
Focusing on the BF terms in the second line, we observe:
\begin{itemize}
\item The usual BF coupling $- \frac{1}{N} B_2^{\scriptscriptstyle [1,\mathrm{e}]} \smile \delta A_3^{\scriptscriptstyle [2,\mathrm{m}]}$ of background gauge fields for the electric 1-form and magnetic 2-form symmetries;
\item The analogous BF coupling $-\frac{1}{N} B_0^{\scriptscriptstyle [-1]} \smile \delta A_5^{\scriptscriptstyle [4]}$ between the electric/magnetic pair of $(-1)$-form and 4-form symmetries.
\end{itemize}
The last line contains higher-derivative terms mixing continuous and finite symmetries, and can be ignored.

\subsubsection{SymTFT of 5d \texorpdfstring{$\mathfrak{su}(N)_k$}{SU(N)} SYM from M-theory}
\label{sec:SymTFT5dSUNk}

For completeness, we repeat the analysis by adding a Chern--Simons term and studying 5d $\mathcal{N}=1$ $\mathfrak{su}(N)_k$ gauge theory. This theory is also geometrically engineered in toric geometry, and the link of the singularity is $Y^{N,k}$. This case has been studied in the literature \cite{Morrison:2020ool,Albertini:2020mdx,Apruzzi:2021nmk,DelZotto:2024tae} as well, thus we will be brief, and simply highlight the changes from the case $k=0$.\par

\paragraph{Geometric data of the link.}
The link of the singularity is $Y^{N,k}$, with (co)homology 
\begin{equation}
    H_{5-\bullet} (Y^{N,k}) = H^{\bullet} (Y^{N,k}) = \ \Z  \ \oplus \ 0 \ \oplus \ \begin{matrix} \Z \\ \oplus \\ \Z_{\mathrm{gcd} (N,k)}\end{matrix} \ \oplus \ \Z  \  \oplus \ \Z_{\mathrm{gcd} (N,k)} \ \oplus \ \Z .
\end{equation}
A computation as above yields 
\begin{equation}
\label{eq:CSYNkinZ}
\begin{aligned}
	6 \mathrm{CS}^{Y^{N,k}} &:= \int^{\br{H}}_{Y^{N,k}} \br{t}_2 \star \br{t}_2 \star \br{t}_2 = \frac{(\mE^{\prime})^3}{(\mathrm{gcd}(N,k))^3} , \\
	2\eta^{Y^{N,k}} &:= \int_{Y^{N,k}}^{\br{H}} \br{t}_2 \star \br{t}_2 \star \br{v}_2 = \frac{\mE^{\prime} \cdot \mE^{\prime} \cdot \mE}{(\mathrm{gcd}(N,k))^2} ,
\end{aligned}
\end{equation}
together with the predictions
\begin{equation}
	\frac{1}{(\mathrm{gcd}(N,k))}(\mE^{\prime})^3 \in \Z , \qquad \frac{1}{\mathrm{gcd}(N,k)} (\mE^{\prime})^2 \cdot \mE \in \Z,
\end{equation}
in agreement with \cite[Eq.(4.29)]{Apruzzi:2021nmk}.

\paragraph{Mixed anomalies from M-theory.}
Proceeding as in Subsection \ref{sec:SymTFT5dSUNnoCS}, we compute $S_{\text{CS}}^{Y^{N,k}} $. The differences with the $k=0$ theory are:
\begin{itemize}
\item The presence of a Chern--Simons level $k$ breaks the finite 1-form and $(-1)$-form symmetries down to $\Z_{\mathrm{gcd}(N,k)}$, in agreement with the field theory result;
\item The coupling $B_2^{\scriptscriptstyle [1,\mathrm{e}]} \smile B_2^{\scriptscriptstyle [1,\mathrm{e}]} \smile B_2^{\scriptscriptstyle [1,\mathrm{e}]}$ comes with coefficient $\mathrm{CS}^{Y^{N,k}} \notin \Z$ if $k \ne 0$, whence the 1-form symmetry is anomalous in this situation.
\end{itemize}

\paragraph{BF terms from M-theory.}
Continuing as in Subsection \ref{sec:SymTFT5dSUNnoCS} we easily find the BF couplings.
They are analogous as in the previous case, where now the electric/magnetic pairs of 1-/2-form symmetries and $(-1)$-/4-form symmetries are broken to $\Z_{\mathrm{gcd}(N,k)}$ by the presence of the Chern--Simons term.

\subsection{SymTFT of 5d orbifold SCFTs}
\label{sec:C3orbi}

We now consider 5d SCFTs geometrically engineered placing M-theory on the orbifold threefolds $\C^3 /\Gamma$, where $\Gamma$ is a finite subgroup of $SU(3)$. The 1-/2-form symmetries of these 5d SCFTs have been thoroughly studied in \cite{Tian:2021cif,Acharya:2021jsp,DelZotto:2022fnw}. We briefly complement the analysis therein by pursuing the $(-1)$-form symmetries.

\paragraph{Orbifold geometric data.} 

We consider $\Gamma \subset SU(3)$ acting on $\C^3$ such that the orbifold $\C^3/\Gamma$ is a canonical singularity; the possibilities have been classified in \cite{yau1993gorenstein}. The link of the singularity is 
\begin{equation}
    L_5= \mathbb{S}^5/\Gamma .
\end{equation}
Differently from the more familiar case of du Val singularities, whose links are smooth, the action of $\Gamma$ on $\mathbb{S}^5$ may have fixed point, thus $L_5$ itself may contain singularities.\par
The study of the free part of the (co)homology of $L_5$ will not add to the existing literature, thus we henceforth restrict our attention to the torsion part, and refer to the general analysis in Subsection \ref{sec:SymTFT5DCY3} for the treatment of the free part.\par

\paragraph{Geometric data of the link.}
If $\Gamma$ acts freely on $\mathbb{S}^5$, the (co)homology of the link has 
\begin{equation}
    \mathrm{Tor}H_{5-\bullet} (\mathbb{S}^5/\Gamma,\Z) = \mathrm{Tor}H^{\bullet} (\mathbb{S}^5/\Gamma,\Z) = \ 0 \ \oplus \ 0 \ \oplus \ \Gamma^{\mathrm{ab}} \ \oplus \ 0  \  \oplus \ \Gamma^{\mathrm{ab}} \ \oplus \ 0 ,
\end{equation}
where $\Gamma^{\mathrm{ab}}$ is the abelianization of $\Gamma$. We have used the universal coefficient theorem to get $\mathrm{Tor} H_3 (\mathbb{S}^5/\Gamma,\Z)$ as the Pontryagin dual of $\mathrm{Tor} H_1 (\mathbb{S}^5/\Gamma,\Z)$.\par
When the action of $\Gamma$ on $\mathbb{S}^5$ has fixed points, let $\Gamma_{\text{fixed}} \subseteq \Gamma $ be the normal subgroup generated by the elements with fixed points, and $\Gamma_{\triangleright} := \Gamma / \Gamma_{\text{fixed}}$. Then, thanks to a theorem of \cite{Armstrong1968}, \cite{DelZotto:2022fnw} argues that 
\begin{equation}
    H_{1} (\mathbb{S}^5/\Gamma,\Z) = \Gamma_{\triangleright}^{\mathrm{ab}} \cong \Z_{m_1} \times \cdots \times \Z_{m_{\mu}} .
\end{equation}
By the universal coefficient theorem and Poincar\'e duality, we also have
\begin{equation}
    \mathrm{Tor}H^{\bullet} (\mathbb{S}^5/\Gamma,\Z) = \ 0  \ \oplus \ 0 \ \oplus \ \Gamma_{\triangleright}^{\mathrm{ab}} \ \oplus \ 0  \  \oplus \ \Gamma_{\triangleright}^{\mathrm{ab}} \ \oplus \ 0.
\end{equation}
In conclusion, for our purposes we only need to consider $\Gamma_{\triangleright} \subset \Gamma$, which acts freely.\par
The generators of $\br{H}^{\bullet} (\mathbb{S}^5/\Gamma)$ are chosen according to Subsection \ref{sec:SymTFT5DCY3}.

\paragraph{$(-1)$-form symmetry and polarization.}
\label{sec:SymTFT5dorbi}
Specializing the analysis of Subsection \ref{sec:SymTFT5DCY3} to the present geometry, we deduce the following.
\begin{itemize}
    \item The maximal electric 1-form symmetry is $\Z_{m_{1}}^{\scriptscriptstyle [1,\mathrm{e}]}\times \cdots \times \Z_{m_{\mu}}^{\scriptscriptstyle [1,\mathrm{e}]} $. For each such symmetry we have a magnetic dual $\Z_{m_{\alpha}}^{\scriptscriptstyle [2,\mathrm{m}]}$, and a BF term of the form $- \frac{1}{m_{\alpha}} B_2^{\alpha,{\scriptscriptstyle [1,\mathrm{e}]}} \smile \delta A_3^{\alpha, {\scriptscriptstyle [2,\mathrm{m}]}}$.
    \item For every electric 1-form symmetry, there is a $(-1)$-form symmetry $\Z_{m_{\alpha}}^{\scriptscriptstyle [-1]}$, accompanied by the dual 4-form symmetry $\Z_{m_{\alpha}}^{\scriptscriptstyle [4]}$, with the analogous BF term.\par 
    The symmetry operators for the $(-1)$-form symmetries are realized by M5-branes wrapping a cycle in $\mathrm{Tor} H_1 (\mathbb{S}^5/\Gamma,\Z)$, while the symmetry operators for the 4-form symmetries are realized by M2-branes wrapping a cycle in $\mathrm{Tor} H_3 (\mathbb{S}^5/\Gamma,\Z)$.
\end{itemize}
Therefore, for every pair of electric 1-form/magnetic 2-form symmetries in \cite{DelZotto:2022fnw}, we should include the $(-1)$/4-form pair. To fully specify the global form of the theory, one has to also specify a polarization for the non-trivially acting subset of these symmetries.

\section{Geometric engineering of \texorpdfstring{$(-1)$}{(-1)}-form in 4d SYM theories}
\label{sec:4d-SymTFT}

\subsection{SymTFT of 4d \texorpdfstring{$\mathcal{N}=2$}{N=2} KK theories from M-theory}
\label{sec:4dN=2fromM}

One way to construct four-dimensional $\mathcal{N}=2$ gauge theories is via geometric engineering from Type IIA string theory on a non-compact Calabi--Yau threefold $X_6$ \cite{Katz:1996fh}. At finite string coupling, Type IIA string theory is described by M-theory on a circle of finite radius. Hence, upon further compactification on $X_6$, one obtains a 5d theory on a circle, descending to a 4d $\mathcal{N}=2$ KK theory.\par
We derive the SymTFT of 4d $\mathcal{N}=2$ KK theories engineered by M-theory on $X_6 \times \mathbb{S}^1$ from the circle compactification of the SymTFT derived in Subsection \ref{sec:SymTFT5DCY3}. The strategy is illustrated in Figure \ref{fig:4Dfrom5D}.

\begin{figure}[htb]
\centering
\begin{tikzpicture}
	\node[align=center] (A) at (0,0) {Type IIA on $X_6$\\ \footnotesize at finite coupling};
	\node[align=center] (4T) at (8,0) {4d $\mathcal{N}=2$\\ KK theory}; 
	\node[draw,rectangle,rounded corners,align=center] (4S) at (8,-2) {SymTFT of 4d $\mathcal{N}=2$\\ KK theory}; 
	\node (M) at (0,3) {M-theory on $X_6$};
	\node (5T) at (8,3) {5d $\mathcal{N}=1$}; 
	\node[draw,rectangle,rounded corners] (5S) at (8,5) {SymTFT of 5d $\mathcal{N}=1$};
	
	\path[->] (M) edge node[align=center,anchor=east] {\footnotesize compactify\\ \footnotesize on $\mathbb{S}^1$} (A);
	\path[->] (A) edge node[align=center,anchor=north east,pos=0.67] {\footnotesize Link \footnotesize reduction} (4S);
	\path[->] (A) edge node[align=center,anchor=south] {\footnotesize Geometric engineering} (4T);

	\path[->,thick,blue] (M) edge node[align=center,anchor=south east,pos=0.67] {\footnotesize Link \footnotesize reduction} (5S);
	\path[->] (M) edge node[align=center,anchor=south] {\footnotesize Geometric engineering} (5T);
	
	\path[->] (5T) edge node[align=center,anchor=west] {\footnotesize compactify\\ \footnotesize on $\mathbb{S}^1$} (4T);
	\path[->,thick,purple] (5S.south east) edge[bend left] node[align=center,anchor=west] {\footnotesize $\mathbb{S}^1$\\ \footnotesize reduction} (4S.north east);
\end{tikzpicture}
\caption{Geometric engineering and SymTFT of 4d $\mathcal{N}=2$ KK theories via M-theory. The step \begin{color}{blue}$\rightarrow$\end{color} has been carried out in Subsection \ref{sec:SymTFT5DCY3}. In this section we perform the step \begin{color}{purple}$\rightarrow$\end{color}.}
\label{fig:4Dfrom5D}
\end{figure}
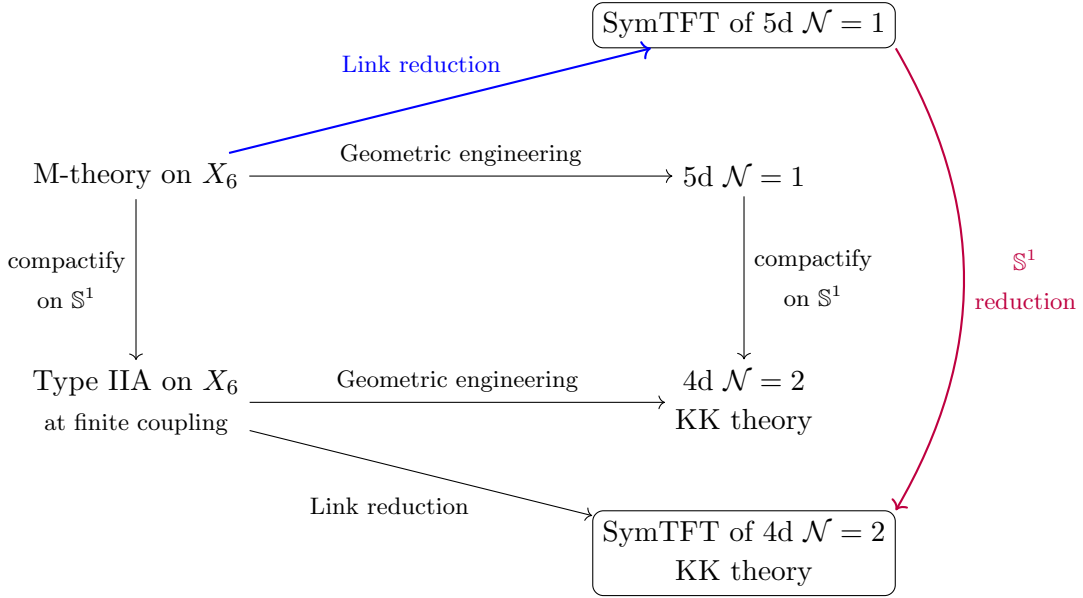

\subsubsection{SymTFT from circle reduction}

The 11d geometry is $X_6 \times \mathbb{S}^1 \times M_4$. The SymTFT living on $Y_6 \cong [0, \infty)\times \mathbb{S}^1 \times M_4$ is a specialization of the result in Subsection \ref{sec:SymTFT5DCY3}. To reduce along the free cycle generated by $[\mathbb{S}^1]$, we denote by $v$ the (pullback to $\mathbb{S}^1 \times M_4 $ of the) fundamental class of $\mathbb{S}^1$, and expand 
\begin{equation}
\begin{aligned}
    \br{F}_{\bullet}^{\cdot} &= \br{F}_{\bullet}^{\cdot} \star \br{1} + \br{\Phi}_{\bullet -1}^{\cdot} \star \br{v} , \\
    \br{B}_{\bullet}^{\cdot} &= \br{B}_{\bullet}^{\cdot} \star \br{1} + \br{\mathsf{b}}_{\bullet -1}^{\cdot} \star \br{v} , \\
    \br{h}_{\bullet}^{\cdot} &= \br{h}_{\bullet}^{\cdot} \star \br{1} + \br{\phi}_{\bullet -1}^{\cdot} \star \br{v} , \\
    \br{A}_{\bullet}^{\cdot} &= \br{A}_{\bullet}^{\cdot} \star \br{1} + \br{\mathsf{a}}_{\bullet -1}^{\cdot} \star \br{v} ,
\end{aligned}
\end{equation}
with a slight abuse of notation.

\paragraph{Mixed anomalies from circle reduction.}
From \eqref{eq:SymTFT5dN=1full}, the SymTFT of the 4d $\mathcal{N}=2$ KK theory engineered by M-theory on $X_6 \times \mathbb{S}^1$ captures the following 't Hooft anomalies:
\begin{equation}
\begin{aligned}
    S_{\text{twist}}^{L_5 \times \mathbb{S}^1} = \int_{Y_5}^{\br{H}} & \sum_{i=1}^{b^2} \left[ \br{\Phi}_3 \star \br{F}_2^{i} \star \br{F}_1^{i} + \br{F}_4 \star \br{\Phi}_1^{i} \star \br{F}_1^{i} - \br{F}_4 \star \br{F}_2^{i} \star \br{\Phi}_0^{i} \right] \\
     -\int_{Y_5} & \left\{  \mathscr{F}(\br{F}_4) \smile \left[ \kappa^{(5)} \left(  B_0\smile \mathsf{b}_1 \right) + \widetilde{\kappa}^{(5)} \left( \mathscr{F} (\br{\Phi}_1) \smile B_0 \right) \right] \right. \\
     & + \mathscr{F}(\br{\Phi}_3) \smile \left[ \kappa^{(5)} \left(  B_0\smile B_2 \right) + \widetilde{\kappa}^{(5)} \left( \mathscr{F} (\br{F}_2) \smile B_0 \right) \right] \\
      & + 3 \mathrm{CS}^{(5)} \left(  B_2 \smile B_2 \smile \mathsf{b}_1\right) + 2 \eta^{(5)} \left( \mathscr{F} (\br{F}_2) \smile B_2 \smile \mathsf{b}_1 \right) \\
      & + \eta^{(5)} \left( \mathscr{F} (\br{\Phi}_1) \smile B_2 \smile B_2 \right)   \\
      &  - \sum_{1 \le \beta \ne \beta^{\prime} \le \widetilde{\mu}} \mathscr{F} (\br{F}_4) \smile \widetilde{\mathrm{CS}}^{(5)} \left(  B_1^{\beta}\smile \mathsf{b}_0^{\beta^{\prime}} \right) \\ & + \mathscr{F} (\br{F}_4) \smile \left[ \widetilde{\eta}^{(5)}  \left( \mathscr{F} (\br{F}_1) \smile \mathsf{b}_0 \right)  - \widetilde{\eta}^{(5)}  \left( \mathscr{F} (\br{\Phi}_0) \smile B_1 \right) \right] \\
      & \left. + \mathscr{F} (\br{\Phi}_3) \smile \left[  - \widetilde{\mathrm{CS}}^{(5)} \left(  B_1\smile B_1 \right) + \widetilde{\eta}^{(5)}  \left( \mathscr{F} (\br{F}_1) \smile B_1 \right) \right] \right\} .
\end{aligned}
\end{equation}

\paragraph{BF terms from circle reduction.}
Next, we obtain the BF terms of the 4d $\mathcal{N}=2$ KK theory from the circle compactification of \eqref{eq:BF5dN=1full}. We find:
\begin{equation}
\begin{aligned}
    S_{\mathrm{BF}}^{L_5 \times \mathbb{S}^1} = \int_{Y_5} & \left\{ \frac{\Phi_3}{2\pi}\wedge\frac{h_2}{2\pi} + \frac{F_4}{2\pi}\wedge\frac{\phi_1}{2\pi} + \sum_{i=1}^{b^2} \left[ \frac{\Phi_1^{i}}{2\pi}\wedge\frac{h^{i}_4}{2\pi} + \frac{F_2^{i}}{2\pi}\wedge\frac{\phi^{i}_3}{2\pi} - \frac{\Phi_0^{i}}{2\pi} \wedge \frac{h^{i}_5}{2\pi} + \frac{F_1^{i}}{2\pi} \wedge \frac{\phi^{i}_4}{2\pi}\right] \right. \\
    +& \kappa^{(5)} \left(  B_0\smile \delta \mathsf{a}_4 \right) + \widetilde{\mathrm{CS}}^{(5)} \left(  B_1\smile \delta \mathsf{a}_3 \right) - \widetilde{\mathrm{CS}}^{(5)} \left(  \mathsf{b}_0\smile \delta A_4 \right)  \\
     + & \kappa^{(5)} \left(  B_2\smile \delta \mathsf{a}_2 \right)  + \kappa^{(5)} \left(  \mathsf{b}_1\smile \delta A_3 \right) \\
    +& \widetilde{\kappa}^{(5)} \left( \frac{F_2}{2\pi} \smile \delta \mathsf{a}_2 \right) + \widetilde{\kappa}^{(5)} \left( \frac{\Phi_1}{2\pi} \smile \delta A_3 \right) +\widetilde{\eta}^{(5)} \left( \frac{F_1}{2\pi } \smile \delta \mathsf{a}_3 \right) +\widetilde{\eta}^{(5)} \left( \frac{\Phi_0}{2\pi} \smile \delta A_4 \right) \\
    + \ & \left.  \widetilde{\kappa}^{(5)} \left( \delta B_0 \smile \frac{\phi_4}{2\pi} \right) + \widetilde{\eta}^{(5)} \left(  \delta B_1\smile \frac{\phi_3}{2\pi} \right) + \widetilde{\eta}^{(5)} \left(  \delta \mathsf{b}_0\smile \frac{h_4}{2\pi} \right) \right\} .
\end{aligned}
\end{equation}

\subsubsection{\texorpdfstring{$(-1)$}{(-1)}-form symmetry and polarization}
\label{sec:minus1form5d}
Our main focus is on the realization of $(-1)$-form symmetries through geometric engineering. There are two sources of $(-1)$-form symmetries in the 4d KK theories:
\begin{itemize}
    \item Those descending directly from $(-1)$-form symmetries in the 5d SCFT prior to circle compactification;
    \item The reduction of 0-form symmetries in the 5d SCFT prior to circle compactification.
\end{itemize}
In the former case, 0-form gauge fields in 5d reduce to 0-form gauge fields in 4d, without producing any new symmetry; in the latter case, the holonomy of the 1-form gauge field around the circle provides a 4d 0-form gauge field.

\paragraph{Discrete $(-1)$-form symmetry.}
The SymTFT of the 5d $\mathcal{N}=1$ SCFT geometrically engineered by $X_6$ contains a $(-1)$-form symmetry $\Z_{m_1}^{\scriptscriptstyle [-1]} \times \cdots \times \Z_{m_{\mu}}^{\scriptscriptstyle [-1]}$, together with a discrete 0-form symmetry $\Z_{n_1}^{\scriptscriptstyle [0]} \times \cdots \times \Z_{n_{\widetilde{\mu}}}^{\scriptscriptstyle [0]}$. The direct reduction of the first $\mu$ factors descends to the finite $(-1)$-form symmetry of the 4d theory. The reduction of the latter $\widetilde{\mu}$ factors gives both 0-form and $(-1)$-form symmetries in 4d.\par
In conclusion, the finite $(-1)$-form symmetry in a 4d $\mathcal{N}=2$ KK theory is 
\begin{equation}
    \underbrace{\Z_{m_1}^{\scriptscriptstyle [-1]} \times \cdots \times \Z_{m_{\mu}}^{\scriptscriptstyle [-1]}}_{\mathrm{Tor}H_1 (L_5,\Z)} \times \underbrace{\Z_{n_1}^{\scriptscriptstyle [-1]} \times \cdots \times \Z_{n_{\widetilde{\mu}}}^{\scriptscriptstyle [-1]}}_{\mathrm{Tor}H_2 (L_5,\Z)} .
\end{equation}
Using $(\kappa^{(5)})_{\alpha \alpha^{\prime}} = -\frac{1}{m_{\alpha}} \delta_{\alpha \alpha^{\prime}}$, we identify the BF couplings:
\begin{equation}
    - \sum_{\alpha=1}^{\mu} \frac{1}{m_{\alpha}} B_0^{\alpha} \smile \delta \mathsf{a}_4^{\alpha} , 
\end{equation}
which are the direct dimensional reduction of the electric/magnetic coupling involving the $(-1)$-form symmetry. Here, $\mathsf{a}_4^{\alpha}$ are background gauge fields for a 3-form symmetry acting on domain walls. As we predicted by general arguments, it carries the exact same BF coupling as the background fields for the electric/magnetic 1-form symmetries $\Z_{m_{\alpha}}^{\scriptscriptstyle [1,\mathrm{e}]}, \Z_{m_{\alpha}}^{\scriptscriptstyle [1,\mathrm{m}]}$.\par
Additionally, we have the BF couplings for the new electric/magnetic pairs of $(-1)$-/3-form symmetries:
\begin{equation}
    - \sum_{1 \le \beta < \beta^{\prime} \le \widetilde{\mu} }\frac{\ell_{\beta , \beta^{\prime}}}{\mathrm{gcd}(n_{\beta},n_{\beta^{\prime}})} \mathsf{b}_0^{\beta} \smile \delta A_4^{\beta^{\prime}} 
\end{equation}
for some integers $\ell_{\beta , \beta^{\prime}} \in \Z / \mathrm{gcd}(n_{\beta},n_{\beta^{\prime}}) \Z $ that depend on the choice of generators of $\mathrm{Tor} H^3 (L_5, \Z)$.\par 
We can make a field redefinition of $A_4^{\beta}$ involving $\ell_{\beta , \beta^{\prime}} A_4^{\beta^{\prime}}$ only for $\beta^{\prime} >\beta$, to put the BF term in a more canonical form. Note that the only other place where $A_4^{\beta}$ appears is through 
\begin{equation}
    \sum_{\beta,i} (\widetilde{\eta}^{(5)} )_{\beta i} \int_{Y_5} \frac{\Phi_0^{i}}{2\pi} \smile \delta A_4^{\beta} = \sum_{\beta,i} (\widetilde{\eta}^{(5)} )_{\beta i} \int_{M_4} \frac{\Phi_0^{i}}{2\pi} \smile A_4^{\beta} \in \Z ,
\end{equation}
which plays no role in the SymTFT. Thus, the field redefinition that adjusts the BF coupling between $\Z_{n_{\beta}}^{\scriptscriptstyle [-1]}$ and $\Z_{n_{\beta^{\prime}}}^{\scriptscriptstyle [3]}$ does not affect the rest of BF couplings in the SymTFT. After this redefinition, the choice of polarization between these genuinely 4d $(-1)$-/3-form symmetries is only non-trivial when the topology of $X_6$ is such that there exist $(n_{\beta},n_{\beta^{\prime}})$ with $\mathrm{gcd}(n_{\beta},n_{\beta^{\prime}}) \ne 1$.

\paragraph{Continuous $(-1)$-form symmetry.} 
We have found a total of $U(1)^{b^2} \times U(1)^{b^2} =U(1)^{2b^2} $ continuous $(-1)$-form symmetries, with background field strengths $F_1^{i}, \Phi_1^{i}$, $i=1, \dots, b^2$.\par
Half of these symmetries descend directly from the $(-1)$-form symmetry in 5d. On the other hand, there are genuinely 4d lower form symmetries with curvature $\Phi_1^{i}$ and coupling 
\begin{equation}
    \sum_{i=1}^{b^2} \int_{Y_5} \frac{\Phi_1^{i}}{2 \pi} \wedge \frac{h_4^{i}}{2 \pi} .
\end{equation}
These descend from the instanton 0-form symmetries in 5d, and we identify them with Chern--Weyl symmetries.\par

\subsubsection{Defects and symmetry operators from M-theory}

\paragraph{Discrete symmetries.}
The M-theory setup allows for configurations of M2 and M5-branes wrapping torsional cycles in $L_5 \times \mathbb{S}^1$. We list all the allowed defects and symmetry generators from M2 and M5-branes in Table \ref{tab:M2M5-4d}.\par

\begin{table}[ht]
\centering
\begin{tabular}{|l|c|c|c|c|}
\hline 
& M2 & & M5 & \\
\hline
Tor$H_{1}(L_{5},\Z)\times [0,\infty)$ \hspace{4pt} & Wilson line & \begin{color}{Red}$\diamondsuit$\end{color} & &  \\
Tor$H_{2}(L_{5},\Z)\times [0,\infty)$ \hspace{4pt} & local operator & \begin{color}{blue}$\circ$\end{color} & domain wall & \begin{color}{blue}$\triangle$\end{color} \\
Tor$H_{3}(L_{5},\Z)\times [0,\infty)$ \hspace{4pt} & --- & $\clubsuit$ & surface defect & $\sharp$\\
\hline
Tor$H_{1}(L_{5},\Z) \times \mathbb{S}^1\times [0,\infty)$ \hspace{4pt} & local operator & \begin{color}{blue}$\Box$\end{color} & domain wall & ${\spadesuit}$  \\
Tor$H_{2}(L_{5},\Z) \times \mathbb{S}^1\times [0,\infty)$ \hspace{4pt} & --- & $\flat$ & surface defect & \begin{color}{blue}$\triangledown$\end{color} \\
Tor$H_{3}(L_{5},\Z) \times \mathbb{S}^1\times [0,\infty)$ \hspace{4pt} & & & 't Hooft line & \begin{color}{Red}$\heartsuit$\end{color}\\
\hline
\hline
Tor$H_{1}(L_{5},\Z)$ \hspace{4pt} & 1-form sym. generator & \begin{color}{Red}$\heartsuit$\end{color} &  &   \\
Tor$H_{2}(L_{5},\Z)$ \hspace{4pt} & 2-form sym. generator & \begin{color}{blue}$\triangledown$\end{color} &  $(-1)$-form sym. generator & $\flat$ \\
Tor$H_{3}(L_{5},\Z)$ \hspace{4pt} & 3-form sym. generator & $\spadesuit$ & 0-form sym. generator & \begin{color}{blue}$\Box$\end{color} \\
\hline
Tor$H_{1}(L_{5},\Z) \times \mathbb{S}^1$ \hspace{4pt} & 2-form sym. generator & $\sharp$ & $(-1)$-form sym. generator & $\clubsuit$ \\
Tor$H_{2}(L_{5},\Z) \times \mathbb{S}^1$ \hspace{4pt} & 3-form sym. generator & \begin{color}{blue}$\triangle$\end{color}& 0-form sym. generator & \begin{color}{blue}$\circ$\end{color} \\
Tor$H_{3}(L_{5},\Z) \times \mathbb{S}^1$ \hspace{4pt} & & & 1-form sym. generator & \begin{color}{Red}$\diamondsuit$\end{color} \\
\hline
\end{tabular}
\caption{Branes wrapping torsional cycles in $L_5 \times \mathbb{S}^1$ give rise to finite $p$-form symmetries. We mark with equal symbol the charged defect and the corresponding symmetry generators.}
\label{tab:M2M5-4d}
\end{table}

\paragraph{Continuous abelian symmetries.}
It is also possible to consider defects charged under $U(1)$ symmetries. We list the electric defects from M2-branes wrapping free cycles in $H_{\bullet} (L_5 \times \mathbb{S}^1, \Z)_{\text{free}}$ in Table \ref{tab:M2U(1)S1L5}.
\begin{itemize}
    \item We distinguish between the Chern--Weil symmetries $U(1)^{\scriptscriptstyle [-1,i,\mathrm{CW}]}$, with curvature $\Phi_1^{i}$ as explained in Subsection \ref{sec:minus1form5d}, and the `new' $(-1)$-form symmetries inherited from 5d, with curvature $F_1^{i}$.
    \item The universal 2-form symmetry in 5d descends to a pair of universal 2-form and 1-form symmetries in the 4d KK theory.
\end{itemize}\par
The symmetry generators are obtained from $P_7$-fluxbranes. They descend from the direct $\mathbb{S}^1$ reduction of the result in Subsection \ref{sec:5dN1fluxbrane}, thus we do not repeat the derivation.\par

\begin{table}[ht]
\centering
\begin{tabular}{|l|c|c|}
\hline 
& M2 & charged under \\
\hline
$H_{0}(L_{5},\Z)_{\text{free}}\times [0,\infty) $ \hspace{2pt} & Surface defect & $U(1)^{\scriptscriptstyle [2]}$  \\
$H_{2}(L_{5},\Z)_{\text{free}}\times [0,\infty)$ \hspace{2pt} & Local operator & $U(1)^{\scriptscriptstyle [0, i]}$ \\
$H_{3}(L_{5},\Z)_{\text{free}}\times [0,\infty)$ \hspace{2pt} & --- & $U(1)^{\scriptscriptstyle [-1, i,\mathrm{new}]} $ \\
\hline
$H_{0}(L_{5},\Z)_{\text{free}}\times\mathbb{S}^1 \times [0,\infty) $ \hspace{2pt} & Line defect & $U(1)^{\scriptscriptstyle [1]}$  \\
$H_{2}(L_{5},\Z)_{\text{free}}\times\mathbb{S}^1 \times [0,\infty)$ \hspace{2pt} & --- & $U(1)^{\scriptscriptstyle [-1,i,\mathrm{CW}]}$ \\
$H_{3}(L_{5},\Z)_{\text{free}}\times\mathbb{S}^1 \times [0,\infty)$ \hspace{2pt} &  &  \\
\hline
\end{tabular}
\caption{Electric defects charged under continuous symmetries in the 4d $\N=2$ KK theory, and their M-theory realization.}
\label{tab:M2U(1)S1L5}
\end{table}

\subsubsection{Example: SymTFT of 4d \texorpdfstring{$\mathcal{N}=2$}{N=2} \texorpdfstring{$\mathfrak{su}(N)$}{su(N)} SYM}

To showcase the general derivation, we write down the SymTFT of the 4d $\mathcal{N}=2$ KK theory from the circle compactification of 5d $\mathcal{N}=1$ $\mathfrak{su}(N)$ Yang--Mills from Subsection \ref{sec:SymTFT5dSUN0}. The link of the threefold singularity is $L_5=Y^{N,0}$, with 
\begin{equation}
    \mathrm{Tor} H_1 (Y^{N,0}, \Z) \cong \Z_N \qquad \mathrm{Tor} H_2 (Y^{N,0}, \Z) = \emptyset , \qquad \mathrm{Tor} H_3 (Y^{N,0}, \Z) \cong \Z_N 
\end{equation}
and non-trivial coefficients 
\begin{equation}
    6 \mathrm{CS}^{Y^{N,0}} = N-1 , \qquad 2 \eta^{Y^{N,0}} = \frac{1}{N} \mod 1 , \qquad \widetilde{\kappa}^{Y^{N,0}} = \frac{1}{N}  \mod 1 .
\end{equation}

\paragraph{Mixed anomalies.}
Focusing on the anomalous terms, we have:
\begin{equation}
\begin{aligned}
    S_{\text{twist}}^{Y^{N,0}\times \mathbb{S}^1} = \frac{1}{N} \int_{Y_5} & \left[ \frac{1}{2}  \mathscr{F} (\Phi_1^{\scriptscriptstyle [-1,\mathrm{CW}]}) \smile B_2^{\scriptscriptstyle [1, \mathrm{e}]} \smile B_2^{\scriptscriptstyle [1, \mathrm{e}]} \ + \mathscr{F} (\Phi_3) \smile B_2^{\scriptscriptstyle [1, \mathrm{e}]} \smile B_0^{\scriptscriptstyle [-1]}  \right. \\
    & + \mathscr{F} (F_4) \smile \mathsf{b}_1 \smile B_0^{\scriptscriptstyle [-1]}  \ + \mathscr{F} (F_2) \smile B_2^{\scriptscriptstyle [1, \mathrm{e}]} \smile \mathsf{b}_1 \\
    & \left. - \mathscr{F} (F_4) \smile \mathscr{F} (\Phi_1^{\scriptscriptstyle [-1,\mathrm{CW}]})  \smile B_0^{\scriptscriptstyle [-1]} \ - \mathscr{F} (\Phi_3) \smile \mathscr{F} (F_2)  \smile B_0^{\scriptscriptstyle [-1]} \right] \\
    + \int_{Y_5}^{\br{H}} & \left[ \br{F}_4 \star \br{\Phi}_1^{\scriptscriptstyle [-1,\mathrm{CW}]} \star \br{F}_1^{\scriptscriptstyle [-1,\mathrm{new}]} \right] .
\end{aligned}
\label{eq:4dN2KKtwist}
\end{equation}
The first term contains the curvature $\Phi_1^{\scriptscriptstyle [-1,\mathrm{CW}]}$ of a continuous $U(1)^{\scriptscriptstyle [-1,\mathrm{CW}]}$ $(-1)$-form symmetry, coupling to the background fields for the electric 1-form symmetry. We recognize in $\Phi_1^{\scriptscriptstyle [-1,\mathrm{CW}]}$ the curvature of the 0-form gauge field, that is identified with the theta angle in the 4d gauge theory. That is to say, $\Phi_1^{\scriptscriptstyle [-1,\mathrm{CW}]}$ gives a background for the Chern--Weil symmetry. This provides a top-down derivation of the mixed anomaly discussed in \cite{Santilli:2024dyz} (in a non-supersymmetric setting).\par
On the contrary, $F_1^{\scriptscriptstyle [-1,\mathrm{new}]}$ provides a background for a `new' $U(1)^{\scriptscriptstyle [-1,\mathrm{new}]}$ $(-1)$-form symmetry, inherited from the continuous $(-1)$-form symmetry of the 5d theory prior to circle compactification.

\paragraph{BF couplings.}
Let us now work out the BF couplings. The circle reduction of \eqref{eq:SBFYN0}, after dropping the terms that play no role in our discussion, yields:
\begin{equation}
\begin{aligned}
    S_{\mathrm{BF}}^{Y^{N,0}\times \mathbb{S}^1} =& \int_{Y_5} \left[ \frac{\Phi_1^{\scriptscriptstyle [-1,\mathrm{CW}]}}{2 \pi} \wedge \frac{h_4}{2\pi} + \frac{F_2}{2\pi} \wedge \frac{\phi_3}{2\pi} + \frac{F_1^{\scriptscriptstyle [-1,\mathrm{new}]}}{2 \pi} \wedge \frac{\phi_4}{2\pi} + \frac{F_4}{2\pi} \wedge \frac{\phi_1}{2\pi} + \frac{\Phi_3}{2\pi} \wedge \frac{h_2}{2\pi}  \right] \\
    &- \frac{1}{N} \int_{Y_5} \left[ B_2^{\scriptscriptstyle [1, \mathrm{e}]} \smile \delta A_2^{\scriptscriptstyle [1, \mathrm{m}]}  \ + B_0^{\scriptscriptstyle [-1]} \smile \delta A_4^{\scriptscriptstyle [3]}  \ + \mathsf{b}_1 \smile \delta A_3  \right] .
\end{aligned}
\end{equation}
\begin{itemize}
    \item From the first line, we read off that 
    \begin{itemize}
        \item[---] $h_4$ combines with half of the last term in \eqref{eq:4dN2KKtwist} to give, after projection onto the physical boundary, the instanton current, which couples of the curvature $\Phi_1^{\scriptscriptstyle [-1,\mathrm{CW}]}$ of the Chern--Weil $(-1)$-form symmetry;
        \item[---] There is an additional 4-form which couples to the background for the `new' continuous $(-1)$-form symmetry.
    \end{itemize}
    \item From the second line, we have three standard BF coupling for the discrete electric/magnetic pairs, respectively the electric/magnetic 1-form symmetry, the $(-1)$-/3-form symmetry and the 0-/2-form symmetry, all of which are $\Z_N$ and all of which have same coefficient $-1/N$.
\end{itemize}

\subsection{Geometric engineering of 4d \texorpdfstring{$\N=1$}{N=1} SYM theories}\label{sec:GE4dN=1}

This subsection is an interlude to describe the geometric engineering from M-theory on $G_2$-manifolds, and we come back to analysis of the SymTFT for the resulting 4d theory in Subsection \ref{sec:MonG2SymTFT}.\par
Here, we closely follow the construction outlined in \cite{Acharya:2020vmg,Najjar:2022eci} to geometrically engineer four-dimensional $\N=1$ $\mathfrak{su}(N)$ gauge theories and describe their associated moduli spaces. We compactify M-theory on an orbifold with $G_2$ holonomy, of the form $X_7= \text{B7}/\Gamma_{p,N,q}$, whose link is 
\begin{equation}
    L_6 = \mathrm{Link} (X_7) = (\mathbb{S}^3 \times \mathbb{S}^3) /\Gamma_{p,N,q} .
\end{equation}
We now proceed to explain the ingredients in this formula and the ensuing geometric engineering.

\subsubsection{Abelian quotients of the B7 family}

\paragraph{The B7 family and its symmetries.} The B7 family is a 1-parameter family of $G_2$-metrics on $\R^{4}\times \mathbb{S}_{\mathrm{b}}^{3}$ with a co-homogeneity one metric. The seven-dimensional space is asymptotically locally conical (ALC), smooth and simply connected. An explicit member of the family was found in the physics literature \cite{Brandhuber:2001yi} and later the full 1-parameter family of ALC metrics was found in \cite{bazaikin2013complete}.\par
Writing $\R^4$ as the cone over\footnote{The notation $\mathbb{S}^3_{\mathrm{b}},\mathbb{S}^3_{\mathrm{f}}$ refers to the base and fibre of the spinor bundle over $\mathbb{S}^3$.} $\mathbb{S}_{\mathrm{f}}^3$, we have $\R^4 \times \mathbb{S}_{\mathrm{b}}^3 \cong [0,\infty) \times \mathbb{S}^3_1 \times \mathbb{S}^3_2 / \sim$, where $\sim$ identifies the cones obtained exchanging the two 3-spheres. Using quaternions $\mathsf{q}_i \in \mathbb{H}$ to write $\mathbb{S}^3_i$ as 
\begin{equation}
    \mathbb{S}^3_i = \left\{ \mathsf{q}_i \in \mathbb{H} \ \,   | \ \, \vert \mathsf{q}_i \rvert =1 \right\} ,
\end{equation}
the map is given by \cite[Sec.2]{Foscolo:2018mfs} 
\begin{equation}
\begin{aligned}
        \left[ 0,\infty \right) \times \mathbb{S}^3_1 \times \mathbb{S}^3_2 / \sim \ & \longrightarrow \ \R^4 \times \mathbb{S}_{\mathrm{b}}^3 \\
        (t, \mathsf{q}_1, \mathsf{q}_2 ) \ & \mapsto \ (t \mathsf{q}_1, \mathsf{q}_1 \bar{\mathsf{q}}_2 ) .
\end{aligned}
\end{equation}\par
For our purposes, we thus have two pairs of 3-spheres: $\mathbb{S}_1^3 \times \mathbb{S}^3_2$ and $\mathbb{S}_{\mathrm{f}}^3 \times \mathbb{S}_{\mathrm{b}}^3$, and these two pairs are related by
\begin{equation}
\label{eq:S3S3toS3S3}
    \psi : \ (\mathsf{q}_1, \mathsf{q}_2 ) \mapsto (\mathsf{q}_1, \mathsf{q}_1 \bar{\mathsf{q}}_2 ) .
\end{equation}
In particular, $\mathbb{S}_{\mathrm{f}}^3$ is identified right away with $\mathbb{S}^3_1$, whereas the map to $\mathbb{S}_{\mathrm{b}}^3$ is non-trivial. Each $\mathbb{S}^3_i$ has an action of $SU(2)_{\ell,i} \times SU(2)_{r,i}$, with the first (respectively second) $SU(2)$ acting on the left (respectively right); however, only the diagonal subgroup $ \triangle SU(2)_{r} \subset  SU(2)_{r,1} \times SU(2)_{r,2}$ preserves \eqref{eq:S3S3toS3S3}. We thus consider the action of $ (\omega_1, \omega_2, \eta) \in SU(2)_{\ell,1} \times SU(2)_{\ell,2} \times SU(2)_{r}$ on $\mathbb{S}_1^3 \times \mathbb{S}^3_2$, and combine it with \eqref{eq:S3S3toS3S3} to get:
\begin{equation}
    \psi \left( (\omega_1, \omega_2, \xi ) \cdot (\mathsf{q}_1,\mathsf{q}_2) \right)= \psi \left( \omega_1 \mathsf{q}_1 \xi, \omega_2 \mathsf{q}_2 \xi \right) = (\omega_1 \mathsf{q}_1 \xi, \omega_1 \mathsf{q}_1 \bar{\mathsf{q}}_2 \omega_2^{-1} ).
\end{equation}

\paragraph{Abelian quotients of B7.} Let us consider the quotient taken in \cite{Acharya:2020vmg,Najjar:2022eci}, which is a generalization of that taken in \cite{Foscolo:2018mfs}. We study B7$/ \Gamma_{p,N,q}$, with the abelian finite group $\Gamma_{p,N,q}$ characterized by 
\begin{itemize}
    \item[(i)] a subgroup $\Z_{N} \subset \triangle U(1)_{r}$ of the maximal torus of $\triangle SU(2)_r$;
    \item[(ii)] a subgroup $\Z_p \subset U(1)_{\ell,2}$ of the maximal torus of $SU(2)_{\ell,2}$;
    \item[(iii)] a character $\chi_q$ of $\Z_p$, which we classify by $q \in \Z / p\Z$.
\end{itemize}
Furthermore, we impose 
\begin{equation}\label{eq:gcdpNq}
    \mathrm{gcd}(p,q)=1, \qquad \mathrm{gcd}(p,N)=1 , \qquad N>p .
\end{equation}
The first condition guarantees that $\chi_q$ is either trivial (if $q=0$) or generates the character group $\widehat{\Z}_p$, while the second and third conditions are for ease of exposition.\par
We thus have a natural action of $(\delta, \xi) \in \Z_p \times \Z_{pN}$ on $\mathbb{S}^3_1 \times \mathbb{S}^3_2$, which, under \eqref{eq:S3S3toS3S3}, becomes 
\begin{equation}\label{eq:Gammaction}
    \psi \left( (\delta, \xi) \cdot (\mathsf{q}_1, \mathsf{q}_2) \right) = \psi \left( (\mathsf{q}_1 \xi, \delta \mathsf{q}_2 \xi) \right) = (\mathsf{q}_1 \xi, \mathsf{q}_1 \bar{\mathsf{q}}_2\delta^{-1} ) = (\xi \mathsf{q}_1, \delta^{-1} \mathsf{q}_1 \bar{\mathsf{q}}_2 ) .
\end{equation}
Explicitly, the finite group $\Gamma_{p,N,q}$ of order $pN$ is \cite[Sec.3.1]{Acharya:2020vmg}
\begin{equation}\label{def:Gamma}
    \Gamma_{p, N, q} = \left\{ \ (\delta,\xi) \in \Z_{p}\times \Z_{pN}  \ \, | \ \,  \xi^{N} = \chi_q (\delta)   \right\} .
\end{equation}\par

We can now define the 7-manifold $X_7:= \text{B7}/\Gamma_{p,N,q}$ as the abelian quotient of the B7 manifold by this group. The link is 
\begin{equation}
    L_6 = L_{6,(p,N,q)} = (\mathbb{S}^3_{\mathrm{f}} \times \mathbb{S}^3_{\mathrm{b}} ) / \Gamma_{p,N,q} ,
\end{equation}
with action of $\Gamma_{p,N,q}$ induced from the action on B7, as given in \eqref{eq:Gammaction}.\par
The quotient structure, and hence the topology of $L_{6,(p,N,q)}$, depends on the choice of $\chi_q$, that is, on the value of the parameter $q$.
\begin{itemize}
    \item $q=0$. In this case, the constraint in \eqref{def:Gamma} implies that $\xi\in \Z_N$ belongs to the subgroup of $N^{\text{th}}$ roots of unity in $\Z_{pN}$, and $\Gamma_{p,N,0} \cong \Z_N \times \Z_p$. The action \eqref{eq:Gammaction} on the two 3-spheres in this case factorizes into
    \begin{equation}\label{M/gammasplit}
        L_{(p,N,0)} \ \cong \ (\mathbb{S}_{\mathrm{f}}^{3}/\Z_{N})\times (\mathbb{S}_{\mathrm{b}}^{3}/\Z_{p})\,.
    \end{equation}
    The requirement $\mathrm{gcd}(N,p)=1$ ensures that the action is free, as discussed in \cite{Friedmann:2002ct,Friedmann:2012uf,cortes2015locally}.
    \item $q=1$. In this case, the constraint in \eqref{def:Gamma} implies that $\xi^N$ generates the $\Z_p$ subgroup of $\Z_{pN} \cong \Z_p \times \Z_N$ (using \eqref{eq:gcdpNq}). Then, $\Gamma_{p,N,1} \cong \Z_{p} \times \Z_N$, which by \eqref{eq:Gammaction} acts as $\Z_{pN}$ on $\mathbb{S}^3_{\mathrm{f}}$ and as $\Z_p$ on $\mathbb{S}^3_{\mathrm{b}}$. We obtain
    \begin{equation}\label{M/gammanosplit}
        L_{6,(p,N,1)} \ \cong \ ((\mathbb{S}_{\mathrm{f}}^{3}/\Z_{N})\times \mathbb{S}_{\mathrm{b}}^{3})/\Z_{p} \ \cong \ (\mathbb{S}_{\mathrm{f}}^{3}/\Z_{pN})\times   (\mathbb{S}_{\mathrm{b}}^{3}/\Z_{p})\, ,
    \end{equation}
    holding for the specific construction of the action \eqref{eq:Gammaction} of $\Gamma_{p,N,1}$ subject to the hypotheses \eqref{eq:gcdpNq}.
    \item $q>1$. This case is similar to $q=1$. Thanks to the assumption \eqref{eq:gcdpNq}, also in this case $\xi^N$ generates a $\Z_p$ subgroup of $\Z_{pN}$, and the derivation parallels the case $q=1$.
\end{itemize}

\subsubsection{M-theory on B7/\texorpdfstring{$\Gamma_{p,N,q}$}{Gamma}}

Recall that, by construction, the action \eqref{eq:Gammaction} lifts to the B7 manifold, leading to the quotient space $X_7=\text{B7}/\Gamma_{p,N,q}$.
The physics of M-theory on B7$/\Gamma_{p,N,q}$ has been analyzed explicitly in \cite{Acharya:2020vmg,Najjar:2022eci}. Here, we briefly review the key results and refer the reader to these original works for further details.

\begin{itemize}
    \item Let us first take M-theory on B7$/\Gamma_{p,N,0}$. In this case, the quotient group factorizes, as shown around \eqref{M/gammasplit}, and 
\begin{equation}
    \text{B7}/\Gamma_{p,N,0}\cong (\C^2/\Z_N) \times (\mathbb{S}^3_{\mathrm{b}} /\Z_p) .
\end{equation}
In the context of four-dimensional physics, the $A_{N-1}$ singularity gives rise to an effective 7d $\mathfrak{su}(N)$ gauge theory \cite{Acharya:2004qe}, further compactified on $\mathbb{S}^3_{\mathrm{b}} /\Z_p$ down to 4d. As a by-product of this twisted compactification, propagating degrees of freedom in 7d split into fields that propagate in 4d, and fields that live on $\mathbb{S}^3_{\mathrm{b}}/\Z_p$ and are massive from a 4d perspective. The $\Z_{p}$ quotient on $\mathbb{S}^3_{\mathrm{b}}$ leads to the presence of holonomies for these $\mathfrak{su}(N)$-valued gauge fields on $\mathbb{S}^3_{\mathrm{b}}/\Z_p$, which (from the 7d perspective) are $\Z_{p}$-invariant Wilson loops \cite{Acharya:2020vmg,Najjar:2022eci}. Their effect is to break the gauge group in a manner that preserves the overall rank of the theory,
\begin{equation}\label{breakingSUN}
    \mathfrak{su}(N) \, \rightarrow\, \mathfrak{su}(n_{0})\oplus \mathfrak{su}(n_{1}) \oplus \cdots \oplus \mathfrak{su}(n_{p-1})\oplus \mathfrak{u}(1)^{\oplus s-1} ,
\end{equation}
where the integers $\{ n_{j} \}_{j=0,1,\cdots, p-1}$ and $s$ should satisfy $N=\sum_{j}n_{j} +s -1$. This Higgsing mechanism has been studied in detail in \cite{Cachazo:2002zk,Friedmann:2002ct, Hosomichi:2005ja,Acharya:2020vmg,Najjar:2022eci}.
\item For the case $q\neq 0 \mod p$ discussed around \eqref{M/gammanosplit}, the $\Z_{p}$ subgroup of $\Z_{pN}$ acts freely on $\C^2/\Z_N$, while $\Z_N$ acts freely on $\mathbb{S}^3_{\mathrm{b}} /\Z_p$. Hence, the effective 7d gauge theory is again $\mathfrak{su}(N)$, transverse to the $A_{N-1}$ singularity $\C^2/\Z_N$ \cite[Sec.4]{Acharya:2020vmg}. The $\Z_{p}$ quotient on $\mathbb{S}^3_{\mathrm{b}}$ again leads to the presence of Wilson loops, which break the $\mathfrak{su}(N)$ gauge group into a product of smaller subgroups while preserving the rank. 
\end{itemize}

\paragraph{The classical moduli space.} In this work, we will not focus on the detailed moduli space of the above gauge theory, as it does not play a crucial role in our analysis. Readers may skip this part, but those interested are encouraged to consult \cite{Acharya:2020vmg,Najjar:2022eci} for a more thorough treatment. However, we provide a bird's-eye view for completeness.

Other regions in the moduli space can be reached through the $G_2$-flop transition \cite{Atiyah:2001qf,Acharya:2004qe}. For example, in the $q=0$ case, the $G_2$-flop results in $\R^{4}/\Z_{p}\times \mathbb{S}^{3}/\Z_{N}$. Dimensionally reducing along the Hopf fibre in $\mathbb{S}^{3}/\Z_{N}$, we arrive at Type IIA string theory on the six-dimensional geometry that takes the form of a $\Z_{p}$ quotient of the conifold Calabi-Yau threefold, as discussed in \cite{Acharya:2020vmg,Najjar:2022eci}. The particular $\Z_{p}$ quotient of the conifold we described was originally studied in \cite{Davies:2011is,Davies:2013pna} and gives a family of hyperconifolds.\par
Since we are reducing from M-theory to Type IIA along a $\mathbb{S}^1/\Z_N$ fibred over the $\mathbb{S}^2$-base of the Hopf fibration, in the resulting Type IIA setup the geometry is accompanied by $N$ quanta of Ramond–Ramond (RR) 2-form flux through the $\mathbb{S}^{2}\subset T^{1,1}/\Z_{p}$ base of the conifold. The presence of the 2-form flux will generate a superpotential for the low-energy 4d theory, which ultimately breaks the supersymmetry down to $\N=1$.  Indeed, this description provides the physical interpretation of the topological condition presented in \cite{Foscolo:2018mfs}, where complete $G_2$-holonomy orbifolds arise as $\mathbb{S}^{1}$-bundles over Calabi-Yau 3-fold cones.\par
In the context of Type IIA superstring theory and its geometric engineering, the effective physics description depends on the distribution of flux quanta within the geometry of the singular hyperconifolds. In fact, for a particular flux distribution, the geometry may be completely or partially resolved. In other words, the 2-form fluxes may prevent the different $\C\P^{1}$s from collapsing, determined by the form of the flux-induced superpotential and its critical points.\par 
For those cases where all the flux goes through the base 2-sphere, the hyperconifold under consideration gets partially resolved. In this case, it can be described by $\Z_{p}$ quotient of the known resolved conifold $\mathcal{O}(-1)\oplus\mathcal{O}(-1)\longrightarrow \C\P^{1}$. For this case, the flux superpotential has a critical point \cite{Acharya:2020vmg,Najjar:2022eci}, hence the 4d effective physics is given by $\mathfrak{su}(p)$ $\N=1$ gauge theory.\par
Other branches of the moduli space arise from different flux distributions, that lead to partial resolutions of the singular geometry. These branches are associated with critical points of the flux superpotential \cite{Acharya:2020vmg,Najjar:2022eci}.

\subsection{SymTFT of 4d \texorpdfstring{$\mathcal{N}=1$}{N=1} theories from M-theory on \texorpdfstring{$G_2$}{G2}-manifolds}
\label{sec:MonG2SymTFT}

\subsubsection{SymTFT of 4d \texorpdfstring{$\mathfrak{su}(N)$}{SU(N)} SYM from M-theory}

We construct the SymTFT that corresponds to the 4d $\N=1$ $\mathfrak{su}(N)$ SYM theories geometrically engineered by M-theory on $X_7=\text{B7}/\Gamma_{p,N,q}$. Throughout we stick to the hypotheses of the previous subsection, in particular we repeatedly use \eqref{eq:gcdpNq}.

\paragraph{Geometric data.}
The link of the $G_2$-manifold $X_7$ is denoted $L_{6,(p,N,q)}$ and given in \eqref{M/gammasplit} if $q=0$, and in \eqref{M/gammanosplit} if $q \ne 0 $. In both cases the link is the product of two lens spaces. For convenience, we provide a brief review of K\"unneth's theorem in Appendix \ref{sec:Kunneth}.\par
To realize discrete $(-1)$-form symmetries and their dual $3$-form symmetries, we look for torsional 2-cycles and 3-cycles in the link. The derivation of the torsion homology of a direct product is reviewed in Appendix \ref{sec:Kunneth}, with final result \eqref{eq:AbelianKunneth}. Applied to the case at hand, we get
\begin{equation}
\begin{aligned}
    \mathrm{Tor}H_2 \left( (\mathbb{S}^3_{\mathrm{f}} /\Z_n) \times (\mathbb{S}^3_{\mathrm{b}} /\Z_p) , \Z\right) &\cong \Z_{\mathrm{gcd}(n,p)} \\ 
    \mathrm{Tor}H_3 \left( (\mathbb{S}^3_{\mathrm{f}} /\Z_n) \times (\mathbb{S}^3_{\mathrm{b}} /\Z_p) , \Z \right) &\cong \Z_{\mathrm{gcd}(n,p)} 
\end{aligned}
\end{equation}
with $n= N$ if $ q=0$, and $n=pN$ if $q \ne 0$. Note that, if $q=0$, $\mathrm{gcd}(n,p)=\mathrm{gcd}(N,p)=1$.\par
\begin{itemize}
    \item $q=0$. $L_{6,(p,N,0)}$ is given in \eqref{M/gammasplit}, and in this case, both $\mathrm{Tor}H_{3}(L_{6,(p,N,0)},\Z)$ and its dual $\mathrm{Tor}H_{2}(L_{6,(p,N,0)},\Z)$ trivialize. The homology groups at $q=0$ are 
    \begin{equation}
        H_{\bullet} (L_{6,(p,N,0)},\Z) = \Z  \ \oplus \ \begin{matrix} \Z_{N} \\ \oplus \\ \Z_p \end{matrix} \ \oplus \ 0 \ \oplus \   \Z^{\oplus 2}  \ \oplus \  \begin{matrix} \Z_N \\ \oplus \\ \Z_{p} \end{matrix} \ \oplus \ 0 \ \oplus \ \Z \,.
    \end{equation}
\item $q\neq 0$. $L_{6,(p,N,q\ne 0)}$ is given in \eqref{M/gammanosplit}, and it acquires additional torsional 2-cycles and 3-cycles. The homology groups are
\begin{equation}\label{homologyMpNq}
    H_{\bullet} (L_{6,(p,N,q\ne 0)}, \Z) = \Z  \ \oplus \ \begin{matrix} \Z_{pN} \\ \oplus \\ \Z_p \end{matrix} \ \oplus \ \Z_{\mathrm{gcd}(pN,p)} \ \oplus \  \begin{matrix} \Z^{\oplus 2} \\ \oplus \\ \Z_{\mathrm{gcd}(pN,p)} \end{matrix} \ \oplus \  \begin{matrix} \Z_{pN} \\ \oplus \\ \Z_{p} \end{matrix} \ \oplus \ 0 \ \oplus \ \Z \,.
\end{equation}
For the remainder of this section, we focus on this case and denote $L_{6,(p,N,1)}$ simply as $L_{6}$. 
\end{itemize}\par
From \eqref{homologyMpNq} and $\mathrm{gcd}(pN,p)=p$ we immediately have 
\begin{equation}\label{cohoL6pq}
    H^{\bullet} (L_{6}, \Z) = \Z  \ \oplus \ 0  \ \oplus \ \begin{matrix} \Z_{pN} \\ \oplus \\ \Z_{p} \end{matrix} \ \oplus \  \begin{matrix} \Z^{\oplus 2} \\ \oplus \\ \Z_{p} \end{matrix} \ \oplus \  \Z_p \ \oplus \ \begin{matrix} \Z_{pN} \\ \oplus \\ \Z_p \end{matrix} \ \oplus \ \Z \,.
\end{equation}
Let us now fix the generators of the differential cohomology as follows. 
\begin{itemize}
    \item[---] We let $\br{\vol}_3^{\mathrm{f}}$ (respectively $\br{\vol}_3^{\mathrm{b}}$) be the pullback to $L_6$ of the generator of $\br{H}^3 (\mathbb{S}^3_{\mathrm{f}} /\Z_{pN})$ (respectively of $\br{H}^3 (\mathbb{S}^3_{\mathrm{b}} /\Z_{p})$). 
    \item[---] Besides, we denote by $t^{\mathrm{f}}_2 \in H^2 (\mathbb{S}^3_{\mathrm{f}} /\Z_{pN} , \Z)$ the Poincar\'e dual to the torsional Hopf fibre in the lens space, and likewise for $t^{\mathrm{b}}_2$. Then, we abuse of notation and denote by $\br{t}_2^{\mathrm{f}}$ (respectively $\br{t}_2^{\mathrm{b}}$) their pullback to $L_6$, which we take as generators of $\br{H}^2 (L_6)$.
    \item[---] We get the generators of $\br{H}^5 (L_6)$ to be $\br{\vol}_3^{\mathrm{f}} \times \br{t}_2^{\mathrm{b}}$ and $\br{\vol}_3^{\mathrm{b}} \times \br{t}_2^{\mathrm{f}}$, where the product $\times$ on a product manifold is reviewed in Appendix \ref{app:diffcohoprods}.
    \item[---] Finally, we have the generators $\br{t}_4 \in \br{H}^4 (L_6)$ and $\br{t}_3 \in \br{H}^3 (L_6)$, where the latter cannot be expressed in terms of $\br{t}_2^{\bullet}$, while the former may be chosen to be $\br{t}_4 = \br{t}_2^{\mathrm{f}} \times \br{t}_2^{\mathrm{b}} $ (see Appendix \ref{app:diffcohoprods}).
\end{itemize}
Thus, the complete set of generators of $\br{H}^{\bullet} (L_6)$ is:
\begin{equation}
    \left\{ \br{1} \right\} , \ 0 , \ \left\{ \br{t}_2^{\mathrm{f}} , \br{t}_2^{\mathrm{b}} \right\} , \ \left\{ \br{\vol}_3^{\mathrm{f}}, \br{\vol}_3^{\mathrm{b}}, \br{t}_{3} \right\} , \  \{\br{t}_{4}\}, \  \left\{ \br{\vol}_3^{\mathrm{f}} \times \br{t}_2^{\mathrm{b}}, \br{\vol}_3^{\mathrm{b}} \times \br{t}_2^{\mathrm{f}} \right\} ,  \ \left\{ \br{\vol}_3^{\mathrm{f}} \times \br{\vol}_3^{\mathrm{b}}\right\} .
\end{equation}
In this basis, we expand $\br{G}_4 \in \br{H}^{4} (M_{11})$ and $\br{\dd G}_7 \in \br{H}^{8} (M_{11})$ as
\begin{equation}
    \begin{aligned}
        \br{G}_4 &= \br{F}_4 \star \br{1} + \sum_{\alpha=\mathrm{f},\mathrm{b}} \br{B}_2^{\alpha} \star \br{t}_2^{\alpha} + \sum_{i=\mathrm{f},\mathrm{b}}\br{F}_1^{i} \star \br{\vol}_3^{\neq i} + \br{B}_1\star \br{t}_3 + \br{B}_0\star \br{t}_4\\
        \br{\dd G}_7 &= \br{f}_8 \star \br{1} + \sum_{\alpha=\mathrm{f},\mathrm{b}} \br{\mathcal{B}}_6^{\alpha} \star \br{t}_2^{\alpha} + \sum_{i=\mathrm{f},\mathrm{b}}\br{f}_5^{i} \star \br{\vol}_3^{ i} + \br{\mathcal{B}}_5 \star \br{t}_3 + \br{\mathcal{B}}_4 \star \br{t}_4 \\
        & \qquad\qquad + \sum_{\alpha=\mathrm{f},\mathrm{b}} \br{\mathcal{B}}_3^{\alpha} \star (\br{t}_2^{\alpha} \times \br{\vol}_3^{\neq \alpha}) + \br{f}_2 \star \br{\vol}_3^{\mathrm{f}} \times \br{\vol}_3^{\mathrm{b}}  
    \end{aligned}
\label{eq:expansionG4DN1}
\end{equation}
and, from the exactness of $\dd G_7$ we have 
\begin{equation}\label{eq:trivialdG74dN1}
	\mathscr{F} (\br{f}_{p+1}^{\bullet}) = \frac{1}{2\pi} \dd h_{p}^{\bullet}, \qquad c(\br{f}_{p+1}^{\bullet}) =0, \qquad \mathscr{F} (\br{\mathcal{B}}_{p+1}^{\bullet})= \delta A_{p}^{\bullet} .
\end{equation}
To lighten the notation, we use $\vol_3^{\ne i}$ to indicate $\vol_3^{\mathrm{b}}$ if $i=\mathrm{f}$ and $\vol_3^{\mathrm{f}}$ if $i=\mathrm{b}$. Besides, we define the shorthand notation 
\begin{equation}
    \mathrm{CS}^{(3)}_{\alpha} := \begin{cases} \mathrm{CS}^{(3)}_{\Z_{pN}} & \text{ if } \alpha=\mathrm{f} \\ \mathrm{CS}^{(3)}_{\Z_{p}} & \text{ if } \alpha=\mathrm{b} \end{cases}
\end{equation}
with the right-hand side defined in \eqref{eq:CSGamma} and studied in Appendix \ref{sec:CSlens3d}.\par

\paragraph{Mixed anomalies from M-theory.} 

To find the potential mixed anomalies of the 4d $\N=1$ theory, we plug the form \eqref{eq:expansionG4DN1} of $\br{G}_4$ in the topological M-theory action \eqref{eq:MtheoryS}.\par
Following the discussion in Appendix \ref{sec:DelZottoLink}, we observe that non-trivial linking pairings can arise between the following cohomology classes: (i) between classes in $\mathrm{Tor} H^{3}(L_{6}, \Z)$ and $\mathrm{Tor} H^{4}(L_{6}, \Z)$; or (ii) between classes in $\mathrm{Tor} H^{2}(L_{6},\Z)$ and $\mathrm{Tor} H^{5}(L_{6},\Z)$. Additionally, there is a non-trivial term coming from the direct reduction along classes in $H^{3}(L_{6},\Z)_{\text{free}}$.\par 
By direct computation we arrive at
\begin{equation}\label{eq:premixAnomaliesCS}
\begin{aligned}
    S_{\text{twist}}^{(\mathbb{S}^3_{\mathrm{f}} \times \mathbb{S}^3_{\mathrm{b}} )/\Gamma_{p,N,q}} = & \frac{1}{p} \int_{Y_{5}}  \left[ B_1 \smile B_2^{\mathrm{f}} \smile B_2^{\mathrm{b}} \  + B_0 \smile B_1 \smile\frac{F_4}{2\pi}  \right] \\
      & - \sum_{\alpha=\mathrm{f,b}} \mathrm{CS}^{(3)}_{\alpha} \int_{Y_5} B_2^{\alpha} \smile B_2^{\alpha} \smile \frac{F_1^{\alpha}}{2\pi} \ + \int_{Y_{5}}^{\br{H}} \br{F}_4 \star \br{F}_1^{\mathrm{f}} \star \br{F}_1^{\mathrm{b}}  .
\end{aligned}
\end{equation}
The pertinent integrals over the link $L_6$ have been computed in Appendix \ref{sec:CSlens6d}, and we used the analysis therein to simplify the expression. Plugging the value of $\mathrm{CS}^{(3)}_{\alpha}$ obtained in Appendix \ref{sec:CSlens3d}, \eqref{eq:premixAnomaliesCS} equals   
\begin{equation}\label{eq:mixAnomaliesCS}
\begin{aligned}
     S_{\text{twist}}^{(\mathbb{S}^3_{\mathrm{f}} \times \mathbb{S}^3_{\mathrm{b}} )/\Gamma_{p,N,q}}=  \frac{1}{p}  \int_{Y_{5}} & \left[B_1 \smile B_2^{\mathrm{f}} \smile B_2^{\mathrm{b}} \ + B_0 \smile B_1 \smile\frac{F_4}{2\pi} \right. \\ 
     & \left. + \frac{(p-1)}{2} B_2^{\mathrm{b}} \smile B_2^{\mathrm{b}} \smile \frac{F_1^{\mathrm{b}}}{2\pi} \  + \frac{(pN-1)}{2N} B_2^{\mathrm{f}} \smile B_2^{\mathrm{f}} \smile \frac{F_1^{\mathrm{f}}}{2\pi}  \right]  \\ 
     + \int_{Y_{5}}^{\br{H}} &\br{F}_4 \star \br{F}_1^{\mathrm{f}} \star \br{F}_1^{\mathrm{b}}  .
\end{aligned}
\end{equation}
Since the precise coefficient of the last term in \eqref{eq:mixAnomaliesCS} will be crucial in Subsection \ref{sec:application}, we spell the derivation out in more detail, referring to Appendix \ref{app:diffchar} for the properties of the integration $\int^{\br{H}}$. From the expansion \eqref{eq:expansionG4DN1} we get:
\begin{equation}
\begin{aligned}
    & - \frac{1}{6} \int^{\br{H}}_{M_{4} \times X_{7}} \left ( \begin{matrix} 3 \\ 1 \end{matrix} \right) \cdot \left ( \begin{matrix} 2 \\ 1 \end{matrix} \right) \cdot  \left[ \left( \br{F}_4 \times \br{1} \right) \star \left( \br{F}_1^{\mathrm{f}} \times \br{\vol}_3^{\mathrm{f}} \right) \star \left( \br{F}_1^{\mathrm{b}} \times \br{\vol}_3^{\mathrm{b}} \right) \right] \\ 
    &= (-1)^2 \int^{\br{H}}_{M_{4} \times X_{7}} \left[  \left( \br{\vol}_3^{\mathrm{f}} \star \br{\vol}_3^{\mathrm{b}} \right) \times \br{F}_4 \star \br{F}_1^{\mathrm{f}}  \star \br{F}_1^{\mathrm{b}} \right] \\
    &= (-1)^{\dim (\mathbb{S}^3 \times \mathbb{S}^3)} \int^{\br{H}}_{Y_5} \br{F}_4 \star \br{F}_1^{\mathrm{f}}  \star \br{F}_1^{\mathrm{b}} ,
\end{aligned}
\end{equation}
where in the second line we get a minus sign from reordering odd-degree forms, and in the last line we have used \eqref{eq:fibreproddiffcoho}.\par
Moreover, we expect no contribution from $\int^{\br{H}} \br{G}_4 \star \br{\mathsf{X}}_8$ to \eqref{eq:mixAnomaliesCS}. This is because 
\begin{equation}
    T (\mathbb{S}^3_{\mathrm{f}} \times \mathbb{S}^3_{\mathrm{b}}) \cong \pi^{\ast}_{\mathrm{f}} (T \mathbb{S}^3_{\mathrm{f}} ) \oplus  \pi^{\ast}_{\mathrm{b}} (T \mathbb{S}^3_{\mathrm{b}})
\end{equation}
implies 
\begin{equation}
    \br{\mathsf{p}} \left(  T (\mathbb{S}^3_{\mathrm{f}} \times \mathbb{S}^3_{\mathrm{b}}) \right) = \pi^{\ast}_{\mathrm{f}} \br{\mathsf{p}} \left(  T \mathbb{S}^3_{\mathrm{f}} \right) \star \pi^{\ast}_{\mathrm{b}} \br{\mathsf{p}} \left(  T \mathbb{S}^3_{\mathrm{b}} \right) ,
\end{equation}
and each $\mathbb{S}^3$ does not support non-trivial Pontryagin classes. Thus, the contribution from $\br{\mathsf{X}}_8$ vanishes in this case.\par
From the summands in \eqref{eq:mixAnomaliesCS}, we identify the following background fields:
\begin{itemize}
    \item $F_{4}$ is the field strength of a universal $U(1)^{\scriptscriptstyle [2]}$ 2-form symmetry. This universal 2-form symmetry exists for all physical theories realized on the semi-classical branches of the moduli space.
    \item $B_{2}^{\mathrm{f}}$ is the discrete gauge field of the finite abelian 1-form symmetry $\Z_{pN}^{\scriptscriptstyle [1]}$, while $B_{2}^{\mathrm{b}}$ is the discrete gauge field of the finite abelian 1-form symmetry $\Z_{p}^{\scriptscriptstyle [1]}$.
    \item There is a finite $\Z^{\scriptscriptstyle [0]}_{p}$ 0-form symmetry with background gauge field $B_1$.
    \item There is a finite $\Z^{\scriptscriptstyle [-1]}_{p}$ $(-1)$-form symmetry with background gauge field $B_0$.
    \item Additionally, there are two continuous $(-1)$-form symmetries $U(1)^{\scriptscriptstyle [-1,i]}$ with background curvature $F_1^{i}$; these will be discussed in detail below in Subsection \ref{sec:lowerform4dN1}.
\end{itemize}

\paragraph{BF terms from M-theory.}
Inserting the expressions for $\br{G}_{4}$ and $\br{\dd G}_{7}$ in \eqref{eq:expansionG4DN1} into the differential cohomology refinement of the M-theory kinetic term \eqref{eq:MtheoryKin}, we find the BF terms:
\begin{equation}\label{eq:preBFdiscretetermsG2example}
\begin{aligned}
    S_{\text{BF}}^{(\mathbb{S}^3_{\mathrm{f}} \times \mathbb{S}^3_{\mathrm{b}} )/\Gamma_{p,N,q}} = \int_{Y_5} & \left[ - \frac{1}{\mathrm{gcd}(pN,p)} B_1 \smile \mathscr{F} (\mathcal{B}_4) \ - \frac{1}{\mathrm{gcd}(pN,p)} B_0 \smile \mathscr{F} (\mathcal{B}_5) \right. \\
    & \left. + \sum_{\alpha=\mathrm{f,b}} 2 \mathrm{CS}_{\alpha}^{(3)} B_2^{\alpha}\smile \mathscr{F} (\mathcal{B}_3^{\alpha}) \right] \\
    +  \int_{Y_5}^{\br{H}} & \left[ \br{F}_4 \star \br{f}_2  \ + \sum_{i=\mathrm{f,b}}  \br{F}_1^{i} \star \br{f}_5^{i} \right] ,
\end{aligned}
\end{equation}
where again we have relied on the integrals computed in Appendix \ref{sec:CSlens6d}. From the exactness of $\dd G_7$, we use \eqref{eq:trivialdG74dN1} and, for the second line, we use the properties of topologically trivial differential characters reviewed in Appendix \ref{app:diffchar}.
After simplifications, we obtain
\begin{equation}\label{BFdiscretetermsG2example}
\begin{aligned}
     S_{\text{BF}}^{(\mathbb{S}^3_{\mathrm{f}} \times \mathbb{S}^3_{\mathrm{b}} )/\Gamma_{p,N,q}} = - \frac{1}{p}\int_{Y_5} & \left[ B_1 \smile \delta A_3 \ + B_0 \smile \delta A_4 \ \right. \\
     & \left. + (p-1) B_2^{\mathrm{b}} \smile \delta A_2^{\mathrm{b}} \ + \frac{(pN-1)}{N}  B_2^{\mathrm{f}} \smile \delta A_2^{\mathrm{f}}  \right] \\
    + \int_{Y_5}& \left[ \frac{F_4}{2\pi} \wedge \frac{h_1}{2\pi} \ + \sum_{i=\mathrm{f,b}} \frac{F_1^{i}}{2\pi} \wedge \frac{h_4^{i}}{2\pi} \right] .
\end{aligned}
\end{equation}
In the first two lines, we have the standard electric/magnetic BF couplings for finite abelian symmetries:
\begin{itemize}
    \item There are two discrete electric 1-form symmetries, $\Z_{pN}^{\scriptscriptstyle [1]}$ and $\Z_{p}^{\scriptscriptstyle [1]}$, with background fields $B_2^{\alpha}$, which are coupled to the background fields $A_2^{\alpha}$ for the magnetic dual 1-form symmetries;
    \item The background field $B_1$ for the 0-form symmetry $\Z^{\scriptscriptstyle [0]}_{p}$ is coupled to the background field $A_3$ for the magnetic dual 2-form symmetry;
    \item The background field $B_0$ for the finite $(-1)$-form symmetry $\Z^{\scriptscriptstyle [-1]}_{p}$ is coupled to the background field $A_4$ for the magnetic dual 3-form symmetry acting on domain walls. 
\end{itemize}
Furthermore, we find the SymTFT BF couplings for continuous symmetries. These are given in the last line of \eqref{BFdiscretetermsG2example}, and read:
\begin{itemize}
    \item There are two Chern--Weil symmetries $U(1)^{\scriptscriptstyle [-1, i]}$, whose curvatures $F_1^{i}$ couple to the magnetic $h_4^{i}$;
    \item We also find a BF coupling for the universal $U(1)^{\scriptscriptstyle [2]}$ 2-form symmetry and its magnetic dual $h_1$.
\end{itemize}
The latter BF terms are in agreement with the form in the literature \cite{Apruzzi:2024htg}. A lightning review is provided in Appendix \ref{app:fromSymTFTtoTQFT} (cf. around \eqref{eq:generalSymTFT-U(1)symmetries}). Further comments regarding the symmetry generators are given in Subsection \ref{sec:4dN1fluxbrane}.

\paragraph{The case $p=1$.}
Let us briefly comment on the SymTFT when $p=1$. In such case, the setup reduces to M-theory on $(\C^2/\Z_N) \times \mathbb{S}^3$ studied in \cite{Atiyah:2000zz}. The SymTFT dramatically simplifies:
\begin{equation}
\begin{aligned}
    S_{\text{SymTFT}}^{(\mathbb{S}^3/\Z_N) \times \mathbb{S}^3} &= \frac{N-1}{2N}\int_{Y_5} B_2^{\mathrm{f}} \smile B_2^{\mathrm{f}} \smile \frac{F_1^{\mathrm{f}}}{2\pi} + \int_{Y_5}^{\br{H}} \br{F}_4 \star \br{F}_1^{\mathrm{f}} \star \br{F}_1^{\mathrm{b}} \\
    & + \frac{1}{N} \int_{Y_5} B_2^{\mathrm{f}} \smile \delta A_2^{\mathrm{f}} + \int_{Y_5} \left[ \frac{F_4}{2\pi} \wedge \frac{h_1}{2\pi} \ + \sum_{i=\mathrm{f,b}} \frac{F_1^{i}}{2\pi} \wedge \frac{h_4^{i}}{2\pi} \right] ,
\end{aligned}
\end{equation}
where the first and second lines are the remnants of, respectively, \eqref{eq:mixAnomaliesCS} and \eqref{BFdiscretetermsG2example}.

\subsubsection{\texorpdfstring{$(-1)$}{(-1)}-form symmetry and polarization}
\label{sec:lowerform4dN1}

\paragraph{Discrete $(-1)$-form symmetry and polarization.}
We have found that the 4d $\N=1$ theory engineered by M-theory on $X_7=\text{B7}/\Gamma_{p,N,q}$ has a finite $(-1)$-form symmetry $\Z^{\scriptscriptstyle [-1]}_{p}$ if $q \ne 0 \mod p$. As expected, the background gauge field for this symmetry is coupled through a BF term \eqref{BFdiscretetermsG2example} to the gauge field for the electric/magnetic dual 3-form symmetry. The latter acts on domain walls in the 4d theory. Gauging it with a discrete theta-angle, we obtain the dual $(-1)$-form symmetry, and the $2\pi$-periodicity of this discrete theta-angle is due to the invariance under large background gauge transformations. This is a 4d analogue of the 2d analysis of \cite{Santilli:2024dyz}.

\paragraph{Continuous $(-1)$-form symmetry.}
We recognize two continuous $(-1)$-form symmetries $U(1)^{\scriptscriptstyle [-1,i]}$. They are identified through the anomalous coupling to the electric 1-form fields $B_2^{i}$ in the second line of \eqref{eq:mixAnomaliesCS}. One of them, $F_1^{\mathrm{f}}$ for this branch of the moduli space, will couple to the instanton current of the $\mathfrak{su} (N)$ theory, providing the expected Chern--Weil $(-1)$-form symmetry.

\paragraph{Periodicity of the $\theta_{\text{YM}}$-angle.}
The above mixed anomalies captured by the action $S_{\text{CS}}^{\text{M}}$ use solely the (co)homology of the link space 
\begin{equation*}
    L_{6,(p,N,q)} = (\mathbb{S}^3_{\mathrm{f}} /\Z_{pN}) \times (\mathbb{S}^3_{\mathrm{b}} /\Z_p)\,,
\end{equation*}
which is common to all theories connected through the moduli space. However, in general the various terms in $S_{\text{CS}}^{\text{M}}$ are only relevant on specific branches, depending on the local effective physics. On the branch of interest, characterized by the B7$/\Gamma_{p,N,q}$ geometry, the effective lower-dimensional theory is given by $\mathfrak{su}(N)$ gauge theory, possibly broken as in \eqref{breakingSUN}. These gauge theories contain a topological theta-term, whose coupling we denote $\theta_{\text{YM}}$.\par
Let us focus on the semi-classical branch of the moduli space, where the low-energy effective theory consists of $SU(N)$ $\N=1$ Yang--Mills theory. The electric 1-form symmetry is $\Z_N^{\scriptscriptstyle [1]}$, with background gauge field $B_2^{\mathrm{e}}$. When a non-trivial $B_{2}^{\mathrm{e}}$ is activated, the partition functions of the theory at $\theta_{\text{YM}}+2\pi $ and $\theta_{\text{YM}}$ differs \cite{Kapustin:2014gua, Gaiotto:2014kfa, Gaiotto:2017yup, Gaiotto:2017tne, Cordova:2019uob}, schematically:
\begin{equation}\label{eq:Zdifferby2pi-Anomaly}
    \frac{Z[\theta_{\text{YM}}+2\pi,B_{2}^{\mathrm{e}}]}{Z[\theta_{\text{YM}},B_{2}^{\mathrm{e}}]}    = \exp{2\pi \ii \,\frac{N-1}{N}\int_{M_4}\mathcal{P}(B_{2}^{\mathrm{e}})}.
\end{equation}
Here, 
\begin{equation}
    \mathcal{P} \ : \ H^2 (M_4, \Z/ N\Z) \longrightarrow H^4 (M_4, \Z/ 2N\Z)    
\end{equation}
is the Pontryagin square operation. In the present case, $\mathcal{P}(B_{2}^{\mathrm{e}}) \mod N$ reduces to $ B_{2}^{\mathrm{e}} \smile B_{2}^{\mathrm{e}} $.\footnote{More precisely, it holds that $ \frac{1}{2} B_{2}^{\mathrm{e}} \smile B_{2}^{\mathrm{e}} \in  H^4 (M_4, \Z/ 2N\Z) $. If $N$ is odd, $\mathcal{P}(B_{2}^{\mathrm{e}}) = B_{2}^{\mathrm{e}} \smile B_{2}^{\mathrm{e}} \in  H^4 (M_4, \Z/ N\Z)$, whereas if $N$ is even, the explicit form uses the cup and cup-1 products. We neglect these subtleties and simply use that $\mathcal{P}(B_{2}^{\mathrm{e}}) \mod N$ equals $ B_{2}^{\mathrm{e}} \smile B_{2}^{\mathrm{e}} $ for both odd and even $N$.} Equation \eqref{eq:Zdifferby2pi-Anomaly} indicates a breakdown of the $2\pi$-periodicity of $\theta_{\text{YM}}$, signalling a mixed anomaly between the Chern--Weil $(-1)$-form symmetry and the electric 1-form symmetry $\Z_N^{\scriptscriptstyle [1]}$.\par
The anomalous behaviour \eqref{eq:Zdifferby2pi-Anomaly} can be corrected by inflow via the term \cite[Eq.(2.5)]{Cordova:2019uob}
\begin{equation}\label{eq:F1B2B2-term}
    \frac{N-1}{2N}\, \frac{F_1^{\mathrm{e}}}{2\pi}\,\smile \mathcal{P}(B_{2}^{\mathrm{e}})\,,
\end{equation}
where the coupling $\theta_{\text{YM}}$ is promoted to a periodic scalar, and $F_1^{\mathrm{e}}$ is the curvature of $\theta_{\text{YM}}$.\par
In fact, we have already found this term among the mixed anomalies captured by our SymTFT action $S_{\text{CS}}^{\text{M}}$, in the last term of the second line of \eqref{eq:mixAnomaliesCS}, with the identification 
\begin{equation}
    B_{2}^{\mathrm{f}}= pB_{2}^{\mathrm{e}} ,
\end{equation}
which reduces the $\Z_{pN}^{\scriptscriptstyle [1]}$ gauge field to a $\Z_N^{\scriptscriptstyle [1]}$ one, and with identification of the Chern--Weil $U(1)^{\scriptscriptstyle [-1,\mathrm{CW}]}$ symmetry with the geometric $U(1)^{\scriptscriptstyle [-1,\mathrm{f}]}$ symmetry, as already discussed.\par
In conclusion, our geometric derivation reproduces the expected features of the 4d $\N=1$ $\theta_{\text{YM}}$-term.

\paragraph{SymTFT, physical theory and branches.}

At this stage, we emphasize that the 1-form symmetry is expected to vary as one transitions between branches of the moduli space.\par
For instance, on the branch of the moduli space where the $SU(N)$ gauge theory is realized, the $\Z_p \subset \Z_{pN}$ does not produce singularities \cite{Acharya:2020vmg}. Thus, we expect the defects charged under it to be uncharged under the $SU(N)$ gauge group, decoupling from the dynamics, and leaving $\Z_{pN}/\Z_p \cong \Z_N^{\scriptscriptstyle [1, \mathrm{e}]}$ as the effectively-acting 1-form symmetry of the low-energy theory.\par
This geometric engineering argument matches the field-theoretic remarks in the previous paragraph.\par
Geometrically, this means that the symmetries uncovered in our SymTFT analysis need not act effectively on the low-energy effective theory, and the symmetries actually realized in the theory depend on additional data of the M-theory compactification.\par
In general, the electric and magnetic 1-form symmetry on any branch can be at most $\Z_{pN}\oplus \Z_{p}$. The precise structure of the 1-form symmetry within any given branch is determined by the specific details of the effective physics on that branch.\par 
For branches admitting a non-abelian gauge theory interpretation, the discrete electric 1-form symmetry typically reduces to a subgroup,
\begin{equation}\label{eq:ZM1e-subgroup}
    \Z_{M}^{\scriptscriptstyle [1, \mathrm{e}]}\,\subset\,\Z_{pN}^{\scriptscriptstyle [1]}\,\oplus\, \Z_{p}^{\scriptscriptstyle [1]}\,,
\end{equation}
which is identified with the centre of the gauge group \cite{Aharony:2013hda}.\par

\subsubsection{Defects and symmetry operators from branes}

Charged defects and symmetry topological operators in the 4d $\N=1$ theories engineered from M-theory on the $G_2$-manifold $X_7=\text{B7}/\Gamma_{p,N,q}$ arise from BPS M-brane wrapped on representative of homology classes in the link $L_{6}$, specified in \eqref{homologyMpNq}. In this subsection we focus on finite symmetries, and discuss continuous abelian symmetries in the next subsection.\par 
In Table \ref{tab:M2M5-G2}, we summarize the brane realization of charged and topological defects wrapping torsional cycles in $L_6$. The former extend in the radial direction $[0,\infty)$, as opposed to the latter. The defects listed in Table \ref{tab:M2M5-G2} exhibit non-trivial linking pairing with the corresponding symmetry operators under which they are charged.\par
The electric/magnetic dual pairs as realized in M-theory are recalled in Figure \ref{fig:EMSym4dN1}; to fully specify a theory, we ought to choose a polarization for each one of the three pairs.

\begin{table}[ht]
\centering
\begin{tabular}{|l|c|c|c|c|}
\hline 
& M2 & & M5 & \\
\hline
Tor$H_{1}(L_{6},\Z)\times [0,\infty)$ \hspace{2pt} & Wilson line & \begin{color}{Red}$\diamondsuit$\end{color} &  &   \\
Tor$H_{2}(L_{6},\Z)\times [0,\infty)$ \hspace{2pt} & Local operator & \begin{color}{blue}$\circ$\end{color} & Domain wall & ${\spadesuit}$ \\
Tor$H_{3}(L_{6},\Z)\times [0,\infty)$ \hspace{2pt} & --- & $\clubsuit$ & Surface defect & \begin{color}{blue}$\triangle$\end{color} \\
Tor$H_{4}(L_{6},\Z)\times [0,\infty)$ \hspace{2pt} &  &  & `t Hooft line & \begin{color}{Red}$\heartsuit$\end{color}\\
\hline
Tor$H_{1}(L_{6},\Z)$ \hspace{4pt} & 1-form sym. generator & \begin{color}{Red}$\heartsuit$\end{color} & &  \\
Tor$H_{2}(L_{6},\Z)$ \hspace{4pt} & 2-form sym. generator & \begin{color}{blue}$\triangle$\end{color} & $(-1)$-form sym. generator & $\clubsuit$ \\
Tor$H_{3}(L_{6},\Z)$ \hspace{4pt} & 3-form sym. generator & $\spadesuit$ & 0-form sym. generator & \begin{color}{blue}$\circ$\end{color}  \\
Tor$H_{4}(L_{6},\Z)$ \hspace{4pt} &  &   & 1-form sym. generator & \begin{color}{Red}$\diamondsuit$\end{color} \\
\hline
\end{tabular}
\caption{Branes wrapping torsional cycles in $L_6 = (\mathbb{S}^3_{\mathrm{f}} /\Z_{pN}) \times (\mathbb{S}^3_{\mathrm{b}} /\Z_p)$ give rise to finite symmetries. We mark with equal symbol the charged defect and the corresponding symmetry generators.}
\label{tab:M2M5-G2}
\end{table}

\begin{figure}[th]
\centering
\begin{tikzpicture}
    \draw[fill=gray,opacity=0.2] (-8.5,2) -- (6,2) -- (6,1) -- (-8.5,1) -- (-8.5,2);
    \draw[fill=yellow,opacity=0.2] (-8.5,0) -- (6,0) -- (6,1) -- (-8.5,1) -- (-8.5,0);
    \draw[fill=orange,opacity=0.2] (-8.5,-1) -- (6,-1) -- (6,0) -- (-8.5,0) -- (-8.5,-1);
    \draw[fill=purple,opacity=0.2] (-8.5,-2) -- (6,-2) -- (6,-1) -- (-8.5,-1) -- (-8.5,-2);
	\node[anchor=east] (1f) at (-2,1.75) {1-form};
	\node[anchor=west] (2f) at (2,-1.75) {1-form};
	\node[anchor=east] (0f) at (-4,0.66) {0-form};
	\node[anchor=west] (3f) at (4,-0.66) {2-form};
	\node[anchor=east] (mf) at (-2.75,-0.66) {$(-1)$-form};
	\node[anchor=west] (4f) at (2.75,0.66) {3-form};
	
	\draw[->,blue,thick] (0f) -- (3f);
	\draw[->,blue,thick] (3f) -- (0f);
	\draw[->,black,thick] (mf) -- (4f);
	\draw[->,black,thick] (4f) -- (mf);
	\draw[->,Red,thick] (1f) -- (2f);
	\draw[->,Red,thick] (2f) -- (1f);

    \draw[dashed,thin] (0,2) -- (0,-2);
    \node[anchor=south] at (-1,2) {M2}; 
    \node[anchor=south] at (1,2) {M5};

    \node[anchor=east,align=right] at (-6.5,1.5) {\footnotesize 2-cycles};
    \node[anchor=east,align=right] at (-6.5,0.5) {\footnotesize 3-cycles};
    \node[anchor=east,align=right] at (-6.5,-0.5) {\footnotesize 4-cycles};
    \node[anchor=east,align=right] at (-6.5,-1.5) {\footnotesize 5-cycles};
\end{tikzpicture}
\caption{Electric/magnetic dual symmetries in 4d theories. The charged defects are realized by M2/M5-branes wrapping cycles in the internal $G_2$-manifold.}
\label{fig:EMSym4dN1}
\end{figure}
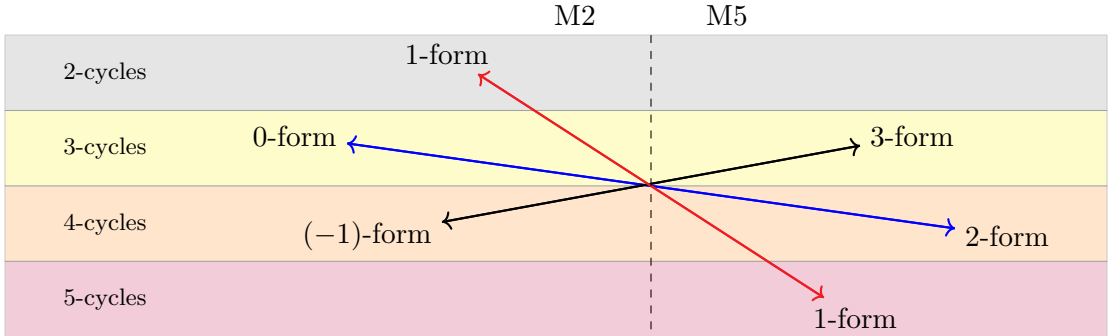

\paragraph{Discrete electric/magnetic 1-form symmetries.}

In 4d, each finite electric 1-form symmetry is accompanied by its magnetic dual 1-form symmetry. In the present situation, with M-theory compactified on $X_7=\text{B7}/\Gamma_{p,N,q}$, we have $\mathrm{Tor}H_1 (L_6,\Z) \cong \Z_n \oplus \Z_p$, thus providing two distinct electric 1-form symmetries $\Z_n^{\scriptscriptstyle [1]}, \Z_p^{\scriptscriptstyle [1]}$, with $n=N$ if $q=0 \mod p $ and $n=pN$ if $q \ne 0 \mod p$. The Pontryagin dual torsion homology groups $\mathrm{Tor}H_4 (L_6,\Z) \cong \widehat{\Z}_n \oplus \widehat{\Z}_p\cong \Z_n \oplus \Z_p $ provide the magnetic dual 1-form symmetries.
\begin{itemize}
	\item Electric 1-form symmetry.
	\begin{itemize}
	\item[---] The charged defects are Wilson lines. They are realized from M2-branes, arranged to wrap a torsional 1-cycles in the link, and extending in the radial direction. There are two generators for torsion classes, namely the Hopf fibres of $\mathbb{S}^3_{\mathrm{f}}/\Z_n$ and $\mathbb{S}^3_{\mathrm{b}}/\Z_p$. Thus, M2-branes wrapping these distinct cycles give rise to Wilson lines with different charges under two different 1-form symmetries.
	\item[---] The symmetry generators originate from M5-branes wrapping a dual torsional 4-cycle in the link, and intersecting the spacetime along a two-dimensional submanifold.
	\end{itemize}
	\item Magnetic 1-form symmetry.
	\begin{itemize}
	\item[---] The charged defects are `t Hooft lines. They are realized from M5-branes wrapping a torsional 4-cycle in the link and extending along the radial direction. There are two generators of torsional 4-cycles, thus the M5-brane can wrap one of the two lens spaces entirely, together with the Hopf fibre of the second lens space. The two choices give rise to `t Hooft loops charged under distinct magnetic 1-form symmetries.
	\item[---] The symmetry generators originate from M2-branes wrapping a dual torsional 1-cycle in the link, and intersecting the spacetime along a two-dimensional submanifold.
	\end{itemize}
\end{itemize}
We derive the expression from the topological symmetry operators, generating these electric/magnetic 1-form symmetry, from the differential cohomology refinement of the action on the corresponding M-brane wrapping a torsional cycle.\par
\begin{itemize}
    \item The symmetry generators of the electric 1-form symmetries are supported on surfaces $\Sigma_2$ in spacetime, and come from an M5-brane wrapping $\mathsf{PD}(t_2^{\alpha}) \in \mathrm{Tor}H_4 (L_6,\Z)$, where $t_2^{\alpha}$ is a generator of the Poincar\'e dual cohomology class. The M5-brane sources $G_7$, which we refine into a differential character $\br{\dd G}_7$ as explained in Subsection \ref{sec:MtheoryupliftH12}. We thus compute 
    \begin{equation}
    \begin{aligned}
        \exp \left\{  2 \pi \ii S_{\text{CS}}^{\text{M5 on }\mathsf{PD}(t_2^{\alpha}) }  (\Sigma_2)\right\} &= \exp \left\{ 2 \pi \ii \int^{\br{H}}_{L_6 \times \Sigma_2 } \br{t}_2^{\alpha} \star \br{\dd G}_7 \right\} \\
        &= \exp \left\{ 2 \pi \ii\int^{\br{H}}_{L_6\times \Sigma_2 } \br{t}_2^{\alpha} \star \left[ \sum_{\alpha^{\prime}=\mathrm{f,b}} \br{\mathcal{B}}_3^{\alpha^{\prime}} \star \br{t}_2^{\alpha^{\prime}} \times \br{\vol}_3^{\neq \alpha^{\prime}} + \cdots \right] \right\}  \\
        &= \exp \left\{ 2 \pi \ii (2\mathrm{CS}^{(3)}_{\alpha})\int_{\Sigma_2 } A_2^{\alpha} \right\}
    \end{aligned}
    \end{equation}
    where, in the second line, we have plugged \eqref{eq:expansionG4DN1} and omitted the terms that eventually drop out. In the last line, the coefficient reads $2\mathrm{CS}^{(3)}_{\alpha}= \frac{n-1}{n}$ if $\alpha=\mathrm{f}$ and $\frac{p-1}{p}$ if $\alpha=\mathrm{b}$, as obtained from the integrals in Appendix \ref{sec:CSlens6d}. We have thus found the expected form, with the symmetry operator given by the holonomy of the gauge field for the gauged dual symmetry. This result is also consistent with the BF term found above.
    \item The symmetry generators of the magnetic 1-form symmetries are supported on surfaces $\check{\Sigma}_2$ in spacetime, and come from an M2-brane wrapping a Hopf link of either lens space, corresponding to a generator of $\mathrm{Tor}H_1 (L_6,\Z)$. The Poincar\'e dual cohomology classes are $\br{t}_2^{\alpha} \times \br{\vol}_3^{\neq \alpha}$. The M2-brane sources $G_4$, which we refine into a differential character $\br{G}_4$ as explained in Subsection \ref{sec:MtheoryupliftH12}, and compute 
    \begin{equation}
    \begin{aligned}
        \exp \left\{  2 \pi \ii S_{\text{CS}}^{\text{M2 on }\mathsf{PD}(t_2^{\alpha} \smile \vol_3^{\neq \alpha}) } (\check{\Sigma}_2) \right\} &= \exp \left\{ 2 \pi \ii \int^{\br{H}}_{L_6 \times \check{\Sigma}_2 } (\br{t}_2^{\alpha} \times \br{\vol}_3^{\neq \alpha}) \star \br{G}_4 \right\} \\
         &  \hspace{-3.3cm} = \exp \left\{ 2 \pi \ii\int^{\br{H}}_{L_6\times \check{\Sigma}_2 } (\br{t}_2^{\alpha} \times \br{\vol}_3^{\neq \alpha}) \star \left[ \sum_{\alpha^{\prime}=\mathrm{f,b}} \br{B}_2^{\alpha^{\prime}} \star \br{t}_2^{\alpha^{\prime}} + \cdots \right] \right\}  \\
         &  \hspace{-3.3cm}= \exp \left\{ 2 \pi \ii (2\mathrm{CS}^{(3)}_{\alpha})\int_{\check{\Sigma}_2 } B_2^{\alpha} \right\}
    \end{aligned}
    \end{equation}
    where again, in the second line we have used \eqref{eq:expansionG4DN1} and omitted the terms that eventually drop out. As expected, we find the symmetry operator to be given by the holonomy of the gauge field for the gauged dual symmetry; in this case, the gauge field $B_2^{\alpha}$ for the gauged electric 1-form symmetry.
\end{itemize}\par

In both cases discussed above, the exact structure of the 1-form symmetry depends on the specific semi-classical branch under consideration, as outlined in the discussion surrounding (\ref{eq:ZM1e-subgroup}).

\paragraph{Discrete $(-1)$/3-form symmetries.}
It is possible to have infinite-tension domain wall defects, that are charged under a 3-form symmetry in 4d. Gauging this symmetry, we obtain a dual finite $(-1)$-form symmetry \cite{Sharpe:2019ddn,Yu:2020twi,Santilli:2024dyz}. In general one must choose a polarization for this electric/magnetic dual pair.\par
In the present case, these symmetries only exists if $q \ne 1$, in which case the symmetry groups are $\Z_{\mathrm{gcd}(pN,p)} =\Z_p$.

\begin{itemize}
    \item 3-form symmetry.
    \begin{itemize}
	\item[---] The charged defects are domain walls. They are realized from M5-branes wrapping a torsional 2-cycle in the link, and extending in the radial direction, intersecting the spacetime along a $(2+1)$d submanifold.
	\item[---] The symmetry generators are point-like in spacetime, and originate from M2-branes wrapping a dual torsional 3-cycle in the link.
	\end{itemize}
    \item $(-1)$-form symmetry.
    \begin{itemize}
	\item[---] There are no $(-1)$-form objects charged under this symmetry. They would morally correspond to M2-branes wrapping a torsional 3-cycles in the link, and extending in the radial direction.
	\item[---] The symmetry generators are spacetime-filling topological operators. They originate from M5-branes wrapping a dual torsional 2-cycle in the link.
	\end{itemize}
\end{itemize}
As for the electric/magnetic 1-form symmetries, we find the symmetry generators from a refinement of the M-brane action to differential cohomology. 
\begin{itemize}
    \item The symmetry generators of the 3-form symmetry are supported on points $\wp \in M_4$ in spacetime, and come from an M2-brane wrapping $\mathsf{PD}(t_3) \in \mathrm{Tor}H_3 (L_6,\Z)$, where $t_3$ is a generator of the Poincar\'e dual cohomology class. We thus compute 
    \begin{equation}
    \begin{aligned}
        \exp \left\{  2 \pi \ii S_{\text{CS}}^{\text{M2 on }\mathsf{PD}(t_3) } (\wp)\right\} &= \exp \left\{ 2 \pi \ii \int^{\br{H}}_{L_6 \times \{ \wp\} } \br{t}_3 \star \br{G}_4 \right\} \\
        &= \exp \left\{ 2 \pi \ii\int^{\br{H}}_{L_6\times \{ \wp \} } \br{t}_3 \star \left[ \sum_{\alpha^{\prime}=\mathrm{f,b}} \br{B}_0 \star \br{t}_4 + \cdots \right] \right\}  \\
        &= \exp \left\{ - \frac{2 \pi \ii}{p} \mathrm{ev}_{\wp} B_0 \right\} .
    \end{aligned}
    \end{equation}
    In the second line, we have omitted the terms that eventually lead to vanishing contributions. In the last line, the coefficient $-\frac{1}{p}$ stems from the integrals in Appendix \ref{sec:CSlens6d}, and $\mathrm{ev}_{\wp} B_0 = B_0 (\wp)$ in the evaluation map at the point $\wp \in M_4$.\par 
    The symmetry operator derived from M-theory takes the expected form, and is given by the exponential insertion of a scalar gauge field for the gauged $(-1)$-form symmetry.
    \item The symmetry generator of the $(-1)$-form symmetries is supported on the whole spacetime $M_4$. It comes from an M5-brane wrapping the generator of $\mathrm{Tor}H_2 (L_6,\Z)$, whose Poincar\'e dual cohomology class is $t_2^{\mathrm{f}} \smile t_2^{\mathrm{b}}$. We compute 
    \begin{equation}
    \begin{aligned}
        \exp \left\{  2 \pi \ii S_{\text{CS}}^{\text{M5 on }\mathsf{PD}(t_2^{\mathrm{f}} \smile t_2^{\mathrm{b}}) }  \right\} &= \exp \left\{ 2 \pi \ii \int^{\br{H}}_{L_6 \times M_4} (\br{t}_2^{\mathrm{f}} \times \br{t}_2^{\mathrm{b}}) \star \br{\dd G}_7 \right\} \\
        &= \exp \left\{ 2 \pi \ii\int^{\br{H}}_{L_6\times M_4 }(\br{t}_2^{\mathrm{f}} \times \br{t}_2^{\mathrm{b}}) \star \left[ \br{\mathcal{B}}_5 \star \br{t}_3 + \cdots \right] \right\}  \\
        &= \exp \left\{ -\frac{2 \pi \ii}{p} \int_{M_4 } A_4 \right\}
    \end{aligned}
    \end{equation}
    where again, in the second line we have omitted the terms that eventually drop out. In the last line, the coefficient comes from the same integral as before.\par 
    As expected, we find the symmetry operator to be given by the integration of the gauge field $A_4$ for the gauged dual symmetry $\Z_p^{\scriptscriptstyle [3]}$ over the full spacetime. This result is also consistent with the BF term found above.
\end{itemize}\par
As with the 1-form symmetries, the $(-1)$-form and 3-form symmetries are common to all branches of the moduli space, but we expect only a subset of them to act effectively, depending on the branch under consideration. Their origin is rooted in K\"unneth's theorem, reviewed in Appendix \ref{sec:Kunneth}.

\paragraph{Discrete 0/2-form symmetries.}
According to Table \ref{tab:M2M5-G2}, it is also possible to engineer defects charged under 0-/2-form symmetries. They stem from M2- and M5-branes wrapping cycles in exchanged fashion compared to the $(-1)$-/3-form symmetry pair. Analogous to the  $(-1)$-/3-form symmetries, the 0-/2-form symmetries only exists if $q \ne 0 $.

\begin{itemize}
    \item Electric 0-form symmetry.
    \begin{itemize}
	\item[---] The charged defects originate from M2-branes wrapping a torsional 2-cycle in the link, and extending in the radial direction, intersecting the spacetime at a point.
	\item[---] The symmetry generators originate from M5-branes wrapping a dual torsional 3-cycle in the link, and filling a codimension-1 submanifold in spacetime.
	\end{itemize}
    \item Magnetic 2-form symmetry.
    \begin{itemize}
	\item[---] The charged defects are surface defects. They are realized from M5-branes wrapping a torsional 3-cycle in the link, and extending in the radial direction, intersecting the spacetime along a 2d submanifold.
	\item[---] The symmetry generators originate from M2-branes wrapping a dual torsional 2-cycle in the link, and running along a loop in spacetime.
	\end{itemize}
\end{itemize}
Once again we compute the symmetry operators from the M-brane action, and find perfect agreement with the expected form of an electric/magnetic dual pair.
\begin{itemize}
    \item The generator of the 0-form symmetry is supported on codimension-1 submanifolds $\Sigma_3 \subset M_4$, and come from an M5-brane wrapping $\mathsf{PD}(t_3) \in \mathrm{Tor}H_3 (L_6,\Z)$. We thus compute 
    \begin{equation}
    \begin{aligned}
        \exp \left\{  2 \pi \ii S_{\text{CS}}^{\text{M5 on }\mathsf{PD}(t_3) } (\Sigma_3) \right\} &= \exp \left\{ 2 \pi \ii \int^{\br{H}}_{L_6 \times \Sigma_3} \br{t}_3 \star \br{\dd G}_7 \right\} \\
        &= \exp \left\{ 2 \pi \ii\int^{\br{H}}_{L_6\times \Sigma_3 }\br{t}_3 \star \left[ \br{\mathcal{B}}_4 \star \br{t}_4 + \cdots \right] \right\}  \\
        &= \exp \left\{ -\frac{2 \pi \ii}{p} \int_{\Sigma_3 } A_3 \right\} .
    \end{aligned}
    \end{equation}
    The symmetry operator derived from M-theory takes the expected form, given by the holonomy of the discrete gauge field $A_3$ for the gauged dual 2-form symmetry.
    \item The generator of the magnetic 2-form symmetry is supported on a loop $\Sigma_1 \subset M_4$. It comes from an M2-brane wrapping the generator of $\mathrm{Tor}H_2 (L_6,\Z)$, whose Poincar\'e dual cohomology class is $t_2^{\mathrm{f}} \smile t_2^{\mathrm{b}}$. We compute 
    \begin{equation}
    \begin{aligned}
        \exp \left\{  2 \pi \ii S_{\text{CS}}^{\text{M2 on }\mathsf{PD}(t_2^{\mathrm{f}} \smile t_2^{\mathrm{b}})} (\Sigma_1)  \right\} &= \exp \left\{ 2 \pi \ii \int^{\br{H}}_{L_6 \times \Sigma_1} (\br{t}_2^{\mathrm{f}} \times \br{t}_2^{\mathrm{b}}) \star \br{G}_4 \right\} \\
        &= \exp \left\{ 2 \pi \ii\int^{\br{H}}_{L_6\times \Sigma_1 }(\br{t}_2^{\mathrm{f}} \times \br{t}_2^{\mathrm{b}}) \star \left[ \br{B}_1 \star \br{t}_3 + \cdots \right] \right\}  \\
        &= \exp \left\{ -\frac{2 \pi \ii}{p} \oint_{\Sigma_1 } B_1 \right\} .
    \end{aligned}
    \end{equation}
    The last line is the holonomy of the gauge field $B_1$ for the gauged dual 0-form symmetry $\Z_p^{\scriptscriptstyle [0]}$.
\end{itemize}\par

\subsubsection{Continuous abelian symmetries from fluxbranes}
\label{sec:4dN1fluxbrane}
The SymTFT derived from M-theory on B7$/\Gamma_{p,N,q}$ uncovered the presence of two $U(1)^{\scriptscriptstyle [-1,i]}$ $(-1)$-form symmetries, and of a $U(1)^{\scriptscriptstyle [2]}$ 2-form symmetry. We now proceed to realize them from M-branes.\par

\paragraph{Defects charged under continuous symmetries.}
The defects charged under $U(1)$ symmetries, in the present setup, are constructed from M-branes wrapping homology classes in $H_{\bullet} (L_6, \Z)_{\text{free}}$ and extending in the radial direction $[0,\infty)$. The electric defects are listed in Table \ref{tab:M2U(1)G2}.

\begin{table}[ht]
\centering
\begin{tabular}{|l|c|c|}
\hline 
& M2 & charged under \\
\hline
$H_{0}(L_{6},\Z)_{\text{free}}\times [0,\infty) $ \hspace{2pt} & Surface defect & $U(1)^{\scriptscriptstyle [2]}$  \\
$H_{3}(L_{6},\Z)_{\text{free}}\times [0,\infty)$ \hspace{2pt} & --- & $U(1)^{\scriptscriptstyle [-1, \mathrm{f}]} \times U(1)^{\scriptscriptstyle [-1, \mathrm{b}]}$ \\
\hline
\end{tabular}
\caption{Electric defects charged under continuous symmetries and their M-theory realization.}
\label{tab:M2U(1)G2}
\end{table}\par
For completeness, we note that we can further construct surface defects from M5-branes wrapping $H_{3}(L_{6},\Z)_{\text{free}}\times [0,\infty)$, and extending in $(1+1)$d in spacetime.

\paragraph{Symmetry operators for continuous symmetries.}
Now we investigate the symmetry operators for continuous $U(1)^{\scriptscriptstyle [p]}$ $p$-form symmetries using fluxbranes wrapping free cycles of the homology groups given in \eqref{homologyMpNq}.\par
We recall from the analysis in Subsection \ref{sec:BraneSymTFTFree} that the topological operators is obtained from $P_7$-fluxbranes. Throughout we use the conventions that $P_7= \sum_k \widetilde{h}_{7-k} \wedge v_k$ with $\widetilde{h}_{7-k}= \phi^{\ast}(h_{7-k} + g_{7-k})$. Thus, it includes the holonomy of $h_{7-k}$, stemming from the $G_7$-flux, together with a correction $g_{7-k}$, stemming from the HWZ action on the brane, which ensures the topological invariance of the resulting operator.\par

\begin{itemize}
    \item Continuous $(-1)$-form symmetry.\par
    The two continuous $U(1)^{\scriptscriptstyle [-1,i]}$ $(-1)$-form symmetries are generated from $P_7$-fluxbranes wrapping either generator in $H_{3}(L_{6},\Z)_{\text{free}}$. With our choices, the latter homology classes are those of the lens spaces themselves. The spacetime-filling topological symmetry operators representing the action of the group element $e^{\ii \varphi } \in U(1)^{\scriptscriptstyle [-1,i]}$ is calculated as
    \begin{equation}\label{Sym-op-(-1)-form-U(1)}
    \begin{aligned}
        \exp \left\{  \ii \varphi S^{P_7\text{-flux along }L_3^{i}} \right\} &= \exp \left\{ \ii \frac{\varphi}{2\pi} \int_{L_6 \times M_4} \vol_3^{\neq i} \wedge P_7 \right\} \\
        &= \exp \left\{ \ii \frac{\varphi}{2\pi} \int_{L_6 \times M_4} \vol_3^{\neq i} \wedge \left[ \sum_{j=\mathrm{f,b}} \widetilde{h}_4^{j} \wedge \vol_3^{j} + \cdots \right] \right\}  \\
        &= \exp \left\{ \ii \frac{\varphi }{2\pi}\int_{M_4 } \phi^{\ast} \left( h_4^{i} + g_4^{i} \right) \right\} .
    \end{aligned}
    \end{equation}
    We have denoted $L_3^{\mathrm{b}}$ any representative in the class $[\mathbb{S}^3_{\mathrm{b}}/\Z_{p}]=\mathsf{PD} (\vol_3^{\mathrm{f}})$; and likewise for $L_3^{\mathrm{f}}$.\par
    The correction $g_4^{i}$ is given by the projection of $H_3 \wedge G_4$ on the subspace transverse to $\vol_3^{i}$:
    \begin{equation}
    \begin{aligned}
        g_4^{i} &= \frac{1}{4\pi} \int_{L_3^{i}} H_3 \wedge G_4 \\
        &= \frac{1}{4\pi} \int_{L_3^{i}} \left( \mathsf{h}_3 + \sum_{j=\mathrm{f},\mathrm{b}}\mathsf{h}_0^{j} \vol_3^{j} \right) \wedge \left( F_4 + \sum_{j=\mathrm{f},\mathrm{b}}F_1^{j} \vol_3^{j}  \right) \\
        &= \frac{1}{4\pi} \left( \mathsf{h}_3 \wedge F_1^{i} + \mathsf{h}_0^{i} F_4 \right) ,
    \end{aligned}
    \end{equation}
    where we recall that $\dd \mathsf{h}_3 = F_4$ and $\dd \mathsf{h}_0^{i} = F_1^{i}$.\par
    Thus, denoting for shortness
    \begin{equation}
        \widetilde{h}_4^{i} := h_4^{i} +  \frac{1}{4\pi} \left( \mathsf{h}_3 \wedge F_1^{i} + \mathsf{h}_0^{i} F_4 \right)  ,
    \end{equation}
    we find that the insertion of $\widetilde{h}_4^{i}$ integrated over the full spacetime generates a $U(1)^{\scriptscriptstyle [-1,i]}$ $(-1)$-form symmetry. In particular, on the semi-classical branch of the moduli space that realizes a $\mathfrak{su}(N)$ gauge theory, we identify $U(1)^{\scriptscriptstyle [-1,\mathrm{f}]}= U(1)^{\scriptscriptstyle [-1,\mathrm{CW}]}$ with the Chern--Weil $(-1)$-form symmetry that shifts the $\theta_{\text{YM}}$-term, and 
    \begin{equation}\label{eq:F4identifiedwithTrFF}
         \frac{\widetilde{h}_4^{\mathrm{f}}}{2\pi}  \quad  \longleftrightarrow\quad \frac{1}{8\pi^{2}}\,\tr{F\wedge F}\,.
    \end{equation}
    We thus establish a correspondence between $\frac{\widetilde{h}_4^{\mathrm{f}}}{2\pi}$ and the second Chern class of the $SU(N)$ gauge bundle, supported by the two crucial properties:
    \begin{itemize}
        \item[---] Flatness: The flatness condition on $\widetilde{h}_4^{i}$ aligns with the Bianchi identity satisfied by the second Chern class.
        \item[---] Quantization: The integral of both sides of \eqref{eq:F4identifiedwithTrFF} over the spacetime is an integer.
    \end{itemize}\par
    \item Continuous 2-form symmetry.\par
    The symmetry topological operator of the universal $U(1)^{\scriptscriptstyle [2]}$ electric 2-form symmetry is engineered by a $P_7$-fluxbrane wrapping the whole six-dimensional link space $L_{6}$. For the generator representing the action of $e^{\ii \varphi } \in U(1)^{\scriptscriptstyle [2]}$, and extending along a loop $\Sigma_1\subset M_4$, we have
    \begin{equation}\label{gen:2-formContinuous}
    \begin{aligned}
        \exp \left\{  \ii \varphi S^{P_7\text{-flux along }L_6} \right\} &=\exp \left\{ \ii \frac{\varphi}{2\pi} \int_{L_6 \times \Sigma_1} P_7 \right\} \\
        &= \exp \left\{ \ii \frac{\varphi}{2\pi} \int_{L_6 \times \Sigma_1} \left[ \widetilde{h}_1 \wedge \vol_3^{\mathrm{f}} \wedge \vol_3^{\mathrm{b}} + \cdots \right] \right\}  \\
        &= \exp \left\{ \ii \frac{\varphi}{2\pi} \oint_{\Sigma_1 } \phi^{\ast}(h_1 + g_1) \right\} .
    \end{aligned}
    \end{equation}
   In this case $\widetilde{h}_1=\phi^{\ast}(h_1 + g_1)$ with correction term 
   \begin{equation}
       g_1= \frac{1}{4\pi} \int_{L_6} H_3 \wedge G_4 = \frac{1}{4\pi} \left( \mathsf{h}_0^{\mathrm{f}} F_1^{\mathrm{b}} + \mathsf{h}_0^{\mathrm{b}} F_1^{\mathrm{f}} \right) 
   \end{equation}
   subject to $\dd \mathsf{h}_0^{i} = F_1^{i}$, hence $2 \pi \dd g_1 = F_1^{\mathrm{f}} \wedge F_1^{\mathrm{b}}$.
\end{itemize}

\paragraph{Continuous $(-1)$-form symmetry and instanton number.}
Plugging the identifications above into the BF couplings for the continuous symmetries, we arrive at the precise identification of the coupling in the QFT action that gives rise to the $(-1)$-form symmetry. Additional discussion on how we can project the SymTFT onto the physical theory is presented in Appendix \ref{app:fromSymTFTtoTQFT}, see also Subsection \ref{sec:application} for an application.\par
    Using the fact that $F_1^{\mathrm{f}}$ is the curvature for the $(-1)$-form symmetry that sources the $\theta_{\text{YM}}$-angle, as we have seen in Subsection \ref{sec:lowerform4dN1}, together with \eqref{eq:F4identifiedwithTrFF}, we obtain the identification:
    \begin{equation}\label{eq:identificationh4f}
        \int_{Y_5} F_1^{\mathrm{f}} \wedge \frac{\widetilde{h}_4^{\mathrm{f}}}{2\pi} \quad  \xrightarrow{ \ \text{ project to physical boundary } \ } \quad \frac{\theta_{\text{YM}}}{8\pi^{2}}\,\int_{M_4} \tr{F\wedge F}\,.
    \end{equation}
    Here we are slightly abusing of notation by letting $\widetilde{h}_4^{\mathrm{f}}$ be defined on the whole $Y_5$.\par 
    On the other hand, combining the contribution $F_1^{\mathrm{f}} \wedge \frac{h_4^{\mathrm{f}}}{2\pi}$ in \eqref{BFdiscretetermsG2example} with the last term in \eqref{eq:mixAnomaliesCS}, we get a closed and quantized $\widetilde{h}_4^{\mathrm{f}}$ whose pullback to the topological operators matches $\widetilde{h}_4^{\mathrm{f}}$ in \eqref{eq:F4identifiedwithTrFF}.\par
    In conclusion, we have the identification \eqref{eq:identificationh4f}, substantiated both from the point of view of the topological operators generating the symmetry, and from the SymTFT action. The symmetry $U(1)^{\scriptscriptstyle [-1,\mathrm{f}]}= U(1)^{\scriptscriptstyle [-1,\mathrm{CW}]}$ discussed here and derived from M-theory, corresponds precisely to the Chern--Weil $(-1)$-form symmetry provided in \cite{Aloni:2024jpb}.\par
    In conclusion: we have successfully provided an M-theory engineering of the $U(1)^{\scriptscriptstyle [-1,\mathrm{f}]}= U(1)^{\scriptscriptstyle [-1,\mathrm{CW}]}$ symmetry, together with its SymTFT and topological operators.

\subsection{Application: 4-group structure}
\label{sec:application}

The goal of this section is to demonstrate a non-trivial application of the SymTFT derived in Subsection \ref{sec:MonG2SymTFT}, which arises from the presence of discrete $(-1)$-form symmetry and its dual 3-form symmetry. Specifically, we will apply the framework discussed in Appendix \ref{app:fromSymTFTtoTQFT} to project the SymTFT to the physical boundary, thereby obtaining a TQFT that couples to the $\mathfrak{su}(N)$ gauge theory on the semi-classical branch of B7$/\Gamma_{p,N,q}$.

We will show that the resulting TQFT aligns with the one found in \cite{Tanizaki:2019rbk}, leading to interesting physical implications. The key point is a mixed anomaly between $(-1)$-form and 3-form symmetries, detected by the SymTFT action via the third term in \eqref{eq:SymTFTincohomology} below. Gauging a discrete subgroup of the $(-1)$-form symmetry leads to a higher-group structure \cite{Tanizaki:2019rbk}. Here we derive this outcome from our results together with the technology of Appendix \ref{app:fromSymTFTtoTQFT}.

\paragraph{SymTFT action.}

We consider the semi-classical branch defined by the B7$/\Gamma_{p,N,q \neq 0}$ geometry, where the effective theory is an $\mathfrak{su}(N)$ gauge theory.\par
The SymTFT, as presented in \eqref{eq:mixAnomaliesCS}-\eqref{BFdiscretetermsG2example}, contains the following terms relevant to our discussion:
\begin{equation}\label{eq:partSymTFTforApplication}
\begin{aligned}
        S_{\text{SymTFT}}^{\text{4d }\mathfrak{su}(N)}\ = \ & \frac{1}{(2\pi)^2}\int_{Y_{5}}  \left[ F_{4} \wedge \widetilde{h}_{1} + F_{1}^{\mathrm{f}}\wedge \widetilde{h}_{4}^{\mathrm{f}} \right] \\
        & + \int_{Y_5} \left[  \frac{1}{p} B_{0}\smile \delta A_{4} - \frac{1}{2N} \frac{F_{1}^{\mathrm{f}}}{2\pi}\smile B_{2}^{\mathrm{e}}\smile B_{2}^{\mathrm{e}} \right]\,.
\end{aligned}
\end{equation}
In this action, the first two terms are BF terms involving the continuous $2$-form and $(-1)$-form symmetries, respectively, while the third term is a BF term coupling a discrete $(-1)$-form symmetry and its dual $3$-form. The appearance of $\widetilde{h}_{\bullet}$ has been extensively discussed in Subsection \ref{sec:4dN1fluxbrane}. In the last term, $B_{2}^{\mathrm{e}}$ is the gauge field for the $\Z_N^{\scriptscriptstyle [1,\mathrm{e}]}$ electric 1-form symmetry, discussed in Subsection \ref{sec:lowerform4dN1}. 
Here the coefficient $- \frac{1}{N}$ is equivalent to the one in \eqref{eq:F1B2B2-term}, modulo integer shifts.

\paragraph{Field redefinition.}

To ease the comparison with the existing literature, we make a field redefinition to pass from the differential cohomology normalization to a normalization of discrete gauge fields more common in the physics literature.\par
$U(1)$-gauge fields require no change; however, for finite symmetries, we trade the $\Z_n$-gauge fields $B_{\bullet}$ with $U(1)$-gauge fields $\tilde{B}_{\bullet}$ as reviewed in Appendix \ref{app:BFtermreview}. The new normalization is obtained replacing $B_{\bullet} \mapsto \frac{n}{2\pi} \tilde{B}_{\bullet}$. Dropping the tilde to reduce clutter, the field redefinition leads us to the new form of \eqref{eq:partSymTFTforApplication}:
\begin{equation}\label{eq:SymTFTincohomology}
\begin{aligned}
        S_{\text{SymTFT}}^{\text{4d }\mathfrak{su}(N)}\ = \ \frac{1}{(2\pi)^2}\int_{Y_{5}}  & \left[ F_{4} \wedge \widetilde{h}_{1} + F_{1}^{\mathrm{f}}\wedge \widetilde{h}_{4}^{\mathrm{f}}  + p B_{0}\smile \delta A_{4} - \frac{N}{4\pi} F_{1}^{\mathrm{f}}\smile  B_{2}^{\mathrm{e}}\smile B_{2}^{\mathrm{e}} \right]\,.
\end{aligned}
\end{equation}
With this form of the SymTFT action, we are ready to analyze its physical properties and implications.

\paragraph{Projecting the SymTFT on the physical theory.} 

We now examine the physical implications of the SymTFT action given in \eqref{eq:SymTFTincohomology}. This is achieved by projecting the SymTFT to the physical boundary. This step involves deriving the effective physical theory by gauging specific symmetries and making precise identifications. To accomplish this, we apply the boundary projection operator $\widetilde{\delta}$, which is introduced in Appendix \ref{app:collapseSymTFT}, see especially around \eqref{eq:widetildedelta}. In particular, this procedure draws upon key transformations and constraints elucidated in \eqref{delta-on-hG} and  \eqref{deltaontopform}.\par

We proceed step-by-step with this process:
\begin{itemize}
    \item Gauging the electric 1-form symmetry.\par
    We gauge the 1-form symmetry $\Z_N^{\scriptscriptstyle [1,\mathrm{e}]}$ by integrating over its background 2-form gauge field $B_{2}^{\mathrm{e}}$, which is promoted to a dynamical one $b_2$ according to 
    \begin{equation}
        \widetilde{\delta}(B_{2}^{\mathrm{e}} \smile B_{2}^{\mathrm{e}}) \,=\, b_{2}\wedge b_{2}\,.
    \end{equation}
    It is known in the literature (see e.g. \cite[Sec.2]{Gaiotto:2017yup}, \cite[Sec.2]{Cordova:2019uob}, or \cite[Sec.3]{Santilli:2024dyz}) that, when gauging the electric 1-form symmetry, one uplifts the $SU(N)$-gauge curvature $F_{2}^{SU(N)}$ to a $U(N)$-gauge curvature $\widetilde{F}_{2}^{U(N)}$. This can be achieved by demanding 
    \begin{equation}
        F_{2}^{SU(N)} \ \mapsto \ \widetilde{F}_{2}^{U(N)} - b_{2}\,,
    \end{equation}
    with the condition
    \begin{equation}
        \tr{\widetilde{F}_{2}^{U(N)}} \,=\, N\,b_{2}\,.
    \end{equation}
    We drop the superscript distinguishing the relevant gauge group in what follows.\par
    Moreover, as already derived in Subsection \ref{sec:lowerform4dN1}, we identify $\widetilde{h}_{4}$ with the generator of the $U(1)$ $(-1)$-form Chern--Weil symmetry. In particular, we impose 
    \begin{equation}
        \widetilde{\delta}\left( \frac{\widetilde{h}_{4}}{2\pi} \right) \,= \,\frac{1}{8\pi^{2}}\tr{\widetilde{F}_{2}\wedge \widetilde{F}_{2}}\,.
    \end{equation}
    Furthermore, by utilizing the discussion in Subsection \ref{sec:lowerform4dN1}, the $\theta_{\text{YM}}$-term for the Yang--Mills theory is introduced, along with a Lagrange multiplier, as
    \begin{equation}
        \widetilde{\delta}(F_{1}) \,=\, \theta_{\text{YM}}\, + \,\chi^{(1)}\, ,
    \end{equation}
    where we have shifted the fixed value of $\theta_{\text{YM}}$ by a background field $\chi^{(1)}$.
    \item Gauging the continuous 2-form symmetry.\par
        The gauge field for the symmetry $U(1)^{\scriptscriptstyle [2]}$ is introduced by setting locally: 
        \begin{equation}
            \widetilde{\delta}(F_{4}) \,=\, \dd c_{3}\,.
        \end{equation}
        Here, $c_{3}$ is the dynamical 3-form gauge field of the 2-form symmetry. Furthermore, we gauge only a $\Z_{K}^{\scriptscriptstyle [2]}$ subgroup of $U(1)^{\scriptscriptstyle [2]}$ by imposing
        \begin{equation}
            K \dd c_{3}=0\,.
        \end{equation}
        This condition ensures the discrete structure of the symmetry.\par
        Additionally, we introduce a discrete $\theta$-term for the topological top form $\dd c_{3}$, as discussed around \eqref{deltaontopform}. To incorporate it, we set 
        \begin{equation}
            \widetilde{\delta}(\widetilde{h}_{1}) \, = \, \theta^{(2)} \,+\, K\,\chi^{(2)}\,,
        \end{equation}
        with $\chi^{(2)}$ acting as a Lagrange multiplier enforcing the condition $K \dd c_{3}=0$.
    \item Gauging the finite 3-form symmetry.\par
        The $\Z_{p}^{\scriptscriptstyle [3]}$ 3-form symmetry is gauged by taking
        \begin{equation}
            \widetilde{\delta}(A_{4}) \,=\, a_{4} .
        \end{equation}
        Additionally, the field $B_{0}$ is identified with the boundary degrees of freedom as,
        \begin{equation}
            \widetilde{\delta}(p B_{0}) \,=\, \theta^{(3)}\,+\,  p\chi^{(3)}\,.
        \end{equation}
        The first summand introduces a $\theta$-term for the $a_{4}$ top form, while the second is a Lagrange multiplier, which enforces the quantization of the periods of $a_4$.
\end{itemize}

At this point, we arrive at the following topological action within the 4d action of the dynamical gauge theory:
\begin{equation}
\begin{aligned}
        2 \pi S_{\text{TQFT}}^{PSU(N)/(\Z_K^{\scriptscriptstyle [2]} \rtimes \Z_p^{\scriptscriptstyle [3]})} & = \int_{M_4}\left[  \frac{\theta_{\text{YM}}}{8\pi^{2}} \tr{\left(\widetilde{F}-b_{2}\right)^{2}} + \frac{\theta^{(2)}}{2\pi}  \dd c_{3} + \frac{\theta^{(3)}}{2\pi}  a_{4} \right] \\
        &+ \int_{M_4}\left[  \frac{1}{8\pi^{2}} \chi^{(1)}\wedge\, \tr{\left(\widetilde{F}-b_{2}\right)^{2}} \right. \\
        &+ \left. \frac{K}{2\pi} \chi^{(2)} \wedge \, \dd c_{3} + \frac{p}{2\pi} \, \chi^{(3)} \wedge \,  a_{4}\,  \right]. 
\end{aligned}
\label{eq:S4dTQFTgauged}
\end{equation}
In the above, the first term is the usual $\theta_{\text{YM}}$-term. The interpretation of the other terms stems from the above discussion. In the following, we will further constrain this action by demanding gauge invariance, uncovering interesting physical implications.

\paragraph{Gauge invariance.} 
Demanding the invariance of the action \eqref{eq:S4dTQFTgauged} under large gauge transformations imposes additional constraints.\par 
The first term is by construction invariant under 1-form gauge transformations. Invariance under large gauge transformations of $c_3$ and $a_4$ typically imposes additional restrictions on the parameters, imposing discrete values for the $\theta$-angles. However, in the present situation, we are gauging two discrete groups simultaneously, and they may combine in higher structures.\par
If we select the gauged subgroup $\Z_K^{\scriptscriptstyle [2]} \subset U(1)^{\scriptscriptstyle [2]}$ to have
\begin{equation}
    K=p,
\end{equation}
we find a non-trivial solution to the requirement of invariance of \eqref{eq:S4dTQFTgauged} under large gauge transformations $c_{3}\  \mapsto \ c_{3} + \Lambda_{3}$. Letting $a_4$ transform non-trivially:
\begin{equation}
\label{eq:a4L3gauge}
    a_4 \ \mapsto \ a_4 + \dd \Lambda_3 ,
\end{equation}
we get the constraints
\begin{equation}
\begin{aligned}
    \theta^{(2)} \, &= \, - \,\theta^{(3)} \\
     \chi^{(2)} \, &=\, -\chi^{(3)}.
\end{aligned}
\end{equation}
To lighten the notation we simply write $\theta$ for $\theta^{(2)}$ from now on.\par

At this point, the TQFT action is given by
\begin{equation}
\begin{aligned}
        2 \pi S_{\text{TQFT}}^{PSU(N)/(\Z_K^{\scriptscriptstyle [2]} \rtimes \Z_p^{\scriptscriptstyle [3]})} & = \int_{M_4}\left[  \frac{\theta_{\text{YM}}}{8\pi^{2}} \tr{\left(\widetilde{F}-b_{2}\right)^{2}} + \frac{\theta}{2\pi}  \left(\dd c_{3} - a_{4}\right) \right] \\
        &+ \int_{M_4}\left[  \frac{1}{8\pi^{2}} \chi^{(1)}\,\wedge\, \tr{\left(\widetilde{F}-b_{2}\right)^{2}} + \frac{p}{2\pi} \, \chi^{(3)} \,\wedge \,  \left( a_{4}\, - \dd c_3 \right)  \right]. 
\end{aligned}
\label{eq:S4dTQFTconstr}
\end{equation}

\paragraph{Higher-group structure.}

Further constraints are found by considering the equations of motion of the Lagrange multipliers. The equation of motion of $\chi^{(1)}$ implies 
\begin{equation}
   \ \frac{1}{8\pi^{2}}\int \tr{\widetilde{F}^{2}}\,  = \,  \frac{N}{8\pi^{2}}\int b_{2}\wedge b_{2} \ \in \ \frac{1}{N}\,\Z\,.
\end{equation}
This is avoidable by imposing $\chi^{(1)}=0$ from the onset, which trivially solves the issue. This is the solution typically considered in the literature.\par
In the present situation, thanks to the simultaneous gauging of discrete 2-form and 3-form symmetries, we now proceed to show that there is another, non-trivial solution, given by
\begin{equation}
    \chi^{(1)} \,=\, \chi^{(3)}\, 
\end{equation}
and allowing for modifications of the $a_4$ gauge field (reminiscent of the Green--Schwartz mechanism). This latter choice leads to interesting physical implications. In the following we fix $\chi^{(1)}=\chi^{(3)}=:\chi$.\par
The relevant part of \eqref{eq:S4dTQFTconstr} subject to this condition is
\begin{equation}\label{4DTQFTaction}
     \int_{M_4} \chi \,\wedge\,\left[ \frac{1}{8\pi^{2}} \tr{\widetilde{F}^{2}}  - \frac{N}{8\pi^{2}} \, b_{2}\wedge b_{2} \,+\,
     \frac{p}{2\pi} \, ( - \dd c_{3} + a_{4} ) \right]\,.
\end{equation}
Recall that usually the discreteness of $a_4$ is imposed with an auxiliary field $a_3$ with $pa_4 = \dd a_3$ (locally). Next, we relax this relation and instead we consider
\begin{equation}
    p a_{4} = \dd a_{3} + \frac{N}{4\pi} b_{2}\wedge b_{2}\,.
\end{equation}
This leads to express \eqref{eq:S4dTQFTconstr} as
\begin{equation}
\begin{aligned}
        2 \pi S_{\text{TQFT}}^{PSU(N)/(\Z_K^{\scriptscriptstyle [2]} \rtimes \Z_p^{\scriptscriptstyle [3]})} & = \int_{M_4}\left[  \frac{\theta_{\text{YM}}}{8\pi^{2}} \tr{\widetilde{F}^{2}} + \frac{\theta}{2\pi}  \left(\dd c_{3} - \frac{1}{p} \dd a_3\right) \right] \\
        &+ \int_{M_4} \frac{1}{8\pi^2} \left( \theta_{\text{YM}} + \frac{\theta}{p} \right) b_2 \wedge b_2  \\
        & + \int_{M_4} \chi \wedge \left[  \frac{1}{8\pi^{2}}\tr{ \widetilde{F}^{2}} - \frac{p}{2\pi} \dd c_3 + \frac{1}{2\pi}\,\dd a_{3} \right]. 
\end{aligned}\label{4DTQFTactionfinal}
\end{equation}
We observe that, in this formulation, the action is invariant under the large $\Lambda_{3}$ gauge transformation by construction. However, now $a_4$ remains invariant under 1-form large gauge transformations $b_2 \mapsto b_2 + \Lambda_2$ only if we impose 
\begin{equation}
    a_{3} \ \mapsto \ a_{3} - \frac{N}{2\pi}b_{2}\wedge \Lambda_{1} - \frac{N}{4\pi}\Lambda_{1}\wedge \dd \Lambda_{1}
\end{equation}
with $\dd \Lambda_1 = \Lambda_2$. This transformation indicates a higher-group structure between the gauged 1-form and 3-form symmetries. This is to be combined with the non-trivial transformation \eqref{eq:a4L3gauge} of $a_3$ under the 2-form large gauge transformation. Altogether, we have the following 4-group structure
\begin{equation}\label{4-group-structure}
   \left( \Z_{N}^{\scriptscriptstyle [1,\mathrm{e}]} \times \Z_{p}^{\scriptscriptstyle [2]} \right)\  \widetilde{\times} \ \Z_{p}^{\scriptscriptstyle [3]} \,.
\end{equation}

This higher group structure was analyzed in \cite{Tanizaki:2019rbk}. The dynamical gauge theory Lagrangian, along with the TQFT action in \eqref{4DTQFTactionfinal}, were referred to as the generalized $SU(N)$ super Yang--Mills (SYM) theory. Here, we review the  main physical implication of the 4-group structure in \eqref{4-group-structure} as discussed in \cite{Tanizaki:2019rbk}.
\begin{itemize}
    \item Existence of $pN$ vacua:\par
    The generalized $SU(N)$ SYM theories have $pN$ vacua rather than the $N$ vacua of the usual $SU(N)$ SYM theory. These extended vacua can be seen due to the existence of $p$ non-dynamical UV domain-walls (DWs) that separate between the $N$ vacua of the usual $SU(N)$ gauge theory. Hence, there are $pN$ vacua. In fact, our top-down analysis shows that such non-dynamical UV DWs exist and can be identified with the $\Z_{p}$ DWs found in  Table \ref{tab:M2M5-G2}. 

    \item Periodicity of $\theta_{\text{YM}}$:\par
    Using the equation of motion of the $\chi$ field in \eqref{4DTQFTactionfinal}, one shows that the combination $(\theta + p\,\theta_{\text{YM}})$ should be $2\pi$-periodic. It then follows that
    \begin{equation}
        \theta \ \sim \ \theta + 2\pi, \qquad \theta_{\text{YM}} \ \sim \ \theta_{\text{YM}} + \frac{2\pi}{p}\,.
    \end{equation}
\end{itemize}

Next, we consider the physical implication of gauging the $\Z_{p}^{\scriptscriptstyle [2]}\subset U(1)^{\scriptscriptstyle [2]}$ 2-form symmetry while turning-off the discrete 1-form and 3-form symmetries.

\paragraph{Modified instanton sum.} Let us now turn off the 4-group structure discussed above, retaining only the $c_{3}$ gauge field associated with the 2-form symmetry. In this case, our TQFT action simplifies  to 
\begin{equation}
\begin{aligned}
    2 \pi S_{\text{TQFT}}^{SU(N)/\Z_p^{\scriptscriptstyle [2]}} & = \int_{M_4} \left[ \frac{\theta_{\text{YM}}}{8\pi^{2}}  \tr{F\wedge F} + \frac{\theta}{2\pi}  \dd c_{3} \right] \\
        &+ \int_{M_4} \chi \,\wedge\,\left[ \frac{1}{8\pi^{2}} \tr{F\wedge F}  -\frac{p}{2\pi}\, \dd c_{3}  \right]\,.
\end{aligned}
\end{equation}

The equation of motion of the $\chi$ field imposes the following constraint: 
\begin{equation}
     \frac{1}{8\pi^{2}} \tr{F\wedge F}  = \frac{p}{2\pi}\, \dd c_{3} \,.
\end{equation}
Integrating both sides over spacetime gives a constraint on the instanton number, forcing it to be a multiple of the integer $p$. This modified instanton sum was first proposed in \cite{Seiberg:2010qd}, building on earlier 2d examples with gauged 1-form symmetry \cite{Pantev:2005rh,Pantev:2005wj,Pantev:2005zs}, and further explored in \cite{Tanizaki:2019rbk}. For more detailed discussions, we refer the reader to these works, as our focus here is on the implications of the above TQFT.\par
Besides, from a calculation analogous to the one above, one shows that the periodicity of the Yang--Mills $\theta_{\text{YM}}$-angle is $2\pi/p$.

\section{Conclusions and outlook}

In this paper, we investigated the appearance of $(-1)$-form symmetries and their mixed 't Hooft anomalies in supersymmetric field theories from the geometric engineering of M-theory on a cone (with singularity at the tip), and computed their SymTFT actions.\par 
For discrete $(-1)$-form symmetries, the 0-form gauge fields arise from the reduction of differential cohomology class $\br{G}_4$ over torsional cycles on the link of the cone. For continuous $(-1)$-form symmetries, the 1-form field strengths arise from the reduction of $\br{G}_4$ over free cycles on the link of the cone. In particular, we propose that the topological operators generating continuous symmetries come from the M-theory action $\sim \int  P_7$, where $P_7$ is the Page charge sourced by a (6+1)d fluxbrane.\par 
This treatment is crucial to obtain well-posed topological operators, and expands on the existing proposals, which are not suited to encompass Chern--Simons terms as the one in the M-theory action.\par
From the dimensional reduction of M-theory topological term $\int C_3\wedge G_4\wedge G_4$, we computed the anomaly (twist) terms of the SymTFT action including the gauge fields for $(-1)$-form symmetries. From the BF-like term $G_4 \wedge G_7$ we derived the BF terms for both continuous and discrete symmetries. In particular, a combination of such the BF term for the continuous symmetries of the form $\int_{Y_{d+1}}F\wedge h$ together with a contribution from the twist term, of the form $\int_{Y_{d+1}}F\wedge \widetilde{g}$, matches the action on the topological operator.\par

We applied the methods to a number of examples. First, for 5d $\mc{N}=1$ SCFTs from M-theory on Calabi-Yau threefold singularities $X_6$, we discussed the presence of discrete and continuous $(-1)$-form symmetries which do not have a conventional gauge theory interpretation. 

We have also studied the SymTFT for 4d theories involving $(-1)$-form symmetries. We discussed 4d $\mc{N}=2$ $\mathfrak{su}(N)$ gauge theories from M-theory on $X_6\times S^1$, as well as 4d $\mc{N}=1$ $\mathfrak{su}(N)$ gauge theories from M-theory on spaces with $G_2$ holonomy, i.e. the orbifold $X_7=\text{B7}/\Gamma_{p,N,q}$ whose link is the quotient space $(\mb{S}^3\times\mb{S}^3)/\Gamma_{p,N,q}$. In these models, we found the mixed anomaly term $F_1\wedge B_2\wedge B_2$ involving the $(-1)$-form symmetry field strength $F_1$ and the background gauge field $B_2$ for 1-form symmetries. For the 4d $\mc{N}=1$ theory engineered from $X_7=\text{B7}/\Gamma_{p,N,q}$, interestingly we found that parts of the 5d SymTFT action can be reduced to topological terms in 4d, and we observed the presence of a 4-group symmetry similar to the algebraic structure in the gauge theory with modified instanton sum~\cite{Tanizaki:2019rbk}.\par

\bigskip
Our extensive and systematic analysis opens up a number of avenues for future research directions:

\begin{itemize}
    \item It would be desirable to extend our results from orbifolds with $G_2$ holonomy, analyzing orbifolds with non-abelian finite quotient groups. The methods we developed herein would shed light on their associated $p$-form symmetries, with special focus on the structure of $(-1)$-form symmetries within these setups.
    \item It is natural to extend our list of examples to incorporate the SymTFT of lower-dimensional theories, emphasizing the role of $(-1)$-form symmetries. This involves studying 3d theories with four or eight supercharges and their dimensional reductions to 2d, integrating perspectives from both field theory and geometric engineering within M-theory and superstring frameworks.
    \item Our developments can be adapted to the construction of $(-1)$-form symmetries within the framework of AdS/CFT (following \cite{vanBeest:2022fss}), with the aim to uncover their holographic origins and dual implications.
    \item Our progress in understanding the BF terms and topological operators for $U(1)$ symmetries to incorporate the effect of the Chern--Simons term in M-theory begs for a more exhaustive treatment. It would be interesting to investigate SymTFTs obtained as the topological limit of a Maxwell--Chern--Simons action, particularly their implications for generalized symmetries and topological phases.
    \item Perhaps the investigation of $P_7$-fluxbranes carried over in this work and their role as generators of $U(1)$ symmetries, may serve to explore the origins of the R-symmetry within the context of geometric engineering in M-theory.
\end{itemize}

\acknowledgments
We thank Babak Haghighat, Tsung-Ju Lee and Yi Zhang for discussions. LS and YNW would like to thank the Peng Huanwu Center for Fundamental Theory for hosting the ``SymTFT workshop'' and the hospitality. LS also thanks SIMIS for hospitality at various stages of this project. The work of LS is supported by the Shuimu Scholars program of Tsinghua University, and by the Natural Science Foundation of China under Grant W2433005 ``String theory, supersymmetry, and applications to quantum geometry''. MN and YNW are supported by National Natural Science Foundation of China under Grant No. 12175004, No. 12422503 and by Young Elite Scientists Sponsorship Program by CAST (2023QNRC001, 2024QNRC001). YNW is also supported by National Natural Science Foundation of China under Grant No. 12247103. 

\appendix 

\section{Torsion classes and differential cohomology}

\subsection{K\"unneth's theorem for principal ideal domains}
\label{sec:Kunneth}
Let $\mathbf{Z}$ be a principal ideal domain, and $H_{\bullet} (L, \mathbf{Z})$ be the homology of the topological space $L$ with coefficients in $\mathbf{Z}$. We will be interested in the case $\mathbf{Z}= \mathbb{Z}$, the ring of integers.\par
K\"unneth's theorem states that, for any two topological spaces $L, L^{\prime}$, and $k \in \Z$, there exists a short exact sequence
\begin{equation}
\begin{aligned}
    0 \longrightarrow \bigoplus_{i+j=k} H_i (L,\mathbf{Z}) \otimes_{\mathbf{Z}} H_j (L^{\prime},\mathbf{Z}) & \longrightarrow H_k (L \times L^{\prime},\mathbf{Z}) \\
    \longrightarrow & \bigoplus_{i+j=k-1} \mathrm{Tor}_1^{\mathbf{Z}} \left( H_i (L,\mathbf{Z}) , H_j (L^{\prime},\mathbf{Z}) \right) \longrightarrow 0 ,
\end{aligned}
\label{eq:Kunneth}
\end{equation}
and furthermore the sequence splits. In this expression, $\mathrm{Tor}_1^{\mathbf{Z}}$ is the first Tor-functor. In particular, it vanishes whenever either of the two arguments is a free $\mathbf{Z}$-module.\par
For our purposes, the theorem means that K\"unneth's formula receives a correction by torsion classes whenever we take the product of two manifolds both supporting torsion (co)homology classes. The (co)homological grading of these torsion terms is shifted by one compared to the `classical' K\"unneth formula with coefficients in a field. For instance, if both $L$ and $L^{\prime}$ have torsional 1-cycles, $L \times L^{\prime}$ will have torsional 2- and 3-cycles.\par

\paragraph{Tor.}
To explain the Tor-functor, let $\mathsf{A}$ be a right $\mathbf{Z}$-module and $\mathsf{B}$ be a left $\mathbf{Z}$-module. $\mathrm{Tor}_k^{\mathbf{Z}} \left(\mathsf{A},\mathsf{B} \right) $ is computed as follows. Pick any projective resolution 
\begin{equation}
    \cdots \longrightarrow \mathsf{P}_2 \longrightarrow \mathsf{P}_1 \longrightarrow \mathsf{P}_0 \longrightarrow \mathsf{A} \longrightarrow 0 ,
\end{equation}
eliminate $\mathsf{A}$ and form a new complex by tensoring with $\mathsf{B}$ over $\mathbf{Z}$:
\begin{equation}
    \cdots \longrightarrow \mathsf{P}_2 \otimes_{\mathbf{Z}} \mathsf{B} \longrightarrow \mathsf{P}_1 \otimes_{\mathbf{Z}} \mathsf{B} \longrightarrow \mathsf{P}_0 \otimes_{\mathbf{Z}} \mathsf{B}  \longrightarrow 0 .
\end{equation}
Then, $\mathrm{Tor}_k^{\mathbf{Z}} \left(\mathsf{A},\mathsf{B} \right) $ is by definition the $k^{\text{th}}$ homology of this complex (note that the grading increases from right to left).\par
The following is well-known, but we sketch the proof for completeness.
\begin{lemma}
In the setup above, set $\mathbf{Z}=\Z$ and $\mathsf{A}=\Z_m,\mathsf{B}=\Z_n$. It holds that 
\begin{equation}\label{eq:AbelianKunneth}
    \mathrm{Tor}_1^{\Z } \left( \Z_m , \Z_n \right)= \Z_{\mathrm{gcd}(n,m)} .
\end{equation}
\end{lemma}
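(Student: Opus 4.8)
The plan is to compute $\mathrm{Tor}_1^{\Z}(\Z_m,\Z_n)$ directly from the definition via an explicit projective resolution. First I would choose the standard free resolution of $\Z_m$ as a $\Z$-module, namely
\begin{equation}
    0 \longrightarrow \Z \xrightarrow{\ \cdot m\ } \Z \longrightarrow \Z_m \longrightarrow 0 ,
\end{equation}
which is projective (indeed free) since $\Z$ is a principal ideal domain and $\Z_m \cong \Z / m\Z$. This resolution has length one, so all higher Tor-functors beyond degree one vanish automatically, and I need only track the degree-one homology.

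Next I would delete the $\Z_m$ term and tensor the truncated complex $0 \to \Z \xrightarrow{\cdot m} \Z \to 0$ with $\Z_n$ over $\Z$. Using the natural identification $\Z \otimes_{\Z} \Z_n \cong \Z_n$, the induced map $\cdot m \otimes \mathrm{id}$ becomes multiplication by $m$ on $\Z_n$, yielding the complex
\begin{equation}
    0 \longrightarrow \Z_n \xrightarrow{\ \cdot m\ } \Z_n \longrightarrow 0 ,
\end{equation}
where the leftmost $\Z_n$ sits in homological degree one. By definition $\mathrm{Tor}_1^{\Z}(\Z_m,\Z_n)$ is the degree-one homology, which here is simply the kernel of multiplication by $m$ on $\Z_n$.

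The final computational step is to identify $\ker(\cdot m : \Z_n \to \Z_n)$. An element $[x] \in \Z_n$ lies in the kernel precisely when $mx \equiv 0 \pmod n$, i.e. when $n \mid mx$. Writing $g = \mathrm{gcd}(m,n)$, this condition is equivalent to $\tfrac{n}{g} \mid x$, so the kernel consists of the multiples of $\tfrac{n}{g}$ in $\Z_n$, forming a cyclic subgroup of order $g = \mathrm{gcd}(m,n)$. Hence $\mathrm{Tor}_1^{\Z}(\Z_m,\Z_n) \cong \Z_{\mathrm{gcd}(m,n)}$, which is exactly \eqref{eq:AbelianKunneth}.

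This argument is entirely routine and I do not anticipate a genuine obstacle; the only point requiring mild care is the identification of the kernel of multiplication by $m$, where one must correctly invoke the elementary number-theoretic fact that $n \mid mx$ is equivalent to $\tfrac{n}{\mathrm{gcd}(m,n)} \mid x$. The symmetry of the result in $m$ and $n$ provides a useful consistency check, reflecting the well-known symmetry $\mathrm{Tor}_1^{\Z}(\mathsf{A},\mathsf{B}) \cong \mathrm{Tor}_1^{\Z}(\mathsf{B},\mathsf{A})$ for the first Tor-functor over a commutative ring.
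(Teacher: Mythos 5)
Your proof is correct and follows essentially the same route as the paper: both use the free resolution $0 \to \Z \xrightarrow{\cdot m} \Z \to \Z/m\Z \to 0$, tensor with $\Z_n$, and identify $\mathrm{Tor}_1$ with the kernel of multiplication by $m$ on $\Z_n$. The only difference is presentational — the paper phrases the final kernel computation multiplicatively, as the common $n^{\text{th}}$ and $m^{\text{th}}$ roots of unity, whereas you work additively and actually spell out the elementary fact that $n \mid mx$ is equivalent to $\tfrac{n}{\gcd(m,n)} \mid x$, which the paper merely asserts.
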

\begin{proof}
To be more careful, we will distinguish the multiplicative group $\Z_n$ of $n^{\mathrm{th}}$ roots of unity, and denote $\Z/ n\Z$ the additive group of $n$ elements defined $\mod n$.\par
We start by writing the short exact sequence 
\begin{equation}
    0 \longrightarrow \Z \xrightarrow{ \ m \cdot \ } \Z \twoheadrightarrow \Z/m\Z \longrightarrow 0 ,
\end{equation}
with the first map given by multiplication by $m$. This provides us with a resolution of $\Z/m\Z$ by the two-term complex 
\begin{equation}
    [\Z \xrightarrow{ \ m \cdot \ } \Z ] .
\end{equation}
Tensoring with $\Z/n\Z$ over the integers and using $\Z \otimes_{\Z} (\Z/n\Z) \cong \Z/n\Z$, we arrive at  
\begin{equation}
    \Z /n\Z \xrightarrow{ \ m \cdot \ } \Z/n\Z .
\end{equation}
We are ultimately interested in multiplicative groups, instead of additive. The desired complex is obtained from the above by the exponential map
\begin{equation}
    \exp \left( \frac{2 \pi \ii }{n} \cdot \right) : \ \Z /n\Z \longrightarrow \Z_n , \qquad \alpha \mapsto e^{2 \pi \ii \alpha/n} .
\end{equation}
Observe that, exponentiating the complex, the multiplication map $m \cdot$ lifts to a map of groups:
\begin{equation}\label{eq:mnroot1}
    \hat{m} : \Z_n \longrightarrow \Z_n , \qquad u \mapsto u^m ,
\end{equation}
for $u$ a $n^{\text{th}}$ root of unity.
By definition, $\mathrm{Tor}^{\Z}_1 \left( \Z_n, \Z_m \right) $ is computed applying the homology functor $H_1$ to the resulting complex
\begin{equation}
    \Z_n \xrightarrow{ \ \hat{m} \ } \Z_m .
\end{equation}
Clearly, inside the left-most term in the complex, the image of the previous boundary map is trivial; the only remaining task is thus to compute $\mathrm{ker} (\hat{m})$. The kernel of \eqref{eq:mnroot1} consists of $n^{\text{th}}$ roots of unity that are also $m^{\text{th}}$ roots of unity. In multiplicative notation:
\begin{equation}
    \mathrm{ker} \left( \text{\eqref{eq:mnroot1}}\right) =  \left\{ u \in \Z_n \ : \ u^m =1 \right\} \cong \Z_{\mathrm{gcd}(n,m)} 
\end{equation}
(seen as a subgroup of $\Z_n$). We conclude that $\mathrm{Tor}_1^{\Z } \left( \Z_m , \Z_n \right) = \Z_{\mathrm{gcd}(n,m)}$ .
\end{proof}

\subsection{Differential cohomology summary}
\label{app:diffchar}
For introductions on differential cohomology, see e.g. \cite[Sec.3]{Hopkins:2002rd}, \cite[Sec.2]{Freed:2006yc}, or the review \cite{Szabo:2012hc}. For a more exhaustive mathematical treatment, we refer to \cite{Bar:2014} or the review \cite{Bunke:2012rsi}.\par

\subsubsection{Cheeger--Simons characters}
\begin{defin}[{\cite{Cheeger:1985}}]
    Let $M$ be a smooth manifold, and $Z_{k} (M)$ the group of smooth $k$-cycles in $M$. A differential character of degree $p$ on $M$ is a homomorphism of abelian groups
    \begin{equation}
        \chi \ : \ Z_{p-1} (M) \longrightarrow U(1) 
    \end{equation}
    such that $\exists !$ closed differential $p$-form $F^{\chi} \in 2 \pi \Omega_{\Z}^{p} (M)$ such that
    \begin{equation}
        \chi (\partial \Sigma ) = \exp \left( \ii \int_{\Sigma} F^{\chi}  \right) 
    \end{equation}
    for every $p$-chain $\Sigma \in C_{p} (M)$. The group of differential characters of degree $p$ is called Cheeger--Simons group $\br{H}^{p} (M)$.\footnote{These conventions on the grading are by now standard (see e.g. \cite{Freed:2006yc,Bunke:2012rsi,Bar:2014}), but they are shifted by 1 with respect to the seminal paper \cite{Cheeger:1985}.}
\end{defin}
The field strength map \eqref{diffcharFmap} acts as $\chi \mapsto \mathscr{F}(\chi) = \frac{1}{2\pi}F^{\chi} $. We adopt the notation of \cite{Apruzzi:2021nmk} and denote the character $\chi$ of degree $p$ and with field strength $\frac{F}{2\pi}$ by $\br{F}_p$.\par
The Cheeger--Simons group $\br{H}^{p} (M)$ is completely characterized by the diagram \cite{Simons:2007}
    \begin{equation}\label{2exactseqbrH}
    \begin{tikzpicture}
    \node (brH) at (0,0) {$\br{H}^{p} (M)$};
    \node (OZ) at (3,1) {$\Omega^{p}_{\Z} (M)$};
    \node (ohr) at (6,2) {$0$};
    \node (ohl) at (-6,-2) {$0$};
    \node (Hhl) at (-3,-1) {$H^{p-1} (M, \R/\Z)$};
    \draw[->] (ohl) -- (Hhl);
    \draw[->] (Hhl) -- (brH);
    \path[->] (brH) edge node[anchor=south] {${\scriptstyle \mathscr{F}}$} (OZ);
     \draw[->] (OZ) -- (ohr);
     
    \node (HZ) at (3,-1) {$H^{p} (M,\Z)$};
    \node (ovb) at (6,-2) {$0$};
    \node (ovt) at (-6,2) {$0$};
    \node (Oc) at (-3,1) {$\Omega^{p-1} (M) / \Omega^{p-1}_{\Z} (M)$};
    
    \path[->] (brH) edge node[anchor=north] {$\scriptstyle c$} (HZ);
    \draw[->] (ovt) -- (Oc);
    \draw[->] (Oc) -- (brH);
    \draw[->] (HZ) -- (ovb);
    \end{tikzpicture}    
    \end{equation}
    where $H^{p-1} (M, \R/\Z)$ classifies gauge-equivalence classes of flat $(p-1)$-form gauge fields, and $\Omega^{p-1} (M) / \Omega^{p-1}_{\Z} (M)$ classifies topologically trivial $(p-1)$-form gauge fields.

\paragraph{Integration.}
Given a fibration $M \hookrightarrow \mathcal{M} \twoheadrightarrow B$ whose fibres are smooth manifolds of $\dim M =d<p$, we define an integration map:
\begin{equation}
    \int^{\br{H}}_{\mathcal{M}/B} \ : \ \br{H}^{p} (\mathcal{M}) \longrightarrow \br{H}^{p-d} (B) .
\end{equation}
In particular, using $\br{H}^1 (\mathrm{pt}) \cong \R/\Z$, we have the integration $\int^{\br{H}}_{M} $ introduced in \eqref{eq:diffcohoint}.

\subsubsection{Differential characters and higher-form symmetries}
Given a $(p-2)$-gerbe on $M$ with $(p-1)$-connection $A_{p-1}$, the periods of the connection define a map
\begin{equation}\label{periodgerbe}
    Z_{p-1} (M) \longrightarrow U(1) , \qquad \Sigma_{p-1} \mapsto \exp \left( 2 \pi \ii \int_{\Sigma_{p-1}} A_{p-1} \right) .
\end{equation}
The field strength and characteristic class are obtained in the usual way. We thus see that differential characters of degree $p$ classify $(p-2)$-gerbes with $(p-1)$-connection. In particular, when $p=2$, we have that $\br{H}^{2} (M)$ is a model for line bundles with connection on $M$, providing a refinement of $H^{2} (M, \Z)$ as a model for line bundles on $M$.\par

\paragraph{Holonomy.}
The holonomy \eqref{periodgerbe} lifts to differential cohomology. The holonomy of a differential character $\br{F}_p \in \br{H}^p (M)$ along $\Sigma_{p-1} \in Z_{p-1} (M)$ is 
\begin{equation}
    \Sigma_{p-1} \mapsto \exp \left( 2 \pi \ii \int^{\br{H}}_{\Sigma_{p-1}} \br{F}_{p} \right) .
\end{equation}\par
Consider a torsional cycle $\Sigma_{p-1} \in \mathrm{Tor} H_{p-1} (M, \Z)$, and let $n\in \mathbb{N}$ be the minimal integer such that $n\Sigma_{p-1} \equiv 0$. One can choose a $p$-chain $Y_p \in C_{p} (M, \Z)$ such that $\partial Y_p = n \Sigma_{p-1}$. Then the holonomy of $\br{F}_p \in \br{H}^p(M)$ along the torsional cycle is \cite[Remark 16]{Bar:2014}
\begin{equation}
\label{eq:TorHol}
	\exp \left[ \frac{2 \pi \ii}{n} \left( \int_{Y_p} \frac{F_p}{2\pi} - \langle c (\br{F}_p) , Y_p \rangle \right)  \right] .
\end{equation}
Here $\langle \cdot,\cdot \rangle$ is the Kronecker pairing between cohomology and homology, and we are abusing of notation by using the same symbol when $Y_p$ is a chain but not a homology class (see \cite[Sec.5.1]{Bar:2014} for the precise definition).

\subsubsection{Products in differential cohomology}
\label{app:diffcohoprods}

\paragraph{Internal product.}
There exists an internal product operation $\star$ that induces a graded ring structure on $\br{H}^{\bullet} (M)$:
\begin{equation}
    \star \ : \ \br{H}^p (M) \otimes \br{H}^q (M) \longrightarrow \br{H}^{p+q} (M) .
\end{equation}
It descends to the wedge product $\wedge$ in $\Omega^{\bullet} (M)$ under the field strength map, and to the cup product $\smile$ in $H^{\bullet} (M, \Z)$ under the characteristic class map.\par 
Useful properties that we use in the main text are:
\begin{itemize}
    \item[(i)] The product operation satisfies 
\begin{equation}
    \br{F}_p \star \br{G}_q = (-1)^{pq} \br{G}_q \star \br{F}_p
\end{equation}
for any two classes $\br{F}_p \in \br{H}^p (M) , \br{G}_q \in \br{H}^q (M)$. In particular, it follows that:
\begin{equation}
\label{eq:symmstarproduct}
    \frac{1}{2} \left( \br{F}_p \star \br{G}_q + \br{G}_q \star \br{F}_p \right) = \begin{cases} 0 & \text{ if } p,q\in 1 +2 \Z \\ \br{F}_p \star \br{G}_q & \text{ if } p\in 2 \Z \text{ or } q\in 2 \Z . \end{cases}
\end{equation}
    \item[(ii)] When $\br{F}_{p}$ is topologically trivial, with $\mathscr{F} (\br{F}_p) = \frac{1}{2\pi} \dd A_{p-1}$, then $\br{F}_p \star \br{G}_q$ is also topologically trivial, with 
    \begin{equation}
    	\mathscr{F} (\br{F}_p \star \br{G}_q) = \dd \left( \frac{A_{p-1}}{2\pi} \wedge \frac{G_q}{2\pi} \right).
    \end{equation}
\item[(iii)] The holonomy of $\br{F}_p \star \br{F}_p$ along a $(2p-1)$-cycle $\Sigma_{2p-1}$ is locally represented by the Chern--Simons functional
\begin{equation}
    \exp \left( 2\pi \ii \int_{\Sigma_{2p-1}} \br{F}_p \star \br{F}_p \right) ~ \stackrel{\text{\tiny locally}}{=} ~ \exp \left( \ii \int_{\Sigma_{2p-1}} A_{p-1} \wedge \frac{F_p}{2\pi} \right) ,
\end{equation}
but the left-hand side is globally defined, as opposed to the right-hand side.
\end{itemize}

\paragraph{External product.}
Consider the product manifold $L \times M$, with projections 
\begin{equation}
    L \xleftarrow{\ \pi \ } L \times M \xrightarrow{ \ \pi^{\prime} \ } M .
\end{equation}
The internal product $\star$ induces an external product 
\begin{equation}
    \times \ : \ \br{H}^p (L) \otimes \br{H}^q (M) \longrightarrow \br{H}^{p+q} ( L \times M ) 
\end{equation}
via 
\begin{equation}
    \br{F}_p \times \br{G}_q := (\pi^{\ast} \br{F}_p ) \star (\pi^{\prime \ast} \br{G}_q ) .
\end{equation}
\begin{prop}[{see e.g. \cite[Lemma 30]{Bar:2014}}]
Let $L,M$ be closed oriented manifolds. The integration defines a pairing 
\begin{equation}
    \int^{\br{H}}_{ L \times M} \ : \ \br{H}^p (L) \otimes \br{H}^q (M) \longrightarrow \R / \Z 
\end{equation}
with 
\begin{equation}\label{eq:fibreproddiffcoho}
    \int^{\br{H}}_{ L \times M} \br{F}_p \times \br{G}_q = \begin{cases} \left( \int_{M} c (\br{G}_q) \right) \int^{\br{H}}_{L} \br{F}_p & \text{ if } p= \dim L +1 \text{ and } q= \dim M \\ (-1)^p \left( \int_{L} c (\br{F}_p) \right) \int^{\br{H}}_{M} \br{G}_q & \text{ if } p= \dim L \text{ and } q= \dim M +1 \\ 0 & \text{ otherwise}. \end{cases}
\end{equation}
\end{prop}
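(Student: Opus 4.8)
The plan is to extract the answer directly from the degree constraints that $\br{F}_p$ and $\br{G}_q$ must satisfy, using only the two structural maps $\mathscr{F}$ and $c$ of \eqref{2exactseqbrH}. First I would record the bookkeeping: the integration $\int^{\br{H}}_{L\times M}$ is defined on $\br{H}^{\dim L+\dim M+1}(L\times M)$, so a nonzero value forces $p+q=\dim L+\dim M+1$. Writing $a=\dim L$, $b=\dim M$, the three lines of the claim correspond to $(p,q)=(a+1,b)$, $(p,q)=(a,b+1)$, and all other index pairs. The last case is the quickest: the diagram \eqref{2exactseqbrH} gives $\br{H}^k(N)=0$ whenever $k\ge\dim N+2$, since then $\Omega^k_\Z(N)=0$, $H^k(N,\Z)=0$ and $H^{k-1}(N,\R/\Z)=0$. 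If $q\le b-1$ then $p\ge a+2$ and $\br{H}^p(L)=0$; if $q\ge b+2$ then $\br{H}^q(M)=0$. In either situation one of the two factor groups vanishes, so the pairing is identically zero and the third line holds with no computation.

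For the first line I would use that $p=a+1>\dim L$ forces $\mathscr{F}(\br{F}_{a+1})\in\Omega^{a+1}_\Z(L)=0$, so $\br{F}_{a+1}$ is flat and arises from a class $u\in H^{a}(L,\R/\Z)$ through the inclusion in \eqref{2exactseqbrH}. Because field strengths multiply under the external product, $\br{F}_{a+1}\times\br{G}_b$ is again flat, and its underlying $\R/\Z$-class is the cohomology cross product of $u$ with $c(\br{G}_b)$. On flat characters the integration $\int^{\br{H}}$ coincides with evaluation against the fundamental class $[L\times M]=[L]\times[M]$, and the cross product of top-degree classes factorizes under this pairing, so that
\begin{equation*}
\int^{\br{H}}_{L\times M}\br{F}_{a+1}\times\br{G}_b=\langle u,[L]\rangle\,\langle c(\br{G}_b),[M]\rangle=\Big(\int^{\br{H}}_L\br{F}_{a+1}\Big)\Big(\int_M c(\br{G}_b)\Big),
\end{equation*}
using $\langle u,[L]\rangle=\int^{\br{H}}_L\br{F}_{a+1}$ and $\langle c(\br{G}_b),[M]\rangle=\int_M c(\br{G}_b)$. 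The conventions of \cite{Bar:2014} for the product orientation and for the normalization of $\times$ are arranged precisely so that no Koszul sign survives in this degree configuration, which yields the first line.

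The second line I would deduce from the first by symmetry rather than repeat the argument. Let $\tau:M\times L\to L\times M$ be the swap diffeomorphism; it reverses orientation by $(-1)^{ab}$, so $\int^{\br{H}}_{L\times M}\omega=(-1)^{ab}\int^{\br{H}}_{M\times L}\tau^\ast\omega$. Pulling back the external product and applying the graded commutativity $\br{A}_r\star\br{B}_s=(-1)^{rs}\br{B}_s\star\br{A}_r$ turns $\br{F}_a\times\br{G}_{b+1}$ on $L\times M$ into $(-1)^{pq}\,\br{G}_{b+1}\times\br{F}_a$ on $M\times L$. The latter has its first factor in degree $\dim M+1$, so the already-proven first line applies and produces $\big(\int_L c(\br{F}_a)\big)\int^{\br{H}}_M\br{G}_{b+1}$. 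Collecting the signs with $p=a$ and $q=b+1$ gives $(-1)^{ab}(-1)^{a(b+1)}=(-1)^{a}=(-1)^p$, which is exactly the stated coefficient.

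The step I expect to be the main obstacle is the sign and normalization bookkeeping in the first line, namely the assertion that the $\star$-product of a flat character with an arbitrary one is again flat and represents the cup product of $u$ with $c(\br{G}_b)$ in $\R/\Z$-cohomology with the sign normalization that makes the factorization above sign-free. This is the genuine content hidden behind \cite[Lemma 30]{Bar:2014}, and it is where one must invoke the full ring structure of $\br{H}^\bullet$ and its compatibility with $\mathscr{F}$, $c$, and the flat inclusion. By contrast, the orientation and graded-commutativity signs in the swap argument of the second line are merely a secondary hazard: they are pinned down completely once the first line is fixed, and I verified above that they combine to the required $(-1)^p$.
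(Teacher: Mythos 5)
Your proof is correct, but it is worth saying up front that the paper does not actually prove this proposition: it is imported verbatim from \cite[Lemma 30]{Bar:2014}, and the only argument of this type spelled out in the paper is the adaptation to manifolds with boundary in \eqref{eq:fibreprodtrivialG}, which reduces the holonomy to an ordinary integral using the \emph{topologically trivial} description $\mathscr{F}(\br{F}_p)=\frac{1}{2\pi}\dd A_{p-1}$. You instead use the \emph{flat} description, which is equally complete here and arguably cleaner: in degree $a+1=\dim L+1$ the diagram \eqref{2exactseqbrH} gives $\Omega^{a+1}_{\Z}(L)=0$ and $H^{a+1}(L,\Z)=0$, so $\br{H}^{a+1}(L)\cong H^{a}(L,\R/\Z)$ exactly, your identification $\br{F}_{a+1}=j(u)$ loses nothing, and the module identity $j(u)\star\hat{y}=j\bigl(u\smile c(\hat{y})\bigr)$ reduces the whole computation to evaluating an $\R/\Z$-class on $[L]\times[M]$ — this is essentially how the cited lemma is proved, and your handling of the vanishing cases via $\br{H}^{k}(N)=0$ for $k\ge\dim N+2$ is airtight. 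Your genuinely different move is the second line: rather than redoing the fiber integration with the factors exchanged, you derive it from the first line via the swap diffeomorphism, with the orientation sign $(-1)^{ab}$ and graded commutativity $(-1)^{a(b+1)}$ combining to $(-1)^{a}=(-1)^{p}$; this matches the stated coefficient and buys an automatic consistency check that the relative sign between the two lines is forced in \emph{any} convention. The one soft spot is the one you flag yourself: the absence of a Koszul sign in the first line is asserted rather than derived — in the standard Künneth-evaluation convention $\langle\alpha\times\beta,x\times y\rangle=(-1)^{|\beta|\,|x|}\langle\alpha,x\rangle\langle\beta,y\rangle$ a factor $(-1)^{ab}$ would appear, and it is precisely the fiber-integration conventions of \cite{Bar:2014} that absorb it. Since the proposition \emph{is} the statement of those conventions and your swap argument pins the relative sign correctly, deferring that absolute normalization to the reference is acceptable, but a self-contained proof would have to fix it by an explicit choice of fiber-integration sign.
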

In particular note that, if e.g. $p=\dim L +1$, $\br{F}_p$ is necessarily topologically trivial for dimensional reasons, thus there exists a gauge field $A_{p-1}$ such that 
\begin{equation}
	\int^{\br{H}}_{L} \br{F}_p = \int_L A_{p-1} \mod 1 .
\end{equation}
Additionally, one can show the following.
\begin{prop}
	Let $L_{p-1}$ be closed and oriented, and $Y_q$ be a manifold with boundary. Take $\br{F}_p \in \br{H}^p (L_{p-1}), \br{G}_q \in \br{H}^q (Y_q)$, and moreover assume that $\br{G}_q$ is topologically trivial, $\mathscr{F}(\br{G}_q) = \frac{1}{2\pi} \dd B_{q-1}$. Then 
	\begin{equation}\label{eq:fibreprodtrivialG}
    	\int^{\br{H}}_{L_{p-1} \times Y_q} \br{F}_p \times \br{G}_{q} = \left( \int_{\partial Y_q} B_{q-1} \right) \int^{\br{H}}_{L_{p-1}} \br{F}_p . 
	\end{equation}
\end{prop}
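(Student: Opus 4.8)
The plan is to mimic the proof of the preceding Proposition (the one establishing \eqref{eq:fibreproddiffcoho} for two closed factors), replacing only its final step by an application of Stokes' theorem that exploits the hypothesis that $\br{G}_q$ is topologically trivial. First I would record two triviality facts. On the one hand, $\br{G}_q$ is topologically trivial by assumption, so by \eqref{diffcharFmap} its field strength is exact, $\mathscr{F}(\br{G}_q) = \frac{1}{2\pi}\dd B_{q-1}$. On the other hand, since $\dim L_{p-1} = p-1 < p$ we have $H^p(L_{p-1},\Z)=0$, so $\br{F}_p$ is topologically trivial for dimensional reasons, with a globally defined primitive $A_{p-1}$ satisfying $\int^{\br{H}}_{L_{p-1}}\br{F}_p = \int_{L_{p-1}} A_{p-1} \bmod 1$, exactly as noted after \eqref{eq:fibreproddiffcoho}. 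By item (ii) of Appendix \ref{app:diffcohoprods}, the external product $\br{F}_p\times\br{G}_q = (\pi^\ast\br{F}_p)\star(\pi'^\ast\br{G}_q)$ is then itself topologically trivial on $L_{p-1}\times Y_q$.

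The core of the argument is to choose the primitive of $\br{F}_p\times\br{G}_q$ adapted to the product, namely the one that is top-degree along the closed factor. Applying item (ii) with $\pi^\ast\br{F}_p$ as the trivial factor produces the primitive $\pi^\ast A_{p-1}\wedge\pi'^\ast\mathscr{F}(\br{G}_q)$, whose $L_{p-1}$-degree equals $\dim L_{p-1}$ and whose $Y_q$-degree equals $\dim Y_q$. Integrating this top form and using the product (Fubini) structure as in the preceding Proposition, I would factor
\begin{equation}
\int^{\br{H}}_{L_{p-1}\times Y_q}\br{F}_p\times\br{G}_q = \pm\left(\int_{L_{p-1}} A_{p-1}\right)\left(\int_{Y_q}\mathscr{F}(\br{G}_q)\right)\bmod 1 ,
\end{equation}
with the sign fixed by orientation conventions exactly as in the $(-1)^p$ factor of \eqref{eq:fibreproddiffcoho}; the first factor is precisely $\int^{\br{H}}_{L_{p-1}}\br{F}_p$.

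The one place where the hypothesis on $\br{G}_q$ enters, and where this statement departs from the closed-factor case, is the evaluation of $\int_{Y_q}\mathscr{F}(\br{G}_q)$. For closed $Y_q$ this would equal $\int_{Y_q} c(\br{G}_q)\in\Z$; here $Y_q$ has a boundary, so instead I use $\mathscr{F}(\br{G}_q) = \frac{1}{2\pi}\dd B_{q-1}$ and Stokes' theorem to get $\int_{Y_q}\mathscr{F}(\br{G}_q) \propto \int_{\partial Y_q} B_{q-1}$, which yields the claimed right-hand side. I expect the main obstacle to be the bookkeeping of well-definedness: over a manifold with boundary the naive identity $\int^{\br{H}}\iota(\omega)=\int\omega\bmod 1$ depends on the choice of primitive $\omega$, since two primitives differ by a closed integral form whose integral over a manifold with boundary need not be an integer. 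This is resolved precisely by performing the fibre integration over the closed factor $L_{p-1}$ first, which singles out the adapted primitive above and guarantees that only the boundary term $\int_{\partial Y_q} B_{q-1}$ survives. The remaining care is tracking the $2\pi$ normalizations and the reordering sign so that the overall coefficient comes out to exactly $1$, as in the statement.
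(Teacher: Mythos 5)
Your proposal follows essentially the same route as the paper's proof: both exploit that $\br{F}_p$ is topologically trivial for dimensional reasons with primitive $A_{p-1}$, use the compatibility of $\star$ with $\dd$ (your item (ii) is exactly this fact) to exhibit $\br{F}_p\times\br{G}_q$ as topologically trivial with adapted primitive $\pi^{\ast}A_{p-1}\wedge\pi^{\prime\ast}\mathscr{F}(\br{G}_q)$, factor the integral by Fubini, and finish with Stokes' theorem via $G_q=\dd B_{q-1}$. The only cosmetic difference is your hedged $\pm$ sign, which indeed comes out $+1$ here since the trivialized factor is the first one (the no-sign case of \eqref{eq:fibreproddiffcoho}), consistent with the statement.
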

\begin{proof}The statement follows immediately adapting the proof of \cite[Lemma 30]{Bar:2014} to chains with boundaries. Since it is not stated explicitly there, we report the proof for completeness.\par
We start by the fact that necessarily $\mathscr{F}(\br{F}_p) =\frac{1}{2\pi}\dd A_{p-1}$. We have 
\begin{equation}
\begin{aligned}
	\int^{\br{H}}_{L_{p-1} \times Y_q} \br{F}_p \times \br{G}_{q} &= \int^{\br{H}}_{L_{p-1} \times Y_q} \left( \pi^{\ast} \dd \br{A}_{p-1} \right) \star \left( \pi^{\prime \ast} \br{G}_{q} \right) =  \int^{\br{H}}_{L_{p-1} \times Y_q}  \dd \left( \pi^{\ast} \br{A}_{p-1} \star \pi^{\prime \ast} \br{G}_{q} \right) 
\end{aligned}
\end{equation}
where, for the second equality, we use the fact that $\br{G}_q$ is closed and the compatibility \cite[Eq.(54)]{Bar:2014}. The right-hand side is the holonomy of a topologically trivial differential character on ${L_{p-1} \times Y_q}$, hence 
\begin{equation}
\begin{aligned}
	\int^{\br{H}}_{L_{p-1} \times Y_q} \br{F}_p \times \br{G}_{q} &=  \int_{L_{p-1} \times Y_q} \pi^{\ast} \frac{A_{p-1}}{2\pi} \wedge \pi^{\prime \ast} \mathscr{F}(G_{q}) \\
	&= \left( \int_{L_{p-1}} \frac{1}{2\pi} A_{p-1}\right) \left( \int_{Y_q}  \frac{1}{2\pi} G_{q} \right) .
\end{aligned}
\end{equation}
To conclude the proof we plug in the hypothesis $G_q = \dd B_{q-1}$ and get 
\begin{equation}
	\int^{\br{H}}_{L_{p-1} \times Y_q} \br{F}_p \times \br{G}_{q} = \left( \frac{1}{2\pi} \int_{L_{p-1}} A_{p-1} \right) \left( \frac{1}{2\pi} \int_{\partial Y_q} B_{q-1} \right) .
\end{equation}
\end{proof}

\paragraph{Slant product.}
The slant product is defined at the level of (co)chains as the pairing
\begin{equation}
    C^{p+q} (L \times L^{\prime}) \otimes C_q (L^{\prime} ) \longrightarrow C^p (L) .
\end{equation}
It lifts to a pairing in differential cohomology \cite{Hopkins:2002rd}, see also \cite{GarciaEtxebarria:2024fuk} for its applications to geometric engineering.

\subsubsection{Holonomy versus higher linking}
\label{sec:DelZottoLink}
In the main text we ought to evaluate integrals in differential cohomology. In this appendix we compare our approach, based on computing holonomies of Cheeger--Simons characters, with the method of \cite{DelZotto:2024tae}, which uses higher linking pairing.\par
This subsection is meant for the interested reader, to serve as a dictionary between distinct but equivalent approaches, and is not necessary for our derivation.

\paragraph{Holonomy versus higher linking: Free cycles.}
Let $v_p,v_q^{\prime} \in H^{\bullet} (L, \Z)_{\mathrm{free}}$. By Poincar\'e duality, we have 
\begin{equation}
	\int^{\br{H}}_L \br{v}_p \star \br{v}_q^{\prime}  = \ell_L \left(  \mathsf{PD} (v_p) , \mathsf{PD} (v_q^{\prime}) \right) .
\end{equation}
The result vanishes unless $p+q=\dim (L)+1$, which we assume is the case. The right-hand side is computed by taking a $(d-p+1)$-chain (i.e. a $q$-chain) $\Sigma_{q}$ with $\partial\Sigma_q = \mathsf{PD} (v_p)$, and counting the number points at which $\mathsf{PD} (v_q^{\prime})$ intersects $\Sigma_q$, 
\begin{equation}
    \ell_L \left(  \mathsf{PD} (v_p) , \mathsf{PD} (v_q^{\prime}) \right) = \lvert \Sigma_p \cap \mathsf{PD} (v_q^{\prime}) \rvert .
\end{equation}
The intersection number should be counted with sign, consistent with the rule $\lvert \Sigma_p \cap \mathsf{PD} (v_q^{\prime}) \rvert  = (-1)^{pq} \lvert \Sigma_q^{\prime} \cap \mathsf{PD} (v_p) \rvert $.

\paragraph{Holonomy versus higher linking: Torsion cycles.}
Let us now discuss the situation with torsion cocycles $t_p,t_q^{\prime} \in \mathrm{Tor}H^{\bullet} (L, \Z)$.
By Poincar\'e duality, we have 
\begin{equation}\label{eq:diffcohotptqint}
	\int^{\br{H}}_L \br{t}_p \star \br{t}_q^{\prime}  = \ell_L \left(  \mathsf{PD} (t_p) , \mathsf{PD} (t_q^{\prime}) \right) .
\end{equation}
To compute the right-hand side, \cite{DelZotto:2024tae} introduces a $q$-chain $\Sigma_{q}$ and a $p$-chain $\Sigma^{\prime}_{p}$ with 
\begin{equation} 
    \partial \Sigma_{q} = n  \mathsf{PD} (t_p) , \qquad \partial \Sigma_{p}^{\prime} = m \mathsf{PD} (t_q^{\prime}) .
\end{equation}
Here we are using $\dim (L) +1 =p+q$ for the dimensions. Letting $\omega_{p-1} = \mathsf{PD} (\Sigma_q)$ and $\omega_{q-1}^{\prime} = \mathsf{PD} (\Sigma^{\prime}_{p})$ , one has \cite{DelZotto:2024tae}
\begin{equation}\label{eq:higherlinkSigmapq}
	\ell_L \left(  \mathsf{PD} (t_p) , \mathsf{PD} (t_q^{\prime}) \right)  = \frac{1}{nm} \int_{L} \omega_{p-1} \wedge \dd \omega_{q-1}^{\prime} = \frac{1}{nm} \lvert \Sigma_{q} \cap \partial \Sigma_{p}^{\prime} \rvert .
\end{equation}
We emphasize a few aspects of this formula:
\begin{itemize}
\item The intersection is well-posed if $\dim (\Sigma_q) + \dim (\Sigma_p^{\prime}) -1= \dim (L)$, that is to say $p+q=\dim (L)+1$, which agrees with the condition from the differential cohomology integration \eqref{eq:diffcohotptqint}.
\item The final result involves counting intersection points between chains. Therefore, while $\partial \Sigma_p^{\prime} = m \mathsf{PD}(t_q^{\prime})$ corresponds to a trivial homology class, it nevertheless intersects $\Sigma_{q}$ at an integer number of points.
\item There is an implicit choice of sign to be made, to make the counting of points unambiguous. Integrating by parts in \eqref{eq:higherlinkSigmapq}, we have the self-consistency condition (cf. \cite[Eq.(2.34)]{GarciaEtxebarria:2019caf})
\begin{equation}
    \lvert \Sigma_{q} \cap \partial \Sigma_{p}^{\prime} \rvert = (-1)^{pq} \lvert \Sigma_{p}^{\prime} \cap \partial \Sigma_{q} \rvert  .
\end{equation}
\item The linking number actually depends on the choice of representative chain, not just on the homology class. Shifting $\mathsf{PD}(t_p) \mapsto \mathsf{PD}(t_p) + \partial (\lambda_{q})$, the right-hand side of \eqref{eq:higherlinkSigmapq} is modified by 
\begin{equation}
    \frac{1}{nm} \lvert \Sigma_{q} \cap \partial \Sigma_{p}^{\prime} \rvert  \mapsto \frac{1}{nm} \lvert \Sigma_{q} \cap \partial \Sigma_{p}^{\prime} \rvert + \frac{1}{m} \lvert \lambda_{q} \cap \partial \Sigma_{p}^{\prime} \rvert .
\end{equation}
This parallels the differential cohomology statement that different discrete gauge fields in the differential character $\br{t}_p$ have different holonomy, whereby giving a different result.
\item For $p,q$ such that \eqref{eq:diffcohotptqint} is non-vanishing, we have the Pontryagin duality 
\begin{equation}
    \mathrm{Tor} H^q (L, \Z) = \mathrm{Hom} \left( \mathrm{Tor} H^p (L, \Z) , U(1) \right) .
\end{equation}
We write $\mathrm{Tor} H^p (L, \Z) = \bigoplus_{\alpha=1}^{\mu}\Z_{m_{\alpha}}$, which, using the canonical isomorphism $\widehat{\Z}_m \cong \Z_m$, implies 
\begin{equation}
    \mathrm{Tor} H^q (L, \Z)=\bigoplus_{\alpha=1}^{\mu}\Z_{m_{\alpha}} .    
\end{equation}
There are two cases for \eqref{eq:higherlinkSigmapq}:
\begin{itemize}
    \item[(i)] If $n=m_{\alpha}=m$ corresponds to the same direct summand, then  
    \begin{equation}
        \frac{1}{n^2} \lvert \Sigma_{q} \cap \partial \Sigma_{p}^{\prime} \rvert \in \frac{1}{n} \Z ;
    \end{equation}
    \item[(ii)] If $n=m_{\alpha}, m=m_{\ne \alpha}$ correspond to disjoint summands, then we can select the generators in such a way that \eqref{eq:higherlinkSigmapq} vanishes.
\end{itemize}
\end{itemize}

\paragraph{Holonomy versus higher linking: mixing torsion and free cycles.}
Repeating the argument with $t_p \in \mathrm{Tor} H^{p} (L,\Z)$, $v_q\in H^q (L, \Z)_{\mathrm{free}}$, we set $\partial \Sigma_q = n \mathsf{PD}(t_p)$ and compute 
\begin{equation}
	\int^{\br{H}}_L \br{t}_p \star \br{v}_q = \int^{\br{H}}_{ \mathsf{PD}(v_q)} \br{t}_p  = \frac{1}{n} \lvert \Sigma_q \cap \mathsf{PD}(v_q) \rvert .
\end{equation}

\subsection{Integrals of torsion cocycles}\label{app:IntTorCycles}

In this appendix, we compute the holonomy of torsion classes in $\br{H}^{\bullet}(L)$ for various closed manifolds $L$ of interest in the main text.\par

\subsubsection{Torsion classes and tensor products}

Let $\br{t}_p, \br{t}_q^{\prime} $ be the lift to $\br{H}^{\bullet}(L)$ of torsion classes, with $p+q= \dim (L) +1$. Denote for simplicity the torsion cohomology groups generated by $t_p$ and $t_q^{\prime}$ by, respectively, $\Z_n$ and $\Z_m$. The holonomy of $\br{t}_p \star \br{t}_q^{\prime}$ is 
\begin{equation}
	\exp \left( 2 \pi \ii  \int^{\br{H}}_L \br{t}_p \star \br{t}_q^{\prime} \right) .
\end{equation}
\begin{lemma}In the setup above, 
\begin{equation}
	\mathrm{gcd} (n,m) \br{t}_p \star \br{t}_q^{\prime} \equiv 0 \quad \text{in } \br{H}^{p+q} (L) .
\end{equation}
\end{lemma}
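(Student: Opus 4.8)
The plan is to reduce the statement to an elementary annihilator argument in an abelian group, exploiting the bilinearity of the internal product $\star$ (part of the graded-ring structure of $\br{H}^{\bullet}(L)$) together with the finite orders of the two torsion classes. The first observation I would record is that $p+q=\dim(L)+1$ forces the product into the top-degree Cheeger--Simons group $\br{H}^{p+q}(L)=\br{H}^{\dim(L)+1}(L)$. Since $\Omega^{\dim(L)+1}(L)=0$ on a closed $\dim(L)$-manifold, the field-strength map vanishes identically in this degree and the top row of \eqref{2exactseqbrH} collapses to the identification $\br{H}^{\dim(L)+1}(L)\cong H^{\dim(L)}(L,\R/\Z)$. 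In particular $\br{t}_p\star\br{t}_q^{\prime}$ is automatically flat, so the remaining task is purely to bound its order inside this abelian group.

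The key input I would isolate as an auxiliary claim is that the differential-cohomology lift of an order-$n$ torsion class is itself $n$-torsion, i.e. $n\,\br{t}_p=0$ in $\br{H}^p(L)$, and likewise $m\,\br{t}_q^{\prime}=0$. Here the choice of lift matters: one takes $\br{t}_p$ to lie in the flat subgroup $H^{p-1}(L,\R/\Z)\subset\br{H}^p(L)$ appearing in the top row of \eqref{2exactseqbrH}, on which the characteristic-class map $c$ restricts to the Bockstein. Using the universal-coefficient identification $H^{p-1}(L,\R/\Z)\cong\mathrm{Hom}(H_{p-1}(L,\Z),\R/\Z)$ (valid since $\R/\Z$ is injective), the torsion subgroup maps order-isomorphically onto $\mathrm{Tor}\,H^p(L,\Z)$ via this Bockstein, so $t_p$ admits a flat lift of order exactly $n$. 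For that representative $n\br{t}_p=0$, and the same argument applied to $t_q^{\prime}$ gives $m\br{t}_q^{\prime}=0$.

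Granting this, the conclusion is immediate from the $\Z$-bilinearity of $\star$ recalled in Appendix \ref{app:diffcohoprods}:
\begin{equation}
    n\left(\br{t}_p\star\br{t}_q^{\prime}\right)=(n\br{t}_p)\star\br{t}_q^{\prime}=0,\qquad m\left(\br{t}_p\star\br{t}_q^{\prime}\right)=\br{t}_p\star(m\br{t}_q^{\prime})=0 .
\end{equation}
An element of any abelian group annihilated by both $n$ and $m$ has order dividing $\mathrm{gcd}(n,m)$, whence $\mathrm{gcd}(n,m)\,\br{t}_p\star\br{t}_q^{\prime}=0$, as claimed. The only genuinely delicate point — and the step I expect to require the most care — is the auxiliary claim that the lift is $n$-torsion rather than merely topologically trivial: a generic flat lift differs from the canonical torsion one by an element of the divisible ``torus'' part $\mathrm{im}\big(H^{p-1}(L,\R)\to H^{p-1}(L,\R/\Z)\big)$, which need not be killed by $n$. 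Pinning down the canonical order-$n$ representative (equivalently, the $\Z_n$-valued flat holonomy Pontryagin-dual to $t_p$) is precisely what makes the annihilation argument valid, and is consistent with the $\tfrac{1}{\mathrm{gcd}(n,m)}\Z/\Z$-valued holonomies recorded in \eqref{eq:pairingTorTor} and Appendix \ref{sec:DelZottoLink}.
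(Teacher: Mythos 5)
Your proof is correct and takes essentially the same route as the paper: show the product is annihilated by both $n$ and $m$, then conclude annihilation by $\mathrm{gcd}(n,m)$ — where the paper runs the Euclidean algorithm by hand, you invoke Bézout, which is the same argument. Your careful choice of the flat lift of order exactly $n$ (via the Bockstein onto $\mathrm{Tor}\,H^p(L,\Z)$, separating it from the divisible torus part of $H^{p-1}(L,\R/\Z)$) in fact fills in a step the paper asserts without comment, namely the passage from $n\,t_p\equiv 0$ in integral cohomology to $n\,\br{t}_p\star\br{t}_q^{\prime}\equiv 0$ in $\br{H}^{p+q}(L)$, so your version is if anything more complete.
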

\begin{proof}
	First, we observe that, since $n t_p \equiv 0$ and $m t_q \equiv 0$, then $n \br{t}_p \star \br{t}_q^{\prime} \equiv 0 \equiv m \br{t}_p \star \br{t}_q^{\prime} $.\par
	Write $n=g n^{\prime}, m= g m^{\prime}$ with $\mathrm{gcd} (n^{\prime}, m^{\prime})=1$, and assume without loss of generality that $m^{\prime}>n^{\prime}$. We write $m^{\prime}= \mu n^{\prime} + \nu $, $0 < \nu < n^{\prime}$. Then, 
	\begin{equation}
		0 \equiv g m^{\prime} (\br{t}_p \star \br{t}_q^{\prime} ) = g m^{\prime} (\br{t}_p \star \br{t}_q^{\prime} )- \mu g n^{\prime} (\br{t}_p \star \br{t}_q^{\prime} ) = g \nu (\br{t}_p \star \br{t}_q^{\prime} ) .
	\end{equation}
	We can now write $n^{\prime}= \mu^{\prime} \nu + m^{\prime\prime}$ and, repeating the argument with the due changes, we have 
	\begin{equation}
		0 \equiv g n^{\prime} (\br{t}_p \star \br{t}_q ) =  g m^{\prime\prime}(\br{t}_p \star \br{t}_q^{\prime} ) ,
	\end{equation}
	for some integer $0<m^{\prime\prime}<\nu<n^{\prime}<m^{\prime}$. Iterating, we eventually arrive at rest 1 and hence $g \cdot 1 \cdot (\br{t}_p \star \br{t}_q^{\prime} ) \equiv 0$ in cohomology.
\end{proof}
We thus have that $ \br{t}_p \star \br{t}_q^{\prime}$ is $\mathrm{gcd} (n,m) $-torsional, and its holonomy takes values in the multiplicative group $\Z_{\mathrm{gcd} (n,m)}$.\par

\subsubsection{Torsion cohomology of lens spaces}
\label{sec:CSlens3d}
\begin{table}[th]
    \centering
    \begin{tabular}{|c|cccccc|}
    \hline
        $\Gamma$ & $A_{n-1}$ & $D_{\mathrm{even}}$ & $D_{\mathrm{odd}}$ & $E_6$ & $E_7$ & $E_8$\\
        \hline
         $\Gamma^{\mathrm{ab}}$& $\Z_n$ & $\Z_2 \oplus \Z_2$ & $\Z_4$ & $\Z_3$ & $\Z_2$ & $1$ \\
         \hline
    \end{tabular}
    \caption{ADE groups and their abelianization.}
    \label{tab:GammaAb}
\end{table}\par

Let $\Gamma$ a finite ADE group, and consider $\mathbb{S}^3/\Gamma$. The (co)homology groups are:
\begin{equation}
    H_{3-\bullet} (\mathbb{S}^3/\Gamma, \Z ) \cong H^{\bullet} (\mathbb{S}^3/\Gamma, \Z ) = \Z \ \oplus \ 0 \ \oplus \ \Gamma^{\mathrm{ab}} \ \oplus \ \Z ,
\end{equation}
where $\Gamma^{\mathrm{ab}} := \Gamma / [\Gamma, \Gamma ]$ is the abelianization of $\Gamma$, listed in Table \ref{tab:GammaAb}.\par
Let $t_2$ denote the torsional generator of $H^2 (\mathbb{S}^3/\Gamma, \Z )$; its Poincar\'e dual $\mathsf{PD} (t_2)$ is a homology class with representative $\mathbb{S}^1/\Gamma$, the Hopf fibre acted on by $\Gamma$. We introduce the shorthand notation
\begin{equation}
\label{eq:CSGamma}
    \mathrm{CS}_{\Gamma}^{(3)} :=\frac{1}{2}  \int^{\br{H}}_{\mathbb{S}^3/\Gamma} \br{t}_2 \star \br{t}_2 .
\end{equation}
The integral was evaluated in \cite{GarciaEtxebarria:2019caf} and \cite[Sec.3.3]{Apruzzi:2021nmk}, with result
\begin{equation}
    - \mathrm{CS}^{(3)}_{\Gamma} = \begin{cases} \frac{n-1}{2n} & \Gamma = A_{n-1} \\ \frac{2n+1}{8} & \Gamma = D_{2n+1} \\ \frac{5}{3} & \Gamma = E_6 \\ \frac{3}{4} & \Gamma = E_7 \end{cases}
\end{equation}\par

\paragraph{Motivation.}
The only reason to include the present example in this appendix is to compare our approach with other methods used in the literature, showing perfect agreement in this well studied example. Along the way, this example allows us to explain some subtleties concerning the importance of the choice of generator in the computation of the coefficients in the SymTFT action.

\paragraph{Derivation via holonomy.}
We now produce a different derivation of this result, using the holonomy of differential characters. This make manifest the machinery introduced so far. We focus for concreteness on $\Gamma=A_{n-1}$.\par
To compute the integral \eqref{eq:CSGamma}, we evaluate the holonomy of the differential character $\br{t}_2 \star \br{t}_2 \in \br{H}^4 (\mathbb{S}^3/\Z_n)$ on the 3-cycle $[\mathbb{S}^3/\Z_n]$. The character $\br{t}_2 \star \br{t}_2$ is necessarily flat for dimensional reasons, therefore 
\begin{equation}\label{eq:brt2pairing}
    \int_{\mathbb{S}^3/\Z_n}^{\br{H}} \br{t}_2 \star \br{t}_2 = - \frac{1}{n} \langle c(\br{t}_2 \star \br{t}_2) , \Sigma_4 \rangle ,
\end{equation}
where $\langle \cdot,\cdot \rangle $ denotes the perfect pairing between chains and and cochains, and $\Sigma_4$ is a 4-chain with $\partial \Sigma_4 = \mathbb{S}^3/\Z_n$. Explicitly:
\begin{equation}
\begin{aligned}
    \langle c(\br{t}_2 \star \br{t}_2) , \Sigma_4 \rangle &= \int_{\Sigma_4} c(\br{t}_2) \smile c(\br{t}_2) \\
    	&= \int_{\Sigma_2} \mathscr{F}(\br{t}_2) = \oint_{n \mathsf{PD} (t_2)} u_1 ,
\end{aligned}
\end{equation}
where $\partial \Sigma_2 = n \mathsf{PD} (t_2)$. The second equality uses Poincar\'e duality and, in the last equality, $u_1$ is a properly quantized 1-form gauge field associated to $\br{t}_2$. By the quantization condition, we have that the integral takes values in $\Z/n\Z$. The actual value depends on the choice of $u_1$, each choice corresponding to a different generator of $\Z_n$, and hence to a different generator of the electric 1-form symmetry. Inserting this back into \eqref{eq:brt2pairing} and comparing with \eqref{eq:CSGamma}, we obtain 
\begin{equation}
    \mathrm{CS}^{(3)}_{A_{n-1}} = - \frac{1}{2n} \int_{\Sigma_2} \mathscr{F}(\br{t}_2) = - \frac{1}{2n} \oint_{n \mathsf{PD} (t_2)} u_1 .
\end{equation}
The choice of generator of $\br{H}^2 (\mathbb{S}^3/\Z_n)$ with holonomy $\exp ( 2 \pi \ii (n-1)/n)$ reproduces the result of \cite{Apruzzi:2021nmk}.

\paragraph{Derivation via linking pairing.}
For the sake of comparison, we show the agreement of our derivation using differential cohomology, with a derivation along the lines of \cite[App.A]{DelZotto:2024tae}.\par
First, one writes
\begin{equation}
\label{eq:linkPDt2}
	\frac{1}{2}\int_{\mathbb{S}^3/\Z_n}^{\br{H}} \br{t}_2 \star \br{t}_2 = \frac{1}{2} \ell_{\mathbb{S}^3/\Z_n} \left( \mathsf{PD} (t_2) , \mathsf{PD} (t_2) \right) =  \frac{1}{2n^2} \lvert n \mathsf{PD} (t_2)\cap \Sigma_2 \rvert  \mod 1 ,
\end{equation}
for any $\partial \Sigma_2 = n \mathsf{PD} (t_2)$. The second equality has been reviewed in Appendix \ref{sec:DelZottoLink}. Besides, we are making explicit that only the mod 1 part of the result matters (as it ultimately appears in an exponential, multiplied by $2\pi \ii$). To count the number of intersection points, we use that $\mathsf{PD} (t_2)$ is the $\Z_n$-quotient of the Hopf fibre of $\mathbb{S}^3$. We thus evaluate the linking number by considering a Hopf link, fill in a disk with boundary one of the circles, and count the intersection points.\par 
The outcome is $\pm n$, with sign depending on the orientation convention. It turns out that 
\begin{itemize}
    \item[---] The orientation implicitly chosen in \cite[Sec.4.1]{DelZotto:2024tae} gives $+n$; however
    \item[---] To match with the result of \cite[Sec.3.3]{Apruzzi:2021nmk} and \cite[Sec.4]{Cvetic:2021sxm}, we must choose a generator with opposite orientation, thus getting an intersection number of $-n$.
\end{itemize}

\paragraph{Derivation via pullback and intersection.}
Let us quickly review how the result was obtained in \cite{Apruzzi:2021nmk}, referring to the literature for more details. The linking number \eqref{eq:linkPDt2} was computed in \cite[Eq.(3.26)]{Apruzzi:2021nmk}:
\begin{equation}
\label{eq:PDt2LinkApruzzi}
	\frac{1}{2} \ell_{\mathbb{S}^3/\Z_n} \left( \mathsf{PD} (t_2) , \mathsf{PD} (t_2) \right) = \frac{1}{2} \left( \frac{1}{n} \sum_{i=1}^{n-1} i S_i \right) \cdot_{\scriptscriptstyle \widetilde{X}_4} \left( \frac{1}{n} \sum_{j=1}^{n-1} j S_j \right) .
\end{equation}
On the right-hand side, 
\begin{equation}
    \widetilde{X}_4 \longrightarrow \C^2 /\Z_n 
\end{equation}
is the minimal crepant resolution, with exceptional divisors $S_i$, and the pullback of $\mathsf{PD} (t_2)$ to $\widetilde{X}_4$ has been identified with $1/n$ times the central generator $\sum_i iS_i$. Using
\begin{equation}
    S_i \cdot_{\scriptscriptstyle \widetilde{X}_4} S_j = \delta_{j,i+1} + \delta_{j,i-1} -2 \delta_{j,i} , 
\end{equation}
we obtain 
\begin{equation}
\begin{aligned}
    \mathrm{CS}^{(3)}_{A_{n-1}} &= \frac{1}{2n^2} \left[ \sum_{i=1}^{n-2} i(i+1) +  \sum_{i=2}^{n-1} i(i-1) -2  \sum_{i=1}^{n-1} i^2\right] = - \frac{n-1}{2n}
\end{aligned}
\end{equation}
as claimed.\par
It is important to stress that the identification of the pullback of $\mathsf{PD} (t_2)$ to $\widetilde{X}_4$ with $\frac{1}{n} \sum_i S_i$ corresponds to fix a choice of generator of the torsion group $\Z_n$. Physically, this choice is motivated in \cite{Apruzzi:2021nmk} to match with the generator of the electric 1-form symmetry usually considered in the physics literature.

\subsubsection{Torsion cohomology of links of Calabi--Yau threefolds}
\label{sec:torintegralsL5}

\paragraph{Torsion cocycles.}
We compute the integrals \eqref{eq:L5CSmatrices}, which we report here for ease of the reader:
\begin{equation}
\begin{aligned}
     (\mathrm{CS}^{(5)})_{\alpha \alpha^{\prime} \alpha^{\prime\prime}} & = \frac{1}{6}\int_{L_5}^{\br{H}} \br{t}_2^{\alpha} \star \br{t}_2^{\alpha^{\prime}} \star \br{t}_2^{\alpha^{\prime\prime}} \\
     (\widetilde{\mathrm{CS}}^{(5)})_{\beta \beta^{\prime}} & = \int_{L_5}^{\br{H}} \br{t}_3^{\beta} \star \br{t}_3^{\beta^{\prime}} , \\
     (\kappa^{(5)})_{\alpha \alpha^{\prime}} & = \int_{L_5}^{\br{H}} \br{t}_4^{\alpha} \star \br{t}_2^{\alpha^{\prime}} .
\end{aligned}
\label{eq:L5CSapp}
\end{equation}\par
The computation of $\widetilde{\mathrm{CS}}^{(5)}$ is similar to Appendix \ref{sec:CSlens3d}. First, we note that
\begin{equation}
	n_{\beta} t_3^{\beta} \equiv 0 \quad \text{ and } \quad n_{\beta^{\prime}} t_3^{\beta^{\prime}} \equiv 0 \qquad \Longrightarrow \qquad \mathrm{gcd} (n_{\beta}, n_{\beta^{\prime}}) t_3^{\beta} \smile t_3^{\beta^{\prime}} \equiv 0 .
\end{equation}
Then, the holonomy of $\br{t}_3^{\beta} \star \br{t}_3^{\beta^{\prime}}$ along $[L_5]$ is given by 
\begin{equation}
\begin{aligned}
	(\widetilde{\mathrm{CS}}^{(5)})_{\beta \beta^{\prime}} &= \ell_{L_5} \left( \mathsf{PD} (t_3^{\beta}) , \mathsf{PD} (t_3^{\beta^\prime}) \right) \\
	&= -\frac{1}{  \mathrm{gcd} (n_{\beta}, n_{\beta^{\prime}}) } \langle c ( \br{t}_3^{\beta}) \smile c ( \br{t}_3^{\beta^{\prime}}), \Sigma_6 \rangle \ \in \frac{1}{\mathrm{gcd} (n_{\beta}, n_{\beta^{\prime}})} \Z ,
\end{aligned}
\end{equation}
When $\beta^{\prime}=\beta$, the integral vanishes by anti-symmetry. When $\beta^{\prime}\neq\beta$, the computation of this integral boils down to the correct identification of generators of $\br{H}^3 (L_5)$ with the physical generators. Using that the pullback of $\mathsf{PD} (t_3^{\beta})$ to $\widetilde{X}_6$ is a Lagrangian, we can write 
\begin{equation}
    (\widetilde{\mathrm{CS}}^{(5)})_{\beta \beta^{\prime}} = - \frac{\ell_{\beta ,\beta^{\prime}}}{\mathrm{gcd} (n_{\beta}, n_{\beta^{\prime}})} ,
\end{equation}
for some integers $\ell_{\beta ,\beta^{\prime}}\in \Z$ computed by counting intersection numbers of (suitably identified) Lagrangian 3-cycles in $\widetilde{X}_6$.\par
To evaluate the matrix $\kappa^{(5)}$ we use that the holonomy of $\br{t}_4^{\alpha} \star \br{t}_2^{\alpha^{\prime}}$ along $[L_5]$ reduces to 
\begin{equation}\label{eq:kappaalphadelta}
	(\kappa^{(5)})_{\alpha \alpha^{\prime}} = - \frac{1}{m_{\alpha}} \langle c(\br{t}_2^{\alpha^\prime}) , \Sigma_2^{\alpha} \rangle 
\end{equation}
where $\partial \Sigma_2^{\alpha} =  m_{\alpha}\mathsf{PD} (t_4^{\alpha})$. We choose dual bases such that the pairing yields $\delta_{\alpha, \alpha^\prime}$, whence 
\begin{equation}
	(\kappa^{(5)})_{\alpha \alpha^{\prime}} = - \frac{1}{m_{\alpha}}  \delta_{\alpha, \alpha^\prime} .
\end{equation}\par
Finally, we evaluate $\mathrm{CS}^{(5)}$. The computation using differential cohomology shows that the holonomy of $\br{t}_2^{\alpha} \star \br{t}_2^{\alpha^{\prime}} \star \br{t}_2^{\alpha^{\prime\prime}}$ is  
\begin{equation}
\begin{aligned}
	\int_{L_5}^{\br{H}} \br{t}_2^{\alpha} \star \br{t}_2^{\alpha^{\prime}} \star \br{t}_2^{\alpha^{\prime\prime}} &= -  \frac{1}{m_{\alpha}}\langle c (\br{t}_2^{\alpha^{\prime}}) \smile c (\br{t}_2^{\alpha^{\prime\prime}})  , \Sigma_4^{\alpha} \rangle \  \in \frac{1}{ m_{\alpha}} \Z \\
	&= -  \frac{1}{m_{\alpha^{\prime}}}\langle c (\br{t}_2^{\alpha}) \smile c (\br{t}_2^{\alpha^{\prime\prime}})  , \Sigma_4^{\alpha^{\prime}} \rangle \  \in \frac{1}{ m_{\alpha^{\prime}}} \Z \\ 
	&= -  \frac{1}{m_{\alpha^{\prime\prime}}}\langle c (\br{t}_2^{\alpha}) \smile c (\br{t}_2^{\alpha})  , \Sigma_4^{\alpha^{\prime\prime}} \rangle \  \in \frac{1}{ m_{\alpha^{\prime\prime}}} \Z .	
\end{aligned}
\end{equation}
The equality of the three expression implies that, in fact, 
\begin{equation}
	\int_{L_5}^{\br{H}} \br{t}_2^{\alpha} \star \br{t}_2^{\alpha^{\prime}} \star \br{t}_2^{\alpha^{\prime\prime}} \in \frac{1}{\mathrm{gcd} (m_{\alpha},m_{\alpha^\prime}, m_{\alpha^{\prime\prime}} )} \Z .
\end{equation}
We then use that $t_2^{\alpha^{\prime}} \smile t_2^{\alpha^{\prime\prime}}$ can be expanded into the basis of $\mathrm{Tor}H^4 (L_5, \Z)$, as
\begin{equation}
	t_2^{\alpha^{\prime}} \smile t_2^{\alpha^{\prime\prime}} = \sum_{\gamma=1}^{\mu} \widehat{\kappa}_{\gamma \alpha^{\prime}\alpha^{\prime\prime}} t_4^{\gamma} .
\end{equation}
The left-hand side takes values in $\Z / \mathrm{gcd} (m_{\alpha^\prime}, m_{\alpha^{\prime\prime}} ) \Z$, whereas each term in the right-hand side is valued in $\Z/m_{\gamma} \Z$. The coefficient should thus vanish unless $\mathrm{gcd} (m_{\alpha^\prime}, m_{\alpha^{\prime\prime}} ) $ divides $m_{\gamma}$, in which case 
\begin{equation}
	\widehat{\kappa}_{\gamma \alpha^{\prime}\alpha^{\prime\prime}} \in \frac{m_{\gamma}}{\mathrm{gcd} (m_{\gamma},m_{\alpha^\prime}, m_{\alpha^{\prime\prime}} ) } \Z .
\end{equation}
Plugging back into \eqref{eq:L5CSmatrices}, we compute 
\begin{equation}
\begin{aligned}
	6(\mathrm{CS}^{(5)})_{\alpha \alpha^{\prime} \alpha^{\prime\prime}} &= \sum_{\gamma=1}^{\mu} \widehat{\kappa}_{\gamma \alpha^{\prime}\alpha^{\prime\prime}} \int_{L_5}^{\br{H}} \br{t}_2^{\alpha} \star \br{t}_4^{\gamma} \\
	&= \sum_{\gamma=1}^{\mu} \widehat{\kappa}_{\gamma \alpha^{\prime}\alpha^{\prime\prime}} (\kappa^{(5)})_{\gamma \alpha} = - \frac{1}{m_{\alpha}}  \widehat{\kappa}_{\alpha \alpha^{\prime}\alpha^{\prime\prime}} . 
\end{aligned}
\end{equation}
The right-hand side takes values in $[1/ \mathrm{gcd} (m_{\alpha},m_{\alpha^\prime}, m_{\alpha^{\prime\prime}} ) ]\Z$, as claimed.\par
In turn, the coefficients can be computed in a crepant resolution $\widetilde{X}_6 \longrightarrow X_6$, as the triple intersection numbers of the divisors $\mE^{\prime}_{\alpha}$, which are the pullback to $\widetilde{X}_6$ of the 3-chains $m_{\alpha}\mathsf{PD} (t_2^{\alpha})$ in $L_5$.
In other words, 
\begin{equation}
\begin{aligned}
	(\mathrm{CS}^{(5)})_{\alpha \alpha^{\prime} \alpha^{\prime\prime}} = \frac{1}{6 m_{\alpha}m_{\alpha^\prime} m_{\alpha^{\prime\prime}}} & \mE_{\alpha}^{\prime} \cdot_{\scriptscriptstyle \widetilde{X}_6} \mE_{\alpha^{\prime}}^{\prime} \cdot_{\scriptscriptstyle \widetilde{X}_6} \mE_{\alpha^{\prime\prime}}^{\prime} .
\end{aligned}
\end{equation}

\paragraph{Mixed torsion and free cocycles.}
Besides the above coefficients, we need to evaluate  
\begin{equation}
\begin{aligned}
     (\eta^{(5)})_{\alpha \alpha^{\prime} i} & = \frac{1}{2}\int_{L_5}^{\br{H}} \br{t}_2^{\alpha} \star \br{t}_2^{\alpha^{\prime}} \star \br{v}_2^{i} \\
     (\widetilde{\eta}^{(5)})_{\beta i} & = \int_{L_5}^{\br{H}} \br{t}_3^{\beta} \star \br{v}_3^{i} , \\
     (\widetilde{\kappa}^{(5)})_{\alpha i} & = \int_{L_5}^{\br{H}} \br{t}_4^{\alpha} \star \br{v}_2^{i} , \\
     (\widehat{\eta}^{(5)})_{\alpha i j} & = \int_{L_5}^{\br{H}} \br{t}_2^{\alpha} \star \br{v}_2^{i} \star \br{v}_2^{j} .
\end{aligned}
\label{eq:L5ETAmatrices}
\end{equation}
A change of generators of $H^2(L_5,\Z)_{\mathrm{free}}$ shifts 
\begin{equation}
	\widehat{\eta}^{(5)}_{\alpha i j} \mapsto \widehat{\eta}^{(5)}_{\alpha i j} + \sum_{\alpha^{\prime}=1}^{\mu} m_{j}^{\alpha^{\prime}} \eta^{(5)}_{\alpha \alpha^{\prime} i} +  \sum_{\alpha^{\prime},\alpha^{\prime\prime}=1}^{\mu} m_{j}^{\alpha^{\prime}}m_{i}^{\alpha^{\prime\prime}} \mathrm{CS}^{(5)}_{\alpha \alpha^{\prime} \alpha^{\prime\prime}} 
\end{equation}
for an arbitrary matrix of integers, $m_i^{\alpha} \in \Z$. Following \cite[Sec.4]{Apruzzi:2021nmk}, we conveniently fix the generators so that $\widehat{\eta}^{(5)}_{\alpha i j}$ is set to zero.\par
We have 
\begin{equation}
	(\widetilde{\kappa}^{(5)})_{\alpha i}= \int_{L_5}^{\br{H}} \br{t}_4^{\alpha} \star \br{v}_2^{i} = - \frac{1}{m_{\alpha}} \langle c(\br{t}_4^{\alpha}) , \Sigma_4^{i} \rangle = \frac{1}{m_{\alpha}} \lvert m_\alpha \mathsf{PD} (t_4^{\alpha}) \cap \Sigma_4 \rvert 
\end{equation}
for any 4-chain with $\partial \Sigma_4^{i} = \mathsf{PD} (v_2^{i})$. In turn, given a resolution $\widetilde{X}_6 \longrightarrow X_6$, $m_\alpha \mathsf{PD} (t_2^{\alpha})$ pulls back to a non-compact 2-cycle and $\mathsf{PD} (v_2^{i})$ pulls back to a non-compact 4-cycle $\mE_i$, which generates a Cartan factor $U(1)_i$ of the flavor symmetry group. 
Additionally, we have the relation 
\begin{equation}
	(\eta^{(5)})_{\alpha^{\prime} \alpha^{\prime\prime} i} = \frac{1}{2} \sum_{\alpha=1}^{\mu} \underbrace{\frac{ \left( \mE_{\alpha}^{\prime} \cdot_{\scriptscriptstyle \widetilde{X}_6} \mE_{\alpha^{\prime}}^{\prime} \cdot_{\scriptscriptstyle \widetilde{X}_6} \mE_{\alpha^{\prime\prime}}^{\prime} \right)}{m_{\alpha^{\prime}}m_{\alpha^{\prime\prime}}} }_{\in \Z} (\widetilde{\kappa}^{(5)})_{\alpha i} ~.
\end{equation}

\paragraph{Free cocycles trivialize.}
There are additional terms, that we are omitting from the derivation of the action in Subsection \ref{sec:SymTFT5dM}. Reductions only along free cycles in the link would contribute an integer to the action after the integration, thus being trivial. For example, 
\begin{equation}
\begin{aligned}
	\sum_{1 \le i <j \le b^2} & \left( \int_{L_5}^{\br{H}} \br{v}_3^{i} \star \br{v}_3^{j} \right) \int_{Y_6} \mathscr{F}(\br{F}_4) \wedge \mathscr{F}(\br{F}_1^{i}) \wedge \mathscr{F}(\br{F}_1^{j}) \\
	= \sum_{1 \le i <j \le b^2} & \underbrace{\ell_{L_5} \left( \mathsf{PD}(v_3^{i}),\mathsf{PD}(v_3^{j}) \right)}_{\in \Z} \underbrace{\int_{Y_6} \mathscr{F}(\br{F}_4) \wedge \mathscr{F}(\br{F}_1) \wedge \mathscr{F}(\br{F}_1)}_{ \in \Z} ,
\end{aligned}
\end{equation}
where we use that the linking in $L_5$ of the 2-cycles Poincar\'e dual to $v_3^{\bullet}$ is an integer, and the integral of the field strengths is also integer (with our normalization for the map $\mathscr{F}$).

\paragraph{Example: Local \texorpdfstring{$\mathbb{P}^1 \times \mathbb{P}^1$}{P1xP1}.}
Let us exemplify the calculation in the simplest example of 5d $\mathcal{N}=1$ $\mathfrak{su}(2)$ Yang--Mills theory. This will serve as a cross-check of our computations in differential cohomology.\par
$\mathfrak{su}(2)$ Yang--Mills theory is geometrically engineered from M-theory on the local del Pezzo surface $\mathbb{P}^1 \times \mathbb{P}^1$, and has been extensively studied in the literature. We refer to \cite[Sec.4.1]{Closset:2018bjz} for a detailed analysis of the toric geometric data. In the conventions therein, the generator of the $U(1)^{\scriptscriptstyle [0, \mathrm{in}]}$ instanton flavor symmetry is $(D_1-D_2)$, but it does not satisfy our assumption on $v_2$ in Subsection \ref{sec:torintegralsL5}. We therefore choose a different generator, by shifting $\br{v}_2 + \br{t}_2\mapsto \br{v}_2$, corresponding to 
\begin{equation}
	\mE= D_1 , \qquad \mE^{\prime} = 2 D_1 - 2 D_2 .
\end{equation}
Furthermore, demanding \eqref{eq:normt4inYN0} we have that $N \mathsf{PD} (t_4)$ pulls back to the curve $D_1 \cdot D_2$. The triple intersection numbers involving three $D_i$ in this geometry have been computed in \cite[Eq.(B.18)]{Closset:2018bjz}. Applying those formulas, we get:
\begin{equation}
\begin{aligned}
	6 \mathrm{CS}^{Y^{2,0}} &= (D_1 + D_2)^3 = -1 \equiv 2-1 \mod 2  \\
	2\eta^{Y^{2,0}} &= D_1 \cdot (D_1+D_2)^2 = - \frac{1}{2}  \\
	\widetilde{\kappa}^{Y^{2,0}} &= D_1 \cdot D_2 \cdot (D_1+D_2) =  \frac{1}{2}
\end{aligned}
\end{equation}
in perfect agreement with the general formulas \eqref{eq:CSYN0inZ} specialized to $N=2$.

\subsubsection{Torsion cohomology of products of lens spaces}
\label{sec:CSlens6d}
Consider $L_6 = L_{3,\mathrm{f}} \times L_{3,\mathrm{b}}$, together with the natural projections 
\begin{equation}
    L_{3,\mathrm{f}}  \xleftarrow{ \ \pi_{\mathrm{f}} \ } L_{3,\mathrm{f}} \times L_{3,\mathrm{b}} \xrightarrow{ \ \pi_{\mathrm{b}} \ } L_{3,\mathrm{b}} ,
\end{equation}
where 
\begin{equation}
     L_{3,\mathrm{f}} = \mathbb{S}_{\mathrm{f}}^3 / \Z_n , \qquad L_{3,\mathrm{b}} = \mathbb{S}_{\mathrm{b}}^3 / \Z_p .
\end{equation}
The differential cohomology of the lens spaces has been reviewed in Appendix \ref{sec:CSlens3d}. The (co)homology of $L_{3,\mathrm{f}} \times L_{3,\mathrm{b}}$ is computed via K\"unneth's theorem, as reviewed in Appendix \ref{sec:Kunneth}. We obtain:
\begin{equation}
    H^{\bullet} (L_{3,\mathrm{f}} \times L_{3,\mathrm{b}}, \Z) = \Z  \ \oplus \ 0  \ \oplus \ \begin{matrix} \Z_n \\ \oplus \\ \Z_{p} \end{matrix} \ \oplus \  \begin{matrix} \Z^{\oplus 2} \\ \oplus \\ \Z_{\mathrm{gcd}(n,p)} \end{matrix} \ \oplus \  \Z_{\mathrm{gcd}(n,p)} \ \oplus \ \begin{matrix} \Z_{n} \\ \oplus \\ \Z_p \end{matrix} \ \oplus \ \Z \,.
\end{equation}
The generators are chosen according to Subsection \ref{sec:MonG2SymTFT}. We denote $t_2^{\alpha}, \vol_3^{\alpha}$ the generators of $\mathrm{Tor} H^2 (L_{3, \alpha}, \Z)$ and $H^2 (L_{3, \alpha}, \Z)_{\text{free}}$, for $\alpha=\mathrm{f},\mathrm{b}$. This implies that:
\begin{itemize}
    \item The generators of $\br{H}^{2} (L_{3,\mathrm{f}} \times L_{3,\mathrm{b}})$ are $\{ \pi_{\alpha}^{\ast} \br{t}_2^{\alpha} , \ \alpha=\mathrm{f},\mathrm{b} \}$;
    \item The generators of $\br{H}^{3} (L_{3,\mathrm{f}} \times L_{3,\mathrm{b}})$ are $\{ \br{t}_3\} \cup \{ \pi_{\alpha}^{\ast} \br{\vol}_3^{\alpha} , \ \alpha=\mathrm{f},\mathrm{b} \}$;
    \item The generator of $\br{H}^{4} (L_{3,\mathrm{f}} \times L_{3,\mathrm{b}})$ is $\br{t}_2^{\mathrm{f}} \times \br{t}_2^{\mathrm{b}} = (\pi^{\ast}_{\mathrm{f}} \br{t}_2^{\mathrm{f}} ) \star (\pi^{\ast}_{\mathrm{b}} \br{t}_2^{\mathrm{b}} )$;
    \item The generators of $\br{H}^{5} (L_{3,\mathrm{f}} \times L_{3,\mathrm{b}})$ are $\br{t}_2^{\mathrm{f}} \times \br{\vol}_3^{\mathrm{b}} = (\pi^{\ast}_{\mathrm{f}} \br{t}_2^{\mathrm{f}} ) \star (\pi^{\ast}_{\mathrm{b}} \br{\vol}_3^{\mathrm{b}} )$ and $\br{t}_2^{\mathrm{b}} \times \br{\vol}_3^{\mathrm{f}} = (\pi^{\ast}_{\mathrm{b}} \br{t}_2^{\mathrm{b}} ) \star (\pi^{\ast}_{\mathrm{f}} \br{\vol}_3^{\mathrm{f}} )$;
    \item The generator of $\br{H}^{6} (L_{3,\mathrm{f}} \times L_{3,\mathrm{b}})$ is $\br{\vol}_3^{\mathrm{f}} \times \br{\vol}_3^{\mathrm{b}} = (\pi^{\ast}_{\mathrm{f}} \br{\vol}_3^{\mathrm{f}} ) \star (\pi^{\ast}_{\mathrm{b}} \br{\vol}_3^{\mathrm{b}} )$.
\end{itemize}

\paragraph{Torsion cocycles.}
Using only torsion cocycles, we form the classes $(\br{t}^{\alpha}_2 \times \br{t}^{\alpha^{\prime}}_2) \star \br{t}_3 \in \br{H}^7 (L_{3,\mathrm{f}} \times L_{3,\mathrm{b}})$. Integrating them against the fundamental cycle we wish to compute
\begin{equation}
    2^{- \delta_{\alpha, \alpha^{\prime} }}\int^{\br{H}}_{L_{3,\mathrm{f}} \times L_{3,\mathrm{b}}} (\br{t}_2^{\alpha} \times  \br{t}_2^{\alpha^{\prime}})\star \br{t}_3 .
\end{equation}
Using Pontryagin duality between $\mathrm{Tor} H^3 (L_{3,\mathrm{f}} \times L_{3,\mathrm{b}}, \Z)$ and $\mathrm{Tor} H^4 (L_{3,\mathrm{f}} \times L_{3,\mathrm{b}}, \Z)$, we choose the generator $\br{t}_3$ such that 
\begin{equation}
    \int^{\br{H}}_{L_{3,\mathrm{f}} \times L_{3,\mathrm{b}}} (\br{t}_2^{\mathrm{f}} \times  \br{t}_2^{\mathrm{b}})\star \br{t}_3 = - \frac{1}{\mathrm{gcd} (n,p)} 
\end{equation}
when $\alpha \ne \alpha^{\prime}$, analogous to \eqref{eq:kappaalphadelta} for $L_5$. In turn, when $\alpha=\alpha^{\prime}=\mathrm{f}$ we have 
\begin{equation}
    \frac{1}{2}\int^{\br{H}}_{L_{3,\mathrm{f}} \times L_{3,\mathrm{b}}} (\br{t}_2^{\mathrm{f}} \times  \br{t}_2^{\mathrm{f}})\star \br{t}_3 = \mathrm{CS}^{(3)}_{\Z_n} \left( \int_{L_{3,\mathrm{b}}} \pi_{\mathrm{b}, \ast}t_3 \right) =0 ,
\end{equation}
and likewise when $\alpha=\alpha^{\prime}=\mathrm{b}$. Here, $ \mathrm{CS}^{(3)}_{\Gamma}$ was defined in \eqref{eq:CSGamma}, and the vanishing is due to the integral of the pushforward of $t_3$ to either lens space.

\paragraph{Mixed torsion and free cocycles.}
Mixing $\br{t}_2^{\cdot}$ and $\br{\vol}^{\cdot}_3$ we have the integrals
\begin{equation}
    \int^{\br{H}}_{L_{3,\mathrm{f}} \times L_{3,\mathrm{b}}} \br{t}_2^{\alpha} \times  \br{t}_2^{\alpha^{\prime}} \times \br{\vol}^{\alpha^{\prime\prime}}_3 = \int^{\br{H}}_{L_{3,\mathrm{f}} \times L_{3,\mathrm{b}}} (\pi^{\ast}_{\alpha} \br{t}_2^{\alpha} ) \star (\pi^{\ast}_{\alpha^{\prime}} \br{t}_2^{\alpha^{\prime}} ) \star (\pi^{\ast}_{\alpha^{\prime\prime}} \br{\vol}_3^{\alpha^{\prime\prime}} ) ,
\end{equation}
which evaluate to:
\begin{equation}
    \int^{\br{H}}_{L_{3,\mathrm{f}} \times L_{3,\mathrm{b}}} \br{t}_2^{\alpha} \times  \br{t}_2^{\alpha^{\prime}} \times \br{\vol}^{\alpha^{\prime\prime}}_3 = \begin{cases} 2 \mathrm{CS}^{(3)}_{\Z_n} & \alpha = \alpha^{\prime} = \mathrm{f}  \text{ and } \alpha^{\prime\prime} = \mathrm{b} \\ 2 \mathrm{CS}^{(3)}_{\Z_p} & \alpha = \alpha^{\prime} = \mathrm{b}  \text{ and } \alpha^{\prime\prime} = \mathrm{f} \\ 0 &  \alpha = \alpha^{\prime\prime}   \text{ or } \alpha^{\prime}=\alpha^{\prime\prime}  .\end{cases}
\end{equation}
$ \mathrm{CS}^{(3)}_{\Gamma}$ was defined and computed in Appendix \ref{sec:CSlens3d} and, choosing generators consistent with the conventions of \cite{Apruzzi:2021nmk}, we have $2\mathrm{CS}^{(3)}_{\Z_n}= \frac{1-n}{n}$. No other classes in $\br{H}^7 (L_{3,\mathrm{f}} \times L_{3,\mathrm{b}})$.

\section{From SymTFT to QFT}\label{app:fromSymTFTtoTQFT}

In this section, we aim to discuss a general approach for constructing the topological sectors of QFTs from the bulk SymTFT theory. This is achieved by introducing a boundary projection operator that maps SymTFT fields to the physical theory using the so-called sandwich construction. Such topological sectors may combine with specific dynamical degrees of freedom within the physical theory, leading either to non-trivial topological constraints or, in other cases, to enriched symmetry structures, such as higher-group symmetries.\par
For further insights into generalized symmetries, we refer readers to the recent comprehensive lectures on this topic in \cite{Schafer-Nameki:2023jdn, Brennan:2023mmt, Luo:2023ive, Bhardwaj:2023kri, Shao:2023gho,Iqbal:2024pee}.

\subsection{SymTFT review}
\label{app:reviewSymTFT}

The bulk SymTFT theory is a topological theory in $(d+1)$ dimensions that encodes and describes the symmetries, in the generalized sense, of a $d$-dimensional physical theory $\mathcal{T}^{(d)}$. In particular, it captures the discrete and continuous $p$-form group symmetries and their 't Hooft anomalies. Schematically, we can write the bulk topological action as
\begin{equation}
    S_{\text{SymTFT}} \ = \ S_{\text{BF}} + S_{\text{twist}}\,.
\end{equation}
Here, 
\begin{itemize}
    \item The action $S_{\text{BF}}$ contains the BF terms of $p$-form discrete global symmetries as well as the BF-like terms for the continuous global ones. 
    \begin{itemize}
        \item[---] The BF terms for the discrete symmetries are constructed from two different fields, which provides a background for an electric/magnetic dual pair of symmetries, which the boundary physical theory can not admit at the same time. 
        \item[---] For the continuous one, the story is a bit more complicated. One proposal \cite{Apruzzi:2024htg} yields BF-like term that couples two field strengths. Alternative proposals \cite{Brennan:2024fgj,Antinucci:2024zjp} suggest that the field strength of a $U(1)$-bundle is coupled to an $\R$-gauge field.
    \end{itemize}
    \item The twist terms $S_{\text{twist}}$ encode the information about (mixed) 't Hooft anomalies after choosing a fixed topological boundary condition. One important consequence of the latter part of the action is providing obstructions to gauging the higher-form global symmetries.
\end{itemize}

In this work, we find BF-like terms consistent with \cite{Apruzzi:2024htg} from M-theory. The recent work \cite{Gagliano:2024off} sheds light on the top-down derivation of the BF terms of \cite{Brennan:2024fgj,Antinucci:2024zjp} in the setting of Type II string theory.

\paragraph{The sandwich construction.}

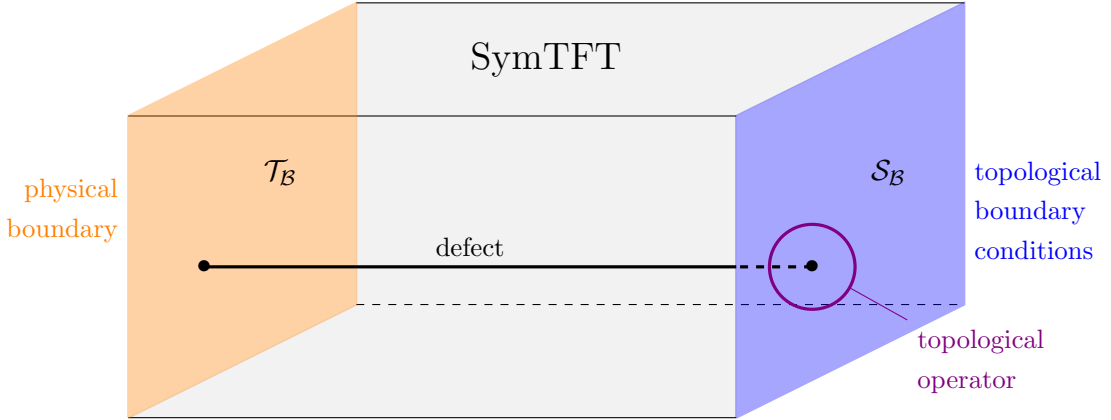
\begin{figure}[th]
\centering
\begin{tikzpicture}
    \draw[gray,thin,fill,opacity=0.1] (0,0) -- (3,1.5) -- (3,5.5) -- (11,5.5) -- (8,4) -- (8,0) -- (0,0);
    \draw[black,thin] (0,0) -- (8,0);
    \draw[black,thin,dashed] (3,1.5) -- (11,1.5);
    \draw[black,thin] (0,4) -- (8,4);
    \draw[black,thin] (3,5.5) -- (11,5.5);
    \draw[orange,thick,fill,opacity=0.35] (0,0) -- (3,1.5) -- (3,5.5) -- (0,4) -- (0,0); 
    \draw[blue,thick,fill,opacity=0.35] (8,0) -- (11,1.5) -- (11,5.5) -- (8,4) -- (8,0);

    \node at (5.5,4.75) {\Large SymTFT};
    \node at (2,3.25) {$\mathcal{T}_{\mathcal{B}}$};
    \node at (10,3.25) {$\mathcal{S}_{\mathcal{B}}$};
    \node[orange,anchor=east,align=right] at (0,2.75) {\small physical\\ \small boundary};
    \node[blue,anchor=west,align=left] at (11,2.75) {\small topological\\ \small boundary\\ \small  conditions};

    \path[black,very thick] (1,2) edge node[anchor=south] {\small defect} (8,2);
    \draw[black,very thick,dashed] (8.05,2) -- (9,2);

    \node[] at (1,2) { $\bullet$ };
     \node[] at (9,2) { $\bullet$ };
    \node[violet,draw,circle,very thick] (to) at (9,2) { \hspace{24pt} };    
    \node[violet,anchor=west,align=left] (tt) at (10.25,.75) {\small topological\\ \small operator};
    \draw[violet,thin] (to) -- (tt);
\end{tikzpicture}
\caption{Visual representation of the SymTFT. The physical theory is located at the orange boundary. The symmetry boundary, is shown in blue. The black line, stretched in the radial direction, corresponds to a defect in the physical theory. A topological symmetry operator acting on the defect is drawn in purple.}
\label{fig:symth}
\end{figure}

The relationship between the $(d+1)$-dimensional SymTFT and the physical $d$-dimensional theory $\mathcal{T}^{(d)}$ can be understood through what is known as the sandwich construction \cite{Freed:2022qnc}. In this framework, we realize the physical theory $\mathcal{T}^{(d)}$ and its associated symmetries by compactifying the SymTFT on an interval. The setup is illustrated in Figure \ref{fig:symth}.\par
This construction involves the following triplet,
\begin{equation}
    (\text{SymTFT},\,\mathcal{T}_{\mathcal{B}},\,\mathcal{S}_{\mathcal{B}})\,.
\end{equation}
Here, we have
\begin{itemize}
    \item SymTFT: A TQFT living on a $(d+1)$ dimensional manifold $Y_{d+1} \cong M_{d}\times [0,1]$. The radial direction of $Y_{d+1}$ can be seen as Euclidean time. Assuming the Euclidean signature on $M_{d}$, then one can naturally construct the Hilbert space of the SymTFT. 
    \item Physical boundary (dynamical boundary) $\mathcal{T}_{\mathcal{B}}$: It is the boundary of the SymTFT where the physical theory $\mathcal{T}^{(d)}$, which is non-topological, is realized.
    \item Symmetry boundary (topological boundary) $\mathcal{S}_{\mathcal{B}}$: It is the boundary with topological boundary conditions for the fields of the SymTFT. This side of the SymTFT depends only on the symmetry structure and it is independent on the dynamics of the physical theory $\mathcal{T}^{(d)}$. 
\end{itemize}

\subsubsection{SymTFT BF terms review}
\label{app:BFtermreview}

\paragraph{$U(1)$ gauge fields for discrete symmetries.}
To ease the comparison with some of the physics literature, we pass to a different normalization of the gauge fields for finite $(p-1)$-form symmetries $\Z_n^{\scriptscriptstyle [p-1]} \subset U(1)^{\scriptscriptstyle [p-1]}$.\par
One natural normalization for a $\Z_n$ gauge field is such that its periods take value in $\Z / n\Z$, and this is the normalization we worked with throughout the main text. Alternatively, one may replace a discrete gauge field $B_{p}$ by a pair of $U(1)$ gauge fields
\begin{equation}\label{eq:2connection}
    \left( \tilde{B}_p, \tilde{B}^{\mathrm{aux}}_{p-1} \right) \qquad \text{ subject to } \qquad n \tilde{B}_p = \tilde{F}^{\mathrm{aux}}_{p} . 
\end{equation}
Here $\tilde{F}^{\mathrm{aux}}_{p} $ is the curvature of $\tilde{B}^{\mathrm{aux}}_{p-1} $. The condition can be imposed using a Lagrange multiplier in the path integral.\par
The data \eqref{eq:2connection} specifies a $p$-connection on a $p$-gerbe over spacetime, see e.g. \cite{Szabo:2012hc} for a review and for the relation with differential cohomology. Integrating the condition over any $p$-cycle $\Sigma_p$, we get the quantization condition 
\begin{equation}\label{eq:checkBquant}
    \int_{\Sigma_p}\tilde{B}_p \in \frac{2 \pi}{n} \Z .
\end{equation}
In practice, this means that we replace everywhere in the SymTFT a $\Z_n$ gauge field $B_p$ with its $U(1)$ avatar $\frac{n}{2\pi}\tilde{B}_p$.

\paragraph{Discrete symmetries.}

Consider a discrete $\Z_n^{\scriptscriptstyle [p-1]}$ symmetry, and its magnetic dual $\Z_n^{\scriptscriptstyle [d-p-1]}$. In the just-described formalism, the backgrounds for these symmetries are given by $U(1)$ gauge fields $\tilde{B}_p$ and $\tilde{A}_{d-p}$, respectively. The SymTFT action contains the BF term
\begin{equation}\label{discreteBFtheory}
  2 \pi \ii S_{\text{BF}}^{\Z_n^{[p-1]}} \  = \ \frac{\ii n}{2\pi}  \int_{Y_{d+1}} \ \tilde{B}_{p}\wedge \tilde{f}_{d-p+1},
\end{equation}
with $\tilde{f}_{d-p+1} = \dd \tilde{A}_{d-p}$ locally. Out of these gauge fields, we can construct the following symmetry operators in $Y_{d+1}$:
\begin{equation}\label{eq:topopReview}
    \mathcal{U}_q(\Sigma_{d-p}) := e^{\ii q \int_{\Sigma_{d-p}}\,\tilde{A}_{d-p}}, \qquad  \mathcal{V}_{\check{q}}(\check{\Sigma}_{p}) := e^{\ii \check{q} \int_{\check{\Sigma}_p}\,\tilde{B}_{p}} .
\end{equation}
From \eqref{eq:checkBquant}, these are gauge-invariant only if $q,\check{q}, \in \Z$, and moreover only their value mod $n$ matters, thus $q,\check{q}, \in \Z/n \Z$.\par
These operators are non-trivially linked, hence their expectation value in the bulk is given as \cite{Birmingham:1991ty}
\begin{equation}
    \expval{\mathcal{U}_q(\Sigma_{d-p})\,\mathcal{V}_{\check{q}}(\Sigma_{p}^{\prime})}\ \sim \ \exp{2\pi \ii \,\frac{q \check{q}}{n}\,\ell_{Y_{d+1}} (\Sigma_{d-p}, \check{\Sigma}_{p})}\, ,
\end{equation}
with $\ell_{Y_{d+1}} (\cdot, \cdot)$ the linking pairing in $Y_{d+1}$.\par
The defects and operators on the physical boundary can be obtained from the bulk symmetry operators by imposing Dirichlet or Neumann boundary conditions on the gauge fields $\tilde{B}_p,\tilde{A}_{d-p}$. In particular, the Dirichlet boundary condition on $\tilde{B}_p$ trivializes the operators $\mathcal{V}_{\check{q}}$, while the Neumann boundary condition on $\tilde{A}_{d-p}$ lets the operators $\mathcal{U}_q$ survive and generate the $\Z_n ^{\scriptscriptstyle [p]}$ symmetry of the QFT on the physical boundary. For visualization, compare with Figure \ref{fig:symth}.

\paragraph{Continuous symmetries.}

The SymTFT of continuous symmetries was analyzed in detail in \cite{Brennan:2024fgj,Antinucci:2024zjp,Apruzzi:2024htg}. Here, we review the basic structure of the SymTFT action for such symmetries.\par
Let us consider a $(p-2)$-form $U(1)^{\scriptscriptstyle [p-2]}$ global symmetry. The main difference to keep in mind is that the magnetic dual is still a $(d-p)$-form symmetry for dimensional reasons, but the Pontryagin dual group is
\begin{equation}
    \widehat{U(1)} = \mathrm{Hom} \left( U(1) , \Z \right) \cong \Z .
\end{equation}
The SymTFT action contains the BF term \cite{Brennan:2024fgj,Antinucci:2024zjp,Apruzzi:2024htg}:
\begin{equation}\label{eq:generalSymTFT-U(1)symmetries}
    2\pi \ii S_{\text{BF}}^{U(1)^{[p-2]}} \  = \ \frac{\ii }{2\pi} \int_{Y_{d+1}} \ F_{p} \wedge h_{d-p+1}\, ,
\end{equation}
Here, $F_{p}$ is the field strength of the $U(1)^{\scriptscriptstyle [p-2]}$ gauge field $A_{p-1}$, i.e. $F_{p}=\dd A_{p-1}$ locally. The interpretation of $h_{d-p+1}$ is more subtle. 
\begin{itemize}
    \item In \cite{Brennan:2024fgj,Antinucci:2024zjp}, it is proposed that $h_{d-p+1}$ is an $\R$-gauge field (connection on a principal $\R$-bundle), with an additional condition imposed on the quantization of periods. 
    \item In \cite{Apruzzi:2024htg}, it is proposed that $h_{d-p+1}$ is a curvature, thus automatically closed and quantized, albeit not immediately related to the gauge field for the dual symmetry $\Z^{\scriptscriptstyle [d-p]}$.
    \item In our work, the top-down approach will naturally lead us along the lines of the proposal in \cite{Apruzzi:2024htg}. The $h_{d-p+1}$ we get from the BF term alone is quantized, however, it is not closed. As explained in Subsection \ref{sec:braneSymTFT}, the SymTFT action we find contains indeed an additional term that, combined with $h_{d-p-1}$, forms a closed and quantized $\widetilde{h}_{d-p+1}$.
\end{itemize}
From the above action, we can construct the following extended operators,
\begin{equation}
    \mathcal{U}^{\text{naive}}_{\varphi}(\Sigma_{d-p}) := e^{\ii \frac{\varphi}{2\pi} \int_{\Sigma_{d-p}}\,h_{d-p}}, \qquad  \mathcal{W}_{\check{q}}(\check{\Sigma}_{p}) := e^{\ii \check{q} \int_{\check{\Sigma}_p}\,A_{p}} 
\end{equation}
which are the direct analogue of \eqref{eq:topopReview}. Here, $\check{q}\in \Z$ is not restricted and is regarded as the symmetry parameter for $\mb{Z}^{[d-p-1]}$, and $\varphi \in [0,2\pi)$.\par
As explained in Subsection \ref{sec:braneSymTFT} and further elaborated upon in Appendix \ref{app:MthwithM5}, the naive definition $\mathcal{U}^{\text{naive}}_{\varphi}$ is not the correct one, and we ought to use 
\begin{equation}
    \mathcal{U}_{\varphi}(\Sigma_{d-p}) := e^{\ii \frac{\varphi}{2\pi} \int_{\Sigma_{d-p}}\,\widetilde{h}_{d-p}}, \qquad  \mathcal{W}_{\check{q}}(\check{\Sigma}_{p}) := e^{\ii \check{q} \int_{\check{\Sigma}_p}\,A_{p}}  .
\end{equation}
These operators have the following expectation value,
\begin{equation}
    \expval{\mathcal{U}_{\varphi}(\Sigma_{d-p}) \mathcal{W}_{\check{q}}(\check{\Sigma}_{p})} = \exp{\ii \, \varphi \check{q}\,\ell_{Y_{d+1}} (\Sigma_{d-p}, \check{\Sigma}_{p})} ,
\end{equation}
from where we learn that the operators $\mathcal{U}_{\varphi}$ acts as the symmetry generator for the continuous $(p-2)$-form symmetry $U(1)^{\scriptscriptstyle [p-2]}$.\par
Note that for $\mathcal{U}_{\varphi}$ and $\mathcal{W}_{\check{q}}$ to be both topological, the flatness conditions $\dd \widetilde{h}_{d-p}=0$ and $\dd A_p=0$ should be imposed. Namely, the SymTFT describes the gauging of $U(1)^{\scriptscriptstyle [p-2]}$ with flat gauge fields, and this is the only way to obtain a dual $\Z^{\scriptscriptstyle [d-p]}$ symmetry afterwards.

\subsection{From SymTFT data to global forms}
\label{app:collapseSymTFT}

Here, we introduce the operator that projects the SymTFT fields and action to the physical theory. In the sandwich construction, this projection corresponds to collapsing the interval direction and pushing the symmetry boundary all the way to the physical boundary $\mathcal{T}_{\mathcal{B}}$.\par
We begin by discussing the gauging of discrete symmetries. Following this, we introduce the projection operator explicitly. Finally, we focus on gauging continuous symmetries, with special attention to top-forms.

\subsubsection{Sandwich of finite symmetries review}
\paragraph{Gauging discrete symmetries.}

On the symmetry boundary $\mathcal{S}_{\mathcal{B}}$, we impose a Dirichlet boundary condition on $\tilde{B}_p$ and a Neumann boundary condition on $\tilde{A}_{d-p}$. From \eqref{eq:topopReview}, this means that $\mathcal{V}_{\check{q}}$ is trivialized at the boundary, while $\mathcal{U}_q$ is not.\par
\begin{table}[th]
    \centering
    \begin{tabular}{|c|c|c|}
    \hline
        Boundary state & $\tilde{B}_p$ & $\tilde{A}_{d-p}$\\
        \hline
         $\ket{D_{b.c.}}$ & Dirichlet & Neumann \\
         $\ket{N_{b.c.}}$ & Neumann & Dirichlet \\
         \hline
    \end{tabular}
    \caption{Boundary conditions for the backgrounds for finite electric/magnetic dual pairs.}
    \label{tab:DirNeubc}
\end{table}

The transformation between the boundary conditions
\begin{equation}
    \left( \tilde{B}_p: \text{ Dirichlet}, \ \tilde{A}_{d-p} :  \text{ Neumann}\right) \quad \longleftrightarrow \quad  \left( \tilde{B}_p: \text{ Neumann}, \ \tilde{A}_{d-p} :  \text{ Dirichlet}\right)
\end{equation}
is achieved by a Fourier transform. In doing so, we use the standard canonical quantization as we view $\tilde{B}_p,\tilde{A}_{d-p}$ as canonically conjugate variables. With the notation as in Table \ref{tab:DirNeubc} for the boundary states, it reads schematically 
\begin{equation}
\begin{aligned}
    \ket{N_{b.c.}} \ &\sim \quad   \sum_{\tilde{B}_{p}\in H^{p}\left( M_{d},\frac{2\pi}{n}(\Z /n \Z) \right)}\,   e^{\frac{\ii n}{2\pi} \int \tilde{A}_{d-p}\,\wedge\, \tilde{B}_{p} } \  \ket{D_{b.c.}}, \\
    \ket{D_{b.c.}} \ &\sim \quad   \sum_{\tilde{A}_{d-p}\in H^{d-p}\left( M_{d},\frac{2\pi}{n}(\Z /n \Z) \right)}\ e^{-\frac{\ii n}{2\pi}\int \tilde{A}_{d-p}\,\wedge\,\tilde{B}_{p}}\ \ket{N_{b.c.}}\,.
\end{aligned}
\end{equation}\par
By the sandwich construction, imposing $\ket{D_{b.c.}}$ on $\mathcal{S}_{\mathcal{B}}$ corresponds to select the global form of the physical theory that realizes $\Z_n^{\scriptscriptstyle [p-1]}$ as a global symmetry. The partition function is:
\begin{equation}
    Z_{\mathcal{T}^{(d)}} [\tilde{B}_p] = \langle \mathcal{T}_{\mathcal{B}} \vert D_{b.c.} \rangle ,
\end{equation}
whose sandwich realization is illustrated in Figure \ref{fig:globalformslab}. Here we write explicitly the dependence on the background for the global symmetry.\par

\begin{figure}[th]
\centering
\begin{tikzpicture}[scale=0.6]
    \draw[gray,thin,fill,opacity=0.1] (0,0) -- (3,1.5) -- (3,5.5) -- (11,5.5) -- (8,4) -- (8,0) -- (0,0);
    \draw[black,thin] (0,0) -- (8,0);
    \draw[black,thin,dashed] (3,1.5) -- (11,1.5);
    \draw[black,thin] (0,4) -- (8,4);
    \draw[black,thin] (3,5.5) -- (11,5.5);
    \draw[orange,thick,fill,opacity=0.35] (0,0) -- (3,1.5) -- (3,5.5) -- (0,4) -- (0,0); 
    \draw[blue,thick,fill,opacity=0.35] (8,0) -- (11,1.5) -- (11,5.5) -- (8,4) -- (8,0);

    \node[orange,align=right] at (1.5,2.75) {$\langle \mathcal{T}_{\mathcal{B}} \vert$};
    \node[blue,align=left] at (9.5,2.75) {$\ket{D_{b.c.}}$};

    \node[anchor=east] at (-1,2.75) {$Z_{\mathcal{T}^{(d)}} = $};
    \node[anchor=west] at (12,2.75) {$= \langle \mathcal{T}_{\mathcal{B}} \vert D_{b.c.} \rangle$};
\end{tikzpicture}
\caption{Partition function of the global form of the physical theory $\mathcal{T}^{(d)}$ that realizes the electric $p$-form symmetry $\Z_n^{\scriptscriptstyle [p-1]}$ as a global symmetry.}
\label{fig:globalformslab}
\end{figure}\par

Assume that there is no obstruction from the twist part to gauge the $(p-1)$-form symmetry. In a sandwich construction, if we exchange the boundary conditions for the fields, we effectively gauge the symmetry $\Z_n^{\scriptscriptstyle [p-1]}$ and realize the magnetic dual $\Z_n^{\scriptscriptstyle [d-p-1]}$ as a global symmetry:
\begin{equation}
\begin{aligned}
    \langle \mathcal{T}_{\mathcal{B}} \vert N_{b.c.} \rangle \ &\sim \quad   \sum_{\tilde{B}_{p}\in H^{p}\left( M_{d},\frac{2\pi}{n}(\Z /n \Z) \right)}\,   e^{\frac{\ii n}{2\pi} \int \tilde{A}_{d-p}\,\wedge\, \tilde{B}_{p} } \  \langle \mathcal{T}_{\mathcal{B}} \vert D_{b.c.} \rangle \\
     &= \quad  \sum_{\tilde{B}_{p}\in H^{p}\left( M_{d},\frac{2\pi}{n}(\Z /n \Z) \right)}\,   e^{\frac{\ii n}{2\pi} \int \tilde{A}_{d-p}\,\wedge\, \tilde{B}_{p} } \ Z_{\mathcal{T}} \left[ \tilde{B}_p \right] \\
     &= \quad  Z_{(\mathcal{T}/\Z_{n}^{[p-1]})} \left[ \tilde{A}_{d-p}\right]\,.
\end{aligned}
\end{equation}

\subsubsection{Boundary projection operator}

We will adopt the (by now standard) notation:
\begin{itemize}
    \item Upper-case Latin letters are background fields for global symmetries. Thus, upper-case letters indicate fields subject to Dirichlet boundary conditions on $\mathcal{S}_{\mathcal{B}}$.
    \item Lower-case Latin letters denote dynamical fields for gauged symmetries. This means that, in the path integral, we sum over backgrounds for the corresponding gauged symmetry. Thus, lower-case letters indicate fields subject to Neumann boundary conditions on $\mathcal{S}_{\mathcal{B}}$.
\end{itemize}

To make explicit the different choices of Dirichlet and Neumann boundary condition, we introduce an operator $\widetilde{\delta}$ that represent the choices that have been made upon the interval compactification. We will refer to this operator as the boundary projection operator. In particular, the operator projects the gauge fields from the SymTFT to the physical boundary, subject to the prescribed boundary conditions at the symmetry boundary. Schematically, $\widetilde{\delta}$ acts on a gauge field $A_{p}$ on $Y_d$ according to\footnote{One may try to formalize this operation by declaring that $\widetilde{\delta}$ is a map from the moduli stack of $p$-connections on a $\mathrm{B}^p U(1)$ $p$-gerbe on $Y_d$ to the analogous stack for $M_d$, whose image is a (possibly reducible) 0-dimensional subscheme. We refrain from a rigorous definition and treat $\widetilde\delta$ as a book-keeping device.} 
\begin{equation}\label{eq:widetildedelta}
    \widetilde{\delta} (A_p) = \begin{cases} \left. A_p \right\rvert_{M_d} & \text{ Dirichlet} , \\ \left. a_p \right\rvert_{M_d} & \text{ Neumann} . \end{cases}
\end{equation}\par
The operator $\widetilde{\delta}$ plays a useful role in mapping the SymTFT gauge fields, or field strengths, to topological terms within the physical QFT, such as Chern classes of the gauge bundle, expressed directly in terms of the dynamical fields of the QFT.\par
Now, let us go back to the example in \eqref{discreteBFtheory}. If the global form of the physical theory realizes the electric $(p-1)$-form global symmetry $\Z_n^{\scriptscriptstyle [p-1]}$, we apply the projection operator $\widetilde{\delta}$ and get
\begin{equation}\label{delta-on-BF}
  \begin{aligned}
      \widetilde{\delta}\left(\frac{\ii n }{2\pi} \int_{Y_{d+1}} \ B_{p}\wedge \dd A_{d-p} \right)  & = \frac{\ii n}{2\pi} \int_{\partial Y_{d+1}} \ \widetilde{\delta}\left(B_{p}\right)\wedge \widetilde{\delta}\left( \dd A_{d-p}\right) \\
      & = \frac{\ii n}{2\pi} \int_{M_{d}} \ B_p\wedge a_{d-p}\,. 
  \end{aligned}
\end{equation}
Using \eqref{eq:2connection}, this action equals
\begin{equation}
    \frac{\ii}{2\pi}\,\int_{M_{d}}\,B_p\,\wedge\, f_{d-p}^{\text{aux}} ,
\end{equation}
with a $U(1)$ gauge field such that, locally, $\dd a_{d-p-1}^{\text{aux}}=f_{d-p}^{\text{aux}}$.

\subsubsection{Sandwich of continuous symmetries}

\paragraph{Gauging continuous symmetries.}
The same construction can be applied to continuous abelian higher-form symmetries. We adopt analogous conventions to those around \eqref{eq:widetildedelta} but, in this case, we work with the field strengths rather than gauge fields, denoted $F_{\bullet},G_{\bullet},H_{\bullet}$.\par
Consider a physical theory that realizes a $U(1)^{\scriptscriptstyle [p-2]}$ $(p-2)$-form symmetry, with background field strength $F_p$. Then,
\begin{equation}
    \begin{aligned}
        \widetilde{\delta}\left(  \frac{\ii}{2\pi}  \int_{Y_{d+1}} \ F_p\wedge H_{d-p+1}  \right) &= \frac{\ii }{2\pi} \int_{M_{d}}\ \widetilde{\delta}\left( F_p\right) \wedge \widetilde{\delta}\left( H_{d-p+1}\right) \\
        &= \frac{\ii}{2\pi} \int_{M_{d}} \  F_p \wedge\, h_{d-p}\,.
    \end{aligned}
\end{equation}
Assuming that there is no obstruction from the twist part of the SymTFT action, we now gauge the $(p-2)$-form symmetry. The projection operator $\widetilde{\delta}$ yields
\begin{equation}\label{delta-on-hG}
    \begin{split}
        \widetilde{\delta}\left(  \frac{i}{2\pi}  \int_{Y_{d+1}} \ F_p\wedge H_{d-p+1} \right) &= \frac{\ii }{2\pi} \int_{M_{d}}\ \widetilde{\delta}\left( F_p\right) \wedge \widetilde{\delta}\left( H_{d-p+1}\right) \\
        &= \frac{\ii}{2\pi} \int_{M_{d}} \  f_{p} \wedge H_{d-p} \, ,
    \end{split}
\end{equation}
where locally $f_p = \dd c_{p-1}$ for a dynamical $U(1)^{\scriptscriptstyle [p-2]}$ gauge field $c_{p-1}$, and the background $H_{d-p}$ comes from ungauging $h_{d-p}$. If we wish to gauge only a subgroup $\Z_n^{\scriptscriptstyle [p-2]} \subset U(1)^{\scriptscriptstyle [p-2]}$, we have from the previous discussion that $n f_p =0$, which can be enforced in practice by modifying $\widetilde{\delta}\left( H_{d-p+1}\right) = nH_{d-p}$.

\paragraph{Top forms in even dimensions.} Top forms in even dimensions can be coupled with a $2\pi$-periodic parameter $\theta$. Such terms are usually included for the top Chern class, but in fact they exist more generally and we will introduce a $\theta$-parameter for every linearly independent cohomology class of top degree.\par
Importantly, such $2\pi$-periodic parameters are not exclusive to topological classes; they may also accompany gauged top-form fields, both discrete and continuous. That is, whenever a theory possesses a non-anomalous $(d-1)$-form symmetry, it can be gauged while introducing a $2\pi$-periodic parameter $\theta_{\mathrm{top}}$ accompanying the top-form gauge field.\par
We incorporate the $\theta$-terms using the operator $\widetilde{\delta}$,
\begin{equation}\label{deltaontopform}
    \begin{aligned}
        \widetilde{\delta}\left(  \frac{\ii}{2\pi}  \int_{Y_{d+1}} \ F_{1}\wedge H_{d}   \right) &= \frac{\ii}{2\pi} \int_{M_{d}}\ \widetilde{\delta}\left( F_{1}\right) \wedge \widetilde{\delta}\left( H_{d}\right) \\
        &= \frac{\ii}{2\pi} \int_{M_{d}} \  \theta  \, \wedge\, \widetilde{\delta}\left( H_{d}\right)\,.
    \end{aligned}
\end{equation}
In this way, $\theta$ is recognized as the background gauge field of a $U(1)^{\scriptscriptstyle [-1]}$ symmetry, with curvature 1-form $F_1$. The realization of $\widetilde{\delta}\left( H_{d}\right)$ on the physical boundary depends on the particular theory being considered. In the example discussed in Subsection \ref{sec:MonG2SymTFT} it was identified with the second Chern class of the gauge bundle.

\section{M-theory action in presence of M-branes}
\label{app:MthwithM5}

The low-energy limit of M-theory is captured by the 11d $\N=1$ supergravity, whose massless field content includes the graviton, the 3-form field $C_{3}$, and the gravitino. In the context of our discussion, we will focus exclusively on $C_{3}$. Thus, the relevant terms of the 11d supergravity action are given by \cite{Cremmer:1978km},
\begin{equation}\label{action-C-field}
 2\kappa^{2}_{\text{11d}}\,   S_{\text{eff}}^{\text{M}} = -\frac{1}{2}\,\int G_{4}\wedge \ast G_{4}\, \, -\frac{1}{6}\,\int C_{3}\wedge G_{4}\wedge G_{4}
\end{equation}
where $G_{4}=\dd C_{3}$ is the field strength associated with the $C_{3}$ gauge field.\par
The Bianchi identity and the equation of motion for the $C_{3}$-field are, respectively, 
\begin{equation}\label{EOM+BI-pre}
    \begin{aligned}
        \dd G_{4} \,&=\,0 \\
        \dd \ast G_{4} \,&=\, \frac{1}{2}\,G_{4}\wedge G_{4}\,.
    \end{aligned}
\end{equation}
To match with the conventions of the main text, one works in units such that 
\begin{equation}
     2\kappa^{2}_{\text{11d}} = (2\pi)^2 
\end{equation}
and defines $G_7 = -\ast G_4 /(2\pi )$. The equations of motion \eqref{EOM+BI-pre} in this formalism read 
\begin{equation}\label{EOM+BI}
    \begin{aligned}
        \dd G_{4} \,&=\,0 \\
        \dd G_{7} \,&=\,-\frac{1}{2(2\pi)}\,G_{4}\wedge G_{4}\,.
    \end{aligned}
\end{equation}
These are equivalently obtained replacing \eqref{action-C-field} with 
\begin{equation}\label{action-C3C6}
    S_{\text{top}}^{\text{M}} = \int_{M_{11}} \left[ \frac{1}{2\pi} G_{4}\wedge G_{7}  -\frac{1}{6 (2\pi)^2}  C_{3}\wedge G_{4}\wedge G_{4} \right].
\end{equation}
The curvatures are quantized to have periods in $2\pi \Z$ and, at this stage, we have recovered the formulation in Subsection \ref{sec:MtheoryupliftH12}, with
\begin{equation}
     S_{\text{top}}^{\text{M}} = 2\pi \left( S_{\text{kin}}^{\text{M}} + S_{\text{CS}}^{\text{M}} \right) .
\end{equation}\par
Locally, a valid solution to \eqref{EOM+BI} is
\begin{equation}\label{local-sol(1)}
    \begin{aligned}
        G_{4} \,&=\,\dd C_{3}\,,  \\
        G_{7} \,&=\, \dd C_{6} - \frac{1}{4\pi}\,C_{3}\wedge G_{4}\,.
    \end{aligned}
\end{equation}
The presence of the $C_{3}$-field, along with its magnetic dual $C_{6}$, allows for solitonic BPS brane solutions in the form of M2-branes and M5-branes. From \eqref{local-sol(1)}, two key observations arise:
\begin{itemize}
    \item The 11d supergravity action is written only in terms of electric $C_{3}$ gauge field degrees of freedom.
    \item The M5-brane is a dyonic object which carries both a magnetic and an electric charges. It couples minimally to the $C_{6}$ gauge field and couples non-minimally to $C_{3}$ \cite{Bandos:1997gd,Sorokin:1998kf}.
\end{itemize}\par

\paragraph{Derivation via regularization.} For completeness, let us briefly mention an alternative approach to derive \eqref{action-C3C6} from \eqref{action-C-field}. A version of this adapted to Type II superstring theory was used in the recent \cite{Yu:2024jtk}, and we highlight similarities and difference.\par
The Hodge $\ast$ in 11d involves the metric of a non-compact manifold $M_{11}$. One may regularize it with a volume cutoff $\Lambda$, and rescale $\ast G_4 \mapsto \Lambda \ast G_4$ in \eqref{action-C-field}. We then introduce $G_7$ as an auxiliary field and make a Hubbard--Stratonovich transformation to write 
\begin{equation}\label{eq:G4G7alternative}
    S_{\text{top}}^{\text{M}} = \frac{1}{2\pi} \int_{M_{11}} \left[ G_{4}\wedge G_{7}  -\frac{1}{6 (2\pi)^2}  C_{3}\wedge G_{4}\wedge G_{4} - \frac{1}{2 \Lambda} G_7 \wedge \ast G_7 \right].
\end{equation}
Integrating out $G_7$ we recover \eqref{action-C-field}. Sending the regulator $\Lambda\to \infty$ at this stage, we formally recover \eqref{action-C3C6}.\par
Despite arriving at the same action, the approach we follow and this alternative derivation are different at quantum level. The main reason is that, in this latter approach, $G_7$ is treated as a field, not a field strength. 
\begin{itemize}
    \item By construction, in this formulation one path integrates \eqref{eq:G4G7alternative} over $G_7$ and $C_3$, which would be an unusual feature of M-theory, as opposed to \eqref{action-C3C6} which is path integrated over $C_3$ and $C_6$.
    \item In our formulation, we recover the Bianchi identity $\dd G_4=0$ from the variation of $C_6$, whereas in this alternative approach, the field is $G_7$, and its variation would impose $G_4=0$ in absence of sources.
    \item A priori, the periods of $G_7$ do not satisfy a quantization condition in this alternative formulation.
\end{itemize}

\paragraph{Coupling of M5-brane in M-theory.}
With the inclusion of an M5-brane, the equations governing the $G$-fluxes in \eqref{EOM+BI} are altered to account for the source terms localized on the brane. If the M5-brane is taken to be supported along $\Sigma_{6}^{\text{M5}}$, it sources a current $j_{6}^{\text{M5}}$ with 
\begin{equation}
    \ast j_{6}^{\text{M5}} \in  2 \pi \Z ~\mathsf{PD}_{11} (\Sigma_{6}^{\text{M5}}) ,
\end{equation}
where $\mathsf{PD}_{11}$ is the Poincar\'e duality map in 11d. That is to say, $\ast j_{6}^{\text{M5}}$ has a $\delta$-function support transverse to the M5-brane worldvolume. The equations \eqref{EOM+BI} are modified into 
\begin{equation}\label{EOM+BI+M5}
 \begin{aligned}
        \dd G_{4} \, &= 2\pi \mathsf{PD}_{11} (\Sigma_{6}^{\text{M5}}) , \\
        \dd G_{7} \, &= -\frac{1}{2(2\pi)} G_{4}\wedge G_{4} \, + \frac{1}{2} H_{3}\wedge\mathsf{PD}_{11} (\Sigma_{6}^{\text{M5}}) \,.
 \end{aligned}
\end{equation}
As already discussed in Subsection \ref{sec:BraneSymTFTFree}, the 3-form $H_{3}$ is constrained to the worldvolume of the M5-brane, and it is related to the pullback of the $G_{4}$ as
\begin{equation}
    \dd H_{3}\,=\, \iota_{\text{M5}}^{\ast}G_{4}\, 
\end{equation}
where $\iota_{\text{M5}} : \Sigma_{6}^{\text{M5}} \hookrightarrow M_{11}$ is the embedding of the M5-brane worldvolume. Locally, $H_3$ takes the form
\begin{equation}\label{eq:def-of-H3}
    H_{3} = \dd B_{2} + \iota_{\text{M5}}^{\ast}C_{3}\,.
\end{equation}
Here, $B_{2}$ is the gauge field that couples to the strings on the worldvolume theory on the M5-branes, arising from M2-brane intersections with the M5-brane.\par
In the presence of an M5-brane, the local solution given in \eqref{local-sol(1)} becomes invalid, necessitating a modification to account for the source term. This is a general feature, first observed by Dirac in \cite{Dirac:1948um}, and subsequently worked out in detail in \cite{Bandos:1997gd, Sorokin:1998kf} for M-theory. Building on these references, the local solutions for $G_4$ and $G_7$ must be adjusted as follows,
\begin{equation}\label{modified-G-fluxes}
    \begin{split}
        G_{4}\  &\mapsto \ \widehat{G}_{4} = G_4 + \ast K_{7}\,,
        \\
        G_{7}\ &\mapsto \ \widehat{G}_{7} = G_7 +\frac{1}{2(2\pi)} H_{3} \wedge \ast K_{7}\,.
    \end{split}
\end{equation}
Here, $K_{7}$ is defined to satisfy
\begin{equation}
        \dd \ast K_{7}  = \ast j_{6}^{\text{M5}} \,= 2\pi \mathsf{PD}_{11} (\Sigma_{6}^{\text{M5}}) .
\end{equation}
After integration by parts and Poincar\'e duality, it is given by 
\begin{equation}\label{eq:K7Sigma7}
       \frac{1}{2\pi} K_{7} = \ast \mathsf{PD}_{11} (\Sigma_{7}) , 
\end{equation}
for any $\Sigma_{7}$ with $\partial \Sigma_{7}=\Sigma_{6}^{\text{M5}}$. This is a higher-dimensional analogue of Dirac’s string \cite{Dirac:1948um} which, in our settings, constitutes a fluxbrane as explained in Subsection \ref{sec:BraneSymTFTFree}.\par

With the new solutions $\widehat{G}_4, \widehat{G}_7$, the kinetic term becomes 
\begin{equation}
        \int_{M_{11}}\, G_{4}\wedge G_{7} \ \, \mapsto \,  \ \int_{M_{11}}\, \widehat{G}_{4}\wedge \widehat{G}_{7} \,.
\end{equation}
Note that, for the Chern--Simons term, we continue to use the original (un-hatted) variables as detailed in \cite{Bandos:1997gd, Sorokin:1998kf}. An immediate computation using \eqref{modified-G-fluxes} with \eqref{eq:K7Sigma7} gives
\begin{equation}
     \int_{M_{11}}\, \widehat{G}_{4}\wedge \widehat{G}_{7} = \int_{M_{11}} G_4 \wedge G_7 + \int_{\Sigma_{7}} \left[ \phi^{\ast} G_7 + \frac{1}{4\pi} \iota_{7,\ast} (H_3) \wedge \phi^{\ast} G_4 \right]
\end{equation}
where $\iota_{7,\ast} (H_3)$ is the push-forward of $H_3$ via $\iota_7 : \Sigma_6^{\text{M5}} \hookrightarrow \Sigma_{7}$, and $\phi^{\ast} G_{\bullet}$ pulls back $G_{\bullet}$ to $\Sigma_7$.\par
Thus, we observe from this perspective that the Hopf--Wess--Zumino topological action is recovered on the $P_{7}$-fluxbrane, as also shown in \eqref{eq:HWZ-action}.\par
Before concluding, let us mention three remarks on the derivation.
\begin{itemize}
    \item[---] From \eqref{eq:def-of-H3}, we observe that $H_{3}$ on the worldvolume of the fluxbranes, used to construct the symmetry topological operator, is essentially the pullback of the $C_{3}$ field from M-theory without activating the local $B_{2}$ gauge field on the worldvolume. In this configuration, the fluxbranes facilitate the construction of invertible topological operators. If a non-trivial $B_{2}$ gauge field is turned on, the symmetry topological operator may instead generate a non-invertible symmetry, as path integration over $B_{2}$ becomes necessary.
    \item[---] Note that this framework allows for the consideration of multiple M5-brane insertions, supported on $\bigcup_{i}\Sigma_6^{\text{M5},i}$, sourcing a current of the form 
        \begin{equation}
            \ast j_{6}^{\text{M5}} =   2 \pi \sum_{i} n_i \mathsf{PD}_{11} (\Sigma_6^{\text{M5},i}) .
        \end{equation}
    \item[---] Finally, the refinement of this derivation to differential cohomology is straightforward. It is given by:
        \begin{equation}
            2\pi \int_{M_{11}} \br{\widehat{G}}_4 \star \br{\dd \widehat{G}}_7 = 2\pi \int_{M_{11}} \br{G}_4 \star \br{\dd G}_7 + \int_{\Sigma_{7}} \left[ \phi^{\ast} G_7 + \frac{1}{4\pi} \iota_{7,\ast} (H_3) \wedge \phi^{\ast} G_4 \right] .
        \end{equation}
\end{itemize}

\bibliographystyle{JHEP}
\bibliography{F-ref.bib}

\providecommand{\href}[2]{#2}\begingroup\raggedright\begin{thebibliography}{100}

\bibitem{tHooft:1979rat}
G.~'t~Hooft, \emph{{Naturalness, chiral symmetry, and spontaneous chiral symmetry breaking}}, \href{http://dx.doi.org/10.1007/978-1-4684-7571-5_9}{\emph{NATO Sci. Ser. B} {\bf 59} (1980) 135--157}.

\bibitem{Callan:1984sa}
C.~G. Callan, Jr. and J.~A. Harvey, \emph{{Anomalies and Fermion Zero Modes on Strings and Domain Walls}}, \href{http://dx.doi.org/10.1016/0550-3213(85)90489-4}{\emph{Nucl. Phys. B} {\bf 250} (1985) 427--436}.

\bibitem{Aharony:2013hda}
O.~Aharony, N.~Seiberg and Y.~Tachikawa, \emph{{Reading between the lines of four-dimensional gauge theories}}, \href{http://dx.doi.org/10.1007/JHEP08(2013)115}{\emph{JHEP} {\bf 08} (2013) 115}, [\href{http://arxiv.org/abs/1305.0318}{{\tt 1305.0318}}].

\bibitem{Gaiotto:2014kfa}
D.~Gaiotto, A.~Kapustin, N.~Seiberg and B.~Willett, \emph{{Generalized Global Symmetries}}, \href{http://dx.doi.org/10.1007/JHEP02(2015)172}{\emph{JHEP} {\bf 02} (2015) 172}, [\href{http://arxiv.org/abs/1412.5148}{{\tt 1412.5148}}].

\bibitem{Wilson:1974sk}
K.~G. Wilson, \emph{{Confinement of Quarks}}, \href{http://dx.doi.org/10.1103/PhysRevD.10.2445}{\emph{Phys. Rev. D} {\bf 10} (1974) 2445--2459}.

\bibitem{tHooft:1977nqb}
G.~'t~Hooft, \emph{{On the Phase Transition Towards Permanent Quark Confinement}}, \href{http://dx.doi.org/10.1016/0550-3213(78)90153-0}{\emph{Nucl. Phys. B} {\bf 138} (1978) 1--25}.

\bibitem{Katz:1996fh}
S.~H. Katz, A.~Klemm and C.~Vafa, \emph{{Geometric engineering of quantum field theories}}, \href{http://dx.doi.org/10.1016/S0550-3213(97)00282-4}{\emph{Nucl. Phys. B} {\bf 497} (1997) 173--195}, [\href{http://arxiv.org/abs/hep-th/9609239}{{\tt hep-th/9609239}}].

\bibitem{Atiyah:2000zz}
M.~Atiyah, J.~M. Maldacena and C.~Vafa, \emph{{An M theory flop as a large N duality}}, \href{http://dx.doi.org/10.1063/1.1376159}{\emph{J. Math. Phys.} {\bf 42} (2001) 3209--3220}, [\href{http://arxiv.org/abs/hep-th/0011256}{{\tt hep-th/0011256}}].

\bibitem{Acharya:2000gb}
B.~S. Acharya, \emph{{On Realizing N=1 superYang-Mills in M theory}},  \href{http://arxiv.org/abs/hep-th/0011089}{{\tt hep-th/0011089}}.

\bibitem{Acharya:2001dz}
B.~S. Acharya and C.~Vafa, \emph{{On domain walls of N=1 supersymmetric Yang-Mills in four-dimensions}},  \href{http://arxiv.org/abs/hep-th/0103011}{{\tt hep-th/0103011}}.

\bibitem{Atiyah:2001qf}
M.~Atiyah and E.~Witten, \emph{{M theory dynamics on a manifold of G(2) holonomy}}, \href{http://dx.doi.org/10.4310/ATMP.2002.v6.n1.a1}{\emph{Adv. Theor. Math. Phys.} {\bf 6} (2003) 1--106}, [\href{http://arxiv.org/abs/hep-th/0107177}{{\tt hep-th/0107177}}].

\bibitem{Acharya:2001hq}
B.~S. Acharya, \emph{{Confining strings from G(2) holonomy space-times}},  \href{http://arxiv.org/abs/hep-th/0101206}{{\tt hep-th/0101206}}.

\bibitem{Beasley:2002db}
C.~Beasley and E.~Witten, \emph{{A Note on fluxes and superpotentials in M theory compactifications on manifolds of G(2) holonomy}}, \href{http://dx.doi.org/10.1088/1126-6708/2002/07/046}{\emph{JHEP} {\bf 07} (2002) 046}, [\href{http://arxiv.org/abs/hep-th/0203061}{{\tt hep-th/0203061}}].

\bibitem{Berglund:2002hw}
P.~Berglund and A.~Brandhuber, \emph{{Matter from G(2) manifolds}}, \href{http://dx.doi.org/10.1016/S0550-3213(02)00612-0}{\emph{Nucl. Phys. B} {\bf 641} (2002) 351--375}, [\href{http://arxiv.org/abs/hep-th/0205184}{{\tt hep-th/0205184}}].

\bibitem{Acharya:2004qe}
B.~S. Acharya and S.~Gukov, \emph{{M theory and singularities of exceptional holonomy manifolds}}, \href{http://dx.doi.org/10.1016/j.physrep.2003.10.017}{\emph{Phys. Rept.} {\bf 392} (2004) 121--189}, [\href{http://arxiv.org/abs/hep-th/0409191}{{\tt hep-th/0409191}}].

\bibitem{Anderson:2006mv}
L.~B. Anderson, A.~B. Barrett, A.~Lukas and M.~Yamaguchi, \emph{{Four-dimensional Effective M-theory on a Singular G(2) Manifold}}, \href{http://dx.doi.org/10.1103/PhysRevD.74.086008}{\emph{Phys. Rev. D} {\bf 74} (2006) 086008}, [\href{http://arxiv.org/abs/hep-th/0606285}{{\tt hep-th/0606285}}].

\bibitem{Halverson:2014tya}
J.~Halverson and D.~R. Morrison, \emph{{The landscape of M-theory compactifications on seven-manifolds with G$_{2}$ holonomy}}, \href{http://dx.doi.org/10.1007/JHEP04(2015)047}{\emph{JHEP} {\bf 04} (2015) 047}, [\href{http://arxiv.org/abs/1412.4123}{{\tt 1412.4123}}].

\bibitem{Halverson:2015vta}
J.~Halverson and D.~R. Morrison, \emph{{On gauge enhancement and singular limits in G$_{2}$ compactifications of M-theory}}, \href{http://dx.doi.org/10.1007/JHEP04(2016)100}{\emph{JHEP} {\bf 04} (2016) 100}, [\href{http://arxiv.org/abs/1507.05965}{{\tt 1507.05965}}].

\bibitem{Braun:2018fdp}
A.~P. Braun, M.~Del~Zotto, J.~Halverson, M.~Larfors, D.~R. Morrison and S.~Sch\"afer-Nameki, \emph{{Infinitely many M2-instanton corrections to M-theory on G$_{2}$-manifolds}}, \href{http://dx.doi.org/10.1007/JHEP09(2018)077}{\emph{JHEP} {\bf 09} (2018) 077}, [\href{http://arxiv.org/abs/1803.02343}{{\tt 1803.02343}}].

\bibitem{Kennon:2018eqg}
A.~Kennon, \emph{{G$_{2}$-Manifolds and M-Theory Compactifications}},  \href{http://arxiv.org/abs/1810.12659}{{\tt 1810.12659}}.

\bibitem{Braun:2018vhk}
A.~P. Braun, S.~Cizel, M.~H\"ubner and S.~Sch\"afer-Nameki, \emph{{Higgs bundles for M-theory on $G_{2}$-manifolds}}, \href{http://dx.doi.org/10.1007/JHEP03(2019)199}{\emph{JHEP} {\bf 03} (2019) 199}, [\href{http://arxiv.org/abs/1812.06072}{{\tt 1812.06072}}].

\bibitem{Acharya:2020vmg}
B.~S. Acharya, L.~Foscolo, M.~Najjar and E.~E. Svanes, \emph{{New G$_{2}$-conifolds in M-theory and their field theory interpretation}}, \href{http://dx.doi.org/10.1007/JHEP05(2021)250}{\emph{JHEP} {\bf 05} (2021) 250}, [\href{http://arxiv.org/abs/2011.06998}{{\tt 2011.06998}}].

\bibitem{DelZotto:2021ydd}
M.~Del~Zotto, J.~Oh and Y.~Zhou, \emph{{Evidence for an algebra of G$_{2}$ instantons}}, \href{http://dx.doi.org/10.1007/JHEP08(2022)214}{\emph{JHEP} {\bf 08} (2022) 214}, [\href{http://arxiv.org/abs/2109.01110}{{\tt 2109.01110}}].

\bibitem{Braun:2023fqa}
A.~P. Braun, E.~Sabag, M.~Sacchi and S.~Schafer-Nameki, \emph{{$G_2$-Manifolds from 4d N=1 Theories, Part I: Domain Walls}}, \href{http://dx.doi.org/10.21468/SciPostPhys.17.4.102}{\emph{SciPost Phys.} {\bf 17} (2024) 102}, [\href{http://arxiv.org/abs/2304.01193}{{\tt 2304.01193}}].

\bibitem{Intriligator:1997pq}
K.~A. Intriligator, D.~R. Morrison and N.~Seiberg, \emph{{Five-dimensional supersymmetric gauge theories and degenerations of Calabi-Yau spaces}}, \href{http://dx.doi.org/10.1016/S0550-3213(97)00279-4}{\emph{Nucl. Phys.} {\bf B497} (1997) 56--100}, [\href{http://arxiv.org/abs/hep-th/9702198}{{\tt hep-th/9702198}}].

\bibitem{Esole:2014bka}
M.~Esole, S.-H. Shao and S.-T. Yau, \emph{{Singularities and Gauge Theory Phases}}, \href{http://dx.doi.org/10.4310/ATMP.2015.v19.n6.a2}{\emph{Adv. Theor. Math. Phys.} {\bf 19} (2015) 1183--1247}, [\href{http://arxiv.org/abs/1402.6331}{{\tt 1402.6331}}].

\bibitem{Esole:2014hya}
M.~Esole, S.-H. Shao and S.-T. Yau, \emph{{Singularities and Gauge Theory Phases II}}, \href{http://dx.doi.org/10.4310/ATMP.2016.v20.n4.a2}{\emph{Adv. Theor. Math. Phys.} {\bf 20} (2016) 683--749}, [\href{http://arxiv.org/abs/1407.1867}{{\tt 1407.1867}}].

\bibitem{DelZotto:2017pti}
M.~Del~Zotto, J.~J. Heckman and D.~R. Morrison, \emph{{6D SCFTs and Phases of 5D Theories}}, \href{http://dx.doi.org/10.1007/JHEP09(2017)147}{\emph{JHEP} {\bf 09} (2017) 147}, [\href{http://arxiv.org/abs/1703.02981}{{\tt 1703.02981}}].

\bibitem{Xie:2017pfl}
D.~Xie and S.-T. Yau, \emph{{Three dimensional canonical singularity and five dimensional $ \mathcal{N} $ = 1 SCFT}}, \href{http://dx.doi.org/10.1007/JHEP06(2017)134}{\emph{JHEP} {\bf 06} (2017) 134}, [\href{http://arxiv.org/abs/1704.00799}{{\tt 1704.00799}}].

\bibitem{Esole:2017hlw}
M.~Esole, M.~J. Kang and S.-T. Yau, \emph{{Mordell-Weil Torsion, Anomalies, and Phase Transitions}},  \href{http://arxiv.org/abs/1712.02337}{{\tt 1712.02337}}.

\bibitem{Closset:2018bjz}
C.~Closset, M.~Del~Zotto and V.~Saxena, \emph{{Five-dimensional SCFTs and gauge theory phases: an M-theory/type IIA perspective}}, \href{http://dx.doi.org/10.21468/SciPostPhys.6.5.052}{\emph{SciPost Phys.} {\bf 6} (2019) 052}, [\href{http://arxiv.org/abs/1812.10451}{{\tt 1812.10451}}].

\bibitem{Jefferson:2018irk}
P.~Jefferson, S.~Katz, H.-C. Kim and C.~Vafa, \emph{{On Geometric Classification of 5d SCFTs}}, \href{http://dx.doi.org/10.1007/JHEP04(2018)103}{\emph{JHEP} {\bf 04} (2018) 103}, [\href{http://arxiv.org/abs/1801.04036}{{\tt 1801.04036}}].

\bibitem{Apruzzi:2019opn}
F.~Apruzzi, C.~Lawrie, L.~Lin, S.~Schafer-Nameki and Y.-N. Wang, \emph{{Fibers add Flavor, Part I: Classification of 5d SCFTs, Flavor Symmetries and BPS States}}, \href{http://dx.doi.org/10.1007/JHEP11(2019)068}{\emph{JHEP} {\bf 11} (2019) 068}, [\href{http://arxiv.org/abs/1907.05404}{{\tt 1907.05404}}].

\bibitem{Apruzzi:2019enx}
F.~Apruzzi, C.~Lawrie, L.~Lin, S.~Sch\"afer-Nameki and Y.-N. Wang, \emph{{Fibers add Flavor, Part II: 5d SCFTs, Gauge Theories, and Dualities}}, \href{http://dx.doi.org/10.1007/JHEP03(2020)052}{\emph{JHEP} {\bf 03} (2020) 052}, [\href{http://arxiv.org/abs/1909.09128}{{\tt 1909.09128}}].

\bibitem{Saxena:2020ltf}
V.~Saxena, \emph{{Rank-two 5d SCFTs from M-theory at isolated toric singularities: a systematic study}}, \href{http://dx.doi.org/10.1007/JHEP04(2020)198}{\emph{JHEP} {\bf 04} (2020) 198}, [\href{http://arxiv.org/abs/1911.09574}{{\tt 1911.09574}}].

\bibitem{Apruzzi:2019kgb}
F.~Apruzzi, S.~Schafer-Nameki and Y.-N. Wang, \emph{{5d SCFTs from Decoupling and Gluing}}, \href{http://dx.doi.org/10.1007/JHEP08(2020)153}{\emph{JHEP} {\bf 08} (2020) 153}, [\href{http://arxiv.org/abs/1912.04264}{{\tt 1912.04264}}].

\bibitem{Collinucci:2020jqd}
A.~Collinucci and R.~Valandro, \emph{{The role of U(1)\textquoteright{}s in 5d theories, Higgs branches, and geometry}}, \href{http://dx.doi.org/10.1007/JHEP10(2020)178}{\emph{JHEP} {\bf 10} (2020) 178}, [\href{http://arxiv.org/abs/2006.15464}{{\tt 2006.15464}}].

\bibitem{Closset:2020scj}
C.~Closset, S.~Schafer-Nameki and Y.-N. Wang, \emph{{Coulomb and Higgs Branches from Canonical Singularities: Part 0}}, \href{http://dx.doi.org/10.1007/JHEP02(2021)003}{\emph{JHEP} {\bf 02} (2021) 003}, [\href{http://arxiv.org/abs/2007.15600}{{\tt 2007.15600}}].

\bibitem{Eckhard_2020}
J.~Eckhard, S.~Schäfer-Nameki and Y.-N. Wang, \emph{Trifectas for tn in 5d}, \href{http://dx.doi.org/10.1007/jhep07(2020)199}{\emph{Journal of High Energy Physics} {\bf 2020} (Jul, 2020) }.

\bibitem{Acharya:2021jsp}
B.~Acharya, N.~Lambert, M.~Najjar, E.~E. Svanes and J.~Tian, \emph{{Gauging discrete symmetries of T$_{N}$-theories in five dimensions}}, \href{http://dx.doi.org/10.1007/JHEP04(2022)114}{\emph{JHEP} {\bf 04} (2022) 114}, [\href{http://arxiv.org/abs/2110.14441}{{\tt 2110.14441}}].

\bibitem{Tian:2021cif}
J.~Tian and Y.-N. Wang, \emph{{5D and 6D SCFTs from $\mathbb{C}^3$ orbifolds}}, \href{http://dx.doi.org/10.21468/SciPostPhys.12.4.127}{\emph{SciPost Phys.} {\bf 12} (2022) 127}, [\href{http://arxiv.org/abs/2110.15129}{{\tt 2110.15129}}].

\bibitem{Closset:2021lwy}
C.~Closset, S.~Sch\"afer-Nameki and Y.-N. Wang, \emph{{Coulomb and Higgs branches from canonical singularities. Part I. Hypersurfaces with smooth Calabi-Yau resolutions}}, \href{http://dx.doi.org/10.1007/JHEP04(2022)061}{\emph{JHEP} {\bf 04} (2022) 061}, [\href{http://arxiv.org/abs/2111.13564}{{\tt 2111.13564}}].

\bibitem{DeMarco:2022dgh}
M.~De~Marco, A.~Sangiovanni and R.~Valandro, \emph{{5d Higgs branches from M-theory on quasi-homogeneous cDV threefold singularities}}, \href{http://dx.doi.org/10.1007/JHEP10(2022)124}{\emph{JHEP} {\bf 10} (2022) 124}, [\href{http://arxiv.org/abs/2205.01125}{{\tt 2205.01125}}].

\bibitem{Mu:2023uws}
J.~Mu, Y.-N. Wang and H.~N. Zhang, \emph{{5d SCFTs from isolated complete intersection singularities}}, \href{http://dx.doi.org/10.1007/JHEP02(2024)155}{\emph{JHEP} {\bf 02} (2024) 155}, [\href{http://arxiv.org/abs/2311.05441}{{\tt 2311.05441}}].

\bibitem{DeMarco:2023irn}
M.~De~Marco, M.~Del~Zotto, M.~Graffeo and A.~Sangiovanni, \emph{{Conformal matter}}, \href{http://dx.doi.org/10.1007/JHEP05(2024)306}{\emph{JHEP} {\bf 05} (2024) 306}, [\href{http://arxiv.org/abs/2311.04984}{{\tt 2311.04984}}].

\bibitem{Alexeev:2024bko}
V.~Alexeev, H.~Arg\"uz and P.~Bousseau, \emph{{Non-toric brane webs, Calabi-Yau 3-folds, and 5d SCFTs}},  \href{http://arxiv.org/abs/2410.04714}{{\tt 2410.04714}}.

\bibitem{Witten:1998wy}
E.~Witten, \emph{{AdS / CFT correspondence and topological field theory}}, \href{http://dx.doi.org/10.1088/1126-6708/1998/12/012}{\emph{JHEP} {\bf 12} (1998) 012}, [\href{http://arxiv.org/abs/hep-th/9812012}{{\tt hep-th/9812012}}].

\bibitem{GarciaEtxebarria:2019caf}
I.~n. Garc\'\i{}a~Etxebarria, B.~Heidenreich and D.~Regalado, \emph{{IIB flux non-commutativity and the global structure of field theories}}, \href{http://dx.doi.org/10.1007/JHEP10(2019)169}{\emph{JHEP} {\bf 10} (2019) 169}, [\href{http://arxiv.org/abs/1908.08027}{{\tt 1908.08027}}].

\bibitem{Apruzzi:2021nmk}
F.~Apruzzi, F.~Bonetti, I.~n. Garc\'\i{}a~Etxebarria, S.~S. Hosseini and S.~Schafer-Nameki, \emph{{Symmetry TFTs from String Theory}}, \href{http://dx.doi.org/10.1007/s00220-023-04737-2}{\emph{Commun. Math. Phys.} {\bf 402} (2023) 895--949}, [\href{http://arxiv.org/abs/2112.02092}{{\tt 2112.02092}}].

\bibitem{Hubner:2022kxr}
M.~Hubner, D.~R. Morrison, S.~Schafer-Nameki and Y.-N. Wang, \emph{{Generalized Symmetries in F-theory and the Topology of Elliptic Fibrations}}, \href{http://dx.doi.org/10.21468/SciPostPhys.13.2.030}{\emph{SciPost Phys.} {\bf 13} (2022) 030}, [\href{http://arxiv.org/abs/2203.10022}{{\tt 2203.10022}}].

\bibitem{DelZotto:2022joo}
M.~Del~Zotto, I.~n. Garc\'\i{}a~Etxebarria and S.~Schafer-Nameki, \emph{{2-Group Symmetries and M-Theory}}, \href{http://dx.doi.org/10.21468/SciPostPhys.13.5.105}{\emph{SciPost Phys.} {\bf 13} (2022) 105}, [\href{http://arxiv.org/abs/2203.10097}{{\tt 2203.10097}}].

\bibitem{Apruzzi:2022rei}
F.~Apruzzi, I.~Bah, F.~Bonetti and S.~Schafer-Nameki, \emph{{Noninvertible Symmetries from Holography and Branes}}, \href{http://dx.doi.org/10.1103/PhysRevLett.130.121601}{\emph{Phys. Rev. Lett.} {\bf 130} (2023) 121601}, [\href{http://arxiv.org/abs/2208.07373}{{\tt 2208.07373}}].

\bibitem{vanBeest:2022fss}
M.~van Beest, D.~S.~W. Gould, S.~Schafer-Nameki and Y.-N. Wang, \emph{{Symmetry TFTs for 3d QFTs from M-theory}}, \href{http://dx.doi.org/10.1007/JHEP02(2023)226}{\emph{JHEP} {\bf 02} (2023) 226}, [\href{http://arxiv.org/abs/2210.03703}{{\tt 2210.03703}}].

\bibitem{Heckman:2022xgu}
J.~J. Heckman, M.~Hubner, E.~Torres, X.~Yu and H.~Y. Zhang, \emph{{Top down approach to topological duality defects}}, \href{http://dx.doi.org/10.1103/PhysRevD.108.046015}{\emph{Phys. Rev. D} {\bf 108} (2023) 046015}, [\href{http://arxiv.org/abs/2212.09743}{{\tt 2212.09743}}].

\bibitem{Apruzzi:2023uma}
F.~Apruzzi, F.~Bonetti, D.~S.~W. Gould and S.~Schafer-Nameki, \emph{{Aspects of categorical symmetries from branes: SymTFTs and generalized charges}}, \href{http://dx.doi.org/10.21468/SciPostPhys.17.1.025}{\emph{SciPost Phys.} {\bf 17} (2024) 025}, [\href{http://arxiv.org/abs/2306.16405}{{\tt 2306.16405}}].

\bibitem{Baume:2023kkf}
F.~Baume, J.~J. Heckman, M.~H\"ubner, E.~Torres, A.~P. Turner and X.~Yu, \emph{{SymTrees and Multi-Sector QFTs}}, \href{http://dx.doi.org/10.1103/PhysRevD.109.106013}{\emph{Phys. Rev. D} {\bf 109} (2024) 106013}, [\href{http://arxiv.org/abs/2310.12980}{{\tt 2310.12980}}].

\bibitem{DelZotto:2024tae}
M.~Del~Zotto, S.~N. Meynet and R.~Moscrop, \emph{{Remarks on geometric engineering, symmetry TFTs and anomalies}}, \href{http://dx.doi.org/10.1007/JHEP07(2024)220}{\emph{JHEP} {\bf 07} (2024) 220}, [\href{http://arxiv.org/abs/2402.18646}{{\tt 2402.18646}}].

\bibitem{GarciaEtxebarria:2024fuk}
I.~n. Garc\'\i{}a~Etxebarria and S.~S. Hosseini, \emph{{Some aspects of symmetry descent}}, \href{http://dx.doi.org/10.1007/JHEP12(2024)223}{\emph{JHEP} {\bf 12} (2025) 223}, [\href{http://arxiv.org/abs/2404.16028}{{\tt 2404.16028}}].

\bibitem{Franco:2024mxa}
S.~Franco and X.~Yu, \emph{{Generalized symmetries in 2D from string theory: SymTFTs, intrinsic relativeness, and anomalies of non-invertible symmetries}}, \href{http://dx.doi.org/10.1007/JHEP11(2024)004}{\emph{JHEP} {\bf 11} (2024) 004}, [\href{http://arxiv.org/abs/2404.19761}{{\tt 2404.19761}}].

\bibitem{Cvetic:2024dzu}
M.~Cveti\v{c}, R.~Donagi, J.~J. Heckman, M.~H\"ubner and E.~Torres, \emph{{Cornering Relative Symmetry Theories}},  \href{http://arxiv.org/abs/2408.12600}{{\tt 2408.12600}}.

\bibitem{Tian:2024dgl}
J.~Tian and Y.-N. Wang, \emph{{A Tale of Bulk and Branes: Symmetry TFT of 6D SCFTs from IIB/F-theory}},  \href{http://arxiv.org/abs/2410.23076}{{\tt 2410.23076}}.

\bibitem{Cvetic:2024mtt}
M.~Cveti\v{c}, M.~Dierigl, L.~Lin, E.~Torres and H.~Y. Zhang, \emph{{Frozen generalized symmetries}}, \href{http://dx.doi.org/10.1103/PhysRevD.111.026018}{\emph{Phys. Rev. D} {\bf 111} (2025) 026018}, [\href{http://arxiv.org/abs/2410.07318}{{\tt 2410.07318}}].

\bibitem{Gagliano:2024off}
F.~Gagliano and I.~n. Garc\'\i{}a~Etxebarria, \emph{{SymTFTs for $U(1)$ symmetries from descent}},  \href{http://arxiv.org/abs/2411.15126}{{\tt 2411.15126}}.

\bibitem{Gukov:2020btk}
S.~Gukov, P.-S. Hsin and D.~Pei, \emph{{Generalized global symmetries of $T[M]$ theories. Part I}}, \href{http://dx.doi.org/10.1007/JHEP04(2021)232}{\emph{JHEP} {\bf 04} (2021) 232}, [\href{http://arxiv.org/abs/2010.15890}{{\tt 2010.15890}}].

\bibitem{Bashmakov:2022jtl}
V.~Bashmakov, M.~Del~Zotto and A.~Hasan, \emph{{On the 6d origin of non-invertible symmetries in 4d}}, \href{http://dx.doi.org/10.1007/JHEP09(2023)161}{\emph{JHEP} {\bf 09} (2023) 161}, [\href{http://arxiv.org/abs/2206.07073}{{\tt 2206.07073}}].

\bibitem{Antinucci:2022cdi}
A.~Antinucci, C.~Copetti, G.~Galati and G.~Rizi, \emph{{\textquotedblleft{}Zoology\textquotedblright{} of non-invertible duality defects: the view from class $ \mathcal{S} $}}, \href{http://dx.doi.org/10.1007/JHEP04(2024)036}{\emph{JHEP} {\bf 04} (2024) 036}, [\href{http://arxiv.org/abs/2212.09549}{{\tt 2212.09549}}].

\bibitem{Chen:2023qnv}
J.~Chen, W.~Cui, B.~Haghighat and Y.-N. Wang, \emph{{SymTFTs and duality defects from 6d SCFTs on 4-manifolds}}, \href{http://dx.doi.org/10.1007/JHEP11(2023)208}{\emph{JHEP} {\bf 11} (2023) 208}, [\href{http://arxiv.org/abs/2305.09734}{{\tt 2305.09734}}].

\bibitem{Bashmakov:2023kwo}
V.~Bashmakov, M.~Del~Zotto and A.~Hasan, \emph{{Four-manifolds and Symmetry Categories of 2d CFTs}},  \href{http://arxiv.org/abs/2305.10422}{{\tt 2305.10422}}.

\bibitem{Cui:2024cav}
W.~Cui, B.~Haghighat and L.~Ruggeri, \emph{{Non-invertible surface defects in 2+1d QFTs from half spacetime gauging}}, \href{http://dx.doi.org/10.1007/JHEP11(2024)159}{\emph{JHEP} {\bf 11} (2024) 159}, [\href{http://arxiv.org/abs/2406.09261}{{\tt 2406.09261}}].

\bibitem{Chen:2024fno}
J.~Chen, W.~Cui, B.~Haghighat and Y.~Sun, \emph{{Modularity of Vafa-Witten Partition Functions from SymTFT}},  \href{http://arxiv.org/abs/2409.19397}{{\tt 2409.19397}}.

\bibitem{Cordova:2019jnf}
C.~C\'ordova, D.~S. Freed, H.~T. Lam and N.~Seiberg, \emph{{Anomalies in the Space of Coupling Constants and Their Dynamical Applications I}}, \href{http://dx.doi.org/10.21468/SciPostPhys.8.1.001}{\emph{SciPost Phys.} {\bf 8} (2020) 001}, [\href{http://arxiv.org/abs/1905.09315}{{\tt 1905.09315}}].

\bibitem{Cordova:2019uob}
C.~C\'ordova, D.~S. Freed, H.~T. Lam and N.~Seiberg, \emph{{Anomalies in the Space of Coupling Constants and Their Dynamical Applications II}}, \href{http://dx.doi.org/10.21468/SciPostPhys.8.1.002}{\emph{SciPost Phys.} {\bf 8} (2020) 002}, [\href{http://arxiv.org/abs/1905.13361}{{\tt 1905.13361}}].

\bibitem{Brennan:2020ehu}
T.~D. Brennan and C.~Cordova, \emph{{Axions, higher-groups, and emergent symmetry}}, \href{http://dx.doi.org/10.1007/JHEP02(2022)145}{\emph{JHEP} {\bf 02} (2022) 145}, [\href{http://arxiv.org/abs/2011.09600}{{\tt 2011.09600}}].

\bibitem{Damia:2022seq}
J.~Aguilera~Damia, R.~Argurio and L.~Tizzano, \emph{{Continuous Generalized Symmetries in Three Dimensions}}, \href{http://dx.doi.org/10.1007/JHEP05(2023)164}{\emph{JHEP} {\bf 23} (2023) 164}, [\href{http://arxiv.org/abs/2206.14093}{{\tt 2206.14093}}].

\bibitem{Aloni:2024jpb}
D.~Aloni, E.~Garc\'\i{}a-Valdecasas, M.~Reece and M.~Suzuki, \emph{{Spontaneously broken (-1)-form U(1) symmetries}}, \href{http://dx.doi.org/10.21468/SciPostPhys.17.2.031}{\emph{SciPost Phys.} {\bf 17} (2024) 031}, [\href{http://arxiv.org/abs/2402.00117}{{\tt 2402.00117}}].

\bibitem{Garcia-Valdecasas:2024cqn}
E.~Garc\'\i{}a-Valdecasas, M.~Reece and M.~Suzuki, \emph{{Monopole Breaking of Chern-Weil Symmetries}},  \href{http://arxiv.org/abs/2408.00067}{{\tt 2408.00067}}.

\bibitem{Brennan:2024tlw}
T.~D. Brennan, \emph{{Constraints on symmetry-preserving gapped phases from coupling constant anomalies}}, \href{http://dx.doi.org/10.1103/PhysRevD.110.L041701}{\emph{Phys. Rev. D} {\bf 110} (2024) L041701}, [\href{http://arxiv.org/abs/2404.11660}{{\tt 2404.11660}}].

\bibitem{McNamara:2020uza}
J.~McNamara and C.~Vafa, \emph{{Baby Universes, Holography, and the Swampland}},  \href{http://arxiv.org/abs/2004.06738}{{\tt 2004.06738}}.

\bibitem{Heckman:2024oot}
J.~J. Heckman, M.~H\"ubner and C.~Murdia, \emph{{On the holographic dual of a topological symmetry operator}}, \href{http://dx.doi.org/10.1103/PhysRevD.110.046007}{\emph{Phys. Rev. D} {\bf 110} (2024) 046007}, [\href{http://arxiv.org/abs/2401.09538}{{\tt 2401.09538}}].

\bibitem{Yu:2020twi}
M.~Yu, \emph{{Symmetries and anomalies of (1+1)d theories: 2-groups and symmetry fractionalization}}, \href{http://dx.doi.org/10.1007/JHEP08(2021)061}{\emph{JHEP} {\bf 08} (2021) 061}, [\href{http://arxiv.org/abs/2010.01136}{{\tt 2010.01136}}].

\bibitem{Santilli:2024dyz}
L.~Santilli and R.~J. Szabo, \emph{{Higher form symmetries and orbifolds of two-dimensional Yang\textendash{}Mills theory}}, \href{http://dx.doi.org/10.1007/s11005-025-01905-4}{\emph{Lett. Math. Phys.} {\bf 115} (2025) 15}, [\href{http://arxiv.org/abs/2403.03119}{{\tt 2403.03119}}].

\bibitem{Vafa:1986wx}
C.~Vafa, \emph{{Modular Invariance and Discrete Torsion on Orbifolds}}, \href{http://dx.doi.org/10.1016/0550-3213(86)90379-2}{\emph{Nucl. Phys. B} {\bf 273} (1986) 592--606}.

\bibitem{Hellerman:2006zs}
S.~Hellerman, A.~Henriques, T.~Pantev, E.~Sharpe and M.~Ando, \emph{{Cluster decomposition, T-duality, and gerby CFT's}}, \href{http://dx.doi.org/10.4310/ATMP.2007.v11.n5.a2}{\emph{Adv. Theor. Math. Phys.} {\bf 11} (2007) 751--818}, [\href{http://arxiv.org/abs/hep-th/0606034}{{\tt hep-th/0606034}}].

\bibitem{Sharpe:2014tca}
E.~Sharpe, \emph{{Decomposition in diverse dimensions}}, \href{http://dx.doi.org/10.1103/PhysRevD.90.025030}{\emph{Phys. Rev. D} {\bf 90} (2014) 025030}, [\href{http://arxiv.org/abs/1404.3986}{{\tt 1404.3986}}].

\bibitem{Sharpe:2019ddn}
E.~Sharpe, \emph{{Undoing decomposition}}, \href{http://dx.doi.org/10.1142/S0217751X19502336}{\emph{Int. J. Mod. Phys. A} {\bf 34} (2020) 1950233}, [\href{http://arxiv.org/abs/1911.05080}{{\tt 1911.05080}}].

\bibitem{Robbins:2020msp}
D.~Robbins, E.~Sharpe and T.~Vandermeulen, \emph{{A generalization of decomposition in orbifolds}}, \href{http://dx.doi.org/10.1007/JHEP10(2021)134}{\emph{JHEP} {\bf 21} (2020) 134}, [\href{http://arxiv.org/abs/2101.11619}{{\tt 2101.11619}}].

\bibitem{Sharpe:2021srf}
E.~Sharpe, \emph{{Topological operators, noninvertible symmetries and decomposition}}, \href{http://dx.doi.org/10.4310/ATMP.2023.v27.n8.a2}{\emph{Adv. Theor. Math. Phys.} {\bf 27} (2023) 2319--2407}, [\href{http://arxiv.org/abs/2108.13423}{{\tt 2108.13423}}].

\bibitem{Pantev:2022kpl}
T.~Pantev, D.~G. Robbins, E.~Sharpe and T.~Vandermeulen, \emph{{Orbifolds by 2-groups and decomposition}}, \href{http://dx.doi.org/10.1007/JHEP09(2022)036}{\emph{JHEP} {\bf 09} (2022) 036}, [\href{http://arxiv.org/abs/2204.13708}{{\tt 2204.13708}}].

\bibitem{Sharpe:2022ene}
E.~Sharpe, \emph{{An introduction to decomposition}},  in \emph{2021-2022 MATRIX Annals}, ch.~8, p.~145–168.
\newblock Springer Nature Switzerland, 2024.
\newblock \href{http://arxiv.org/abs/2204.09117}{{\tt 2204.09117}}.
\newblock \href{http://dx.doi.org/10.1007/978-3-031-47417-0}{DOI}.

\bibitem{DelZotto:2015isa}
M.~Del~Zotto, J.~J. Heckman, D.~S. Park and T.~Rudelius, \emph{{On the Defect Group of a 6D SCFT}}, \href{http://dx.doi.org/10.1007/s11005-016-0839-5}{\emph{Lett. Math. Phys.} {\bf 106} (2016) 765--786}, [\href{http://arxiv.org/abs/1503.04806}{{\tt 1503.04806}}].

\bibitem{Albertini:2020mdx}
F.~Albertini, M.~Del~Zotto, I.~n. Garc\'\i{}a~Etxebarria and S.~S. Hosseini, \emph{{Higher Form Symmetries and M-theory}}, \href{http://dx.doi.org/10.1007/JHEP12(2020)203}{\emph{JHEP} {\bf 12} (2020) 203}, [\href{http://arxiv.org/abs/2005.12831}{{\tt 2005.12831}}].

\bibitem{Yu:2024jtk}
X.~Yu, \emph{{Gauging in Parameter Space: A Top-Down Perspective}},  \href{http://arxiv.org/abs/2411.14997}{{\tt 2411.14997}}.

\bibitem{Brennan:2024fgj}
T.~D. Brennan and Z.~Sun, \emph{{A SymTFT for continuous symmetries}}, \href{http://dx.doi.org/10.1007/JHEP12(2024)100}{\emph{JHEP} {\bf 12} (2024) 100}, [\href{http://arxiv.org/abs/2401.06128}{{\tt 2401.06128}}].

\bibitem{Antinucci:2024zjp}
A.~Antinucci and F.~Benini, \emph{{Anomalies and gauging of U(1) symmetries}}, \href{http://dx.doi.org/10.1103/PhysRevB.111.024110}{\emph{Phys. Rev. B} {\bf 111} (2025) 024110}, [\href{http://arxiv.org/abs/2401.10165}{{\tt 2401.10165}}].

\bibitem{Apruzzi:2024htg}
F.~Apruzzi, F.~Bedogna and N.~Dondi, \emph{{SymTh for non-finite symmetries}},  \href{http://arxiv.org/abs/2402.14813}{{\tt 2402.14813}}.

\bibitem{Heckman:2022muc}
J.~J. Heckman, M.~H\"ubner, E.~Torres and H.~Y. Zhang, \emph{{The Branes Behind Generalized Symmetry Operators}}, \href{http://dx.doi.org/10.1002/prop.202200180}{\emph{Fortsch. Phys.} {\bf 71} (2023) 2200180}, [\href{http://arxiv.org/abs/2209.03343}{{\tt 2209.03343}}].

\bibitem{Cvetic:2023plv}
M.~Cveti\v{c}, J.~J. Heckman, M.~H\"ubner and E.~Torres, \emph{{Fluxbranes, generalized symmetries, and Verlinde\textquoteright{}s metastable monopole}}, \href{http://dx.doi.org/10.1103/PhysRevD.109.046007}{\emph{Phys. Rev. D} {\bf 109} (2024) 046007}, [\href{http://arxiv.org/abs/2305.09665}{{\tt 2305.09665}}].

\bibitem{Bergman:2024aly}
O.~Bergman, E.~Garcia-Valdecasas, F.~Mignosa and D.~Rodriguez-Gomez, \emph{{Non-BPS branes and continuous symmetries}},  \href{http://arxiv.org/abs/2407.00773}{{\tt 2407.00773}}.

\bibitem{Sharpe:2000qt}
E.~R. Sharpe, \emph{{Analogues of discrete torsion for the M theory three form}}, \href{http://dx.doi.org/10.1103/PhysRevD.68.126004}{\emph{Phys. Rev. D} {\bf 68} (2003) 126004}, [\href{http://arxiv.org/abs/hep-th/0008170}{{\tt hep-th/0008170}}].

\bibitem{10.1215/S0012-7094-89-05839-0}
R.~L. Bryant and S.~M. Salamon, \emph{{On the construction of some complete metrics with exceptional holonomy}}, \href{http://dx.doi.org/10.1215/S0012-7094-89-05839-0}{\emph{Duke Mathematical Journal} {\bf 58} (1989) 829 -- 850}.

\bibitem{Morrison:2020ool}
D.~R. Morrison, S.~Schafer-Nameki and B.~Willett, \emph{{Higher-Form Symmetries in 5d}}, \href{http://dx.doi.org/10.1007/JHEP09(2020)024}{\emph{JHEP} {\bf 09} (2020) 024}, [\href{http://arxiv.org/abs/2005.12296}{{\tt 2005.12296}}].

\bibitem{Bhardwaj:2020phs}
L.~Bhardwaj and S.~Sch\"afer-Nameki, \emph{{Higher-form symmetries of 6d and 5d theories}}, \href{http://dx.doi.org/10.1007/JHEP02(2021)159}{\emph{JHEP} {\bf 02} (2021) 159}, [\href{http://arxiv.org/abs/2008.09600}{{\tt 2008.09600}}].

\bibitem{Bhardwaj:2020ruf}
L.~Bhardwaj, \emph{{Flavor symmetry of 5d SCFTs. Part I. General setup}}, \href{http://dx.doi.org/10.1007/JHEP09(2021)186}{\emph{JHEP} {\bf 09} (2021) 186}, [\href{http://arxiv.org/abs/2010.13230}{{\tt 2010.13230}}].

\bibitem{Bhardwaj:2020avz}
L.~Bhardwaj, \emph{{Flavor symmetry of 5$d$ SCFTs. Part II. Applications}}, \href{http://dx.doi.org/10.1007/JHEP04(2021)221}{\emph{JHEP} {\bf 04} (2021) 221}, [\href{http://arxiv.org/abs/2010.13235}{{\tt 2010.13235}}].

\bibitem{Apruzzi:2021vcu}
F.~Apruzzi, S.~Schafer-Nameki, L.~Bhardwaj and J.~Oh, \emph{{The Global Form of Flavor Symmetries and 2-Group Symmetries in 5d SCFTs}}, \href{http://dx.doi.org/10.21468/SciPostPhys.13.2.024}{\emph{SciPost Phys.} {\bf 13} (2022) 024}, [\href{http://arxiv.org/abs/2105.08724}{{\tt 2105.08724}}].

\bibitem{Genolini:2022mpi}
P.~B. Genolini and L.~Tizzano, \emph{{Comments on Global Symmetries and Anomalies of 5d SCFTs}}, \href{http://dx.doi.org/10.1007/s00220-024-05139-8}{\emph{Commun. Math. Phys.} {\bf 405} (2024) 255}, [\href{http://arxiv.org/abs/2201.02190}{{\tt 2201.02190}}].

\bibitem{Brandhuber:2001yi}
A.~Brandhuber, J.~Gomis, S.~S. Gubser and S.~Gukov, \emph{{Gauge theory at large N and new G(2) holonomy metrics}}, \href{http://dx.doi.org/10.1016/S0550-3213(01)00340-6}{\emph{Nucl. Phys. B} {\bf 611} (2001) 179--204}, [\href{http://arxiv.org/abs/hep-th/0106034}{{\tt hep-th/0106034}}].

\bibitem{Cordova:2018cvg}
C.~C\'ordova, T.~T. Dumitrescu and K.~Intriligator, \emph{{Exploring 2-Group Global Symmetries}}, \href{http://dx.doi.org/10.1007/JHEP02(2019)184}{\emph{JHEP} {\bf 02} (2019) 184}, [\href{http://arxiv.org/abs/1802.04790}{{\tt 1802.04790}}].

\bibitem{Benini:2018reh}
F.~Benini, C.~C\'ordova and P.-S. Hsin, \emph{{On 2-Group Global Symmetries and their Anomalies}}, \href{http://dx.doi.org/10.1007/JHEP03(2019)118}{\emph{JHEP} {\bf 03} (2019) 118}, [\href{http://arxiv.org/abs/1803.09336}{{\tt 1803.09336}}].

\bibitem{Hidaka:2020izy}
Y.~Hidaka, M.~Nitta and R.~Yokokura, \emph{{Global 3-group symmetry and 't Hooft anomalies in axion electrodynamics}}, \href{http://dx.doi.org/10.1007/JHEP01(2021)173}{\emph{JHEP} {\bf 01} (2021) 173}, [\href{http://arxiv.org/abs/2009.14368}{{\tt 2009.14368}}].

\bibitem{Hidaka:2021kkf}
Y.~Hidaka, M.~Nitta and R.~Yokokura, \emph{{Global 4-group symmetry and \textquoteright{}t Hooft anomalies in topological axion electrodynamics}}, \href{http://dx.doi.org/10.1093/ptep/ptab150}{\emph{PTEP} {\bf 2022} (2022) 04A109}, [\href{http://arxiv.org/abs/2108.12564}{{\tt 2108.12564}}].

\bibitem{Bhardwaj:2022scy}
L.~Bhardwaj and D.~S.~W. Gould, \emph{{Disconnected 0-form and 2-group symmetries}}, \href{http://dx.doi.org/10.1007/JHEP07(2023)098}{\emph{JHEP} {\bf 07} (2023) 098}, [\href{http://arxiv.org/abs/2206.01287}{{\tt 2206.01287}}].

\bibitem{Copetti:2023mcq}
C.~Copetti, M.~Del~Zotto, K.~Ohmori and Y.~Wang, \emph{{Higher Structure of Chiral Symmetry}},  \href{http://arxiv.org/abs/2305.18282}{{\tt 2305.18282}}.

\bibitem{Kang:2023uvm}
M.~J. Kang and S.~Kang, \emph{{Central extensions of higher groups: Green-Schwarz mechanism and 2-connections}},  \href{http://arxiv.org/abs/2311.14666}{{\tt 2311.14666}}.

\bibitem{Liu:2024znj}
R.~Liu, R.~Luo and Y.-N. Wang, \emph{{Higher-Matter and Landau-Ginzburg Theory of Higher-Group Symmetries}},  \href{http://arxiv.org/abs/2406.03974}{{\tt 2406.03974}}.

\bibitem{Tanizaki:2019rbk}
Y.~Tanizaki and M.~\"Unsal, \emph{{Modified instanton sum in QCD and higher-groups}}, \href{http://dx.doi.org/10.1007/JHEP03(2020)123}{\emph{JHEP} {\bf 03} (2020) 123}, [\href{http://arxiv.org/abs/1912.01033}{{\tt 1912.01033}}].

\bibitem{Seiberg:2010qd}
N.~Seiberg, \emph{{Modifying the Sum Over Topological Sectors and Constraints on Supergravity}}, \href{http://dx.doi.org/10.1007/JHEP07(2010)070}{\emph{JHEP} {\bf 07} (2010) 070}, [\href{http://arxiv.org/abs/1005.0002}{{\tt 1005.0002}}].

\bibitem{Hopkins:2002rd}
M.~J. Hopkins and I.~M. Singer, \emph{{Quadratic functions in geometry, topology, and M theory}}, {\emph{J. Diff. Geom.} {\bf 70} (2005) 329--452}, [\href{http://arxiv.org/abs/math/0211216}{{\tt math/0211216}}].

\bibitem{Cheeger:1985}
J.~Cheeger and J.~Simons, \emph{Differential characters and geometric invariants},  in \emph{{Geometry and topology (College Park, Md., 1983/84)}}, vol.~1167 of \emph{Letc. Notes Math.}, pp.~50--80.
\newblock Springer, 1985.

\bibitem{Bar:2014}
C.~Bar and C.~Becker, \emph{{Differential Characters}}, vol.~2112 of \emph{Letc. Notes Phys.}
\newblock Springer, 2014, \href{http://dx.doi.org/10.1007/978-3-319-07034-6}{10.1007/978-3-319-07034-6}.

\bibitem{Freed:2006yc}
D.~S. Freed, G.~W. Moore and G.~Segal, \emph{{Heisenberg Groups and Noncommutative Fluxes}}, \href{http://dx.doi.org/10.1016/j.aop.2006.07.014}{\emph{Annals Phys.} {\bf 322} (2007) 236--285}, [\href{http://arxiv.org/abs/hep-th/0605200}{{\tt hep-th/0605200}}].

\bibitem{Szabo:2012hc}
R.~J. Szabo, \emph{{Quantization of Higher Abelian Gauge Theory in Generalized Differential Cohomology}}, \href{http://dx.doi.org/10.22323/1.175.0009}{\emph{PoS} {\bf ICMP2012} (2012) 009}, [\href{http://arxiv.org/abs/1209.2530}{{\tt 1209.2530}}].

\bibitem{Gutperle:2001mb}
M.~Gutperle and A.~Strominger, \emph{{Fluxbranes in string theory}}, \href{http://dx.doi.org/10.1088/1126-6708/2001/06/035}{\emph{JHEP} {\bf 06} (2001) 035}, [\href{http://arxiv.org/abs/hep-th/0104136}{{\tt hep-th/0104136}}].

\bibitem{Page:1983mke}
D.~N. Page, \emph{{Classical Stability of Round and Squashed Seven Spheres in Eleven-dimensional Supergravity}}, \href{http://dx.doi.org/10.1103/PhysRevD.28.2976}{\emph{Phys. Rev. D} {\bf 28} (1983) 2976}.

\bibitem{Howe:1997vn}
P.~S. Howe, E.~Sezgin and P.~C. West, \emph{{The Six-dimensional selfdual tensor}}, \href{http://dx.doi.org/10.1016/S0370-2693(97)00365-1}{\emph{Phys. Lett. B} {\bf 400} (1997) 255--259}, [\href{http://arxiv.org/abs/hep-th/9702111}{{\tt hep-th/9702111}}].

\bibitem{Bandos:1997ui}
I.~A. Bandos, K.~Lechner, A.~Nurmagambetov, P.~Pasti, D.~P. Sorokin and M.~Tonin, \emph{{Covariant action for the superfive-brane of M theory}}, \href{http://dx.doi.org/10.1103/PhysRevLett.78.4332}{\emph{Phys. Rev. Lett.} {\bf 78} (1997) 4332--4334}, [\href{http://arxiv.org/abs/hep-th/9701149}{{\tt hep-th/9701149}}].

\bibitem{Intriligator:2000eq}
K.~A. Intriligator, \emph{{Anomaly matching and a Hopf-Wess-Zumino term in 6d, N=(2,0) field theories}}, \href{http://dx.doi.org/10.1016/S0550-3213(00)00148-6}{\emph{Nucl. Phys. B} {\bf 581} (2000) 257--273}, [\href{http://arxiv.org/abs/hep-th/0001205}{{\tt hep-th/0001205}}].

\bibitem{Pilch:2015vha}
K.~Pilch, A.~Tyukov and N.~P. Warner, \emph{{Flowing to Higher Dimensions: A New Strongly-Coupled Phase on M2 Branes}}, \href{http://dx.doi.org/10.1007/JHEP11(2015)170}{\emph{JHEP} {\bf 11} (2015) 170}, [\href{http://arxiv.org/abs/1506.01045}{{\tt 1506.01045}}].

\bibitem{Anabalon:2022fti}
A.~Anabal\'on, M.~Chamorro-Burgos and A.~Guarino, \emph{{Janus and Hades in M-theory}}, \href{http://dx.doi.org/10.1007/JHEP11(2022)150}{\emph{JHEP} {\bf 11} (2022) 150}, [\href{http://arxiv.org/abs/2207.09287}{{\tt 2207.09287}}].

\bibitem{Fiorenza:2019ain}
D.~Fiorenza, H.~Sati and U.~Schreiber, \emph{{Twisted Cohomotopy implies level quantization of the full 6d Wess-Zumino term of the M5-brane}}, \href{http://dx.doi.org/10.1007/s00220-021-03951-0}{\emph{Commun. Math. Phys.} {\bf 384} (2021) 403--432}, [\href{http://arxiv.org/abs/1906.07417}{{\tt 1906.07417}}].

\bibitem{Moore:2004jv}
G.~W. Moore, \emph{{Anomalies, Gauss laws, and Page charges in M-theory}}, \href{http://dx.doi.org/10.1016/j.crhy.2004.12.005}{\emph{Comptes Rendus Physique} {\bf 6} (2005) 251--259}, [\href{http://arxiv.org/abs/hep-th/0409158}{{\tt hep-th/0409158}}].

\bibitem{Bonetti:2024cjk}
F.~Bonetti, M.~Del~Zotto and R.~Minasian, \emph{{SymTFTs for Continuous non-Abelian Symmetries}},  \href{http://arxiv.org/abs/2402.12347}{{\tt 2402.12347}}.

\bibitem{Wall:1966rcd}
C.~T.~C. Wall, \emph{{Classification problems in differential topology. V}}, \href{http://dx.doi.org/10.1007/BF01389738}{\emph{Invent. Math.} {\bf 1} (1966) 355--374}.

\bibitem{Kapustin:2014gua}
A.~Kapustin and N.~Seiberg, \emph{{Coupling a QFT to a TQFT and Duality}}, \href{http://dx.doi.org/10.1007/JHEP04(2014)001}{\emph{JHEP} {\bf 04} (2014) 001}, [\href{http://arxiv.org/abs/1401.0740}{{\tt 1401.0740}}].

\bibitem{Bhardwaj:2022kot}
L.~Bhardwaj, S.~Schafer-Nameki and A.~Tiwari, \emph{{Unifying constructions of non-invertible symmetries}}, \href{http://dx.doi.org/10.21468/SciPostPhys.15.3.122}{\emph{SciPost Phys.} {\bf 15} (2023) 122}, [\href{http://arxiv.org/abs/2212.06159}{{\tt 2212.06159}}].

\bibitem{Banerjee:2018syt}
S.~Banerjee, P.~Longhi and M.~Romo, \emph{{Exploring 5d BPS Spectra with Exponential Networks}}, \href{http://dx.doi.org/10.1007/s00023-019-00851-x}{\emph{Annales Henri Poincare} {\bf 20} (2019) 4055--4162}, [\href{http://arxiv.org/abs/1811.02875}{{\tt 1811.02875}}].

\bibitem{Banerjee:2019apt}
S.~Banerjee, P.~Longhi and M.~Romo, \emph{{Exponential BPS Graphs and D Brane Counting on Toric Calabi-Yau Threefolds: Part I}}, \href{http://dx.doi.org/10.1007/s00220-021-04242-4}{\emph{Commun. Math. Phys.} {\bf 388} (2021) 893--945}, [\href{http://arxiv.org/abs/1910.05296}{{\tt 1910.05296}}].

\bibitem{Banerjee:2020moh}
S.~Banerjee, P.~Longhi and M.~Romo, \emph{{Exponential BPS graphs and D-brane counting on toric Calabi-Yau threefolds: Part II}},  \href{http://arxiv.org/abs/2012.09769}{{\tt 2012.09769}}.

\bibitem{Santilli:2023fuh}
L.~Santilli and C.~F. Uhlemann, \emph{{3d defects in 5d: RG flows and defect F-maximization}}, \href{http://dx.doi.org/10.1007/JHEP06(2023)136}{\emph{JHEP} {\bf 06} (2023) 136}, [\href{http://arxiv.org/abs/2305.01004}{{\tt 2305.01004}}].

\bibitem{BenettiGenolini:2020doj}
P.~Benetti~Genolini and L.~Tizzano, \emph{{Instantons, symmetries and anomalies in five dimensions}}, \href{http://dx.doi.org/10.1007/JHEP04(2021)188}{\emph{JHEP} {\bf 04} (2021) 188}, [\href{http://arxiv.org/abs/2009.07873}{{\tt 2009.07873}}].

\bibitem{DelZotto:2022fnw}
M.~Del~Zotto, J.~J. Heckman, S.~N. Meynet, R.~Moscrop and H.~Y. Zhang, \emph{{Higher symmetries of 5D orbifold SCFTs}}, \href{http://dx.doi.org/10.1103/PhysRevD.106.046010}{\emph{Phys. Rev. D} {\bf 106} (2022) 046010}, [\href{http://arxiv.org/abs/2201.08372}{{\tt 2201.08372}}].

\bibitem{yau1993gorenstein}
S.-T. Yau, S.~S.-T. Yau and Y.~Yu, \emph{Gorenstein quotient singularities in dimension three}, vol.~505.
\newblock American Mathematical Soc., 1993.

\bibitem{Armstrong1968}
M.~A. Armstrong, \emph{The fundamental group of the orbit space of a discontinuous group}, \href{http://dx.doi.org/10.1017/S0305004100042845}{\emph{Mathematical Proceedings of the Cambridge Philosophical Society} {\bf 64} (1968) 299 -- 301}.

\bibitem{Najjar:2022eci}
M.~A.~M. Najjar, \emph{{Field Theory Dynamics from M-theory on Special Holonomy Manifolds}}.
\newblock PhD thesis, King's Coll. London, 2022.

\bibitem{bazaikin2013complete}
Y.~V. Bazaikin and O.~A. Bogoyavlenskaya, \emph{{Complete Riemannian G2 Holonomy Metrics on Deformations of Cones over S3 x S3 }},  \href{http://arxiv.org/abs/1301.6379}{{\tt 1301.6379}}.

\bibitem{Foscolo:2018mfs}
L.~Foscolo, M.~Haskins and J.~Nordstr\"om, \emph{{Infinitely many new families of complete cohomogeneity one G$_2$-manifolds: G$_2$ analogues of the Taub\textendash{}NUT and Eguchi\textendash{}Hanson spaces}}, \href{http://dx.doi.org/10.4171/jems/1051}{\emph{J. Eur. Math. Soc.} {\bf 23} (2021) 2153--2220}, [\href{http://arxiv.org/abs/1805.02612}{{\tt 1805.02612}}].

\bibitem{Friedmann:2002ct}
T.~Friedmann, \emph{{On the quantum moduli space of M theory compactifications}}, \href{http://dx.doi.org/10.1016/S0550-3213(02)00408-X}{\emph{Nucl. Phys. B} {\bf 635} (2002) 384--394}, [\href{http://arxiv.org/abs/hep-th/0203256}{{\tt hep-th/0203256}}].

\bibitem{Friedmann:2012uf}
T.~Friedmann and R.~P. Stanley, \emph{{The String Landscape: On Formulas for Counting Vacua}}, \href{http://dx.doi.org/10.1016/j.nuclphysb.2012.11.019}{\emph{Nucl. Phys. B} {\bf 869} (2013) 74--88}, [\href{http://arxiv.org/abs/1212.0583}{{\tt 1212.0583}}].

\bibitem{cortes2015locally}
V.~Cort{\'e}s and J.~J. V{\'a}squez, \emph{{Locally homogeneous nearly K{\"a}hler manifolds}}, \href{http://dx.doi.org/10.1007/s10455-015-9470-4}{\emph{Annals of Global Analysis and Geometry} {\bf 48} (2015) 269--294}, [\href{http://arxiv.org/abs/1410.6912}{{\tt 1410.6912}}].

\bibitem{Cachazo:2002zk}
F.~Cachazo, N.~Seiberg and E.~Witten, \emph{{Phases of N=1 supersymmetric gauge theories and matrices}}, \href{http://dx.doi.org/10.1088/1126-6708/2003/02/042}{\emph{JHEP} {\bf 02} (2003) 042}, [\href{http://arxiv.org/abs/hep-th/0301006}{{\tt hep-th/0301006}}].

\bibitem{Hosomichi:2005ja}
K.~Hosomichi and D.~C. Page, \emph{{G(2) holonomy, mirror symmetry and phases of N=1 SYM}}, \href{http://dx.doi.org/10.1088/1126-6708/2005/05/041}{\emph{JHEP} {\bf 05} (2005) 041}, [\href{http://arxiv.org/abs/hep-th/0501195}{{\tt hep-th/0501195}}].

\bibitem{Davies:2011is}
R.~Davies, \emph{{Hyperconifold Transitions, Mirror Symmetry, and String Theory}}, \href{http://dx.doi.org/10.1016/j.nuclphysb.2011.04.010}{\emph{Nucl. Phys. B} {\bf 850} (2011) 214--231}, [\href{http://arxiv.org/abs/1102.1428}{{\tt 1102.1428}}].

\bibitem{Davies:2013pna}
R.~Davies, \emph{{Classification and Properties of Hyperconifold Singularities and Transitions}},  \href{http://arxiv.org/abs/1309.6778}{{\tt 1309.6778}}.

\bibitem{Gaiotto:2017yup}
D.~Gaiotto, A.~Kapustin, Z.~Komargodski and N.~Seiberg, \emph{{Theta, Time Reversal, and Temperature}}, \href{http://dx.doi.org/10.1007/JHEP05(2017)091}{\emph{JHEP} {\bf 05} (2017) 091}, [\href{http://arxiv.org/abs/1703.00501}{{\tt 1703.00501}}].

\bibitem{Gaiotto:2017tne}
D.~Gaiotto, Z.~Komargodski and N.~Seiberg, \emph{{Time-reversal breaking in QCD$_{4}$, walls, and dualities in 2 + 1 dimensions}}, \href{http://dx.doi.org/10.1007/JHEP01(2018)110}{\emph{JHEP} {\bf 01} (2018) 110}, [\href{http://arxiv.org/abs/1708.06806}{{\tt 1708.06806}}].

\bibitem{Pantev:2005rh}
T.~Pantev and E.~Sharpe, \emph{{Notes on gauging noneffective group actions}},  \href{http://arxiv.org/abs/hep-th/0502027}{{\tt hep-th/0502027}}.

\bibitem{Pantev:2005wj}
T.~Pantev and E.~Sharpe, \emph{{String compactifications on Calabi-Yau stacks}}, \href{http://dx.doi.org/10.1016/j.nuclphysb.2005.10.035}{\emph{Nucl. Phys. B} {\bf 733} (2006) 233--296}, [\href{http://arxiv.org/abs/hep-th/0502044}{{\tt hep-th/0502044}}].

\bibitem{Pantev:2005zs}
T.~Pantev and E.~Sharpe, \emph{{GLSM's for Gerbes (and other toric stacks)}}, \href{http://dx.doi.org/10.4310/ATMP.2006.v10.n1.a4}{\emph{Adv. Theor. Math. Phys.} {\bf 10} (2006) 77--121}, [\href{http://arxiv.org/abs/hep-th/0502053}{{\tt hep-th/0502053}}].

\bibitem{Bunke:2012rsi}
U.~Bunke, \emph{{Differential cohomology}},  \href{http://arxiv.org/abs/1208.3961}{{\tt 1208.3961}}.

\bibitem{Simons:2007}
J.~Simons and D.~Sullivan, \emph{Axiomatic characterization of ordinary differential cohomology}, \href{http://dx.doi.org/10.1112/jtopol/jtm006}{\emph{Journal of Topology} {\bf 1} (Oct., 2007) 45–56}, [\href{http://arxiv.org/abs/math/0701077}{{\tt math/0701077}}].

\bibitem{Cvetic:2021sxm}
M.~Cvetic, M.~Dierigl, L.~Lin and H.~Y. Zhang, \emph{{Higher-form symmetries and their anomalies in M-/F-theory duality}}, \href{http://dx.doi.org/10.1103/PhysRevD.104.126019}{\emph{Phys. Rev. D} {\bf 104} (2021) 126019}, [\href{http://arxiv.org/abs/2106.07654}{{\tt 2106.07654}}].

\bibitem{Schafer-Nameki:2023jdn}
S.~Schafer-Nameki, \emph{{ICTP lectures on (non-)invertible generalized symmetries}}, \href{http://dx.doi.org/10.1016/j.physrep.2024.01.007}{\emph{Phys. Rept.} {\bf 1063} (2024) 1--55}, [\href{http://arxiv.org/abs/2305.18296}{{\tt 2305.18296}}].

\bibitem{Brennan:2023mmt}
T.~D. Brennan and S.~Hong, \emph{{Introduction to Generalized Global Symmetries in QFT and Particle Physics}},  \href{http://arxiv.org/abs/2306.00912}{{\tt 2306.00912}}.

\bibitem{Luo:2023ive}
R.~Luo, Q.-R. Wang and Y.-N. Wang, \emph{{Lecture notes on generalized symmetries and applications}}, \href{http://dx.doi.org/10.1016/j.physrep.2024.02.002}{\emph{Phys. Rept.} {\bf 1065} (2024) 1--43}, [\href{http://arxiv.org/abs/2307.09215}{{\tt 2307.09215}}].

\bibitem{Bhardwaj:2023kri}
L.~Bhardwaj, L.~E. Bottini, L.~Fraser-Taliente, L.~Gladden, D.~S.~W. Gould, A.~Platschorre et~al., \emph{{Lectures on generalized symmetries}}, \href{http://dx.doi.org/10.1016/j.physrep.2023.11.002}{\emph{Phys. Rept.} {\bf 1051} (2024) 1--87}, [\href{http://arxiv.org/abs/2307.07547}{{\tt 2307.07547}}].

\bibitem{Shao:2023gho}
S.-H. Shao, \emph{{What's Done Cannot Be Undone: TASI Lectures on Non-Invertible Symmetries}},  \href{http://arxiv.org/abs/2308.00747}{{\tt 2308.00747}}.

\bibitem{Iqbal:2024pee}
N.~Iqbal, \emph{{Jena lectures on generalized global symmetries: principles and applications}},  \href{http://arxiv.org/abs/2407.20815}{{\tt 2407.20815}}.

\bibitem{Freed:2022qnc}
D.~S. Freed, G.~W. Moore and C.~Teleman, \emph{{Topological symmetry in quantum field theory}},  \href{http://arxiv.org/abs/2209.07471}{{\tt 2209.07471}}.

\bibitem{Birmingham:1991ty}
D.~Birmingham, M.~Blau, M.~Rakowski and G.~Thompson, \emph{{Topological field theory}}, \href{http://dx.doi.org/10.1016/0370-1573(91)90117-5}{\emph{Phys. Rept.} {\bf 209} (1991) 129--340}.

\bibitem{Cremmer:1978km}
E.~Cremmer, B.~Julia and J.~Scherk, \emph{{Supergravity Theory in 11 Dimensions}}, \href{http://dx.doi.org/10.1016/0370-2693(78)90894-8}{\emph{Phys. Lett. B} {\bf 76} (1978) 409--412}.

\bibitem{Bandos:1997gd}
I.~A. Bandos, N.~Berkovits and D.~P. Sorokin, \emph{{Duality symmetric eleven-dimensional supergravity and its coupling to M-branes}}, \href{http://dx.doi.org/10.1016/S0550-3213(98)00102-3}{\emph{Nucl. Phys. B} {\bf 522} (1998) 214--233}, [\href{http://arxiv.org/abs/hep-th/9711055}{{\tt hep-th/9711055}}].

\bibitem{Sorokin:1998kf}
D.~P. Sorokin, \emph{{Coupling of M-branes in M theory}},  in \emph{{6th International Symposium on Particles, Strings and Cosmology}}, pp.~697--701, 3, 1998.
\newblock \href{http://arxiv.org/abs/hep-th/9806175}{{\tt hep-th/9806175}}.

\bibitem{Dirac:1948um}
P.~A.~M. Dirac, \emph{{The Theory of magnetic poles}}, \href{http://dx.doi.org/10.1103/PhysRev.74.817}{\emph{Phys. Rev.} {\bf 74} (1948) 817--830}.

\end{thebibliography}\endgroup

\end{document}